\title{Finite Satisfiability of Unary Negation Fragment with Transitivity}
\author{Daniel Danielski}{University of Wroc\l{}aw, Poland}{}{}{}
\author{Emanuel Kiero\'nski}{University of Wroc\l{}aw, Poland}{kiero@cs.uni.wroc.pl}{https://orcid.org/0000-0002-8538-8221}{}
\authorrunning{D. Danielski and E. Kiero\'nski}
\keywords{unary negation fragment, transitivity, finite satisfiability, finite open-world query answering, description logics}
\begin{document}

\newcommand{\vectorize}[1]{\vec{#1}}
\newcommand{\uni}{\text{UNI}}
\newcommand{\one}{\text{ONE}}
\newcommand{\exi}{\text{EXI}}
\newcommand{\see}{\text{SEE}}
\newcommand{\reg}{\text{REG}}
\newcommand{\res}{\text{RES}}
\newcommand{\con}{\text{CON}}
\newcommand{\typ}{\text{TYPE}}
\newcommand{\diag}{\text{DIAG}}

\newcommand{\cE}{\mathcal{E}}%
\newcommand{\cT}{\mathcal{T}}%

\newcommand{\cK}{\mathcal{K}}%
\newcommand{\cP}{\mathcal{P}}%
\newcommand{\cL}{\mathcal{L}}%
\newcommand{\cB}{\mathcal{B}}%
\newcommand{\cR}{\mathcal{R}}%

\newcommand{\ff}{\mathfrak{f}}
\newcommand{\fp}{\mathfrak{p}}
\newcommand{\fh}{\mathfrak{h}}

\newcommand{\fd}{\mathfrak{d}}
\newcommand{\fr}{\mathfrak{r}}
\newcommand{\fs}{\mathfrak{s}}
\newcommand{\fc}{\mathfrak{c}}
\newcommand{\fg}{\mathfrak{g}}
\newcommand{\fq}{\mathfrak{q}}

\newcommand{\ML}{\mbox{ML}}
\newcommand{\MLU}{\mbox{ML}${+}\langle U \rangle$}
\newcommand{\FO}{\mbox{\bf FO}}
\newcommand{\FOt}{\mbox{$\mbox{\rm FO}^2$}}
\newcommand{\GF}{\mbox{$\mbox{\rm GF}$}}
\newcommand{\GFt}{\mbox{$\mbox{\rm GF}^2$}}
\newcommand{\GFtEG}{\mbox{$\mbox{\rm GF}^2{+}{\rm EG}$}}
\newcommand{\GFtEQ}{\mbox{$\mbox{\rm GF}^2{+}{\rm EQ}$}}
\newcommand{\FOtEQ}{\mbox{$\mbox{\rm FO}^2{+}{\rm EQ}$}}
\newcommand{\Ct}{\mbox{$\mathcal{C}^2$}}
\newcommand{\FOtEC}{\mbox{$\mbox{\rm EC}^2$}}
\newcommand{\FOtECth}{\mbox{$\mbox{\rm EC}^2_3$}}
\newcommand{\FOtECt}{\mbox{$\mbox{\rm EC}^2_2$}}
\newcommand{\FOtECo}{\mbox{$\mbox{\rm EC}^2_1$}}
\newcommand{\FOtECk}{\mbox{$\mbox{\rm EC}^2_k$}}
\newcommand{\FOtEth}{\mbox{$\mbox{\rm EQ}^2_3$}}
\newcommand{\FOtEt}{\mbox{$\mbox{\rm EQ}^2_2$}}
\newcommand{\FOtEo}{\mbox{$\mbox{\rm EQ}^2_1$}}
\newcommand{\FOtEk}{\mbox{$\mbox{\rm EQ}^2_k$}}
\newcommand{\GFtEo}{\mbox{$\mbox{\rm GFEQ}^2_1$}}
\newcommand{\GFtEk}{\mbox{$\mbox{\rm GFEQ}^2_k$}}
\newcommand{\GFtEt}{\mbox{$\mbox{\rm GFEQ}^2_2$}}
\newcommand{\UNFO}{\mbox{UNFO}}
\newcommand{\GNFO}{\mbox{GNFO}}
\newcommand{\GNFOonedim}{\mbox{GNFO}$_1$}
\newcommand{\ALC}{\mbox{$\mathcal{ALC}$}}
\newcommand{\FF}{\mbox{FF}}
\newcommand{\UNFOt}{\mbox{UNFO$^2$}}
\newcommand{\UNFOEQ}{\mbox{UNFO}{+}\mbox{EQ}}
\newcommand{\UNFOTR}{\mbox{UNFO}{+}$\mathcal{S}$}
\newcommand{\UNFOSO}{\mbox{UNFO}{+}$\mathcal{SO}$}
\newcommand{\UNFOSOH}{\mbox{UNFO}{+}$\mathcal{SOH}$}
\newcommand{\UNFObinTR}{\mbox{UNFO}$_{bin}${+}$\mathcal{S}$}
\newcommand{\UNFOtTR}{\mbox{UNFO$^2$}{+}$\mathcal{S}$}
\newcommand{\UNFOtEQ}{\mbox{UNFO$^2$}{+}\mbox{EQ}}
\newcommand{\GNFOTR}{\mbox{BGNFO}{+}$\mathcal{S}$}
\newcommand{\GNFOEQ}{\mbox{BGNFO}{+}\mbox{EQ}}
\newcommand{\GNFOTRonedim}{\mbox{BGNFO}$_1${+}$\mathcal{S}$}
\newcommand{\GNFOEQonedim}{\mbox{BGNFO}$_1${+}\mbox{EQ}}
\newcommand{\GFtTR}{\mbox{GF}$^2${+}\mbox{TR}}

\newcommand{\NLogSpace}{\textsc{NLogSpace}}
\newcommand{\NP}{\textsc{NP}}
\newcommand{\PTime}{\textsc{PTime}}
\newcommand{\PSpace}{\textsc{PSpace}}
\newcommand{\ExpTime}{\textsc{ExpTime}}
\newcommand{\ExpSpace}{\textsc{ExpSpace}}
\newcommand{\NExpTime}{\textsc{NExpTime}}
\newcommand{\TwoExpTime}{2\textsc{-ExpTime}}
\newcommand{\TwoNExpTime}{2\textsc{-NExpTime}}
\newcommand{\ThreeNExpTime}{3\textsc{-NExpTime}}
\newcommand{\APSpace}{\textsc{APSpace}}
\newcommand{\AExpSpace}{\textsc{AExpSpace}}

\newcommand{\str}[1]{{\mathfrak{#1}}}
\newcommand{\restr}{\!\!\restriction\!\!}
\newcommand{\N}{{\mathbb N}}   
\newcommand{\Q}{{\mathbb Q}}   
\newcommand{\Z}{{\mathbb Z}}   

\newcommand{\rng}{\mathrm{Rng}}
\newcommand{\dom}{\mathrm{Dom}}
\newcommand{\var}{\mathrm{Var}}

\renewcommand{\phi}{\varphi} 

\newcommand{\EQ}{\ensuremath{{\mathcal EQ}}}
\newcommand{\Sat}{\ensuremath{\textit{Sat}}}
\newcommand{\FinSat}{\ensuremath{\textit{FinSat}}}
\newcommand{\AAA}{\mbox{\large \boldmath $\alpha$}}
\newcommand{\BBB}{\mbox{\large \boldmath $\beta$}}
\newcommand{\GGG}{\mbox{\large \boldmath $\gamma$}}
\newcommand{\type}[2]{{\rm atp}^{{#1}}({#2})}
\newcommand{\gtype}[2]{{\rm gtp}^{{#1}}({#2})}
\newcommand{\sss}{\scriptscriptstyle}
\newcommand{\cBase}{{\sss \text{base}}}%
\newcommand{\cDist}{{\sss \text{dist}}}
\newcommand{\const}{{\sss \text{const}}}
\newcommand{\cAux}{{\sss \text{aux}}}
\newcommand{\cTr}{{\sss \text{tr}}}
\newcommand{\cEq}{{\sss \text{eq}}}
\newcommand{\patel}{\mathfrak{p}}
\newcommand{\patgtp}{pattergtp}
\newcommand{\dec}[3]{{\rm dec}^{{#1}}_{{#2}}({#3})}
\newcommand{\ldec}[2]{{\rm ldec}^{{#1}}({#2})}
\newcommand{\origin}{origin}
\newcommand{\rank}[3]{\fr^{#1}_{#2}(#3)}

\maketitle  

\begin{abstract}
We show that the finite satisfiability problem for the unary negation fragment with an arbitrary number of
transitive relations is decidable and \TwoExpTime-complete. Our result actually holds  for a more general
setting in which one can require that  some binary symbols are interpreted as arbitrary transitive relations, some  as 
partial orders and some as equivalences.
We also consider finite satisfiability of various extensions of our primary logic, in particular capturing the concepts of
nominals and role hierarchies known from description logic.
As the unary negation fragment can express 
unions of conjunctive queries, our results have interesting implications for
the problem of finite  query answering, both in the classical scenario and in
the description logics setting.  
\end{abstract}

\section{Introduction}

{\bf Decidable fragments and unary negation.}
Searching for attractive fragments of first-order logic is an  important theme in
theoretical computer science. 
Successful examples of such fragments, with numerous applications, are modal and description logics. 
They have their own syntax, but naturally translate to first-order logic, via the \emph{standard translation}.
Several seminal decidable fragments of first-order logic were identified
by preserving one particular  restriction obeyed by 
this translation
and dropping all the others.  
Important examples of such fragments are
 two-variable logic, \FOt{}, \cite{Sco62},  the guarded fragment, \GF{}, \cite{ABN98}, and the fluted fragment, \FF, \cite{Q69,P-HST19}.
They restrict, respectively, the
number of variables, the quantification pattern and the order of variables in which they appear as arguments of predicates.
 A more recent proposal \cite{StC13} is the \emph{unary negation fragment}, \UNFO{}. This time we restrict the use of negations, 
allowing them only in
front of subformulas with at most one free variable.
\UNFO{} turns out to retain many good algorithmic and model theoretic properties of modal logic, including the finite model
property, a tree-like model property and the decidability of the satisfiability problem. 
We remark here that $\UNFO$ and $\GF$ have a common decidable generalization, \emph{the guarded negation fragment}, \GNFO{}, \cite{BtCS15}.

To justify the  attractiveness 
 of \UNFO{} let us look at one of the crucial problems in database theory, 
open-world query answering. Given an (incomplete) set of facts $\str{D}$, a set of constraints $\mathcal{T}$ and a query $q$, check 
if $\str{D} \wedge \mathcal{T}$ entails $q$. Generally, this problem is undecidable, and to make it decidable one needs to
restrict the class of queries and constraints. Widely investigated class of queries are (unions of) 
\emph{conjunctive queries}---(disjunctions of) sentences of the form $\exists \bar{x} \psi(\bar{x})$ where  $\psi$ is a conjunction of  atoms. An important class of constraints
are \emph{tuple generating dependencies}, TGDs, of the form $\forall \bar{x} \bar{y} (\psi(\bar{x}, \bar{y}) \rightarrow \exists \bar{z} \psi'(\bar{y}, \bar{z})   )$, where $\psi$ and $\psi'$ are, again, conjunctions of atoms. Conjunctive query answering against arbitrary TGDs is still undecidable (see, e.g., \cite{CGK13}), so TGDs need to be restricted further. Several classes of TGDs making the problem decidable
have been proposed. One interesting such class are \emph{frontier-one} TGDs, in which the \emph{frontier} of each dependency, $\bar{y}$, 
consists just of a single variable \cite{BLL09}. Frontier-one TGDs are a special case of frontier-guarded TGDs \cite{BLM10}.
Checking whether $\str{D}$ and $\mathcal{T}$ entail $q$ boils down to verifying (un)satisfiability of the formula $\str{D} \wedge \mathcal{T} \wedge \neg q$. It turns out that if $\mathcal{T}$ is a conjunction of frontier-one TGDs and $q$ is a disjunction of conjunctive queries
then the resulting formula belongs to \UNFO{}. 

\smallskip\noindent
{\bf Transitivity.}
A serious weakness of the expressive power of \UNFO{} is that it cannot express transitivity of a binary relation, nor  related 
properties like  being an equivalence, a partial order or a linear order. This limitation becomes particularly important when database or knowledge representation applications are
considered, as transitivity  is a  natural  property in many real-life situations. Just consider relations
like \emph{greater-than} or \emph{part-of}.
This weakness is shared by 
\FOt{}, \GF{} and \FF{}. Thus, it is natural to think about their
extensions, in which some distinguished binary symbols may be explicitly required to be interpreted as transitive relations.
It turns out that  \FOt{}, \GF{} and \FF{} do not cope well with transitivity, and the satisfiability problems for the obtained
extensions are undecidable \cite{GOR99,Gra99,P-HT19} (see also \cite{GMV99,Kie05,Kaz06}).
Some positive results were obtained for \FOt{}, \GF{} and \FF{} only  when one transitive relation is available \cite{P-H18,Kie05,P-HT19}
or when some further syntactic restrictions 
are imposed
\cite{ST04}.

\UNFO{} is an exception here, since its satisfiability problem remains decidable in the presence of arbitrarily many 
transitive relations. This has been explicitly stated in \cite{JLM18}, as a corollary from a  stronger
result that \UNFO{} is decidable when extended by  regular path expressions. Independently, the decidability of \UNFO{} with
transitivity, \UNFOTR{}, follows from \cite{ABBB16}, which deals with the decidability of 
a richer logic, the guarded negation fragment  with transitive relations restricted to non-guard positions, which embeds  \UNFOTR{}.
From both papers the \TwoExpTime-completeness of \UNFOTR{} can be inferred.

\smallskip\noindent
{\bf Our main results.}
A problem related to satisfiability is   \emph{finite satisfiability}, in which 
we ask about the existence of \emph{finite} models.
In computer science, the importance of decision procedures for finite satisfiability arises from the fact that most objects about which we may want to reason using logic, e.g., databases, are finite. Thus the ability of solving only  general satisfiability 
may not be fully satisfactory.
Both the above-mentioned decidability results implying the decidability of \UNFOTR{} are obtained by employing  tree-like model properties of the logics and then using automata techniques. Since tree-like unravelings of models are infinite, this approach works only for general satisfiability, and gives little insight into the decidability/complexity of 
finite satisfiability.
In this paper we consider the finite satisfiability problem for \UNFOTR{}. Actually, we made a step in
this direction already in our previous paper \cite{DK18} (see \cite{DK18b} for its longer version) 
where we proved a related result that \UNFO{} with equivalence relations, \UNFOEQ{}, has the finite model property and
 thus that its satisfiability and finite satisfiability problems coincide, both being \TwoExpTime-complete. 
Some ideas developed in \cite{DK18} are extended and applied also here, even though \UNFOTR{} does not have
the finite model property which becomes evident when looking at the following formula with transitive $T$, $\forall x \exists y Txy \wedge \forall x \neg Txx$, satisfiable only in infinite models.

Our main 
contribution is demonstrating the decidability of 
 finite satisfiability for \UNFOTR{} and establishing its \TwoExpTime-completeness. En route we obtain
a triply exponential bound on the size of minimal models of finitely satisfiable \UNFOTR{} formulas. 
Actually, our results hold for a more general setting, in which some relations may be required to be interpreted as equivalences,
some as partial orders, and some  just as arbitrary transitive relations.
Returning to database motivations, we get this way the decidability of 
the \emph{finite} open-world query answering for unions of conjunctive queries against frontier-one TGDs with equivalences, partial orders and arbitrary transitive relations. By \emph{finite open-world query answering} we mean the question if for given $\str{D}$, $\mathcal{T}$ and 
$q$, $\str{D}$ and $\mathcal{T}$ entail $q$ over \emph{finite} structures.

To the best of our knowledge, \UNFOTR{} is the first logic which allows one to use arbitrarily many transitive relations,
and, at the same time, to  speak non-trivially about relations of arbitrary arities, whose finite satisfiability problem is shown decidable.
In the case of  related logics of this kind, like the guarded fragment with transitive guards \cite{ST04}, and the guarded negation fragment
with transitive relations outside guards \cite{ABBB16}, the decidability was shown only for general satisfiability, and its finite version is open.  
(Finite satisfiability was shown decidable only for the \emph{two-variable} guarded fragment with transitive guards \cite{KT18}).

We believe that moving from \UNFOEQ{} from \cite{DK18} to \UNFOTR{} is an important improvement. Besides the fact that 
 this requires strengthening our techniques and employing some new ideas, 
general transitive relations
have stronger motivations than equivalences. In particular, it opens natural connections to the realm of description logics, DLs.

\smallskip\noindent
{\bf \UNFO{} and expressive description logics.}
\UNFO{}, via the above-mentioned standard translation, embeds the DL \ALC{}, as well as its extension by inverse roles ($\mathcal{I}$)
and role intersections ($\sqcap$). Thus, having the ability of expressing conjunctive queries, we can use our results to
solve the so-called \emph{(finite) ontology mediated query answering} problem, (F)OMQA, for some DLs. This problem is 
a counterpart of (finite) open-world query answering: 
given a conjunctive query (or a union of conjunctive queries) and a knowledge base specified in a DL,
check whether the query holds in every (finite) model of this knowledge base. 

While there are quite a lot of results for OMQA, not much is known about FOMQA.
In particular, for DLs with transitive roles ($\mathcal{S}$) the only positive results we are aware of are the ones obtained recently in \cite{GYM18}, where the decidability and \TwoExpTime-completeness of FOMQA for the logics $\mathcal{SOI}$, $\mathcal{SIF}$ and $\mathcal{SOF}$ is shown. This is orthogonal to our results described above, since \UNFOTR{} captures neither nominals ($\mathcal{O}$) nor
functional roles ($\mathcal{F}$). On the other hand, we are able to express any positive boolean combinations of roles, including their intersection ($\sqcap$), which 
allows us to solve FOMQA, e.g., for the logic $\mathcal{SI}^\sqcap$. Moreover we can use non-trivially relations of arity greater than two. 

It is an interesting question if our decidability result can be extended to capture some more expressive DLs. Unfortunately, we cannot hope for \emph{number restrictions} ($\mathcal{Q}$ or $\mathcal{N}$)
or even \emph{functional roles} ($\mathcal{F}$), as satisfiability and finite satisfiability of \UNFO{}  (even without transitive relations) and two binary functional relations are undecidable. This is implicit in \cite{StC13} (see Appendix \ref{a:frund} for an explicit proof).  On the positive side, we show the decidability and \TwoExpTime-completeness of finite satisfiability of
\UNFOSOH, extending \UNFOTR{} by  constants (corresponding to \emph{nominals}
($\mathcal{O}$)) and inclusions of binary relations (capturing \emph{role hierarchies} ($\mathcal{H}$)). This is sufficient, in particular, to
imply the decidability of FOMQA for the description logic
$\mathcal{SHOI}^{\sqcap}$, which, up to our knowledge, is a new result.

\smallskip\noindent
{\bf Towards guarded negation fragment.} 
We propose also another decidable extension of our basic logic, the \emph{one-dimensional base-guarded negation fragment} with transitive relations on non-guard positions, \GNFOTRonedim{}. This is a non-trivial fragment of the already mentioned logic from \cite{ABBB16}. After some rather easy adjustments, our constructions  cover this bigger logic, however, it becomes undecidable when extended with inclusions of binary relations. 

\smallskip\noindent
{\bf Organization of the paper.}
The rest of this paper is organized as follows. Section \ref{s:prel} contains definitions, basic facts and a high-level description of
our decidability proof. As our constructions are rather complex, in the main body of the paper, Section \ref{s:2var},
we explicitly process the restricted, 
two-variable case of our logic, for which our ideas can be presented more transparently. In Section \ref{s:alltheother} we just formulate the remaining results, leaving the details for the Appendix, which also contains the missing proofs from Sections 2 and 3. In Section \ref{s:conclusions} we conclude
the paper.

\section{Preliminaries} \label{s:prel}

\subsection{Logics, structures, types and functions} \label{s:prela}

We employ standard terminology and notation from model theory.
We refer to structures using Fraktur capital letters,
and their domains using the corresponding Roman capitals.
For a structure $\str{A}$ and $A' \subseteq A$ we use 
$\str{A} \restr A'$ or $\str{A}'$ to denote the restriction of $\str{A}$ to $A'$.

The \emph{unary negation fragment of first-order logic}, \UNFO{} is defined  by the following 
grammar \cite{StC13}:
$\phi=B\bar{x} \mid x=y \mid \phi \wedge \phi \mid \phi \vee \phi \mid \exists x \phi \mid \neg \phi(x)$,
where, in the first clause, $B$ represents any relational symbol, and, in the last clause, $\phi$ has no free variables besides (at most) $x$. 
An example formula
not expressible in \UNFO{} is $x \not= y$.
We formally do not have universal quantification. However we allow
ourselves to use $\forall \bar{x} \neg \phi$ as an abbreviation for $\neg \exists \bar{x}  \phi$,
for an \UNFO{} formula $\phi$.
Note that frontier-one TGDs $\forall \bar{x} y (\psi(\bar{x}, y) \rightarrow \exists \bar{z} \psi'(y, \bar{z})   )$ are in \UNFO{}
as they can be rewritten as $\neg \exists \bar{x} y ( \psi(\bar{x}, y) \wedge \neg \exists \bar{z} \psi'(y, \bar{z})  )$.

We mostly work with purely relational signatures (admitting constants only in some extensions of our main results) of the form $\sigma=\sigma_{\cBase} \cup \sigma_{\cDist}$, where $\sigma_{\cBase}$ is the \emph{base signature}, and $\sigma_{\cDist}$ is the \emph{distinguished signature}. 
We assume that $\sigma_{\cDist} = \{{T}_1, \ldots, {T}_{2k} \}$, with all the ${T}_u$  binary, and intension
that ${T}_{2u}$ is interpreted as the inverse of ${T}_{2u-1}$. 
For every $1 \le u \le k$ we sometimes write ${T}_{2u}^{-1}$ for ${T}_{2u-1}$, and ${T}_{2u-1}^{-1}$ for ${T}_{2u}$. We say that a subset $\mathcal{E}$ of $\sigma_{\cDist}$ is \emph{closed under inverses} if, for every $1 \le u \le 2k$,
we have ${T}_u \in \mathcal{E}$ iff ${T}_u^{-1} \in \mathcal{E}$. Note that $\mathcal{E}$ is closed under inverses iff $\sigma_{\cDist} \setminus \mathcal{E}$ is closed under inverses. 
Given a formula $\phi$ we denote by $\sigma_\phi$ the signature induced by $\phi$, \emph{i.e.}, the minimal signature, with 
its distinguished part closed under inverses, containing all symbols from $\phi$.

The \emph{unary negation fragment with transitive relations}, \UNFOTR, is defined by the same grammar as \UNFO{}, however when satisfiability of its formulas is considered, we restrict the class of admissible models  to those that interpret all symbols from $\sigma_{\cDist}$ as transitive relations and, additionally,
 for each $u$, interpret ${T}_{2u}$ as the inverse of ${T}_{2u-1}$. The latter condition is intended to
simplify the presentation, and is imposed without loss of generality.
In our constructions we sometimes consider some auxiliary structures in which symbols from $\sigma_{\cDist}$ are not necessarily
interpreted as transitive relations (but the pairs ${T}_{2u-1}$, ${T}_{2u}$ are always interpreted as inverses of each other).

An \emph{(atomic) $k$-type} over a signature $\sigma$ is a maximal satisfiable set of literals (atoms and negated atoms) over $\sigma$ with variables $x_1, \ldots, x_k$. 
We often identify a $k$-type with the conjunction of its elements. We are mostly interested in $1$- and $2$-types.
Given a $\sigma$-structure $\str{A}$ and  $a, b \in A$ we denote by $\type{\str{A}}{a}$ the 
$1$-type \emph{realized} by $a$, that is the unique $1$-type $\alpha(x_1)$ such that $\str{A} \models \alpha(a)$,
and by $\type{\str{A}}{a,b}$ the unique $2$-type $\beta(x_1,x_2)$ such that $\str{A} \models \beta(a,b)$.

We use various functions in our paper. Given a function $f:A \rightarrow B$ we  denote by $\rng{f}$ its range, by $\dom{f}$ its domain,
and by $f \restr A_0$ the restriction of $f$ to  $A_0 \subseteq A$.

\subsection{Normal form, witnesses and basic facts}

We say that an \UNFOTR{} formula is in Scott-normal form if it is of the shape
\begin{eqnarray}
\forall x_1, \ldots, x_t \neg \phi_0(\bar{x}) \wedge \bigwedge_{i=1}^{m} \forall x \exists \bar{y} \phi_i(x,\bar{y})
\label{eq:nf}
\end{eqnarray}
where each  $\phi_i$ is a \UNFOTR{} quantifier-free formula and $\phi_0$ is additionally in negation normal form (NNF).
A similar normal form for \UNFO{} was introduced in the bachelor's thesis \cite{Dzi17}. 
By a straightforward adaptation of  Scott's translation for \FOt{} \cite{Sco62} one can translate in polynomial time
any \UNFOTR{} formula to a formula in normal form, in such a way that both are satisfiable over the same domains.
 This allows us, when dealing with decidability/complexity issues for \UNFOTR{}, or when considering 
the size of minimal finite models of formulas, to restrict attention to normal form formulas.

Given a structure $\str{A}$, a normal form formula $\phi$ as in (\ref{eq:nf}) and elements $a, \bar{b}$ of $A$ such that
$\str{A} \models \phi_i(a,\bar{b})$ we say that the elements of $\bar{b}$ are \emph{witnesses} for $a$ and $\phi_i$ and that
$\str{A} \restr \{a, \bar{b} \}$ is a \emph{witness structure} for $a$ and $\phi_i$.
Fix an element $a$. For every  $\phi_i$ choose a witness structure $\str{W}_i$. Then the structure $\str{W}=\str{A} \restr \{W_1 \cup \ldots \cup W_m \}$ is called a $\phi$-\emph{witness structure} for $a$.

We are going to present
a construction which given an arbitrary finite model of a
normal form \UNFOTR{} formula $\phi$ builds a finite model of $\phi$ of a bounded size.
The construction goes via several intermediate steps in which some tree-like models are produced. To argue that that they are still
 models
of $\phi$ we use the following basic observation 
(we recall that $t$ is the number of variables of the $\forall$-conjunct of $\phi$).

\begin{lemma} \label{l:homomorphisms}
Let $\str{A}$ be a model of a normal form \UNFOTR{} formula $\phi$. Let $\str{A}'$ be a structure
in which
all symbols from $\sigma_{\cDist}$ are interpreted as transitive relations, such that
\begin{enumerate}[({a}1)]
\item 
for every $a' \in A'$ there is a $\phi$-witness structure for $a'$ in $\str{A}'$,
\item for every tuple  $a'_1, \ldots, a'_t \in A'$ 
 there is a homomorphism $\fh: \str{A}' \restr \{a_1', \ldots, a_t' \} \rightarrow \str{A}$ which preserves $1$-types of elements.
\end{enumerate}
Then $\str{A}' \models \phi$.

\end{lemma}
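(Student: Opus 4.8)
The plan is to verify the two conjuncts of the normal form formula~(\ref{eq:nf}) separately in $\str{A}'$. For the conjunction of $\forall\exists$-conjuncts $\bigwedge_{i=1}^m \forall x\exists\bar y\,\phi_i(x,\bar y)$, the work is essentially done by hypothesis~(a1): fix $a'\in A'$; by~(a1) there is a $\phi$-witness structure $\str{W}$ for $a'$ in $\str{A}'$, which by definition contains, for each~$i$, a witness structure $\str{W}_i$ for $a'$ and $\phi_i$, i.e.\ a tuple $\bar b'$ with $\str{A}'\restr\{a',\bar b'\}\models\phi_i(a',\bar b')$. Since $\phi_i$ is quantifier-free, satisfaction of $\phi_i(a',\bar b')$ in the substructure transfers to $\str{A}'$ itself (quantifier-free formulas are preserved under taking and extending substructures as long as the relevant atoms are unchanged, which they are). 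Hence $\str{A}'\models\exists\bar y\,\phi_i(a',\bar y)$ for every~$i$, and as $a'$ was arbitrary, $\str{A}'$ satisfies every $\forall\exists$-conjunct.

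For the $\forall$-conjunct $\forall x_1,\ldots,x_t\,\neg\phi_0(\bar x)$, suppose towards a contradiction that $\str{A}'\models\phi_0(a_1',\ldots,a_t')$ for some tuple $a_1',\ldots,a_t'\in A'$. By~(a2) there is a $1$-type-preserving homomorphism $\fh\colon \str{A}'\restr\{a_1',\ldots,a_t'\}\to\str{A}$. The key point is that $\phi_0$ is in negation normal form, so all its negations sit in front of atoms with at most one free variable, i.e.\ in front of $1$-type literals, while all other literals occurring positively in $\phi_0$ are positive atoms (of any arity) or equalities. Since $\fh$ preserves $1$-types it preserves the truth of the negated-unary literals, and since $\fh$ is a homomorphism it preserves positive atoms; equalities $x_i=x_j$ are preserved because $\fh$ is a function. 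A routine induction on the structure of $\phi_0$ (which, being in NNF and built only from $\wedge$, $\vee$, and possibly inner $\exists$ — but note $\phi_0$ as written is quantifier-free) then shows $\str{A}\models\phi_0(\fh(a_1'),\ldots,\fh(a_t'))$. This contradicts $\str{A}\models\phi$, since $\str{A}$ satisfies $\forall x_1,\ldots,x_t\,\neg\phi_0(\bar x)$. Therefore no such tuple exists and $\str{A}'\models\forall x_1,\ldots,x_t\,\neg\phi_0(\bar x)$.

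The two parts together give $\str{A}'\models\phi$. I expect the only subtle point — the ``main obstacle'', though a mild one — to be the preservation argument for $\phi_0$: one must be careful that NNF genuinely confines negation to unary subformulas (this is exactly what the UNFO syntax, together with the normal-form transformation, guarantees), and that the induction handles disjunction correctly (a witness for a disjunct on the $\str{A}'$ side maps to a witness on the $\str{A}$ side). One should also note explicitly that the hypothesis only asserts that $\str{A}'$ interprets the distinguished symbols transitively — transitivity of $\str{A}'$ is needed for $\str{A}'$ to be an admissible model of the \UNFOTR{} formula at all, but it plays no further role in the implication itself, since the homomorphism in~(a2) need not respect transitivity beyond mapping the at most $t$ chosen elements into $\str{A}$.
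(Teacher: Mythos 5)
Your proof is correct and takes essentially the same route as the paper's: hypothesis~(a1) supplies the $\forall\exists$-conjuncts directly, and the $\forall$-conjunct follows from~(a2) because $\phi_0$ is quantifier-free in NNF, so negations appear only on $1$-type literals (handled by $1$-type preservation) while everything else is a positive atom or equality (handled by $\fh$ being a homomorphism/function). The paper states this in one line as ``it is straightforward, using~(a2)''; your write-up just fills in that straightforward induction.
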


\subsection{Plan of the small model construction} \label{s:stages}
Our main goal is to show that finite satisfiability of \UNFOTR{} formulas can be checked in \TwoExpTime{}. 
To this end we will introduce a natural notion of tree-like structures and a measure associating with  transitive paths of such structures
their so-called \emph{ranks}. Intuitively, for a transitive relation $T_i$ and a $T_i$-path $\pi$, the $T_i$-rank of $\pi$ is the number of one-directional $T_i$-edges in $\pi$ (a precise definition is given in Section \ref{s:ltreelike}).
Then we show that having
the following forms of models is equivalent for a normal form  formula $\phi$:
\begin{enumerate}[(f1)]
\item finite;
\item tree-like, with  bounded ranks of transitive paths;
\item tree-like, with ranks of transitive paths bounded doubly exponentially in $|\phi|$;
\item tree-like, with ranks of paths bounded doubly exponentially in $|\phi|$, and regular (with doubly exponentially many non-isomorphic subtrees);
\item finite of size triply exponential in $|\phi|$.
\end{enumerate}
We will make the  following steps: (f1) $\leadsto$ (f2), (f2) $\leadsto$ (f3), (f3) $\leadsto$ (f4), (f4) $\leadsto$ (f5).
The step closing the circle, (f5) $\leadsto$ (f1) is trivial. In the two-variable case, we will omit the form (f4) and
directly show (f3) $\leadsto$ (f5). Our \TwoExpTime-algorithm will look for models of the form (f3).
Showing transitions leading from (f3) to (f5) justifies that its answers coincide indeed with the existence of \emph{finite} models.

This scheme is similar to the one we used to show the finite model property for \UNFOEQ{} in \cite{DK18}. 
In the main part of the construction from  \cite{DK18} we build bigger and bigger substructures in which some equivalence relations are total.  The induction goes, roughly speaking, by the number of non-total equivalences in the substructure.
Here we extend this approach to handle one-way transitive connections. It may be useful to briefly
compare the case of \UNFOTR{} and the case of \UNFOEQ{}.  

First of all, if a given formula $\phi$ is from \UNFOEQ{} then we can start our constructions leading to a small finite model of $\phi$ from its arbitrary model, while if $\phi$ is in \UNFOTR{} we start from a \emph{finite} model of $\phi$.
A very simple step (f1) $\leadsto$ (f2) in both papers is, essentially, identical. 
The counterpart of step (f3) $\leadsto$ (f4) in the case of equivalences is slightly simpler, but the main differences
lie in steps (f2) $\leadsto$ (f3) and (f4) $\leadsto$ (f5). 
The former, clearly, is not present at all in \cite{DK18}.
While the general idea in this step is quite standard, as we just use a kind of tree pruning, the details are
rather delicate due to possible interactions among different transitive relations, and this step is, by no means, trivial.
We refine here, in particular, the apparatus of \emph{declarations} introduced in \cite{DK18}. 
 Regarding step (f4) $\leadsto$ (f5), the main construction there, in its single inductive step, has two phases: building the so-called
\emph{components} and then arranging them into a bigger structure. It is this first phase which is more complicated than
in the corresponding step in \cite{DK18}. Having components prepared we join them similarly as in \cite{DK18}.

\section{The two-variable case} \label{s:2var}

As in the case of unbounded number of variables we can restrict attention to normal form formulas, which in the two-variable case 
simplify to the standard Scott-normal form  \cite{Sco62}:
\begin{eqnarray}
\forall {xy} \neg \phi_0({x,y}) \wedge \bigwedge_{i=1}^{m} \forall x \exists {y} \phi_i(x,{y}),
\label{eq:lnf}
\end{eqnarray}
where all $\phi_i$ are quantifier-free \UNFOtTR{}
formulas (in this restricted case it is not important whether $\phi_0$ is in NNF or not). As is typical for two-variable logics we  
assume that formulas do not use relational symbols of arity greater than $2$ (cf.~\cite{GKV97}).

\subsection{Tree pruning in the two-variable case} \label{s:ltreelike}

We use a standard notion of a (finite or infinite) rooted tree and related terminology. 
Additionally, any set consisting of a node and all its children is called a \emph{family}.
Any node $b$, except for the root and the leaves, belongs to two families: the one containing its parent, and the one containing its children,
the latter called  the \emph{downward family of} $b$.

We say that a structure $\str{A}$ over a signature consisting of unary and binary symbols is a \emph{light tree-like structure} 
if its nodes can be arranged into a rooted tree  in such a way that
if $\str{A} \models Baa'$ for some non-transitive relation symbol $B$ 
then one of three conditions holds: $a=a'$, $a$ is the parent of $a'$ or $a$ is a child of $a'$, and
if $\str{A} \models {T}_uaa'$ for some ${T}_u$ then  either $a=a'$ or there is a sequence of distinct nodes 
$a=a_0, a_1, \ldots, a_k = a'$ such that $a_i$ and $a_{i+1}$ are joined by an edge of the tree and $\str{A} \models {T}_ua_i a_{i+1}$.
In other words, distant nodes in a light tree-like structure can be joined only by transitive connections, moreover, these
transitive connections are just the transitive closures of connections inside families.
For a light tree-like structure $\str{A}$
and $a \in A$ we denote by $A_a$ the set of all nodes
in the subtree rooted at $a$ and by $\str{A}_a$ the corresponding substructure.

Let $\str{A}$ be a light tree-like structure. A sequence of nodes $a_1, \ldots, a_N \in A$ is a \emph{downward path} 
in $\str{A}$ if for each $i$ 
$a_{i+1}$ is a child of $a_i$.
A \emph{downward}-${T}_u$-\emph{path}
is a downward path such that for each $i$ we have $\str{A}\models{T}_ua_ia_{i+1}$. 
The ${T}_u$-\emph{rank} of a downward-${T}_u$-path $\vectorize{a}$,
$\rank{\str{A}}{u}{\vectorize{a}}$,
is the cardinality of the set $\{i:\str{A}\models\neg{T}_ua_{i+1}a_{i}\}$. The ${T}_u$-\emph{rank} of an element $a\in A$ is defined as $\rank{\str{A}}{u}{a}=\sup\{\rank{\str{A}}{u}{\vectorize{a}}:\vectorize{a}=a,a_2,\ldots,a_N;\vectorize{a} \text{ is a downward-}{T}_u\text{-path}\}$.
For an integer $M$, we say that $\str{A}$ has downward-${T}_u$-paths \emph{bounded by} $M$  when for all $a\in A$ we have $\rank{\str{A}}{u}{{a}}\leq M$, 
and that $\str{A}$ has \emph{transitive paths bounded by}
$M$ if it has downward-${T}_u$-paths 
bounded by $M$ for all $u$. Note that a downward-${T}_u$-path bounded by $M$ may have more than $M$ nodes, 
 as the symmetric ${T}_u$-connections do not increase  the rank.

Given an arbitrary model $\str{A}$ of a normal form \UNFOtTR{} formula $\phi$ we can simply construct its light tree-like model
of degree bounded by $|\phi|$. We define a \emph{light}-$\phi$-\emph{tree-like unraveling} $\str{A}'$ of $\str{A}$ and an associated function $\fh:A' \rightarrow A$  in the following way. 
$\str{A}'$ is divided into levels $L_0, L_1, \ldots$. Choose an
arbitrary element $a \in A$ and add to level $L_0$ of $A'$  an element $a'$ such that $\type{\str{A}'}{a'}=\type{\str{A}}{a}$; set $\fh(a')=a$. 
The element $a'$ will
be the only element of $L_0$ and will become the root of $\str{A}'$.
Having defined $L_i$ repeat the following for every $a' \in L_i$. 
For every $j$, if $\fh(a')$ is not a witness for $\phi_j$ and itself then choose in $\str{A}$ a witness $b$ for $\fh(a')$ and $\phi_j$. 
Add a fresh copy $b'$ of $b$ to $L_{i+1}$, make 
$\str{A}' \restr \{a', b' \}$  isomorphic to $\str{A} \restr \{\fh(a'), b\}$ and set $\fh(b')=b$. Complete the definition
of $\str{A}'$ transitively closing all relations from $\sigma_\cDist$.

\begin{lemma}[(f1) $\leadsto$ (f2), light] \label{l:ltreelike}
Let $\str{A}$ be a finite model of a normal form \UNFOtTR{} formula $\phi$. Let $\str{A}'$ be a light-$\phi$-tree-like unraveling of $\str{A}$.
Then $\str{A}' \models \phi$ and $\str{A}'$ is a light tree-like structure of degree bounded by $|\phi|$, and transitive paths bounded by
$|A|$. 
\end{lemma}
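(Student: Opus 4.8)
The plan is to verify the three claimed properties of $\str{A}'$ separately, the first being the only one requiring real work.

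\textbf{Structure of $\str{A}'$.} First I would observe that, by construction, the underlying tree of $\str{A}'$ is exactly the one given by the levels $L_0, L_1, \ldots$ together with the parent/child relation linking each freshly added witness $b'$ to the element $a'$ for which it was chosen. Every non-transitive binary atom in $\str{A}'$ is introduced only when we make $\str{A}'\restr\{a',b'\}$ isomorphic to $\str{A}\restr\{\fh(a'),b\}$, i.e.\ only between an element and its parent (equivalently, an element and one of its children); no later transitive closure affects non-transitive symbols. Hence the light-tree-like condition for non-transitive $B$ holds. For a distinguished symbol $T_u$: before the final closure step, $T_u$-atoms exist only within families; after transitively closing, any $\str{A}'\models T_u a a'$ with $a\neq a'$ is witnessed by a chain of family-internal $T_u$-edges, which (since the family graph is a tree) can be pruned to a chain of \emph{distinct} nodes along tree edges. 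This is exactly the required condition. The degree bound is immediate: each $a'$ gets at most one child per conjunct $\phi_j$, $j=1,\ldots,m$, and $m\le|\phi|$.

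\textbf{Transitive paths bounded by $|A|$.} Consider a downward-$T_u$-path $\vectorize{a}=a_1,\ldots,a_N$ in $\str{A}'$ and push it forward via $\fh$. Since $\str{A}'\restr\{a_i,a_{i+1}\}\cong\str{A}\restr\{\fh(a_i),\fh(a_{i+1})\}$ for consecutive nodes, we get $\str{A}\models T_u \fh(a_i)\fh(a_{i+1})$, and moreover $\str{A}'\models\neg T_u a_{i+1}a_i$ iff $\str{A}\models\neg T_u\fh(a_{i+1})\fh(a_i)$. Because $\str{A}$ interprets $T_u$ transitively, whenever $\str{A}\models\neg T_u\fh(a_{i+1})\fh(a_i)$ the images $\fh(a_1),\ldots,\fh(a_i)$ must all be pairwise distinct from $\fh(a_{i+1})$ — indeed if $\fh(a_{i+1})=\fh(a_\ell)$ for some $\ell\le i$ then transitivity along $\fh(a_\ell),\ldots,\fh(a_{i+1})$ would force $T_u\fh(a_{i+1})\fh(a_i)$. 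So each index $i$ counted by $\rank{\str{A}'}{u}{\vectorize{a}}$ contributes a ``new'' value $\fh(a_{i+1})$ among the images, giving $\rank{\str{A}'}{u}{\vectorize{a}}\le|A|$; taking the supremum over downward-$T_u$-paths yields $\rank{\str{A}'}{u}{a}\le|A|$ for all $a$ and all $u$.

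\textbf{$\str{A}'\models\phi$.} I would apply Lemma~\ref{l:homomorphisms}. For (a1): by construction every $a'\in A'$ either already had, via $\fh$, a witness for $\phi_j$ among $\{a',\text{its parent}\}$ copied faithfully, or received a fresh child $b'$ with $\str{A}'\restr\{a',b'\}\cong\str{A}\restr\{\fh(a'),b\}$ where $b$ is a genuine $\str{A}$-witness; collecting these over $j=1,\ldots,m$ gives a $\phi$-witness structure for $a'$ inside $\str{A}'$ — here one must be slightly careful that witnesses for different $\phi_j$ live in possibly different families but still all within $\str{A}'$, which is fine since Lemma~\ref{l:homomorphisms}(a1) only asks for the witness structure to sit in $\str{A}'$. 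For (a2), with $t=2$: given $a_1',a_2'\in A'$ I claim $\fh\restr\{a_1',a_2'\}$ is a $1$-type-preserving homomorphism into $\str{A}$. It preserves $1$-types since $\type{\str{A}'}{a'}=\type{\str{A}}{\fh(a')}$ was maintained throughout. For binary atoms: if $a_1',a_2'$ lie in a common family the $2$-type was copied from $\str{A}$, so every atom is preserved; if they are in different families, the only atoms that can hold between them are $T_u$-atoms obtained from the transitive closure, and each such $T_u a_1'a_2'$ is witnessed by a family-internal $T_u$-chain whose $\fh$-image is a $T_u$-chain in $\str{A}$, hence $T_u\fh(a_1')\fh(a_2')$ by transitivity of $T_u$ in $\str{A}$ — so again all atoms among $\{a_1',a_2'\}$ are preserved. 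Thus the hypotheses of Lemma~\ref{l:homomorphisms} are met and $\str{A}'\models\phi$.

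\textbf{Main obstacle.} The genuinely delicate point is the interaction between the final transitive-closure step and the verification of (a2): one must confirm that closing \emph{all} $T_u$ simultaneously does not create a binary atom between two nodes of $\str{A}'$ whose $\fh$-images fail the corresponding atom in $\str{A}$. The argument above handles this because each closure edge factors through a chain of edges each of which \emph{is} faithfully mapped, and $\str{A}$'s own transitivity absorbs the chain; but this is exactly where the ``light'' restriction (distant nodes joined only by transitive closures of family-internal connections) is essential, and it is the step one should state carefully. Everything else is bookkeeping on the inductive level-by-level construction.
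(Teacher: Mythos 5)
Your proof is correct and follows essentially the same route as the paper: both verify the hypotheses of Lemma~\ref{l:homomorphisms} using $\fh$, and both bound the rank by observing that each ``one-directional'' step along a downward-$T_u$-path must produce a fresh $\fh$-image (you phrase it as a direct count, the paper as a pigeonhole contradiction, which are equivalent). One small imprecision: in verifying (a2) you split into ``common family'' versus ``different families'' and claim the $2$-type is copied in the first case, but the construction only copies the $2$-type of a parent--child pair, not of siblings (who are related only via the transitive closure through their parent); the correct dichotomy is parent--child versus everything else, and your second case argument already handles siblings, so nothing actually breaks.
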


Our next task is making the transition  (f2) $\leadsto$ (f3).
For this purpose we introduce a notion of light declarations. It is closely related to a notion of declarations which will
be used in the general case, but simpler than the latter. Fix a signature and let $\AAA$ be the set of $1$-types over this
signature.

For $\cT \subseteq \{{T}_1, \ldots, {T}_{2k}\}$ we write $\str{A} \models  \cT ab$ iff $\str{A} \models
{T}_u ab$ for all ${T}_u \in \cT$.
A \emph{light declaration} is a function of type $\cP(\{{T}_1, \ldots, {T}_{2k}\}) \rightarrow \cP(\AAA)$. 
Given a light tree-like structure $\str{A}$ and its node $a$ we say that $a$ \emph{respects} a light declaration $\fd$ if 
for every $\cT$, for every $\alpha \in \fd(\cT)$ there is \textbf{no} node $b \in A$ of $1$-type $\alpha$ such that $\str{A} \models \cT ab$.
We denote by $\ldec{\str{A}}{a}$ 
the maximal light declaration respected by $a$. Formally, for every $\cT \subseteq \{{T}_1, \ldots, {T}_{2k}\}$, $\ldec{\str{A}}{a}(\cT)=\{ \alpha: \text{for every node $b$ of type $\alpha$ we have $\neg \str{A} \models  \cT ab$} \}$.
Intuitively, 
$\ldec{\str{A}}{a}$ says, for any combination of transitive relations, 
which $1$-types have {\bf no}  realizations to which 
$a$ is connected by this 
combination in $\str{A}$. 
Note that if $a$ respects a light declaration $\fd$ then
for any $\cT$ we have $\fd(\cT) \subseteq \ldec{\str{A}}{a}(\cT)$.
We remark that it would be equivalent to define the light declarations without the negations, listing the 1-types that a given node {\bf is} connected with, however we choose a version with negations to make them uniform with the corresponding (more complicated) notion in the general case,
where negations are more convenient.

Now we define the \emph{local consistency conditions (LCCs)} for a system of light declarations $(\fd_a)_{a\in A}$ assigned to all nodes of a tree-like structure $\str{A}$. Let $F$ be the downward family of some node $a$. 
 We say that the system \emph{satisfies LCCs at $a$} if 
 for every $a_1, a_2 \in F$ and for every $\cT$ such that $\str{A} \models  \cT a_1 a_2$ 
the following two conditions hold:
{\color{black!70} \bf (ld1)} for every $\alpha \in \AAA$, if $\alpha \in \fd_{a_1}(\cT)$  then $\alpha \in \fd_{a_2}(\cT)$,
{\color{black!70} \bf (ld2)} $\type{\str{A}}{a_2} \not\in \fd_{a_1}(\cT)$. 
Given a light tree-like structure $\str{A}$ we say that a  system of light declarations $(\fd_a)_{a\in A}$ is \emph{locally consistent} if it satisfies LCCs at each $a\in A$ and is \emph{globally consistent} if $\fd_a(\cT) \subseteq\ldec{\str{A}}{a}(\cT)$ for each $a\in A$ and each $\cT$.
Note that the global consistency means that all nodes $a$ respect their light declarations $\fd_a$. It is not difficult to see that local
and global consistency play along in the following sense.

\begin{lemma}[Local-global, light] \label{l:llocglobl}
Let $\str{A}$ be a light tree-like structure. Then, (i) if a system of light declarations $(\fd_a)_{a\in A}$ is locally consistent then it is globally consistent; and (ii) the \emph{canonical} system of light declarations, $(\ldec{\str{A}}{a})_{a\in A}$, is locally consistent.
\end{lemma}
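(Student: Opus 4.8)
The plan is to prove the two directions separately, each by a straightforward unwinding of the definitions. For part (ii), that the canonical system $(\ldec{\str{A}}{a})_{a\in A}$ is locally consistent, I would fix a node $a$, its downward family $F$, two nodes $a_1,a_2 \in F$, and a set $\cT$ with $\str{A}\models \cT a_1 a_2$. For (ld1): suppose $\alpha \in \ldec{\str{A}}{a_1}(\cT)$, so no node $b$ of $1$-type $\alpha$ satisfies $\str{A}\models\cT a_1 b$. I must show the same for $a_2$. Here is the only real point: since $F$ is a family, $a_1$ and $a_2$ are joined by a tree edge (or are equal), and in a light tree-like structure every $T_u$-connection is the transitive closure of family-internal connections; hence if $\str{A}\models \cT a_1 a_2$ and $\str{A}\models\cT a_2 b$, transitivity of each $T_u\in\cT$ gives $\str{A}\models\cT a_1 b$, contradicting $\alpha\in\ldec{\str{A}}{a_1}(\cT)$. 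So no such $b$ exists, i.e.\ $\alpha\in\ldec{\str{A}}{a_2}(\cT)$. For (ld2): $\str{A}\models\cT a_1 a_2$ directly witnesses that a node (namely $a_2$) of type $\type{\str{A}}{a_2}$ is $\cT$-connected to $a_1$, so by definition $\type{\str{A}}{a_2}\notin\ldec{\str{A}}{a_1}(\cT)$.

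For part (i), assume $(\fd_a)_{a\in A}$ is locally consistent; I want $\fd_a(\cT)\subseteq\ldec{\str{A}}{a}(\cT)$ for every $a$ and every $\cT$, equivalently that every $a$ respects $\fd_a$. Fix $a$, $\cT$, and $\alpha\in\fd_a(\cT)$, and suppose for contradiction there is a node $b$ of $1$-type $\alpha$ with $\str{A}\models\cT ab$. By the light tree-like structure property, either $b=a$ or there is a path $a=a_0,a_1,\dots,a_k=b$ of tree edges with $\str{A}\models T_u a_i a_{i+1}$ for each consecutive pair and each $T_u\in\cT$; in particular $\str{A}\models \cT a_i a_{i+1}$ for all $i$, and every consecutive pair lies in a common family. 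I would then push $\alpha$ along this path: by repeated application of (ld1), $\alpha\in\fd_{a_i}(\cT)$ for every $i$, and in particular $\alpha\in\fd_{a_{k-1}}(\cT)$. But $a_{k-1}$ and $a_k=b$ lie in a common family with $\str{A}\models\cT a_{k-1}b$, so (ld2) gives $\type{\str{A}}{b}=\alpha\notin\fd_{a_{k-1}}(\cT)$ — a contradiction. The case $b=a$ (with $k=0$) is immediate from (ld2) applied with $a_1=a_2=a$, noting $\str{A}\models\cT aa$ trivially only if $\cT$-loops hold at $a$; if they do then (ld2) rules out $\alpha=\type{\str{A}}{a}$, and if $\str{A}\not\models\cT aa$ there is nothing to prove since $b\ne a$ then.

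A small technical wrinkle to handle cleanly is the direction of the edges along the path: (ld1) and (ld2) are stated symmetrically over unordered pairs $a_1,a_2$ in a family, so it does not matter whether $a_{i+1}$ is the parent or a child of $a_i$, as long as $\str{A}\models\cT a_i a_{i+1}$ in that order, which is exactly what the light tree-like condition supplies. The induction on $i$ needs the hypothesis $\str{A}\models\cT a_i a_{i+1}$ at each step, which again is given. I would also remark that local consistency is required to hold \emph{at each $a\in A$}, so all the LCC instances invoked along the path are available.

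The main obstacle — such as it is — is simply being careful that the light tree-like structure property is what licenses reducing an arbitrary long-distance $\cT$-connection to a chain of family-internal $\cT$-connections; once that is in place, both parts are short inductions/transitivity arguments. There is no deep difficulty here; the lemma is the ``warm-up'' bookkeeping result that makes the later pruning in step (f2)\,$\leadsto$\,(f3) go through, and the proof is essentially a direct verification.
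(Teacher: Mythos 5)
Your proof is correct and follows essentially the same route as the paper's: part (ii) is a direct verification using transitivity of the relations in $\cT$, and part (i) is by contradiction, reducing the long-distance $\cT$-connection to the unique tree path, pushing the forbidden type along it with (ld1), and obtaining a contradiction with (ld2) at the final edge. The only cosmetic difference is that you spell out part (ii) and the $b=a$ corner case, which the paper dismisses as straightforward.
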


Given a light tree-like structure $\str{A}$, by the \emph{generalized type} of a node $a$ of $\str{A}$ we will mean a pair $(\ldec{\str{A}}{a}$, $\type{\str{A}}{a})$, and denote it as $\gtype{\str{A}}{a}$. 
We introduce  a concept of top-down tree pruning.
Let $\str{A}$ be a light tree-like structure. A \emph{top-down tree pruning process} on $\str{A}$ has countably many steps $0, 1, 2, \ldots$, each of them producing
a new light tree-like structure by removing some nodes from the previous one and naturally stitching together the surviving nodes.
We emphasise that the universes of all structures build in this process are  subsets of the universe of the original structure $\str{A}$.
More specifically, we take $\str{A}_0:=\str{A}$, and 
having constructed $\str{A}_i$, $i \ge 0$ construct $\str{A}_{i+1}$ as follows. For every node $a$ of $\str{A}_i$ of depth $i+1$ (we assume that the root has depth $0$) either leave the subtree rooted at $a$ untouched or replace it by a subtree rooted at some descendant $b$ of $a$ having in the original structure $\str{A}$ the same generalized type as $a$,  and then
transitively close all transitive relations. The result of the process is a naturally defined limit structure $\str{A}'$,
in which the pair of elements $a, b$, of depth $d_a$ and $d_b$ respectively, has its $2$-type taken from $\str{A}_{\max(d_a, d_b)}$. 
Note that this $2$-type is not modified in the subsequent structures, so the definition is sound.

\begin{lemma}[Tree-pruning, light] \label{l:ltreepruning}
Let $\str{A}$ be a light tree-like structure. 
Let $(\fd_a)_{a \in A}$ be the canonical system of light declarations on $\str{A}$, $\fd_a:=\ldec{\str{A}}{a}$.
Let $\str{A}'$ be the result of  a top-town tree pruning process on $\str{A}$.
Then (i) the system of light declarations $(\fd_a)_{a \in A'}$ (the canonical declarations
from $\str{A}$ of the nodes surviving the pruning process)
in $\str{A}'$ is locally consistent, (ii) for any pair of 
elements $a,a' \in A'$ there is a homomorphism $\str{A}\restr \{a, a' \} \rightarrow A$ preserving the $1$-types; it also follows that 
(iii) for a normal form $\phi$, if $\str{A}$ is a model of $\phi$  such that any node $a$ has all its witnesses in its downward family then $\str{A}' \models \phi$.
\end{lemma}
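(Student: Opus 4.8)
The plan is to establish the three claims in the stated order, since (i) feeds into (iii) via the Local–global lemma and Lemma~\ref{l:homomorphisms}, while (ii) is the other hypothesis needed to invoke Lemma~\ref{l:homomorphisms}. For part (i), I would argue that local consistency is preserved through every step of the pruning process and therefore holds in the limit $\str{A}'$. Fix a node $a$ surviving the pruning and consider its downward family $F$ in $\str{A}'$; I must check (ld1) and (ld2) for every pair $a_1,a_2\in F$ and every $\cT$ with $\str{A}'\models\cT a_1 a_2$. The key point is that each child of $a$ in $\str{A}'$ is a node $b$ of $\str{A}$ whose \emph{generalized type} in $\str{A}$ equals that of some child $a'$ of $a$ in the original tree, so in particular $\ldec{\str{A}}{b}=\ldec{\str{A}}{a'}$ and $\type{\str{A}}{b}=\type{\str{A}}{a'}$. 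Thus the light declarations $\fd_{a_1},\fd_{a_2}$ and the $1$-types $\type{\str{A}'}{a_1}=\type{\str{A}}{a_1}$, $\type{\str{A}'}{a_2}=\type{\str{A}}{a_2}$ are exactly the data they would be for the corresponding family in $\str{A}$; and the $2$-type of $a_1,a_2$ in $\str{A}'$, hence the set of $\cT$ with $\str{A}'\models\cT a_1a_2$, is also taken from the stage $\str{A}_{\max(d_{a_1},d_{a_2})}$, which for a single family coincides with the $2$-type in $\str{A}$ between the two representatives (the pruning only re-roots subtrees strictly below, and transitively closing within a family does not alter intra-family $2$-types relative to the representatives). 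Since $(\fd_a)_{a\in A}$ is the canonical system, it is locally consistent in $\str{A}$ by Lemma~\ref{l:llocglobl}(ii), so (ld1) and (ld2) hold there, and by the identification above they transfer verbatim to $\str{A}'$.

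For part (ii), given $a,a'\in A'$ I need a homomorphism $\str{A}'\restr\{a,a'\}\to\str{A}$ preserving $1$-types (I read the displayed ``$\str{A}\restr\{a,a'\}\to A$'' as a typo for this). Since $(\fd_a)_{a\in A'}$ is locally consistent by part (i), Lemma~\ref{l:llocglobl}(i) gives that it is globally consistent, i.e.\ every surviving node $a$ respects its light declaration $\ldec{\str{A}}{a}$ in the new structure $\str{A}'$. Concretely, if $\str{A}'\models\cT a a'$ then $\type{\str{A}'}{a'}\notin\ldec{\str{A}}{a}(\cT)$, which by definition of $\ldec{\str{A}}{a}$ means there is a node $c\in A$ with $\type{\str{A}}{c}=\type{\str{A}'}{a'}$ and $\str{A}\models\cT a c$. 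I then send $a\mapsto a$ (which already lies in $A$, as all universes are subsets of $A$) and $a'\mapsto c$. One checks this is a homomorphism: for any relation $R$ and any arrangement of the two elements as arguments, $\str{A}'\models R a a'$ implies (by collecting all such $R$ into $\cT$ in the transitive case, and by using that $1$-types are preserved in the unary and reflexive cases) the corresponding atom holds in $\str{A}$ between $a$ and $c$; and $1$-types are preserved by construction. Care is needed with the non-distinguished binary symbols, but a surviving pair $a,a'$ can be joined by such a symbol only if one is an ancestor of the other along a path that was never broken (re-rooting at a descendant of the same generalized type preserves the downward-family $2$-type), in which case the original edge in $\str{A}$ between $a$ and the representative witnesses it.

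Part (iii) is then immediate from Lemma~\ref{l:homomorphisms}. Assume $\str{A}$ is a model of the normal-form $\phi$ in which every node has all its witnesses inside its downward family. The pruning process only re-roots a subtree at a descendant of the \emph{same} generalized type, and in particular of the same $1$-type and the same downward-family structure; hence every surviving node $a$ still has, inside its downward family in $\str{A}'$, an isomorphic copy of the witness structure it had in $\str{A}$, so condition (a1) of Lemma~\ref{l:homomorphisms} holds. Condition (a2) for arbitrary tuples of size $t$ follows for $t\le 2$ directly from part (ii); but here $t=2$ since $\phi$ is in the two-variable normal form~(\ref{eq:lnf}), so part (ii) is exactly what is needed. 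Since $\str{A}'$ interprets all symbols of $\sigma_\cDist$ as transitive relations by construction (we transitively close at the end of each step and in the limit), Lemma~\ref{l:homomorphisms} yields $\str{A}'\models\phi$.

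\textbf{Main obstacle.} The delicate point is part (i): making precise that the local data at a surviving family in $\str{A}'$ — the light declarations, the $1$-types, and, crucially, the $2$-type (hence the set of relevant $\cT$) between a node and each of its children — is literally inherited from a corresponding family of $\str{A}$, despite the repeated re-rooting and transitive re-closure. One must be careful that re-rooting a child's subtree at a deeper node of the same generalized type does not change the $2$-type between the parent and the new child relative to the old one; this is where the definition of generalized type (which pins down $\ldec{}$ and the $1$-type) together with the observation that the parent–child $2$-type in a light tree-like structure is determined by the intra-family connections, does the work. Once this identification is in place, (ld1)/(ld2) transfer mechanically, and the rest of the lemma follows from the already-proven auxiliary lemmas.
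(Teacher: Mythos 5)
Your proposal is correct and follows essentially the same route as the paper: (i) by observing that each surviving node's downward family in $\str{A}'$ is an isomorphic copy of a family in $\str{A}$ with matching generalized types, (ii) by combining local$\Rightarrow$global consistency (Lemma~\ref{l:llocglobl}) with the definition of $\ldec{}$ to realize the witness in $\str{A}$ (and handling intra-family non-transitive atoms via the copied edge), and (iii) by invoking Lemma~\ref{l:homomorphisms}. One small imprecision: in part (ii) you write that a non-transitive binary symbol can join $a,a'$ only if ``one is an ancestor of the other''; in a light tree-like structure such an atom actually forces $a,a'$ to be parent and child (or equal), which is what makes the ``copied edge'' argument work — but your conclusion is unaffected.
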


It is not difficult to devise a strategy of top-down tree pruning 
leading to a model with short transitive paths in a simple scenario
where only one transitive relation is present.
With several transitive relations, however,  a quite intricate strategy seems to be required. The main obstacle is
that when decreasing the ${T}_u$-rank of an element $a$, for some $u$, we may accidentally increase the  ${T}_v$-rank 
of $a$ for some $v \not= u$. Nevertheless, an appropriate strategy exists (see Appendix \ref{s:strategy}), which allows us to state:

\begin{lemma}[(f2) $\leadsto$ (f3), light] \label{l:lshortpaths2}
Let $\phi$ be  a normal form \UNFOtTR{} formula. Let  $\str{A} \models \phi$ be a light tree-like structure over  signature $\sigma_\phi$,
with transitive paths bounded by some natural number $M$, such that each element has all the required witnesses in its downward family. Then $\phi$ 
has a light tree-like model with transitive paths bounded doubly exponentially in
$|\phi|$. 
\end{lemma}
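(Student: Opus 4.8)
The plan is to obtain the desired model by running a suitable top-down tree pruning process on $\str{A}$ and appealing to Lemma~\ref{l:ltreepruning}. Since $\str{A}$ is a light tree-like model of $\phi$ in which every element already has all its witnesses in its downward family, part~(iii) of that lemma guarantees that \emph{whatever} pruning process we perform, the resulting limit structure $\str{A}'$ is again a light tree-like model of $\phi$ over $\sigma_\phi$, with degree still bounded by $|\phi|$, and part~(i) tells us that its canonical light declarations stay locally consistent. So the whole task reduces to designing a pruning \emph{strategy} that forces every transitive path to have rank at most some $D=2^{2^{O(|\phi|)}}$. The natural quantity controlling the cuts is the generalized type $\gtype{\str{A}}{a}=(\ldec{\str{A}}{a},\type{\str{A}}{a})$: since $\ldec{\str{A}}{a}$ is a function from $\cP(\{T_1,\dots,T_{2k}\})$ to $\cP(\AAA)$ and $|\AAA|\le 2^{|\phi|}$, there are only doubly exponentially many generalized types, and a descendant $b$ of $a$ with $\gtype{\str{A}}{b}=\gtype{\str{A}}{a}$ is precisely what a pruning step may promote in place of $a$. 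The finiteness of $M$ is used only to make all ranks well-defined finite numbers, so that ``rank too large'' is meaningful and a pigeonhole argument along long branches applies.

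If a single transitive relation $T_1$ (with $T_2=T_1^{-1}$) were present, the strategy would be easy. Along each branch, cut the branch into maximal blocks of mutually $T_1$-connected nodes; consecutive blocks are separated by exactly one one-directional $T_1$-edge, so the $T_1$-rank of a branch equals the number of blocks minus one; and whenever two non-adjacent blocks on a branch contain, at their tops, nodes of the same generalized type, delete everything in between. This bounds the number of blocks per branch by the number of generalized types, hence the $T_1$-rank by $D$, and the legality of each cut is exactly the equality of generalized types (so that Lemma~\ref{l:ltreepruning} applies).

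The difficulty, and what I expect to be the crux of the proof, is that with $2k\ge 4$ distinguished symbols the relations cannot be handled one at a time. When we cut a branch segment between nodes $a$ and $b$ in order to shorten a $T_u$-path, the deleted segment may contain $\neg T_v$-edges for some $v\ne u$; after the transitive re-closure the node promoted in place of $a$ becomes $T_v$-connected downward to the whole of $\str{A}_b$, so brand-new one-directional $T_v$-edges are created across the seam and the $T_v$-rank may grow elsewhere in the tree. Hence the strategy must cut only at places that are simultaneously ``safe'' for all $2k$ transitive relations. What makes this possible is that equality of generalized types pins down not merely the $1$-type but the entire value of $\ldec{\str{A}}{\cdot}$, i.e.\ for \emph{every} combination $\cT$ it fixes which $1$-types are, and are not, $\cT$-reachable. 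I would use this to show that below any node whose $T_u$-rank is still too large one can always locate a descendant of the same generalized type such that cutting to it lowers that rank while leaving all the other transitive ranks controlled --- re-bucketing them into $\{0,1,\dots,D,{>}D\}$ and arguing that the last bucket is never entered.

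Iterating such cuts top-down and verifying termination yields a light tree-like structure with all transitive paths bounded by a doubly-exponential function of $|\phi|$; by Lemma~\ref{l:ltreepruning}(iii) it is a model of $\phi$, which is exactly the claim. The concrete description of the strategy and the termination/rank analysis are deferred to Appendix~\ref{s:strategy}.
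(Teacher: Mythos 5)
Your framework is set up correctly: Lemma~\ref{l:ltreepruning}(iii) reduces the task to designing a pruning strategy that keeps all $T_u$-ranks bounded, the generalized type $\gtype{\str{A}}{\cdot}$ is indeed the right invariant for cut legality, and you correctly identify the genuine obstacle — a cut that shortens a $T_u$-path may stretch a $T_v$-path after the transitive re-closure. But at exactly that point the proposal stops being a proof. It asserts that below any node with excessive $T_u$-rank one can always find a descendant of the same generalized type whose promotion lowers that rank ``while leaving all the other transitive ranks controlled,'' gestures at re-bucketing ranks into $\{0,\dots,D,{>}D\}$, and then explicitly defers ``the concrete description of the strategy and the termination/rank analysis'' to Appendix~\ref{s:strategy}. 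That deferral is circular here: the strategy and its analysis \emph{are} the content of the lemma you are asked to prove.

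The asserted claim is also not available in the strong form you state, and it is not what the paper proves. One cannot in general pick a replacement descendant that simultaneously lowers the offending rank and keeps all $2k$ other ranks under control; indeed the paper's strategy does not try to. Instead it attaches to each surviving node a permutation $\tau=\fs(\cdot)$ of $\{1,\dots,2k\}$ encoding shortening priorities. At each edge it classifies each position $v$ as killed, sustained, or diminished; it picks the replacement $b$ to minimize the rank of the highest-priority \emph{diminished} relation $T_{\tau(v_D)}$, subject only to not increasing the ranks of \emph{strictly higher-priority sustained} relations — lower-priority ranks are explicitly allowed to grow. When some relation is killed, $\tau$ is cyclically rotated so that relation drops to lowest priority; this guarantees that along any downward $T_u$-path the position of $u$ in the current permutation never increases, which is what lets the recursive estimate in Claim~\ref{c:lshortpaths} close: on a segment where the top $v_0$ positions of $\tau$ stay fixed, the $T_{\tau(v_0)}$-rank is bounded by $M_\phi + (M_\phi+1)\cdot Sum$, with $Sum$ the total of the higher-priority ranks on that segment, and unwinding $2k$ levels of the recursion yields the doubly exponential bound $\hat M_\phi$. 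None of this structure — the priority order, the rotation on kill, the one-sided control of only higher-priority sustained ranks, the nested pigeonhole — is present or implied in your sketch, and without some such mechanism the per-relation pigeonhole argument you use in the single-relation case does not compose across relations.
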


\subsection{Finite model construction in the two-variable case}

In this section we show the following small model property. To this end, in particular, we will make the transition (f3) $\leadsto$ (f5).

\begin{theorem} \label{t:lmain}
Every finitely satisfiable two-variable \UNFOTR{} formula $\phi$ has a finite model of size bounded triply exponentially 
in $|\phi|$. 
\end{theorem}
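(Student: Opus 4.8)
The plan is to realize the chain (f1)~$\leadsto$~(f2)~$\leadsto$~(f3)~$\leadsto$~(f5) announced in Section~\ref{s:stages} (omitting the regular form (f4), as announced), the last transition being the substantial one. We may assume $\phi$ is in the normal form~(\ref{eq:lnf}). Since $\phi$ is finitely satisfiable, fix a finite $\str{A}\models\phi$. Applying Lemma~\ref{l:ltreelike} to $\str{A}$ produces a light tree-like model $\str{A}'\models\phi$ of degree at most $|\phi|$ with transitive paths bounded by $|A|$, and --- by the definition of the light-$\phi$-tree-like unraveling --- with every node having all of its witnesses inside its downward family; this is form (f2). Feeding $\str{A}'$ to Lemma~\ref{l:lshortpaths2} yields a light tree-like $\str{B}\models\phi$ with transitive paths bounded by some $M$ that is doubly exponential in $|\phi|$; since the construction behind that lemma proceeds by top-down tree pruning, which (as in Lemma~\ref{l:ltreepruning}) keeps the witnesses of each node inside its downward family, $\str{B}$ still has this property. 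This is form (f3). We finally equip $\str{B}$ with its canonical system of light declarations $\fd_a:=\ldec{\str{B}}{a}$ --- locally consistent by Lemma~\ref{l:llocglobl}(ii) --- and set $\gtype{\str{B}}{a}=(\fd_a,\type{\str{B}}{a})$.

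The core of the argument is the transition (f3)~$\leadsto$~(f5): folding $\str{B}$ into a \emph{finite} light tree-like $\str{C}$ satisfying the hypotheses of Lemma~\ref{l:homomorphisms} (with $t=2$, since the $\forall$-conjunct of~(\ref{eq:lnf}) is binary). I would carry this out by induction on transitive rank, following the \emph{component}-and-gluing scheme of \cite{DK18}. For each $r\le M$ one builds finite \emph{level-$r$ components}: finite light tree-like structures whose every node carries a generalized type realized in $\str{B}$, in which every downward-${T}_u$-path has ${T}_u$-rank strictly below $r$ for all $u$, which contain a $\phi$-witness structure for each of their nodes, and whose \emph{boundary} is annotated so that for each boundary node $a$ one records the pairs $(\cT,\gamma)$ for which a one-directional descent along $\cT$ to a node of generalized type $\gamma$ still has to be carried out (this information being read off from $\str{B}$). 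A level-$0$ component comes from a regularizing unraveling of a rank-$0$-bounded initial piece of a subtree of $\str{B}$ --- inside such a piece every downward transitive edge is symmetric, so the ${T}_u$-connected zones behave like cliques --- after which one keeps one representative subtree per generalized type and transitively closes. A level-$(r{+}1)$ component is assembled from finitely many representative rank-$(r{+}1)$-bounded pieces of subtrees of $\str{B}$ by attaching, at every boundary node $a$ and every required $(\cT,\gamma)$, a level-$r$ component whose root has generalized type $\gamma$ and whose own boundary requirements are compatible with $\cT$; such a component exists thanks to the local consistency of the $\fd$'s, and the match of generalized types is exactly what guarantees that transitively closing the freshly added ${T}_u$-edges introduces no connection forbidden by any $\fd_b(\cT')$. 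Taking a representative of the root of $\str{B}$ and building the level-$M$ component above it gives $\str{C}$; it is finite since ranks in $\str{B}$ never exceed $M$, so the recursion bottoms out (equivalently, after also pruning away repetitions of generalized types inside each maximal symmetric segment, the resulting tree has doubly exponential depth and branching at most $|\phi|$).

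It remains to check correctness and the size bound. By construction every node of $\str{C}$ has a $\phi$-witness structure, so hypothesis (a1) of Lemma~\ref{l:homomorphisms} holds; for (a2) one verifies that every two-element substructure $\str{C}\restr\{a_1',a_2'\}$ admits a $1$-type-preserving homomorphism into $\str{B}$, which is again a model of $\phi$. The only way this could fail would be that the transitive closures performed during the construction created a pair with $\str{C}\models\cT a_1'a_2'$ but $\type{\str{B}}{a_2'}\in\fd_{a_1'}(\cT)$; the attachment rule is designed precisely to preserve the local --- hence, by Lemma~\ref{l:llocglobl}(i), the global --- consistency of the light declarations, which forbids exactly this. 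Thus $\str{C}\models\phi$ by Lemma~\ref{l:homomorphisms}. As for the size: there are doubly exponentially many generalized types and boundary annotations and $M$ is doubly exponential, so $\str{C}$ has doubly exponential depth and branching at most $|\phi|$, giving $|C|$ triply exponential in $|\phi|$ --- form (f5). The step (f5)~$\leadsto$~(f1) being trivial, the theorem follows. I expect the main obstacle to be the gluing across a one-directional transitive edge in the inductive step: adding a single such edge and transitively closing can create a large batch of new transitive pairs at once, and ensuring that all of them stay consistent with the negative information stored in the light declarations, in the presence of several mutually interacting transitive relations, is what makes the component-building phase genuinely harder than the corresponding step for equivalences in \cite{DK18}.
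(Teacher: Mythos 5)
Your plan follows the paper up to and including Lemma~\ref{l:lshortpaths2}, but the core step (f3)~$\leadsto$~(f5) takes a genuinely different route from the one the paper actually uses, and that route has a real gap.

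The paper proves this step via Lemma~\ref{l:lfin}, whose induction is on $l=|\mathcal{E}_0|/2$, i.e.\ on the number of transitive relations that are \emph{not total} on the current fragment. The unit of progress is a pair $(T_u, T_u^{-1})$ moving into $\mathcal{E}_{tot}$; the components are stratified into $2l$ layers, with layer $L_u$ dedicated to ``killing'' $T_u$ so that no $T_u$-connection crosses from $L_u$ to $L_{u+1}$, and the cylinder-with-two-colours gluing is what makes the graph of components \emph{cyclic} (hence finite) while guaranteeing that a transitive path can only cross one of the two colour borders. You propose instead an induction on the transitive rank $r$, from $0$ up to $M$, with level-$r$ components attached below one-directional edges at boundary nodes of a level-$(r{+}1)$ ``piece''. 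This is not the paper's induction, and the difference is not cosmetic.

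The gap is that bounding ranks does not bound depth, so your ``pieces'' are infinite and your recursion does not bottom out into something finite. A downward path whose $T_u$-rank is $r$ can still be infinite: symmetric $T_u$-edges do not increase the rank, so rank-$0$-bounded initial pieces can already be infinite chains, and exactly the same phenomenon recurs at every level $r{+}1$ (a witness reached by a \emph{symmetric} $T_u$-edge from a boundary node does not see its rank drop and hence cannot be delegated to a level-$r$ component, so it must stay in the piece, which can therefore grow without bound). Making such a piece finite is precisely the hard content of the UNFO+EQ case of~\cite{DK18}, and that construction is not ``keep one representative subtree per generalized type and transitively close'': that recipe, applied to an infinite light tree-like structure, would in general create new transitive connections and destroy $1$-types. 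What the paper does instead --- and what your plan has no counterpart of --- is the layer/sublayer stratification inside components (each layer devoted to killing one $T_u$) together with the two-colour cylindrical arrangement of components. These two mechanisms jointly ensure (a) that components are finite, because each of the $2l$ layers terminates after $\hat{M}_\phi{+}1$ sublayers by Claim~\ref{c:lnodanger}, and (b) that the transitive closure of the assembled structure does not spoil $2$-types, because any $T_u$-path lives in a bounded band of levels of the cylinder and crosses at most one colour border (Reductions~1 and~2 in Appendix~\ref{s:cor}). Your sentence ``the match of generalized types is exactly what guarantees that transitively closing the freshly added $T_u$-edges introduces no connection forbidden by any $\fd_b(\cT')$'' asserts exactly what the cylinder argument is there to prove, and without a structural mechanism of that kind the assertion is not justified: once you glue many components together, a chain of short $T_u$-paths through several components can create a long $T_u$-connection whose endpoints were never compared during the ``attachment rule''. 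In short, the rank should remain the \emph{inner} recursion (it governs the number of sublayers inside a layer, via Claim~\ref{c:lnodanger}), not the outer one; the outer induction must be over the set of non-total transitive relations, because that is what supports the layer/colour machinery that actually makes the structure finite.
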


Let us fix a finitely satisfiable normal form \UNFOTR{} formula $\phi$
over a signature $\sigma_\phi=\sigma_{\cBase} \cup \sigma_{\cDist}$ for
$\sigma_{\cDist}=\{{T}_1, \ldots, {T}_{2k} \}$. Denote by $\AAA$ the set of $1$-types over this signature.
Fix a light tree-like model $\str{A} \models \phi$, with linearly bounded degree and doubly exponentially bounded 
transitive paths (in this section we denote this bound by $\hat{M}_\phi$), as guaranteed by Lemma \ref{l:lshortpaths2}. 
We show how to build a `small' finite model $\str{A}' \models \phi$. 
For a  set  
$\mathcal{E} \subseteq \sigma_{\cDist}$, closed under inverses, and $a \in A$ we denote by $[a]_{\mathcal{E}}$ the set consisting of $a$ and all elements $b \in A$ such that
$\str{A} \models {T}_u ab$ for all ${T}_u \in \mathcal{E}$. 
Note that $[a]_\mathcal{E}$ is either a singleton or each of the ${T}_u \in \mathcal{E}$ is total on $[a]_\mathcal{E}$, that is, for each $b_1, b_2 \in [a]_\mathcal{E}$ we have $\str{A} \models {T}_u b_1 b_2$ for
all ${T}_u \in \mathcal{E}$.  
We note that $[a]_\emptyset=A$.

In our construction we inductively produce finite fragments of $\str{A}'$ corresponding to 
some (potentially infinite) classes $[a]_{\mathcal{E}}$ of $\str{A}$. Essentially, the induction goes
downward on the size of $\mathcal{E}$. 
Intuitively, if a relation is total then it plays no important role, so 
we may forget about it during the construction.
Every such fragment will be obtained by an appropriate arrangement of 
some number of basic building blocks, called \emph{components}. Each of the components is obtained by some number of  applications of 
the inductive assumption to situations in which a new pair of relations ${T}_{2u-1}$, ${T}_{2u}$ is added 
to $\mathcal{E}$.

 Let us formally state our inductive lemma. In this statement we do not explicitly include any bound on the size of
promised finite models, but such a bound will be implicit in the proof and will be presented later. 
Recall that $\str{A}$ is the model fixed at the beginning of this subsection.

\begin{lemma}[Main construction, light]\label{l:lfin} Let $a_0\in A$ and let $\mathcal{E}_0\subseteq\sigma_{\cDist}$ be closed under inverses, let $\mathcal{E}_{tot}:=\sigma_{\cDist} \setminus\mathcal{E}_0$. 
Let $\str{A}_0=\str{A}_{a_0}\restr[a_0]_{\mathcal{E}_{tot}}$. Then there exist a finite structure $\str{A}_0'$, a function $\fp:A_0'\to A_0$ and an element $a_0'\in A_0'$, called the \emph{\origin} of $\str{A}_0'$, such that
	 \begin{enumerate}[(b1)] 
	 	\item $A_0'$ is a singleton 
		or every symbol from $\mathcal{E}_{tot}$ is interpreted as the total relation on $\str{A}_0'$.\label{blone} 
	 	
	 	\item $\fp(a_0')=a_0$.\label{bltwo} 
	 	
	 	\item For each $a'\in A_0'$ and each $i$, if $\fp(a')$ has a child being its witness  for $\phi_i$ in $\str{A}_0$ then 
		$a'$ has  a witness for $\phi_i$ in $\str{A}_0'$. Moreover, 
		$\type{\str{A}_0'}{a'}=\type{\str{A}_0}{\fp(a')}$.
		 \label{blthree}

	 	\item 
	 	For every pair $a', b' \in A_0'$ there  exists a homomorphism $\fh:\str{A}_0'\restr \{a', b'\} \rightarrow \str{A}$ preserving $1$-types such that $\fh(a')=\fp(a')$, and for any 1-type $\alpha$ and $\mathcal{T}\subseteq\{1,\ldots,2k\}$, 
	 	if $\str{A}_0'\models\mathcal{T}a'b'$ and $\alpha\not\in\ldec{\str{A}}{\fp(b')}(\mathcal{T})$ then $\alpha\not\in\ldec{\str{A}}{\fp(a')}(\mathcal{T})$.
	 	
	 	 \label{blfour}	 	
	 \end{enumerate}
\end{lemma}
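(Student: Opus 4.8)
The plan is to proceed by induction on $|\mathcal{E}_0|$, i.e.\ downward on the number of relations that are \emph{not} yet total. The base case is $\mathcal{E}_0 = \emptyset$, so $\mathcal{E}_{tot} = \sigma_{\cDist}$ and $\str{A}_0 = \str{A}_{a_0} \restr [a_0]_{\sigma_{\cDist}}$; here every transitive relation is total on $[a_0]_{\sigma_{\cDist}}$ (or the class is a singleton), and the required finite model is obtained essentially as in the \UNFOEQ{} construction of \cite{DK18}: pick finitely many representatives of $1$-types realized in $[a_0]_{\sigma_{\cDist}}$, supply witnesses for each $\phi_i$ among them (possible because of totality — every element ``sees'' every $1$-type it needs via the relevant combination $\mathcal T$, and $\phi_0$ is respected since any induced substructure maps homomorphically back by the local homogeneity of a totalised class), and let $\fp$ send a representative to any $\str{A}$-element of the same $1$-type, with $\fp(a_0')=a_0$ fixed. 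Condition (b\ref{blfour}) reduces in this case to a statement about $\ldec{}$ being monotone along $\mathcal T$-edges, which is exactly (ld1)/(ld2) applied along the (now short) $\mathcal T$-paths inside the class.

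For the inductive step, I would proceed in two phases, mirroring the ``components, then arrangement'' structure advertised in Section~\ref{s:stages}. \emph{Phase 1 (components).} Pick one relation pair ${T}_{2u-1}, {T}_{2u}$ from $\mathcal{E}_0$ and let $\mathcal{E}_1 = \mathcal{E}_0 \setminus \{{T}_{2u-1}, {T}_{2u}\}$ (still closed under inverses). For the finitely many $\mathcal{E}_1$-classes $[b]_{\mathcal{E}_1}$ that matter — those reachable from $a_0$ within $[a_0]_{\mathcal{E}_{tot}}$ — invoke the inductive hypothesis to obtain finite structures $\str{A}_b'$ with origins $b'$ and projections $\fp_b$. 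A \emph{component} is built by taking finitely many such inductively-produced pieces and linking them by fresh ${T}_{2u-1}$-edges in a layered, acyclic fashion that respects the $\hat M_\phi$-bound on ${T}_{2u}$-ranks coming from the fixed model $\str{A}$ — this is where the rank bound of Lemma~\ref{l:lshortpaths2} is spent, and where one must be careful that adding one-directional ${T}_{2u-1}$-edges does not violate the declarations $\ldec{\str{A}}{\fp(\cdot)}$ of the endpoints: an edge $c' \to d'$ with label $\mathcal T \ni {T}_{2u-1}$ is admissible precisely when $\type{\str A}{\fp(d')} \notin \ldec{\str A}{\fp(c')}(\mathcal T)$ and more generally when the monotonicity clause of (b\ref{blfour}) can be maintained. \emph{Phase 2 (arrangement).} Arrange several copies of the components into $\str{A}_0'$ by further ${T}_{2u-1}$-links so that every element acquires all of its ${T}_{2u-1}$-witnesses (the ones that, in $\str A_0$, were reached by going ``outward'' past the $\mathcal E_1$-class), then transitively close all of $\sigma_{\cDist}$. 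The origin $a_0'$ is the origin of the component containing the copy of $[a_0]_{\mathcal E_1}$'s inductive structure.

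Throughout, verifying (b\ref{blone})–(b\ref{blfour}) for the assembled structure is the bookkeeping core: (b\ref{blone}) and (b\ref{bltwo}) are immediate from the construction; (b\ref{blthree}) needs that witnesses for each $\phi_i$ are supplied either already inside a component (inductively) or by the Phase-2 linking, together with $1$-type preservation, which is maintained copy-by-copy; (b\ref{blfour}) is proved by, given $a', b'$ with $\str{A}_0' \models \mathcal T a' b'$, decomposing the $\mathcal T$-connection into the transitive closure of edges that are either internal to one component-piece (handled by the inductive (b\ref{blfour})) or are the freshly added ${T}_{2u-1}$-links (handled by the admissibility condition above), and then composing the resulting $1$-type-preserving homomorphisms and the $\ldec{}$-monotonicity facts along the path — here Lemma~\ref{l:llocglobl} (local implies global) and the LCCs (ld1)/(ld2) do the work of gluing the per-edge homomorphisms into a single $\fh$ on $\str{A}_0' \restr \{a',b'\}$ with $\fh(a')=\fp(a')$. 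The main obstacle I expect is Phase~1: ensuring that the layered insertion of one-directional ${T}_{2u-1}$-edges simultaneously (a) keeps $\str{A}_0'$ finite, (b) does not blow up the rank of \emph{other} transitive relations ${T}_v$ (the interaction phenomenon flagged before Lemma~\ref{l:lshortpaths2}), and (c) preserves enough of the declaration structure that (b\ref{blfour}) still goes through after transitive closure — this is the genuinely new difficulty relative to \cite{DK18}, where no one-way connections had to be threaded through the inductive classes.
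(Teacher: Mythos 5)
Your overall skeleton is right --- downward induction on $|\mathcal{E}_0|$, with each step building components out of inductively supplied pieces and then arranging copies --- but the inductive step as you describe it cannot work. The central problem is that your component kills only the single pair $\{T_{2u-1},T_{2u}\}$ and links the inductively-produced pieces by ``$T_{2u-1}$-edges''. The inductive piece for $b$ only supplies witnesses that lie inside $[\fp(b)]_{\mathcal{E}_{tot}\cup\{T_{2u-1},T_{2u}\}}$; a witness $c$ outside that class may be connected to $\fp(b)$ by a $2$-type that contains \emph{no} $T_{2u-1}$-atom at all (e.g.\ only a one-directional $T_v$ for some $v\neq 2u{-}1,2u$, or only non-transitive atoms). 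Such a witness has no place to go in your scheme. The paper's pattern component therefore has $2l$ layers $L_1,\ldots,L_{2l}$, cycling through \emph{every} $T_v\in\mathcal{E}_0$: each $L_v$ decreases $T_v$-ranks through its sublayers, a witness whose $2$-type has a one-directional forward $T_v$-edge goes to the next sublayer of $L_v$, and every other witness is pushed to $L_{v+1}$. Only after all $2l$ layers can one guarantee both that every remaining witness reaches the interface layer and that no transitive path whatsoever joins the root to the interface; the latter is what lets the subsequent transitive closure leave $1$-types and declarations intact, and it is exactly what makes (b\ref{blthree}) and (b\ref{blfour}) provable. Killing one pair per component does not achieve either property.

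Two further points. First, your Phase-2 arrangement is too vague to carry the correctness burden: the paper glues colored ($g\in\{0,1\}$) copies of extended components around a ``cylinder'', so that any transitive path in the assembled structure crosses at most one of the two color boundaries. Without this (or some equivalent device), transitive closure in $\str{A}_0'$ can wrap around and create new $T_v$-connections, enlarging $1$-types and invalidating (b\ref{blfour}); ``further $T_{2u-1}$-links'' does not by itself rule this out. Second, your base case reinvokes a \UNFOEQ{}-style construction for the class $[a_0]_{\sigma_{\cDist}}$, which is real work and would itself need the very properties (b\ref{blone})--(b\ref{blfour}); the paper instead adds two artificial transitive relations interpreted as the identity, forcing $[a_0]_{\sigma_{\cDist}}$ to be a singleton, so the base case is just $\str{A}_0'=\str{A}\restr\{a_0\}$. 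Finally, the obstacle you flag --- ranks of other $T_v$ blowing up during Phase 1 --- is not the live danger at this point: ranks are read off from $\fp$-images in the already-pruned model $\str{A}$, so they cannot grow; the real danger, which your proposal does not address, is the unprovided witnesses and the unwanted closures described above.
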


Observe first that Lemma \ref{l:lfin} indeed
allows us to build a particular finite model of $\phi$. 
Apply it to $\mathcal{E}_0=\sigma_{\cDist}$ (which means that $\mathcal{E}_{tot}=\emptyset$ and $[a_0]_{\mathcal{E}_{tot}} = A$)  and  $a_0$ being the root of $\str{A}$ (which means that $\str{A}_0=\str{A}$) and
use Lemma \ref{l:homomorphisms} to see that the obtained structure $\str{A}_0'$ is a model of $\phi$. Indeed, Condition (a1) of Lemma \ref{l:homomorphisms} follows directly from Condition (b\ref{blthree}), as in this case $\fp(a')$ has all witnesses in $\str{A}_0$.
 Condition (a2) is directly implied by Condition (b\ref{blfour}).

The proof of Lemma \ref{l:lfin} goes by induction on $l$, where $l=|\mathcal{E}_0|/2$. 
In the base of induction, $l=0$, we have $\mathcal{E}_{tot}=\sigma_{\cDist}$. Without loss of generality we may assume that 
the classes $[a]_{\mathcal{E}_{tot}}$ are singletons for all $a \in A$. (If this is not the case, we just add artificial transitive relations ${T}_{2k+1}$ and ${T}_{2k+2}$ both interpreted as the identity in $\str{A}$.)
We simply take $\str{A}'_0:=\str{A}_0= \str{A}\restr\{a_0\}$ and set $\fp(a_0)=a_0$. It is
readily verified that the conditions (b\ref{blone})--(b\ref{blfour}) are then satisfied.

For the inductive step assume that Lemma \ref{l:lfin}  holds for arbitrary  $\mathcal{E}_0$ closed under inverses, of size $2(l-1)<2k$.
We show that then it holds for $\mathcal{E}_0$ of size $2l$. Take such $\mathcal{E}_0$, and assume, w.l.o.g., that $\mathcal{E}_0=\{{T}_1,\ldots,{T}_{2l}\}$.  In the next two subsections we present a construction of $\str{A}_0'$. We argue that it is correct in Appendix \ref{s:cor}.
Finally
we estimate the size of the produced models and establish the complexity of the finite satisfiability problem.

\subsubsection{Pattern components}

We plan to construct $\str{A}_0'$ out of basic building blocks called \emph{components}. Each component will be 
an isomorphic copy of some {pattern component}. 

Let $\GGG[A_0]$ be the set of the generalized types realized in $\str{A}_0$. 
For every $\gamma \in \GGG[A_0]$ we construct two pattern structures, a \emph{pattern component} $\str{C}^\gamma$ and an \emph{extended pattern component} $\str{G}^\gamma$. $\str{C}^\gamma$ is a finite structure whose universe
is divided into $2l$ \emph{layers}
$L_1,\ldots,L_{2l}$.
$\str{G}^\gamma$ extends $\str{C}^\gamma$ by an additional, \emph{interface layer}, denoted $L_{2l+1}$. See the left part of Fig.~\ref{f:llayers}.
We now define $\str{G}^\gamma$, obtaining then $\str{C}^\gamma$ just by the restriction of $\str{G}^\gamma$ to 
non-interface layers.

 
Each non-interface layer $L_i$ is further divided into  \emph{sublayers} $L_i^1,L_i^2,\ldots, L_i^{\hat{M}_\phi+1}$.
Additionally, in each sublayer $L_i^j$ its initial part $L_i^{j,init}$ is distinguished.
In particular, $L_1^{1,init}$ consists of a single element called the \emph{root}.
The interface layer $L_{2l+1}$ has no internal division but, for convenience, is sometimes
referred to as $L_{2l+1}^{1,init}$.
The elements of $L_{2l}$ are called \emph{leaves} and 
the elements of $L_{2l+1}$ are called \emph{interface elements}. See  Fig.~\ref{f:llayers}.

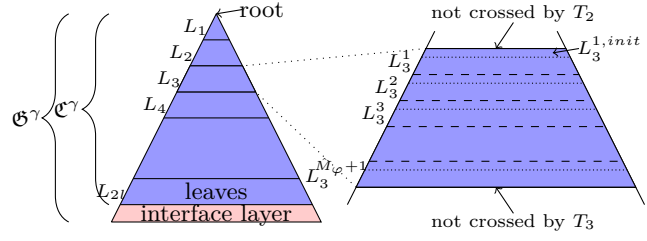
\begin{figure}
\begin{center}
\begin{tikzpicture}[scale=0.23]

\draw [decorate,decoration={brace,amplitude=10pt},xshift=-4pt,yshift=0pt]
(0,1) -- (0,12) node [black,midway,xshift=-0.55cm] 
{\footnotesize $\str{C}^\gamma$};

\draw [decorate,decoration={brace,amplitude=10pt},xshift=-4pt,yshift=0pt]
(-2.3,0) -- (-2.3,12) node [black,midway,xshift=-0.55cm] 
{\footnotesize $\str{G}^\gamma$};

\fill[opacity=0.4, color = blue] (0.5,1)--(6,12)--(12,0) -- (11.5,1)--(0.5,1);
\fill[opacity=0.2, color = red] (0,0) -- (0.5,1) -- (11.5,1) -- (12,0) -- (0,0);

	\draw (0,0) -- (6,12) -- (12,0) -- (0,0);

    \draw (5.25,10.5) -- (6.75,10.5);
		\draw (4.5,9) -- (7.5,9);
		\draw (3.75,7.5) -- (8.25,7.5);
		\draw (3,6) -- (9,6); 
		\draw (0.5,1) -- (11.5,1);
		\draw (1.25,2.5) -- (10.75,2.5);
		
		\draw[fill=blue!40] (3.75,7.5) -- (4.5,9) -- (7.5,9) -- (8.25,7.5) -- (3.75,7.5);


\coordinate [label=center:$\scriptstyle L_1$] (A) at ($(4.8,11.25)$); 
\coordinate [label=center:$\scriptstyle L_2$] (A) at ($(4.0,9.75)$); 
\coordinate [label=center:$\scriptstyle L_3$] (A) at ($(3.3,8.25)$); 
\coordinate [label=center:$\scriptstyle L_4$] (A) at ($(2.5,6.75)$); 
\coordinate [label=center:$\scriptstyle L_{2l}$] (A) at ($(0,1.75)$); 
\coordinate [label=center:{\small leaves}] (A) at ($(6,1.8)$); 
\coordinate [label=center:{\small interface layer}] (A) at ($(6,0.4)$);

\coordinate [label=center:{\small root}] (A) at ($(8.6,12.25)$); 
\draw[->] (7.5,12.25) -- (6,12);

\draw[dotted] (18,10) -- (7.5,9);
\draw[dotted] (14,2) -- (8.25,7.5);

\begin{scope}[shift={(-2,0)}]
	\draw (15.5,1) -- (20.5,11);
	\draw (32.5,1) -- (27.5,11);


	\draw[fill=blue!40] (16,2) -- (20,10) -- (28,10) -- (32,2) -- (16,2);

   \draw (16,2) -- (32,2);
	\draw (20,10) -- (28,10);
	\draw[dashed] (19.25,8.5) -- (28.75,8.5);
	\draw[dashed] (18.5,7) -- (29.5,7);		
	\draw[dashed] (17.75,5.5) -- (30.25,5.5);
	\draw[dashed] (16.75,3.5) -- (31.25,3.5);
	
	\draw[densely dotted] (19.75,9.5) -- (28.25,9.5);
	\draw[densely dotted] (19,8) -- (29,8);
	\draw[densely dotted] (18.25,6.5) -- (29.75,6.5);
	\draw[densely dotted] (16.5,3) -- (31.5,3);

\coordinate [label=center:$\scriptstyle L_3^1$] (A) at ($(18.5,9.25)$); 
\coordinate [label=center:$\scriptstyle L_3^2$] (A) at ($(17.75,7.75)$); 
\coordinate [label=center:$\scriptstyle L_3^3$] (A) at ($(17,6.25)$); 
\coordinate [label=center:$\scriptstyle L_3^{\hspace*{-2pt}\hat{M}_\phi{+}1}$] (A) at ($(14.9,2.85)$);

\coordinate [label=center:$\scriptstyle L_3^{1,init}$] (A) at ($(30.5,10)$); 
\draw[->] (28.75,10) -- (27.25,9.75);

\coordinate [label=center:$\scriptstyle \text{not crossed  by } {T}_2$] (A) at ($(25,12)$); 
\draw[->] (25,11.5) -- (24,10);

\coordinate [label=center:$\scriptstyle \text{not crossed  by }{T}_3$] (A) at ($(25,0)$); 
\draw[->] (25,0.5) -- (24,2);
\end{scope}

\end{tikzpicture}
\caption{A schematic view of a component in the two-variable case.
}
\label{f:llayers}%
\end{center}
\end{figure}

$\str{G}^\gamma$ will have a shape resembling a tree, with  structures obtained by the inductive assumption as nodes, though
it will not be tree-like in the sense of Section \ref{s:ltreelike} (in particular, the internal structure of nodes may
be complicated). All elements of $\str{G}^\gamma$, except for the interface elements, will have appropriate witnesses (those required
by (b\ref{blthree})) provided.
The crucial property we want to enforce is that the root of $\str{G}^\gamma$ will not be joined to its interface elements 
by any
transitive path.

We remark that during the process of building a pattern component we do not yet apply the
transitive closure to the distinguished relations. Postponing this step is not important from the point of view
of the correctness of the construction, but will allow us for a more precise
presentation of the proof of this correctness.
Given a component  $\str{C}$ (extended component $\str{G}$) we will sometimes denote by $\str{C}_+$ ($\str{G}_+$) the structure obtained from $\str{C}$ ($\str{G}$) by
applying all the appropriate transitive closures. 

The role of every non-interface layer $L_u$ 
is, speaking informally, to \emph{kill}  ${T}_u$, that is to ensure that
there will be no ${T}_u$-connections from $L_u$  to $L_{u+1}$. 
See the right part of Fig.~\ref{f:llayers}.
The role of sublayers of $L_u$,
on the other hand, is to decrease the ${T}_u$-rank of 
the patterns of
elements. 
The purpose of the interface layer, 
$L_{2l+1}$, 
will be to connect the component with other components.

If $\gamma$ is the generalized type of
${a_0}$ then take $a:=a_0$; otherwise take as $a$ any element of $A_0$ of generalized type $\gamma$.
We begin the construction of $\str{G}^\gamma$ by defining $L_1^{1,init}=\{a'\}$ for a fresh $a'$, setting $\type{\str{G}^\gamma}{a'}=\type{\str{A}}{a}$ and $\fp(a')=a$.

\smallskip\noindent
\emph{Construction of a  layer}: Let $1\leq u \leq 2l$.
Assume we have defined layers $L_1,\ldots,L_{u-1}$, the initial part of sublayer $L_u^1$, $L_u^{1,init}$, and both the structure of $\str{G}^\gamma$ and the values of $\fp$ on $L_1 \cup \ldots \cup L_{u-1} \cup L_u^{1,init}$. We are going to kill ${T}_u$. 
We now expand $L_u^{1, init}$ to a full layer $L_u$.

\smallskip\noindent
\emph{Step 1: Subcomponents.} Assume that we have defined 
sublayers $L_u^{1}, \ldots,L_u^{j,init}$, and both the structure of $\str{G}^\gamma$ and the values of $\fp$ on
$L_1 \cup \ldots \cup L_{u-1} \cup L_u^{1} \cup \ldots \cup  L_u^{j,init}$. 
For each  $b\in L_u^{j,init}$ perform independently the following procedure. 
Apply the inductive assumption to $\fp(b)$ and the set $\mathcal{E}_0 \setminus \{ {T}_u, {T}_{u}^{-1}\}$ obtaining a structure $\str{B}_0$, its \origin{}
$b_0$ and a function $\fp_b:B_0\to A_{\fp(b)}\cap[\fp(b)]_{\mathcal{E}_{tot} \cup \{{T}_u, {T}_{u}^{-1}\}}\subseteq A_0$ with $\fp_b(b_0)=\fp(b)$. 
Identify $b_0$ with $b$ and add the remaining elements of $\str{B}_0$ to $L_u^j$, retaining the structure. 
Substructures $\str{B}_0$ of this kind will be called \emph{subcomponents} (note that all appropriate relations
are transitively closed in subcomponents).
Extend $\fp$ so that $\fp\restr B_0=\fp_b$. This finishes the
definition of $L_u^j$.

\smallskip\noindent
\emph{Step 2: Providing witnesses.} For each $b\in L_u^j$ and $1 \le s \le m$ independently perform the following procedure. Let $\str{B}_0$ be the subcomponent created inductively in \emph{Step 1}, such that $b \in B_0$. 
If 
$\fp(b)$ has a witness  for $\phi_s(x,y)$ in $A_0$ then we want to reproduce such a witness for $b$.
Choose one such witness $c$ (being a child of $\fp(b)$) for $\fp(b)$.
Let us denote $\beta=\type{\str{A}}{\fp(b),c}$. 
If $\{{T}_u xy, {T}_{u}^{-1}xy \} \subseteq \beta$ then by Condition (b\ref{blthree}) of the inductive assumption $b$ already has an appropriate witness in 
the subcomponent $\str{B}_0$. So we do nothing in this case. 
If ${T}_u xy \in \beta$ and ${T}_{u}^{-1} xy \not\in \beta$
then we add a copy $c'$ of $c$ to $L_u^{j+1, init}$; 
if ${T}_u xy \not\in \beta$  then we add a copy $c'$ of $c$ to  $L_{u+1}^{1, init}$. 
We  join $b$ with $c'$ by $\beta$ and set $\fp(c')=c$.

An attentive reader may be afraid that when adding witnesses for elements of the last sublayer $L_{u}^{\hat{M}_\phi+1}$
of $L_u$ we may want to add
one of them to the non-existing layer $L_{u}^{\hat{M}_\phi+2}$. 
There is however no such danger, which follows from the following claim.

\begin{claim} \label{c:lnodanger} 
(i) Let $b\in L_u^{j,init}$ and let $\str{B}_0$ be the subcomponent created for $b$ in \emph{Step 1}. Then for all $b'\in\str{B}_0$ we have $\rank{\str{A}}{u}{\fp(b)}\geq\rank{\str{A}}{u}{\fp(b')}$.
(ii) Let $b\in L_u^j$ and let $c' \in L_u^{j+1}$ be a witness created for $b$ in \emph{Step} 2. Then  $\rank{\str{A}}{u}{\fp(b)}>\rank{\str{A}}{u}{\fp(c')}$. 
\end{claim}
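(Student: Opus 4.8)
The plan is to track how the $T_u$-rank (as measured in $\str{A}$) of the $\fp$-images of newly created elements evolves along the construction of the layer $L_u$, and to show it strictly decreases every time we move to a new sublayer. Part~(i) is essentially a restatement of the inductive assumption, while part~(ii) uses the structure of witness $2$-types together with the definition of rank.

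For part~(i): the subcomponent $\str{B}_0$ created for $b$ in \emph{Step 1} comes from applying the inductive Lemma~\ref{l:lfin} to $\fp(b)$ and the set $\mathcal{E}_0 \setminus \{T_u, T_u^{-1}\}$, so $\fp_b$ maps $B_0$ into $A_{\fp(b)} \cap [\fp(b)]_{\mathcal{E}_{tot} \cup \{T_u, T_u^{-1}\}}$. Hence for every $b' \in B_0$ the element $\fp(b') = \fp_b(b')$ lies in the subtree $A_{\fp(b)}$ of $\str{A}$ rooted at $\fp(b)$, and moreover $\str{A} \models T_u\, \fp(b)\, \fp(b')$ and $\str{A} \models T_u^{-1}\, \fp(b)\, \fp(b')$, i.e. $\fp(b)$ and $\fp(b')$ are joined by a symmetric $T_u$-connection. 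I would argue that any downward-$T_u$-path in $\str{A}$ starting at $\fp(b')$ can be prepended with $\fp(b)$ (using that $\fp(b')$ is a descendant of $\fp(b)$ lying on the tree below it, so the tree path from $\fp(b)$ to $\fp(b')$ together with the symmetric $T_u$-edge witnesses that this is a valid downward-$T_u$-path segment) to obtain a downward-$T_u$-path from $\fp(b)$ whose $T_u$-rank is at least that of the original path --- the extra steps are symmetric in $T_u$, hence contribute $0$ to the rank. Taking the supremum over all such paths yields $\rank{\str{A}}{u}{\fp(b)} \geq \rank{\str{A}}{u}{\fp(b')}$. (The only subtlety is that the tree path in $\str{A}$ from $\fp(b)$ down to $\fp(b')$ need not itself be a $T_u$-path step by step; but since $\fp(b)$ and $\fp(b')$ are $T_u$-and-$T_u^{-1}$-related and $\str{A}$ is light tree-like so these transitive edges are transitive closures of family-internal edges, one checks that along that tree path every consecutive pair is joined by $T_u$ in both directions, so the rank contribution is genuinely $0$.)

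For part~(ii): let $c'$ be the witness created for $b$ in \emph{Step~2}, with $\fp(c') = c$ a child of $\fp(b)$ in $\str{A}$ and $\beta = \type{\str{A}}{\fp(b), c}$. The case analysis in \emph{Step~2} is: $c'$ is placed in $L_u^{j+1,init}$ precisely when $T_u xy \in \beta$ and $T_u^{-1} xy \notin \beta$, i.e. $\str{A} \models T_u\, \fp(b)\, c$ but $\str{A} \not\models T_u\, c\, \fp(b)$. Now take any downward-$T_u$-path $\vec{a} = c = a_1, a_2, \ldots, a_N$ starting at $c$ in $\str{A}$; prepending $\fp(b)$ gives a downward-$T_u$-path $\fp(b) = a_0, a_1, \ldots, a_N$ (valid because $c$ is a child of $\fp(b)$ and $\str{A} \models T_u\, \fp(b)\, c$), and since $\str{A} \models \neg T_u\, c\, \fp(b)$ the index $i = 0$ contributes $1$ to the set $\{i : \str{A} \models \neg T_u\, a_{i+1}\, a_i\}$ that defines the rank. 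Hence $\rank{\str{A}}{u}{\vec{a}} + 1 \le \rank{\str{A}}{u}{(\fp(b), \vec{a})} \le \rank{\str{A}}{u}{\fp(b)}$, and taking the supremum over $\vec{a}$ gives $\rank{\str{A}}{u}{c} < \rank{\str{A}}{u}{\fp(b)}$, which is exactly $\rank{\str{A}}{u}{\fp(c')} < \rank{\str{A}}{u}{\fp(b)}$.

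Combining the two parts, along $L_u^{1,init} \to L_u^1 \to L_u^{2,init} \to L_u^2 \to \cdots$ the quantity $\rank{\str{A}}{u}{\fp(\cdot)}$ is non-increasing within a subcomponent (part~(i)) and drops by at least $1$ each time a witness pushes us from sublayer $j$ to sublayer $j{+}1$ (part~(ii)). Since ranks in $\str{A}$ are bounded by $\hat{M}_\phi$ and the root element $a$ of the layer has $\rank{\str{A}}{u}{a} \le \hat{M}_\phi$, after at most $\hat{M}_\phi$ such drops the rank reaches $0$, at which point no further $T_u$-witness can be created (a $0$-rank element has no strict downward-$T_u$-step below it in the relevant sense), so sublayer $L_u^{\hat{M}_\phi + 1}$ is never forced to spawn an element in a sublayer $L_u^{\hat{M}_\phi + 2}$. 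I expect the main obstacle to be the bookkeeping in part~(i): carefully justifying that the tree path in $\str{A}$ connecting $\fp(b)$ to the deeper element $\fp(b')$ consists of symmetric $T_u$-edges and therefore adds nothing to the rank, which relies on unwinding the definition of light tree-like structures and the fact that $\mathcal{E}_{tot} \cup \{T_u, T_u^{-1}\}$-classes behave like totally-connected blocks.
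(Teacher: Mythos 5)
Your proof is correct and follows essentially the same route as the paper's: for (i) you use that $\fp(b')$ is a descendant of $\fp(b)$ inside the $\{T_u,T_u^{-1}\}\cup\cE_{tot}$-class, so the tree path from $\fp(b)$ down to $\fp(b')$ contributes rank $0$; for (ii) you observe that the one-directional $T_u$-edge from $\fp(b)$ to $\fp(c')$ adds exactly $1$ to the rank of any prepended downward-$T_u$-path. The only difference is that you spell out the "subtlety" in (i)---that the unique tree path is witnessed by family-internal edges that are $T_u$-symmetric step by step---which the paper leaves implicit.
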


Hence, when moving from $L_u^{j}$ to $L_u^{j+1}$ the ${T}_u$-ranks of pattern elements for the elements of these sublayers strictly decrease. Since these ranks are bounded by $\hat{M}_\phi$, then, even if the ${T}_u$-ranks of the patterns of some elements of $L_u^{1}$ are equal to $\hat{M}_\phi$, then, if $L_u^{\hat{M}_\phi+1}$ is non-empty,
the ${T}_u$-ranks of the patterns of its elements must be $0$, which means that they cannot have witnesses connected to them one-directionally by ${T}_u$.

The construction of  $\str{G}^\gamma$ is finished when layer $L_{2l}$ is fully processed. We have added some elements to the interface layer, $L_{2l+1}$. Recall that it has only
its `initial part'.

\subsubsection{Joining the components}
 
In this section we take some number of copies of pattern components and arrange them into the desired structure $\str{A}'_0$,
identifying interface elements of some components with the roots of some other.
Some care is needed in this process in order
to avoid any modifications of the internal structure
of closures $\str{C}_+$ of components $\str{C}$, which could potentially result from the transitivity of relations. 
In particular we need to ensure that if for some $u$ a pair of elements of a component $\str{C}$ is not connected by 
${T}_u$ inside $\str{C}$, then it will not become connected by a chain of ${T}_u$-edges external to $\str{C}$.

We create a pattern component $\str{C}^\gamma$ and its extension $\str{G}^\gamma$ for every $\gamma \in \GGG[A_0]$.
Let $\gamma_{a_0}$ be the generalized type of $a_0$.
Let $max$ be the maximal number of interface elements across all the $\str{G}^{\gamma}$. For each $\str{G}^{\gamma}$ 
arbitrarily number its interface elements from $1$ up to, maximally, $max$.

For each $\gamma$ we take copies $\str{G}^{\gamma,g}_{i,\gamma'}$ of $\str{G}^{\gamma}$ for $g\in\{0,1\}$, $1\leq i\leq max$ and $\gamma'\in\GGG[A_0]$. The parameter $g$ is sometimes called a \emph{color} (\emph{red} or \emph{blue}); it is convenient to  think that the non-interface elements of ${G}^{\gamma,g}_{i,\gamma'}$ are of color $g$, but
its interface elements have color $1-g$, cf.~the left part of Fig.~\ref{f:llayers}, as the latter will be later identified with the roots of some components of color $1-g$. 
We import the numbering of the interface elements to these copies.
We also take an additional copy $\str{G}^{\gamma_{a_0},0}_{\bot,\bot}$ of $\str{G}^{\gamma_{a_0}}$. Its root will become the \origin{} of the whole $\str{A}_0'$.
By  $\str{C}^{\gamma,g}_{i,\gamma'}$ we denote the restriction of $\str{G}^{\gamma,g}_{i,\gamma'}$ to its non-interface elements.

 For each $\gamma$, $g$ consider extended components of the form $\str{G}^{\gamma,g}_{\cdotp,\cdotp}$, where the placeholders $\cdotp$ can be substituted with any combination of proper indices. Perform the following
procedure for each $1 \le i \le max$.  
Let $b$ be the $i$-th interface element of any such extended component, let  $\gamma'$ be the generalized type of ${\fp(b)}$.
Identify the $i$-th interface elements of all $\str{G}^{\gamma,g}_{\cdotp,\cdotp}$
with the root $c_0$ of $\str{G}^{\gamma',1-g}_{i,\gamma}$. 
Note that the values of $\fp(c_0)$ and $\fp(b)$ 
may differ. However, by construction, they have identical generalized types $\gamma'$.
For the element $c^*$ obtained in this identification step we 
define $\fp(c^*)=\fp(c_0)$.

Define the graph of components used in the above construction, $G^{comp},$ by joining two components by an edge iff we identified an interface element of the extended version of one of them with the root of the other. 
Let $A_0^0$ be the union of the components accessible from $\str{C}^{\gamma_{a_0},0}_{\bot,\bot}$ in $G^{comp}$ and
let  $\str{A}_0^0$ be the induced structure.
Note that in $\str{A}_0^0$ we still do not take the transitive closures of relations. 
We define $\str{A}_0'$ by transitively closing all relations from $\sigma_\cDist$ 
in $\str{A}_0^0$. 
Finally, we choose as the \origin{} $a_0'$ of $\str{A}_0'$ the root of the pattern component $\str{C}^{\gamma_0,0}_{\bot,\bot}$. 

We remark that it is sufficient to take as the universe of $\str{A}_0'$ the union of the universes of some components $\str{C}^{\cdot, \cdot}_{\cdot,\cdot}$,
and not of their extended versions $\str{G}^{\cdot, \cdot}_{\cdot,\cdot}$  from which we started our construction, since the interface elements from these
extended components
were identified with some roots of other components.

For the correctness proof of our construction see Appendix \ref{s:cor}.
In this proof it is helpful to think about $\str{A}_0^0$ and $\str{A}_0'$ as the structures placed on a cylindrical surface and divided into $4l$ \emph{levels}, see Fig.~\ref{f:cylinder0}. 
 What is crucial, any transitive path in $\str{A}_0^0$ can cross at most one of the two borders between colors.

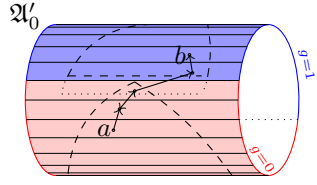
\begin{figure} 
\begin{center}
\begin{tikzpicture}

\begin{scope}[scale=0.8]

\node at (-4,1.4) {$\str{A}'_0$};

  \draw[red]  (0,-1.25)--(-3.5,-1.25);
	\draw[blue]  (0,1.25)--(-3.5,1.25);
  
	\draw[blue] (-3.5,1.25) arc(90:165:0.5 and 1.25);
	\draw[blue] (0,1.25) arc(90:165:0.5 and 1.25);
	\draw[blue, postaction={decorate, decoration={text along path, raise=3pt, text align={align=center}, text color=blue, text={${\scriptscriptstyle g=1}${} }}}] (0,1.25) arc(90:-15:0.5 and 1.25) node (bck) {}; 
	
	\path (0,1.25) arc(90:165:0.5 and 1.25) node (tmp) {} ;
	\draw[red, postaction={decorate, decoration={text along path, raise=3pt, text align={align=center}, text color=red, text={${\scriptscriptstyle g=0}${} }}}] (tmp.center) arc(165:345:0.5 and 1.25); 

	\path (-3.5,1.25) arc(90:165:0.5 and 1.25) node (tmp) {} ;
  \draw[red] (tmp.center) arc(165:270:0.5 and 1.25);

	\fill [opacity=0.4, color = blue] (0,1.25) -- (-3.5,1.25) arc
  (90:165:0.5 and 1.25) -- ++(3.5,0)  arc(165:90:0.5 and 1.25) ;
	\fill [opacity=0.2, color = red] (0,-1.25) -- (-3.5,-1.25) arc
  (270:165:0.5 and 1.25) -- ++(3.5,0)  arc(165:270:0.5 and 1.25) ;
	
\draw[dotted] (bck.center) -- +(-1,0);	


\foreach \t in {90,105,...,270}
{
\path (0,1.25) arc(90:\t:0.5 and 1.25) node (a) {} ;
\draw[very thin] (a.center) -- +(-3.5,0);
};

\node at (-2.7,-0.5) {$a$};
\node at (-1.45,0.75) {$b$};

\draw (-2.55,-0.5) circle (0.02);
\draw (-2.45,-0.15) circle (0.02);
\draw (-2.2,0.15) circle (0.02);
\draw (-1.25,0.45) circle (0.02);
\draw (-1.3,0.75) circle (0.02);

\draw[->] (-2.55,-0.5) -- (-2.45,-0.15);
\draw[<->] (-2.45,-0.15) -- (-2.2,0.15);
\draw[->] (-2.2,0.15) -- (-1.25,0.45);
\draw[->] (-1.25,0.45) -- (-1.3,0.75);

%

\draw[dashed] (-3.3,0.4) to[bend left] (-2,1.25);
\draw[dashed] (-3.3,0.4) to (-1.1,0.4);
\draw[dashed] (-1,0.4) to[bend right=10] (-1,1.25);

\draw[dotted] (-3.3,0.4) to[bend right=10] (-3.4,0.1);
\draw[dotted] (-3.4,0.1) to (-1.1,0.1);
\draw[dotted] (-1,0.4) to[bend left=10] (-1.1,0.1);


\draw[dashed] (-2.2,0.3) to[bend right] (-3.3,-1.25);
\draw[dashed] (-2.2,0.3) to[bend left=10] (-0.6,-1.25);

 \end{scope}
\end{tikzpicture}
\caption{Viewing  $\str{A}^0_0$ and $\str{A}_0'$ as placed on a cylindrical surface.}
\label{f:cylinder0}
\end{center} 
\end{figure}

\subsubsection{Size of models and complexity}

By a rather routine calculation we can show that models produced in the proof of Lemma \ref{l:lfin} are
of size bounded triply exponentially in the length of input formulas.
This finishes the proof of Thm.~\ref{t:lmain}, which 
immediately gives the decidability of the finite satisfiability problem for \UNFOtTR{} and
 suggests a simple \ThreeNExpTime{}-procedure: guess a finite structure
of size bounded triply exponentially in the size of input $\phi$ and verify that it is indeed a model of $\phi$. We can however do better and show 
a doubly exponential upper bound matching 
the known complexity of the general satisfiability problem. For this we design an alternating exponential space 
algorithm searching for models of the form (f3).
The lower bound  can be 
obtained for the two-variable \UNFOtTR{} in the presence of one transitive relation by
a straightforward adaptation of the lower bound proof for \GFt{} with transitive guards \cite{Kie06}.

\begin{theorem}\label{t:lalgo}
The finite satisfiability problem for \UNFOtTR{} is \TwoExpTime-complete.
\end{theorem}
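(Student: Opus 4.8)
The lower bound is the easier half: \TwoExpTime-hardness already holds for \UNFOtTR{} with just a single transitive relation, by a direct adaptation of the \TwoExpTime-hardness proof for \GFt{} with transitive guards of \cite{Kie06}. That reduction encodes the computation of an alternating exponential-space Turing machine, forcing a doubly-exponential grid; one checks that the encoding can be carried out in \UNFOtTR{} and that the produced formulas are satisfiable exactly when they are finitely satisfiable, so hardness transfers to finite satisfiability. Hence I concentrate on the \TwoExpTime{} upper bound.

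For the upper bound I would not go through the triply-exponential bound of Theorem~\ref{t:lmain} (which only yields a \ThreeNExpTime{} guess-and-check procedure), but instead decide finite satisfiability by looking for a model of the form (f3): a light tree-like model, of degree at most $|\phi|$, in which every node has all its required witnesses in its downward family and all transitive paths are bounded by $\hat{M}_\phi$, doubly exponential in $|\phi|$. By Lemma~\ref{l:ltreelike}, Lemma~\ref{l:lshortpaths2} and Theorem~\ref{t:lmain}, such a model exists iff $\phi$ is finitely satisfiable. The plan is to build an \AExpSpace{} algorithm --- equivalently, since alternating exponential space equals \TwoExpTime, a \TwoExpTime{} one --- that guesses such a model top-down and verifies it on the fly, using existential branching to guess local data and universal branching to recurse into children and into the pairs that have to falsify $\phi_0$.

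Concretely, at the current node $a$ the algorithm keeps only bounded information: the generalized type $\gtype{\str{A}}{a}=(\ldec{\str{A}}{a},\type{\str{A}}{a})$, the vector of ${T}_u$-ranks $\rank{\str{A}}{u}{a}$ (each an integer $\le\hat{M}_\phi$, hence exponentially many bits), and an \emph{upward profile} recording, for every $\cT\subseteq\{{T}_1,\ldots,{T}_{2k}\}$, which $1$-types occur among the ancestors that are $\cT$-connected downward to $a$ --- altogether this has size exponential in $|\phi|$. It then existentially guesses the downward family of $a$: for each $\phi_i$ a witness child with the connecting $2$-type, together with the generalized type and rank-vector of each child. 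It checks, locally: that the guessed $2$-types do provide witnesses; that the light declarations across the family satisfy the local consistency conditions (ld1)--(ld2); that the rank-vectors are non-increasing along child edges and decrease strictly across one-directional ${T}_u$-edges (so, as in Claim~\ref{c:lnodanger}, the bound $\hat{M}_\phi$ is preserved); and, using the upward profile together with the fresh family, that $\phi_0$ is not realized by any pair among $a$, its ancestors and its children. Finally it universally recurses into every child, updating the profile and the rank-vector. To force termination it closes a branch whenever a (generalized type, rank-vector) pair repeats along it; there are only doubly exponentially many such pairs, so branches have doubly exponential length, affordable in \AExpSpace{}, and, since ranks are non-increasing, the loop introduced at a repeated state contains no one-directional transitive edge, so the rank bound survives the unfolding. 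By Lemma~\ref{l:llocglobl} the maintained (locally consistent) declarations are globally consistent in the unfolded tree, and Lemma~\ref{l:homomorphisms} then certifies that the unfolding models $\phi$; conversely any model of the form (f3) directs an accepting run. Combined with the lower bound this gives \TwoExpTime-completeness.

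I expect the main obstacle to be enforcing the $\forall$-conjunct $\forall xy\,\neg\phi_0$ for pairs of \emph{incomparable} nodes, i.e.\ nodes lying in distinct subtrees. Unlike witness provision, this is a genuinely global requirement, and the $2$-type of such a pair is governed by the transitive connections running up from each of the two nodes to their least common ancestor and down again. Handling this in alternating exponential space means propagating, through the universal branching at each node $c$ and separately into each ordered pair of its subtrees, precisely the bounded information about which distinguished relations still connect $c$ (in each direction) to the nodes currently explored below it, and checking $\phi_0$ against every resulting combination. Making this bookkeeping rigorous --- and reconciling it both with the looping used for termination and with the requirement that every guessed light declaration is actually realizable in the subtree below (which is exactly what Lemma~\ref{l:llocglobl} buys us) --- is the delicate point; the remainder is a careful but routine implementation of the scheme above.
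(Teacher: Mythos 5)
Your overall plan is essentially the paper's: for the lower bound, adapt the \TwoExpTime-hardness reduction for \GFt{} with transitive guards from \cite{Kie06}; for the upper bound, skip the triply-exponential model bound and instead give an \AExpSpace{} procedure that searches for a light tree-like model of form (f3), maintaining generalized types and transitive-rank counters along a guessed downward path, and closing the search via repetition. Two of your additional ingredients, however, diverge from the paper and would need repair.

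First, the ``upward profile'' and the proposed per-pair-of-subtrees bookkeeping are unnecessary. You appear to read the light declaration $\ldec{\str{A}}{a}$ as local to the subtree below $a$, but by definition $\ldec{\str{A}}{a}(\cT)$ consists of those $1$-types $\alpha$ for which \emph{no} node $b$ of type $\alpha$ \emph{anywhere} in $\str{A}$ satisfies $\str{A}\models\cT ab$; it already encodes, in both directions, everything the $\forall$-conjunct can ask about a distant pair (which, in a light tree-like structure, is fully determined by the two $1$-types and the set $\cT$ of connecting transitive relations). The LCC conditions (ld1)--(ld2), checked edge by edge along a single path, propagate this, and Lemma~\ref{l:llocglobl} ensures that a locally consistent system of guessed declarations is globally consistent --- i.e.\ the promises of \emph{non}-reachability are honoured --- not, as you write, that the declaration is ``realizable in the subtree below''. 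The paper's algorithm therefore walks one downward path, never simultaneously two subtrees, and never tracks ancestors separately.

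Second, your rank claim is off. You keep $\rank{\str{A}}{u}{a}$, the maximal rank of a downward $T_u$-path \emph{starting} at $a$, and assert that it is ``non-increasing along child edges''. That fails whenever $\str{A}\not\models T_uab$ for the child $b$: then $\rank{\str{A}}{u}{b}$ is unconstrained by $\rank{\str{A}}{u}{a}$ and can be larger. Moreover this quantity is forward-looking and thus awkward to certify as you descend. The paper keeps the complementary counters $r_u$ --- the maximal rank of a downward $T_u$-path from the root \emph{to} the current node --- which depend only on the already-guessed prefix, update trivially ($+1$ across one-directional $T_u$-edges, unchanged across two-way $T_u$-edges, reset to $0$ otherwise), and the algorithm rejects when some $r_u$ exceeds $\hat{M}_\phi$ and accepts after $(\hat{M}_\phi+1)^{2k}\cdot|\GGG|+1$ steps, by which depth a (generalized type, rank-vector) repetition is guaranteed on every branch and the accepting run can be cut, linked back and unravelled into a genuine (f3) model. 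Your loop-closing idea is the same in spirit, but the monotonicity you invoke to justify that the rank bound survives the unravelling does not hold for the rank quantity you chose.
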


\section{The general case and its further extensions} \label{s:alltheother}

In Appendix \ref{s:general}
we generalize the ideas from Section \ref{s:2var} to show:

\begin{theorem}\label{t:algo}
The finite satisfiability problem for \UNFOTR{} is \TwoExpTime-complete.
\end{theorem}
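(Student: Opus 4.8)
The plan is to lift the entire pipeline of Section~\ref{s:2var} --- the circle of implications (f1)~$\leadsto$~(f2)~$\leadsto$~(f3)~$\leadsto$~(f4)~$\leadsto$~(f5)~$\leadsto$~(f1) --- from the two-variable case to the general \UNFOTR{} setting, using the normal form (\ref{eq:nf}) and Lemma~\ref{l:homomorphisms} as the workhorse for verifying that the intermediate structures remain models of $\phi$. Starting from a finite model $\str{A}\models\phi$, the step (f1)~$\leadsto$~(f2) is an almost verbatim repeat of the light-$\phi$-tree-like unraveling from Lemma~\ref{l:ltreelike}: unravel $\str{A}$ into a tree of witness structures of degree bounded by $|\phi|$, then transitively close the distinguished relations; the transitive-path bound is inherited from $|A|$. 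The only genuine adaptation is that witness structures now contain up to $t$-tuples rather than single witnesses, so a "family" in the tree must hold a whole $\phi$-witness structure for its parent, and the homomorphism condition (a2) of Lemma~\ref{l:homomorphisms} must be checked on arbitrary $t$-element subsets rather than pairs --- this is where the apparatus of \emph{declarations} (rather than the simpler \emph{light declarations}) is needed, recording for each element which combinations of transitive connections to which $1$-types it realizes or avoids.

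For (f2)~$\leadsto$~(f3) I would reuse the top-down tree-pruning machinery: define generalized types as pairs (declaration, $1$-type), prove the local-global correspondence (the analogue of Lemma~\ref{l:llocglobl}), and then invoke the pruning strategy from Appendix~\ref{s:strategy} to collapse each transitive path down to doubly-exponential rank while --- crucially --- not inflating the ranks of the other transitive relations. The extra subtlety over the two-variable case is that pruning must respect the internal structure of witness families (a $t$-tuple living inside one family must survive as a block, or be replaced consistently), which forces the declarations to carry slightly more information than their light counterparts; the excerpt already flags that "we refine here, in particular, the apparatus of declarations." Step (f3)~$\leadsto$~(f4) is a standard regularity/quotient argument: since generalized types range over a doubly-exponential set, identify subtrees with the same generalized type and fold the tree into a regular one with doubly-exponentially many non-isomorphic subtrees --- slightly more involved than in \cite{DK18} but conceptually routine.

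The heart of the proof, and the main obstacle, is (f4)~$\leadsto$~(f5): the inductive \emph{component} construction, the general-case analogue of Lemma~\ref{l:lfin}. As in the two-variable case one argues by induction on $l=|\mathcal{E}_0|/2$, building, for each generalized type $\gamma$ realized in $\str{A}_0=\str{A}_{a_0}\restr[a_0]_{\mathcal{E}_{tot}}$, a pattern component $\str{C}^\gamma$ stratified into layers $L_1,\dots,L_{2l}$ (layer $L_u$ "kills" $T_u$) each subdivided into $\hat M_\phi+1$ sublayers (to drive down the $T_u$-rank), plus an interface layer; each layer is populated by subcomponents obtained from the inductive hypothesis applied with $\mathcal{E}_0\setminus\{T_u,T_u^{-1}\}$, and witnesses for each $\phi_i$ are provided either inside the current sublayer, the next sublayer, or the next layer according to which of $T_u,T_u^{-1}$ occur in the relevant $2$-type --- this is exactly where the rank strictly decreases (the analogue of Claim~\ref{c:lnodanger}). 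The components are then glued on a cylindrical surface divided into $4l$ levels, coloring components red/blue so that every transitive path crosses at most one color border, which prevents the transitive closure from creating spurious long $T_u$-chains external to any component. The first phase --- producing the components with all witnesses in place while respecting $t$-ary witness structures and the inductive guarantees (b1)--(b4) appropriately generalized --- is substantially more delicate than in \cite{DK18}, because witnesses are now whole structures that may straddle several subcomponents; the gluing phase then follows \cite{DK18} closely. A routine size estimate gives a triply-exponential finite model, closing the circle; and the complexity upper bound comes, as in Theorem~\ref{t:lalgo}, from an alternating exponential-space search for a model of form (f3) rather than guessing the triply-exponential model outright, while the matching lower bound is inherited from \GFt{} with transitive guards~\cite{Kie06}.
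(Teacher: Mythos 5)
Your proposal follows the paper's overall architecture --- the same (f1)$\leadsto$(f2)$\leadsto$(f3)$\leadsto$(f4)$\leadsto$(f5) pipeline, the same inductive component construction, the same \AExpSpace{} algorithm searching for (f3)-models, and the same lower bound from \GFt{} with transitive guards --- and you correctly identify (f4)$\leadsto$(f5) as the main obstacle. However, there are three concrete gaps that would cause a naive execution of your plan to fail.

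First, your characterization of the general-case \emph{declarations} is wrong. You write that they record ``which combinations of transitive connections to which $1$-types [an element] realizes or avoids,'' i.e.\ essentially a $t$-ary analogue of the light declarations. In fact the paper's $\phi$-declarations are of a fundamentally different nature: they are sets of triples $(R,T,Q)$ encoding \emph{patterns of literals from $\phi_0$} together with transitive atoms $T_ux_jx_{j'}$ and (in)equality constraints on the $t$ variables, and the local consistency conditions (l1)--(l9) speak of \emph{fittings} that distribute the $t$ variables among a parent, its children, and the subtrees below them. This change is forced by a point the paper flags explicitly: with $t>2$ variables one can no longer verify the $\forall$-conjunct by producing a homomorphism from each $t$-tuple into the original model (as in Lemma~\ref{l:ltreepruning}(ii)). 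Instead the correctness of pruning is read off from the declaration of the root together with global consistency (Lemma~\ref{l:locglob}). Your proposal, which frames the extension as ``check condition (a2) on $t$-element subsets rather than pairs,'' presupposes exactly the homomorphism argument that breaks.

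Second, your component geometry is under-scaled. You keep $2l$ layers (one per transitive relation in $\mathcal{E}_0$) and $4l$ cylindrical levels. In the general case a single atom of $\phi_0$ and a single $\phi$-witness structure can span up to $t$ elements, so a connected configuration relevant to the $\forall$-conjunct can stretch much farther than in the two-variable case. The paper therefore uses $2l(2t+1)$ inner layers per component (each $T_v$ gets killed $2t+1$ times over) precisely so that no connected subgraph of size $\le t$ in the (transitivity-pruned) Gaifman graph can touch both color borders on the cylinder. With only $2l$ layers the crucial claim that every transitive path crosses at most one border fails for $t>2$, and the transitive closure of the joined structure could create spurious long chains internal to a component.

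Third, you index pattern components by generalized types, as in the two-variable case. The general construction instead indexes components by \emph{isomorphism types of subtrees} of the regular tree-like model (this is why (f3)$\leadsto$(f4) cannot be skipped), and it augments $\str{A}$ with fresh binary predicates $W^i$ pinning down the witness relation in a regularity-preserving way. This stronger indexing is what makes the inductive conditions (b4)--(b5) of Lemma~\ref{l:finitetr} work: they demand a partial homomorphism on an arbitrary $t$-tuple that restricts to an \emph{isomorphism} on each witness structure $\str{W}_a$ and satisfies $\str{A}_{\fp(a)}\cong\str{A}_{\fh(a)}$, and the correctness proof (Claim~\ref{c:joiningtr}) joins sub-homomorphisms across the tree of subcomponents using subtree isomorphism in an essential way. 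Declarations alone do not carry enough information to carry this out.

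So the high-level plan is right, but ``refine declarations slightly, keep the same layer count and component indexing'' does not suffice; the declarations have to be rebuilt around $\phi_0$'s literals and fittings, the component height must scale with $t$, and the components must be keyed on subtree isomorphism types with the $W^i$ bookkeeping.
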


We also obtain a triply exponential upper bound on the size of minimal finite models of finitely satisfiable formulas.
The structure of the proofs is similar to the two-variable case, though some details are more complicated. In particular,
we need to go through form (f4) of models: regular trees with bounded ranks of transitive paths. We also explain that in addition to  general transitive relations
we can use also equivalences and partial orders.

In Appendix \ref{s:edl} we further extend Thm.~\ref{t:algo} by considering an extension, \UNFOSOH{}, of \UNFOTR{} by constants and inclusion of binary relations
of the form $B_1 \subseteq B_2$, interpreted in a natural way: $\str{A} \models B_1 \subseteq B_2$ iff $\str{A} \models
\forall xy (B_1xy \rightarrow B_2xy)$.
\begin{theorem} \label{t:full}
The finite satisfiability problem for \UNFOSOH{} is \TwoExpTime-complete.
\end{theorem}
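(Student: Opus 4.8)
The plan is to reduce the finite satisfiability of \UNFOSOH{} to that of \UNFOTR{}, which is already known to be \TwoExpTime{}-complete by Thm.~\ref{t:algo}, and to check that both the reduction and the matching lower bound stay within the desired bounds. Two new features must be handled: constants (nominals) and inclusions of binary relations $B_1 \subseteq B_2$. First I would deal with constants in the standard way: for each constant $c$ introduce a fresh unary predicate $P_c$, add the conjunct $\forall x\forall y\, \bigl((P_c x \wedge P_c y) \rightarrow \phi_0^{\,x\mapsto y}\bigr)$-style axioms forcing $P_c$ to be realized (this is expressible since $\exists x\, P_c x$ is in \UNFO), and relativize; uniqueness of the interpretation of $c$ is not needed for satisfiability as long as all occurrences of $c$ are treated as the same ``type''. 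The cleaner route, following the pattern used in \cite{DK18} for \UNFOEQ{} with constants, is to observe that a bounded number of named elements can be absorbed into the $1$-types: blow up the signature by marking, for each element, which constants it equals, and require consistency. Since there are only polynomially many constants, this costs only an exponential blow-up of the type space, which is harmless.

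The genuinely new ingredient is the inclusion axioms $B_1 \subseteq B_2$. Here I would distinguish cases by the roles $B_1$ and $B_2$ play. If neither is transitive (both in $\sigma_\cBase$), the inclusion is a purely local constraint on $2$-types, so one simply discards all $2$-types violating it; nothing in the tree-pruning or the component-based finite model construction is disturbed, because those arguments only ever manipulate and copy existing $2$-types and preserve $1$-types via homomorphisms (Lemma~\ref{l:homomorphisms}). If $B_2$ is transitive but $B_1$ is a base relation, the axiom says: whenever $B_1 a b$ holds, then $a$ and $b$ lie on the same $B_2$-chain; in a light tree-like model a $B_1$-edge joins a node with its parent or child or itself, so after transitive closure $B_2 ab$ holds automatically provided we \emph{stipulate} it inside each family, which is again a local adjustment to the families' $2$-types. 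The subtle case is $B_1, B_2$ both transitive (role hierarchy among transitive roles, i.e. $\mathcal{SH}$): then $B_1 \subseteq B_2$ must be respected also by the long-range transitive connections. The key observation is that one may simply add $B_1 \subseteq B_2$ as an invariant maintained throughout: in the light tree-like unraveling (Lemma~\ref{l:ltreelike}) it is inherited from $\str{A}$ within each family and then preserved by transitive closure since $B_1$-closure is contained in $B_2$-closure; in the tree pruning (Lemma~\ref{l:ltreepruning}) the surviving $2$-types come verbatim from intermediate structures, so the inclusion persists; and in the main component construction (Lemma~\ref{l:lfin}), subcomponents are produced by the inductive assumption (hence already satisfy the inclusion), witnesses are joined by $2$-types $\beta$ copied from $\str{A}$ (hence satisfying it), and the final transitive closure only adds edges compatible with the inclusion. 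One must also refine the notion of \emph{declaration}: a declaration now records, per \emph{upward-closed} set $\cT$ of transitive relations, which $1$-types are unreachable; the inclusion $B_1 \subseteq B_2$ forces $\cT$ to be upward-closed under the inclusion order, which merely restricts the relevant $\cT$'s and does not change the local/global consistency machinery (Lemmas~\ref{l:llocglobl}, \ref{l:lshortpaths2}).

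Carrying this out, the steps in order would be: (1) push constants into an enlarged $1$-type space, obtaining an equisatisfiable (over finite models of the same size, up to a polynomial factor) \UNFOTR{}-with-constants-marked formula; (2) for each inclusion $B_1 \subseteq B_2$, close the admissible $2$-type set under it and, crucially, re-do the pruning strategy of Appendix~\ref{s:strategy} over the refined declarations so that decreasing a $\cT$-rank still decreases ranks for the $\cT$-incomparable relations only --- this is where I expect the real work, since the interaction graph among transitive relations now also carries the inclusion edges, and one must verify that the rank-decreasing pruning strategy still terminates with doubly-exponential bounds; (3) re-run the finite model construction of Lemma~\ref{l:lfin}, checking clauses (b\ref{blone})--(b\ref{blfour}) are unaffected and that the final transitive closure respects all inclusions and constant-markings; (4) conclude the triply-exponential small-model bound and hence the \TwoExpTime{} upper bound via the same alternating-exponential-space search for models of the form (f3). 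For the lower bound, nothing new is needed: \UNFOSOH{} contains \UNFOtTR{}, so \TwoExpTime{}-hardness is inherited from Thm.~\ref{t:lalgo}. The main obstacle, as indicated, is step (2): verifying that the delicate multi-relation pruning strategy survives the presence of inclusion edges among the transitive relations without blowing up the rank bounds, since an inclusion can transfer rank between two relations in a way that the original argument (which only had to worry about accidental rank increases from unrelated relations) did not have to account for. I would handle this by processing transitive relations in an order compatible with the inclusion partial order (most-included first), so that killing $B_1$ never reintroduces long $B_2$-paths for $B_2 \supseteq B_1$.
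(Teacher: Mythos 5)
Your case analysis of the inclusion axioms is incomplete in a way that misses the one genuinely hard case, and your treatment of constants underestimates the work required. Concretely: you enumerate three cases of $B_1 \subseteq B_2$ (both base; base $\subseteq$ transitive; both transitive), but the fourth case, $T \subseteq B$ with $T$ transitive and $B$ a base (non-transitive) relation, is absent. This is precisely the difficult one. The paper designates such $B$ as ``pseudo-transitive'' and treats $T \subseteq T'$, $B \subseteq B'$, $B \subseteq T$ as ``safe'' inclusions $\mathcal{H}_0$, respected by \emph{all} constructions of Section~\ref{s:general} without any change. Your ``subtle case'' ($T\subseteq T'$) is actually safe, because after transitively closing every relation the containment survives; the real problem is $T \subseteq B$: transitively closing $T$ adds long-range $T$-edges, and $B$ must also contain them, but $B$ is not transitively closed. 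To resolve this the paper introduces a separate post-processing operation, the \emph{pseudotransitive closure}, which adds $Bab$ whenever $Tab$ holds for some $T \subseteq B$ in $\mathcal{H}$, and then a formula rewriting ($\omega_R$, $\rho_R$, $\res$) plus Lemma~\ref{l:constinccl} to verify that this post-processing preserves the rewritten formula. Your proposal has no counterpart to this.

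The constants side has a similar gap. Your remark that ``uniqueness of the interpretation of $c$ is not needed'' is false: the finite model construction of Lemma~\ref{l:finitetr} produces many copies of every pattern element, including the ones carrying a pseudoconstant marker, so the output does \emph{not} satisfy $\uni$. The paper therefore applies a \emph{shrinking} quotient that merges all incarnations of each pseudoconstant into one element; but this merging can create fresh transitive paths threading through constants (e.g.\ $a\,T\,c$ and $c'\,T\,b$ for distinct incarnations $c,c'$ of the same constant gives $a\,T\,b$ after shrinking). To guard against this, the paper rewrites the $\forall$-conjunct, replacing $T_u zz'$ by a formula $\tau_u(z,z')$ that explicitly quantifies over at-most-$K$-hop constant-threaded paths, and adds side conditions $\con$, $\typ$ ensuring $1$-types do not grow under shrinking (Claims~\ref{c:pathform-1}--\ref{c:pathform1}, Lemma~\ref{l:constmain}). ``Absorbing constants into $1$-types'' alone does not produce such path closure.

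Finally, your proposed step~(2) --- reworking the pruning strategy of Appendix~\ref{s:strategy} and refining declarations to be upward-closed under the inclusion order --- is unnecessary and not what the paper does. The paper's entire point is that the core machinery of Section~\ref{s:general} can be reused \emph{verbatim}: one first rewrites the formula, applies Theorem~\ref{t:maintr}/Theorem~\ref{t:algo} to the rewritten \UNFOTR{} formula (together with the safe inclusions $\mathcal{H}_0$, which the construction already respects), and only \emph{afterwards} applies two model-level post-processing steps (shrinking then pseudotransitive closure). This modularity is what keeps the argument manageable; you are signing up for a much more invasive and riskier rework of the pruning argument, when none is needed. The lower-bound observation (inherited from \UNFOtTR{}) is correct.
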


As mentioned in the Introduction, \UNFOSOH{} captures several interesting description logics. This implies that we
can solve FOMQA problem for them. In particular, we have the following corollary, which, up to our knowledge is the first decidability result for FOMQA
in the case of a description logic with both transitive roles and role hierarchies.   
\begin{corollary}\label{c:fomqa}
Finite ontology mediated query answering, FOMQA, for the description logic $\mathcal{SHOI}^{\sqcap}$ is decidable and \TwoExpTime-complete.
\end{corollary}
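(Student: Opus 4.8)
The plan is to reduce finite satisfiability of \UNFOSOH{} to the machinery already developed for \UNFOTR{}, paying for the two new features---constants and binary inclusions---only by a single-exponential blow-up of the relevant type spaces, which is absorbed by the (already at least doubly-exponential) bounds in the proof of Thm.~\ref{t:algo}. The lower bound is immediate: \UNFOtTR{} embeds into \UNFOSOH{}, so \TwoExpTime-hardness follows from Thm.~\ref{t:lalgo}.

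For the upper bound I first deal with the constants. Bring $\phi$ to the obvious extension of the normal form in which the quantifier-free parts may mention constants. A finite model $\str{A}\models\phi$ determines its \emph{core} $\str{D}_0$: the substructure induced on the interpretations of the constants, together with, for every $a\in A$ and every constant $c$, the $2$-type $\type{\str{A}}{a,c}$. There are only singly-exponentially many possible cores, so an alternating exponential-space procedure may guess $\str{D}_0$ and verify its consistency with the $\forall$-conjunct of $\phi$. One then runs the whole construction of Section~\ref{s:2var} and its generalization as before, with the proviso that the constants form a single fixed copy of $\str{D}_0$, shared by---and never duplicated inside---any tree-like unraveling, component, or joined structure; every element additionally carries in its declaration and generalized type the tuple of $2$-types it realizes with the constants, and its $\phi$-witnesses may now lie either in its downward family or among the constants. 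Since the construction only ever copies these data from $\str{A}$, all homomorphism-preservation invariants (the analogues of conditions (b\ref{blone})--(b\ref{blfour}) and of Lemma~\ref{l:homomorphisms}) are maintained verbatim, and no uniqueness statement about the constants is needed---$c$ is simply interpreted by the one element of $\str{D}_0$ designated for it. Transitive paths may now pass through the core, but as $|\str{D}_0|$ is bounded this contributes only a bounded amount to the ranks, so the rank analysis and all five forms (f1)--(f5) survive.

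For the binary inclusions I work throughout with \emph{legal} $2$-types only---those consistent with every declared inclusion $B_1\subseteq B_2$---and with structures all of whose realized $2$-types are legal; every construction copies $2$-types from the legal model $\str{A}$, so legality can be destroyed only by the transitive closures applied at the end of each stage. The key point is that the legal $2$-types are closed under these closures: if $B_2\in\sigma_{\cDist}$ then transitivity of $B_2$ together with $B_1\subseteq B_2$ holding on every edge of a $B_1$-path forces $B_1\subseteq B_2$ on the closing edge; an inclusion whose larger side $B_2$ is non-transitive is dispatched by a preprocessing step, noting that if its smaller side is transitive the formula already forces $B_2$ to behave, on $B_1$-reachable pairs, like a member of $\sigma_{\cDist}$ (which is within scope), and otherwise the closures never touch $B_2$-atoms at all. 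Since an inclusion is a universal $\forall xy$ Horn constraint, it is preserved by the $1$-type-preserving homomorphisms used in Lemma~\ref{l:homomorphisms}, so that lemma still certifies modelhood. With these two enrichments the chain (f1)$\leadsto\cdots\leadsto$(f5) goes through, yielding a triply-exponential finite model and, via the same alternating exponential-space search for a structure of form (f3) (now additionally guessing $\str{D}_0$), a \TwoExpTime{} algorithm; Corollary~\ref{c:fomqa} is then read off as in the \UNFOTR{} case.

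The step I expect to be the genuine obstacle is the interaction of the binary inclusions with transitivity: guaranteeing that the repeated transitive closures in the small-model construction never violate any inclusion, and in particular taming inclusions whose larger side is not transitively interpreted---this is exactly the feature whose analogue turns \GNFOTRonedim{} undecidable, so the argument must essentially exploit the weakness of \UNFO{}. The treatment of constants, by contrast, is mainly bookkeeping: maintaining one shared core and threading the constant-profiles through declarations, components, and the joining step.
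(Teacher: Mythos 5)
Your route to the key Theorem~\ref{t:full} differs from the paper's on both new features, and the treatment of inclusions has a genuine gap.

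For constants, you keep a single shared core $\str{D}_0$ that is never duplicated and thread constant-profiles through the declarations and components. The paper instead replaces constants by unary \emph{pseudoconstants}, deliberately \emph{drops} uniqueness, runs the entire \UNFOTR{} machinery unchanged, and only afterwards applies a \emph{shrinking} quotient that identifies all occurrences of each pseudoconstant; the side effects of this quotient (new transitive paths and $2$-types threading through constants) are compensated for in advance by replacing each atom $T_u z z'$ in $\phi_0$ by a path formula $\tau_u(z,z')$ allowing up to $K$ intermediate constants, together with auxiliary conjuncts $\exi,\con,\typ$. Your shared-core route is a reasonable alternative in spirit, but it is not bookkeeping: with a shared core the structures cease to be tree-like in the sense of Sections~\ref{s:ltreelike} and~\ref{s:treelike} (every node acquires transitive shortcuts into and out of the core), so the declarations, the local consistency conditions, the pruning strategy and the rank analysis would all have to be re-derived with this in mind; the assertion that the homomorphism-preservation invariants hold ``verbatim'' is precisely the difficulty that the paper's shrinking approach is designed to sidestep.

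For inclusions, your closure claim fails in exactly the problematic case. Take $T\in\sigma_{\cDist}$ and $B\in\sigma_{\cBase}$ with $T\subseteq B\in\mathcal{H}$: if $Tab\wedge Bab$ and $Tbc\wedge Bbc$ hold, the transitive closure adds $Tac$ but no operation you describe adds $Bac$, so the $2$-type of $(a,c)$ now violates $T\subseteq B$. Your ``preprocessing step'' for this sub-case is never specified, and the phrase about $B$ behaving ``like a member of $\sigma_{\cDist}$'' on $T$-reachable pairs does not resolve it, since the whole point is that the construction's closures close $T$ but not $B$. The paper handles this by introducing an explicit \emph{pseudotransitive closure} (add $Bab$ whenever $T_uab$ holds for some $T_u\subseteq B\in\mathcal{H}$), proving it preserves the remaining invariants, and showing it composes with shrinking via an extended path transformation in which binary base atoms $Rxy$ become $\omega_R(x,y)=\rho_R(x,y)\vee\bigvee_{T_u\subseteq R}\tau_u(x,y)$. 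Without an analogue of this operation your argument does not establish that the produced finite structure satisfies $\mathcal{H}$.
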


$\mathcal{SHOI}^{\sqcap}$ and some related logics are considered, e.g., in \cite{GK08}. For more about FOMQA for description logics with transitivity see \cite{GYM18}. For more about OMQA for description logics see, e.g., references in \cite{GYM18}.

Somewhat orthogonally to the extensions motivated by description logics in Appendix \ref{s:tgn}
we consider  
 the \emph{base-guarded negation fragment with transitivity}, \GNFOTR{}, for which the general satisfiability problem
 was shown decidable in \cite{ABBB16}.
We do not solve its finite satisfiability problem here, but, analogously to the extension with equivalence relations,
\UNFOEQ{} \cite{DK18}, we are able to lift our results to its one-dimensional restriction, \GNFOTRonedim{},
admitting only formulas in which every maximal block of quantifiers leaves at most one variable free.

\begin{theorem}\label{t:gnfo}
The finite satisfiability problem for \GNFOTRonedim{} is \TwoExpTime-complete.
\end{theorem}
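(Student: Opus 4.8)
The plan is to transfer the whole machinery developed for \UNFOTR{} in Sections \ref{s:2var}--\ref{s:alltheother} to the one-dimensional base-guarded setting, the same way the passage from \UNFOEQ{} to \GNFOEQonedim{} was handled in \cite{DK18}. First I would observe that the lower bound is inherited for free: \GNFOTRonedim{} contains \UNFOTR{} (indeed already its two-variable fragment with one transitive relation), so \TwoExpTime-hardness follows from Thm.~\ref{t:lalgo}/Thm.~\ref{t:algo}. All the work is therefore in the \TwoExpTime{} upper bound, and for that I would first set up a Scott-style normal form. A one-dimensional \GNFOTR{} sentence can be rewritten, in polynomial time and preserving (finite) satisfiability over the same domains, into a conjunction of a single sentence $\forall \bar x\, \neg \phi_0(\bar x)$ together with conjuncts of the form $\forall \bar x\,(\alpha(\bar x) \rightarrow \exists \bar y\, \phi_i(\bar x, \bar y))$, where now the guard $\alpha$ is a single \emph{base} atom (a $\sigma_\cBase$-atom, never a distinguished $T_u$) over all of $\bar x$, and the $\phi_i$, $\phi_0$ are quantifier-free. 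The key syntactic point, exactly as in \cite{ABBB16}, is that distinguished symbols occur only on non-guard positions; in a one-dimensional formula each existential block still leaves at most one variable free, so the ``witness'' a guarded tuple needs is again a \emph{single} element together with a small witness structure, just as in normal-form \UNFOTR{}.

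Next I would redo the chain (f1)$\leadsto$(f2)$\leadsto$(f3)$\leadsto$(f4)$\leadsto$(f5) for this normal form. The notion of a light/tree-like structure has to be relaxed so that base relations of higher arity, and guarded tuples generally, live inside a single family rather than on a tree edge; this is precisely the adjustment already made in \cite{DK18} for \GNFOEQonedim{}, and it does not interact with the transitive part. The unraveling step (f1)$\leadsto$(f2) and the analogue of Lemma~\ref{l:homomorphisms} go through verbatim once ``preserves $1$-types'' is replaced by ``preserves the base types of all guarded tuples'' — the crucial finite-character property is that $\phi_0$ is quantifier-free and each existential conjunct is triggered by a base-guarded tuple, so a model can be recognised locally. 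The declarations apparatus, the local-global lemma (Lemma~\ref{l:llocglobl}), the tree-pruning lemma (Lemma~\ref{l:ltreepruning}) and the short-paths lemma (Lemma~\ref{l:lshortpaths2}) are statements purely about the transitive relations and the tree skeleton; since base atoms behave like the non-transitive symbols already handled, these carry over with only cosmetic changes. Likewise the main inductive construction of Lemma~\ref{l:lfin} (and its general-case counterpart), which builds components by induction on the number of non-total distinguished relations and then glues copies on the cylinder of Fig.~\ref{f:cylinder0}, needs only that witnesses are single elements and that homomorphisms of size-$t$ subsets into the source model suffice — both still true here. The size bound stays triply exponential and the search for a model of form (f3) stays in alternating exponential space, giving \TwoExpTime.

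The main obstacle, and the one place genuinely new care is needed, is the interaction between base-guardedness and the transitive closures we take when assembling components. In \UNFOTR{} all binary base atoms sit on single tree edges or within a family, so transitively closing $\sigma_\cDist$ in $\str{A}_0^0$ cannot manufacture a spurious base atom; but with higher-arity base relations and with guarded $\phi_i$'s one must check that identifying interface elements with roots (the gluing step) does not create a guarded tuple that straddles two components and thereby demands a witness not provided by (b\ref{blthree}), nor a guarded tuple falsifying $\forall\bar x\,\neg\phi_0$. The fix is the standard one: arrange, as in \cite{DK18,ABBB16}, that all guards used by the formula are already ``localised'' — every base atom of a model in the relevant tree-like class joins elements within one family — and that the gluing only identifies an interface element with a root carrying the same generalized type, so no base atom spans the seam. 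One then re-verifies Claim~\ref{c:lnodanger} and the correctness conditions (b\ref{blone})--(b\ref{blfour}) in this setting. Finally I would note, as the statement and \cite{ABBB16} already indicate, that this is tight: adding inclusions $B_1\subseteq B_2$ between binary relations to \GNFOTRonedim{} makes it undecidable, so Thm.~\ref{t:full}'s extension does not survive the passage to the base-guarded fragment — but that is outside what Thm.~\ref{t:gnfo} claims.
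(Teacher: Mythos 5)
Your proposal follows the same route as the paper: inherit the lower bound, transfer the (f1)–(f5) machinery by extending the notion of declarations to treat $\gamma(\bar x,\bar y)\wedge\neg\phi'(\bar y)$ like a non-transitive atom, re-use Lemma~\ref{l:finitetr} unchanged while observing that guarded tuples land entirely inside $\phi$-witness structures (on which the homomorphism acts as an isomorphism, hence preserving the base-guarded negations), and run the same alternating exponential-space search. One slip worth noting: your normal form writes $\forall\bar x\,(\alpha(\bar x)\to\exists\bar y\,\phi_i(\bar x,\bar y))$ with a base-guarded universal tuple, which is the guarded-fragment style; one-dimensionality actually gives the same $\forall x\,\exists\bar y\,\phi_i(x,\bar y)$ shape as in \UNFOTR{} (Eq.~(\ref{eq:nf})) with $\phi_i$ quantifier-free \GNFO{} — you state the consequence correctly (a single free variable, a single witness element) but your displayed form is inconsistent with it.
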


Surprisingly, in contrast to \UNFOTR{}, \GNFOTRonedim{} becomes undecidable when extended by inclusions of binary relations.

\section{Conclusions} \label{s:conclusions}

We proved that the finite satisfiability problem for 
the unary negation fragment with transitive relations, 
\UNFOTR{}, is decidable and
\TwoExpTime-complete, complementing this way the analogous result for the general satisfiability problem for this logic implied
by two other papers.  Further, we identified some decidable extensions of our base logic capturing the concepts of nominals
and role hierarchies from description logics.
We noted that our work has some interesting implications on the finite query answering problem both 
under the classical (open-world) database scenario as well as  in the description logics setting. 

One open question is the decidability of the finite satisfiability problem for the full logic \GNFOTR{} from \cite{ABBB16}.
We made a step in this direction here, by solving this problem for the one-dimensional restriction of that logic.
Another question is if our techniques can be adapted to a setting in which we do not assert that some distinguished relations
are transitive but where we can talk about the transitive closure of the binary relations, or, more generally, to the extension
of \UNFO{} with regular path expressions from \cite{JLM18}.

We finally remark that we do not know if our small model construction, producing finite models of size bounded triply
exponentially in the size of the input formulas,  is optimal  with 
respect to the size of models. The best we can do for the lower bound is to enforce models of 
 doubly exponential size (actually, this can be done in UNFO even without transitive relations).

\bibliography{mybib}

\begin{appendix}

\newpage

\section{\UNFO{} with functional restrictions} \label{a:frund}

The following result is implicit in \cite{StC13}. Here we prove it directly, by a simple reduction from domino tilings.
\begin{theorem} \label{t:cund}
The satisfiability and the finite satisfiability problems for  \UNFO{} with four variables (even without
transitive relations) and two functional binary relations and some unary relations are undecidable.
\end{theorem}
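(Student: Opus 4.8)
The plan is to reduce the finite (and infinite) tiling problem for the grid $\N \times \N$ to the (finite) satisfiability problem for \UNFO{} with four variables and two functional binary relations, say $H$ and $V$, representing the horizontal and vertical successor functions. First I would fix a domino system $\mathcal{D}=(D, H_\mathcal{D}, V_\mathcal{D})$ and write a formula $\psi_{\mathcal{D}}$ that is (finitely) satisfiable iff $\mathcal{D}$ admits a (periodic) tiling of the grid. The unary relations will include one predicate $P_d$ for each domino type $d \in D$, plus a few auxiliary predicates. Using $\neg\exists x(\ldots)$-style constraints (which are \UNFO-expressible since the negated subformula has one free variable), I would assert: every element carries exactly one $P_d$; $H$ and $V$ are total functions (this is the ``two functional binary relations'' part, imposed semantically on the class of admissible models, exactly as transitivity is imposed in \UNFOTR{}); the colours match along $H$-edges and along $V$-edges according to $H_\mathcal{D}$ and $V_\mathcal{D}$; and, crucially, the grid ``confluence'' condition $H(V(x)) = V(H(x))$ holds.

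The main obstacle is expressing confluence of the grid, i.e.\ that going right-then-up coincides with going up-then-right, because \UNFO{} cannot say $x \neq y$ and cannot directly equate the two elements reached by the two paths. The standard trick here is to detour through a third element: introduce an auxiliary binary relation (not required to be functional), or better, use a marking predicate and the four available variables to force the two elements $V(H(x))$ and $H(V(x))$ to share all relevant unary information and to be mutually reachable, and then argue that in any model this pins the grid down tightly enough for the reduction to go through. Concretely, I would introduce a predicate that both $H(V(x))$ and $V(H(x))$ must satisfy together with a consistency constraint forcing any two such ``diagonal'' elements sharing a common grandparent to carry the same tile, which suffices because the tiling constraint only ever reads off the colour at a grid point. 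Four variables are exactly what is needed to quantify over $x$, $H(x)$, $V(x)$ and the common successor simultaneously inside a single universally-quantified negated conjunction.

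After setting up $\psi_{\mathcal{D}}$ I would verify the two directions. For soundness, a (periodic) tiling of $\N\times\N$ yields a model whose domain is the grid (or the torus $\Z_n \times \Z_n$ for the finite case, which is where functionality of $H,V$ is harmless), interpreting $H$, $V$ as the successor functions and $P_d$ by the tiling; all constraints hold by construction. For completeness, from any model $\str{A}\models\psi_{\mathcal{D}}$ I would define a tiling of the grid by picking the element $a_0$ named by a designated start predicate and mapping $(i,j)$ to the element reached from $a_0$ by $i$ applications of $H$ followed by $j$ applications of $V$; the confluence gadget guarantees this is well-defined up to tile-colour, and the colour-matching constraints ensure it is a legal tiling. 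Since the $\N\times\N$ tiling problem is undecidable and the periodic (equivalently, finite-torus) tiling problem is also undecidable, both satisfiability and finite satisfiability of the constructed \UNFO{} formulas are undecidable, giving the theorem; transitive relations play no role and can be omitted entirely.
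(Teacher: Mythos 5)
Your skeleton is right (reduce from the $\N\times\N$ / torus tiling problems using two functional symbols $H$, $V$, unary tile predicates, and a grid-axiomatizing \UNFO{} formula), but you have overcomplicated the one step that matters and, in doing so, left a gap. The paper's entire grid axiom is the single formula
$\forall x\,\exists yzt\,(Hxy \wedge Vxz \wedge Vyt \wedge Hzt)$,
and the crucial observation is that \emph{once $H$ and $V$ are functions, this already forces} $V(H(x)) = H(V(x))$: functionality pins down $y = H(x)$ and $z = V(x)$ uniquely, and the joint witness $t$ is then simultaneously the unique $V$-successor of $y$ and the unique $H$-successor of $z$, so $t = V(H(x)) = H(V(x))$. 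No auxiliary binary relation, no marking predicate, no ``same-tile-up-to-colour'' consistency gadget is needed; confluence is a free consequence of functionality plus this one existential. From there the standard grid on $\N\times\N$ (or on $\Z_n\times\Z_n$) is a model, and any model admits a homomorphism from the standard grid, which is exactly what is needed to pull a tiling back.

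Your alternative gadget is where the proposal breaks down. You propose to force $V(H(x))$ and $H(V(x))$ only to agree on tile colour, via ``a consistency constraint forcing any two such diagonal elements sharing a common grandparent to carry the same tile.'' You never write this constraint out, and the natural way to state it quantifies over $x$, $H(x)$, $V(x)$, and \emph{two} diagonal elements, i.e.\ five variables, not four. You also do not show that colour-agreement alone is enough to make the ``walk $i$ times by $H$, then $j$ times by $V$'' map yield a well-defined legal tiling; that would need a separate induction that your proposal only gestures at. Both problems vanish once you notice that the common-witness formula together with functionality gives genuine equality of the two grandchildren, not just colour agreement, which is precisely the point of the paper's proof.
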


\begin{proof} 
We axiomatize a class of grid structures, with $H$ being the horizontal successor relation and $V$ being the vertical
successor relation, as follows. We assert that $H$ and $V$ are functional
$$\forall x (\exists^{=1} y Hxy \wedge \exists^{=1} y Vxy)  $$
and then define grids using the following \UNFO{} formula
$$
 \forall x (\exists yzt (Hxy \wedge Vxz \wedge Vyt \wedge Hzt))
$$

Let $\lambda$ be the conjunction of the two above formulas. Clearly, the standard grids on $\N \times \N$ and on the $t \times t$ tori, for $t \in \N$, are models of  $\lambda$. Conversely, any model of $\lambda$ homomorphically embeds the standard grid. Having $\lambda$ with
such  properties, 
reducing the undecidable domino tiling problem: \emph{given a domino system verify if it tiles} $\N \times \N$ (\emph{some} $\Z_t \times \Z_t$) 
to satisfiability (finite satisfiability) is routine. 
\end{proof}

\section{Normal form} \label{s:nf}

\begin{lemma}[Scott-normal form] \label{l:nf}
For every \UNFOTR{} sentence $\phi$ one can compute in polynomial time a normal form \UNFOTR{} sentence
$\phi'$ over signature extended by some fresh unary symbols, such that 
any model of $\phi'$ is a model of $\phi$ and any model of $\phi$ can be
expanded to a model of $\phi'$ by an appropriate interpretation of the additional
unary symbols. 
\end{lemma}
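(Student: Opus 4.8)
The plan is to adapt Scott's normalization, with one extra precaution dictated by the unary‑negation restriction: a naive translation would introduce fresh predicates of arbitrary arity and then negate them in their defining axioms, which is forbidden in \UNFO{} once the arity exceeds one. We circumvent this by first eliminating every ``genuine'' negation and only afterwards prenexing the remaining, essentially negation‑free, core.

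First I would rename bound variables apart and delete double negations. Then, in a first pass, locate every maximal subformula of the form $\neg\chi$ (so that $\chi$ does not itself begin with a negation); by the \UNFO{} restriction such a $\chi(x)$ has at most one free variable. For each of them introduce a fresh unary predicate $P$ (for a closed $\chi$, a fresh $0$‑ary one, or simulate it by a unary predicate asserted to hold everywhere), replace the occurrence of $\neg\chi$ inside $\phi$ by the negated unary literal $\neg Px$, record the definitional requirement $\forall x\,(Px\leftrightarrow\chi(x))$, and recurse on $\chi$. After this pass the modified $\phi$, and likewise every right‑hand side $\chi$ awaiting recursion, is built from atoms, equalities and negated fresh unary literals $\neg Px$ using only $\wedge$, $\vee$ and $\exists$; in particular it contains no obstruction to prenexing, so each such formula $\theta(\bar v)$ is logically equivalent to $\exists\bar z\,\hat\theta(\bar v,\bar z)$ with $\hat\theta$ quantifier‑free, in NNF and still in \UNFO{}, obtained by pulling all existential quantifiers to the front (a linear operation with no blow‑up).

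It remains to put the pieces into the required shape. A definitional requirement $\forall x\,(Px\leftrightarrow\chi(x))$, using $\chi\equiv\exists\bar z\,\hat\chi(x,\bar z)$, splits into the implication $\forall x\,(Px\to\exists\bar z\,\hat\chi)$, which over nonempty domains equals $\forall x\exists\bar z\,(\neg Px\vee\hat\chi(x,\bar z))$ — a conjunct of the form $\forall x\exists\bar y\,\phi_i$ — and the converse $\forall x\,(\exists\bar z\,\hat\chi\to Px)$, which equals $\forall x\bar z\,\neg(\hat\chi(x,\bar z)\wedge\neg Px)$ — a $\forall$‑shaped conjunct with quantifier‑free NNF matrix. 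The top sentence $\phi$ itself becomes $\exists\bar z\,\hat\phi(\bar z)$, rewritten as $\forall w\exists\bar z\,\hat\phi$. Finally merge all $\forall$‑shaped conjuncts into one $\forall x_1\dots x_t\,\neg\phi_0$ by padding the variable tuples and disjoining their matrices (which preserves NNF). The procedure introduces $O(|\phi|)$ fresh unary symbols and runs in polynomial time. Correctness in both directions is immediate: every equivalence used ($\theta\equiv\exists\bar z\,\hat\theta$, the two displayed rewritings, prenexing, De Morgan for NNF) is valid in all structures — in particular in those interpreting $\sigma_{\cDist}$ as transitive relations with the prescribed inverses — so a model of $\phi'$ gives a model of $\phi$ by reduct, and any model of $\phi$ expands to a model of $\phi'$ by interpreting each fresh $P$ as the set defined by its $\chi$; no elements are added or removed, hence satisfiability is preserved over every fixed domain.

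The only place where the argument genuinely departs from the textbook Scott translation is the ordering flagged above: one must abstract negations away into \emph{unary} predicates — which is possible precisely because \UNFO{} negations are unary — before prenexing, so that no fresh predicate of arity $\geq 2$ ever occurs under a negation. I expect this to be the sole real subtlety; everything else is routine bookkeeping.
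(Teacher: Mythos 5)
Your proof is correct and arrives at the required normal form via the same Scott-style bottom-up abstraction into fresh unary predicates, though it identifies \emph{which} subformulas to abstract by a slightly different criterion than the paper. The paper first converts $\phi$ to UN-normal form (every maximal quantifier block leaves at most one free variable, citing \cite{StC13}) and then repeatedly abstracts innermost existential blocks $\exists\bar y\,\psi(x,\bar y)$ into fresh unary $P(x)$'s, adding the same pair of defining conjuncts you add. You instead observe that, by the \UNFO{} grammar itself, the bodies of maximal negations $\neg\chi(x)$ are exactly the subformulas guaranteed to have at most one free variable, so you peel these off first; the residue then contains only positive atoms, equalities and $\neg Px$ literals, and can be prenexed and merged without any further precaution about arity. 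Your route sidesteps the preliminary appeal to UN-normal form and is somewhat more self-contained, at the cost of spelling out the prenexing and the merging of the $\forall$-conjuncts explicitly --- which you handle correctly, including the standard reliance on nonempty domains when pulling existentials out of disjunctions and when combining universally quantified conjuncts with padded variable tuples.
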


\begin{proof} (Sketch)
Take any \UNFOTR{} sentence $\phi$. Recall that it uses no universal quantifiers. First we convert it to its UN-normal form,
in which each maximal block of quantifiers leaves at most one variable free. This can be done as described in \cite{StC13}. 
Then we consider an innermost subformula of $\phi$, starting with a block of quantifiers, $\exists \bar{y} \psi(x, \bar{y})$,
replace it by a fresh unary predicate $P(x)$, and add two auxiliary conjuncts $\forall x \exists \bar{y} (\neg P(x) \vee \psi(x, \bar{y}))$
and $\forall x \bar{y} \neg(\psi(x, \bar{y})\vee\neg P(x))$, whose conjunction is equivalent to $\forall x (P(x)\leftrightarrow\exists\bar{y}\psi(x,\bar{y}))$.
Moving up the original formula $\phi$ we repeat this procedure for subformulas that are now innermost, and so forth. The formula
obtained in this process has, up to trivial logical transformations, the desired shape and properties.
\end{proof}

\section{Proof of Lemma \ref{l:homomorphisms}}

\begin{proof}
Due to (a1) all elements of $\str{A}'$ have the required witness structures for all $\forall\exists$-conjuncts. It remains
to see that the $\forall$-conjunct is not violated. But since $\str{A}  \models \neg \phi_0(\fh(a_1), \ldots, \fh(a_t))$ 
and $\phi_0$ is a quantifier-free formula in which only unary atoms may be negated, it is straightforward, using (a2). 
\end{proof}

\section{Missing proofs and a strategy description from Section \ref{s:2var}}

\subsection{Proof of Lemma \ref{l:ltreelike}}
\begin{proof}
It is readily verified that $\str{A}'$ meets the properties required by Lemma \ref{l:homomorphisms}. In particular function $\fh$ associated with the given unravelling is the
required homomorphism. That $\str{A}'$ is tree-like and has an appropriately bounded degree is also straightforward. 
For the last condition
assume to the contrary that  there exist $u$ and a downward-${T}_u$-path $(a_i)_{i=0}^N$
in $\str{A}'$  with rank bigger than $|A|$. Then there are indices $i_0,\ldots,i_{|A|}$ such that 
$\str{A}\models{T}_ua_{i_j}a_{i_j+1}\wedge\neg{T}_ua_{i_j+1}a_{i_j}$. Since  $\fh$ preserves the connections
between elements and their witnesses
we have $\str{A}_0\models{T}_u\fh(a_{i_j})\fh(a_{i_j+1})\wedge\neg{T}_u\fh(a_{i_j+1})\fh(a_{i_j})$. By the pigeonhole principle there exist $x<x'$ such that $\fh(a_{i_x})=\fh(a_{i_{x'}})$. This gives, by transitivity of ${T}_u$, that $\str{A}_0\models{T}_u\fh(a_{i_x+1})\fh(a_{i_x})$. Contradiction. 
\end{proof}

\subsection{Proof of Lemma \ref{l:llocglobl}}
\begin{proof}
(i)
Assume to the contrary that the given system is locally consistent but not globally consistent. This means that 
for some node $a$, for some $\alpha \in \AAA$, and some $\cT$ we have that $\alpha \in \fd_a(\cT)$ but
$\alpha \not\in \ldec{\str{A}}{a}(\cT)$, that is there is a node $b$, of atomic type $\alpha$ such that $\str{A} \models \cT ab$.
Thus, there exists a sequence of distinct nodes
$a=a_0, a_1, \ldots, a_N=b$ such that $a_i$ is either a child or the parent of $a_{i+1}$
and $\str{A} \models \cT a_i a_{i+1}$. 
Observe that it must be $a \not= b$ since otherwise Condition (ld2) would not be satisfied at $a$. 
By induction, using Condition (ld1), we can show that $\alpha \in \fd_{a_i}(\cT)$ for all
$i$, in particular for $i=N-1$. We now get a contradiction with (ld2) at $a_{N-1}$ or $a_N$ (depending on
which of them is the parent of the other). Part (ii) is straightforward.
\end{proof}

\subsection{Proof of Lemma \ref{l:ltreepruning}}

\begin{proof}
(i) Follows from the fact that for every $a \in A'$ its  downward family in $\str{A}'$ is an isomorphic copy of the downward family of $a$ in $\str{A}$. Moreover, due to the requirement that a subtree with root $a$ is replaced by a subtree with the root of the same generalized type as the type of $a$, this copy also preserves declarations.
(ii)  Consider now any pair of elements $a, a' \in A'$. Assume that
they have $1$-types, resp., $\alpha$ and $\alpha'$. 
 If $a$ is a child of $a'$ or $a'$ is a child of $a$ then the edge that joins them is an isomorphic copy of an
edge from $\str{A}$. Otherwise, due to the definition of light tree-like structure, $a$ and $a'$ may be joined only by some transitive 
relations. Let $\cT$ be the set of all transitive relations ${T}_u$ such that $\str{A}' \models {T}_u aa'$. By part (i) of 
this lemma the system of declarations is locally consistent on $\str{A}'$. By Lemma \ref{l:llocglobl} it is also globally consistent. In particular
$\fd(a)(\cT) \subseteq \ldec{\str{A}'}{a}(\cT)$. Since $\alpha' \not\in \ldec{\str{A}'}{a}(\cT)$ it follows that 
$\alpha' \not\in \fd(a)(\cT)=\ldec{\str{A}}{a}(\cT)$. Thus, there is a realization $b$ of $\alpha'$ in $\str{A}$ such
that $\str{A} \models \cT ab$, and hence the function mapping $a$ to itself and $a'$ to $b$ 
is the required homomorphism. 
(iii) 
To see that all nodes of $\str{A}'$ have the required witnesses again just note that for every $a \in A'$ its  downward family in $\str{A}'$
is an isomorphic copy of the downward family of $a$ in $\str{A}$. That $\str{A}' \models \phi$ follows now from part (ii) of this lemma and from Lemma \ref{l:homomorphisms}.
\end{proof}

\subsection{The pruning strategy} \label{s:strategy}

\smallskip\noindent

Let $\str{A}$ be a light-$\phi$-tree-like unravelling of a finite model of $\phi$. Let $(\fd_a)_{a \in A}$, $\fd_a := \ldec{\str{A}}{a}$ be the canonical
system of light declarations on $\str{A}$. 
During a top-down pruning process we define a function $\fs$ assigning to the surviving nodes a permutation 
of the set $\{1,\ldots,2k\}$.
In each $\str{A}_i$ this function is partial and defined for all nodes of depth at most $i$
(and is not modified in the subsequent structures for these nodes). Its values can be then transfered to $\str{A}'$, where it becomes total.
The purpose of $\fs$ is to define some order of shortening  
paths at a given node. Intuitively,  for $\fs(a)=\tau$, if  $v<v'$ then we  prefer to shorten ${T}_{\tau(v)}$ over ${T}_{\tau(v')}$. 

In $\str{A}_0$, let $\fs$ assigns an arbitrary permutation to the root. Assume that we have constructed $\str{A}_i$, for $i \ge 0$, and we
have assigned the values of $\fs$ to all its nodes of depth at most $i$.
Consider a node $a$ of $\str{A}_i$ of depth $i+1$. Denote its  parent by $a'$.
 Our task is to choose a descendant $b$ of $a$ whose subtree will replace the subtree of $a$ (or decide that this subtree is left untouched).
To make our choice we will look at permutation $\tau=\fs(a')$ assigned to the parent of $a$ and at three sets of indices, $K$, $S$, $D$,  whose definition depends on the connection between $a'$ and $a$, 
as follows. 

We say that ${T}_u$ is (i) \emph{killed at} $a$ (or: \emph{at the edge} $(a',a)$)
if $\str{A} \models\neg{T}_{\tau(v)}a' a$,
(ii) \emph{sustained at}  $a$ if $\str{A}\models{T}_{u}a'a\wedge{T}_{u}aa'$, and
(iii) \emph{diminished at}  $a$ if $\str{A}\models{T}_{u}a'a\wedge\neg{T}_{u}aa'$.
Let $K=\{v: {T}_{\tau(v)} \text{ is killed at $a$}\}$, 
$S:=\{v: {T}_{\tau(v)} \text{ is sustained at $a$}\}$,
$D:=\{v: {T}_{\tau(v)} \text{ is diminished at $a$}\}$. Note that the above sets contain not the numbers of
transitive relations but rather their positions in the permutation $\tau$.

If $D=\emptyset$ then we just choose $b$ to be $a$, that is we decide to leave the subtree of $a$ as it is.
If $D\neq\emptyset$ then let $v_D=\min D$ and choose  
$b$ to be a node of $ A_{a}$ such that (i) 
$\gtype{\str{A}}{a}=\gtype{\str{A}}{b}$
(the standard requirement in the pruning process),
 (ii) for all $v<v_D$, $v\in S$: $\rank{\str{A}}{\tau(v)}{b}  \le \rank{\str{A}}{\tau(v)}{a}$ (where the ranks may be equivalently
computed in $\str{A}_i$), and (iii) $\rank{\str{A}}{\tau(v_D)}{b}$ is the lowest possible. 
Note that such an element exists (however, it may happen that $b=a$) and $\rank{\str{A}}{\tau(v_D)}{b} \le \rank{\str{A}}{\tau(v_D)}{a} <\rank{\str{A}}{\tau(v_D)}{a'}$.

It remains to define $\fs(b)$.
If $K\not=\emptyset$ then let $v_K=\min K$
 and set $\fs(b):=\tau \circ (v_K, v_{K}+1, \ldots, 2k-1, 2k)$, where $\circ$ denotes permutation composition, and the second argument
is a cyclic permutation. 
In other words, we cyclically move the elements on positions $v_K, \ldots, v_{2k}$ in $\tau$ by one position to the left. This way the 
relation with the biggest priority among the relations that are killed in the current step  now gets the lowest priority. 
If $K = \emptyset$ then set  $\fs(b):=\tau$.

\begin{lemma} \label{l:lshortpaths} Let  $\str{A} \models \phi$ be a light tree-like structure over  signature $\sigma$, with transitive paths bounded by some natural number $M$. Then the result $\str{A}'$ of any top-down pruning process respecting our pruning strategy is a light tree-like structure with transitive paths bounded doubly exponentially in $|\sigma|$.
\end{lemma}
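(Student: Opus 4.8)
That $\str{A}'$ is again a light tree-like structure is routine: every step of the process replaces a subtree by a subtree rooted at a node of the same generalized type (in particular the same $1$-type), keeps all parent--child edges, and re-applies the transitive closures, so the defining conditions of light tree-likeness survive each step, hence also the limit. The real work is the rank bound, and the key point is that it must not depend on $M$ at all: it has to be driven by the number of \emph{generalized types} (there are only doubly exponentially many in $|\sigma|$, a generalized type being a $1$-type together with a light declaration) and by the permutations carried by $\fs$ (there are $(2k)!$ of them, dwarfed by $|\GGG[A]|$). So I would fix a transitive relation $T_u$ and an arbitrary downward-$T_u$-path $\pi=c_0,c_1,\ldots,c_N$ in $\str{A}'$ and bound the number of its \emph{diminished} edges, i.e.\ those $(c_{l-1},c_l)$ with $\str{A}'\models T_uc_{l-1}c_l\wedge\neg T_uc_lc_{l-1}$; maximised over $\pi$ and $a$ this number is exactly $\rank{\str{A}'}{u}{a}$.

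Next I would record two monotonicity facts about the strategy along $\pi$. First, since every node of every $\str{A}_i$ sits at an $\str{A}$-descendant position of its $\str{A}_i$-parent and each new parent--child edge copies the $2$-type of the original edge it replaces, one checks by the usual ``prepend the edge'' argument that for any transitive $T$ and any downward-$T$-path in $\str{A}'$ the $T$-ranks of its nodes (computed in $\str{A}$) are non-increasing along the path and strictly decreasing across diminished edges. Second, the priority of $T_u$ in the active permutation $\fs(c_{l-1})$ is non-increasing along $\pi$: a kill event cyclically shifts a suffix of the permutation, which lowers $T_u$'s position by exactly one when the killed relation has strictly higher priority and leaves it unchanged otherwise (and $T_u$ is never killed, being downward on $\pi$). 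Hence $\pi$ splits into at most $2k$ maximal \emph{phases} of constant $T_u$-priority $p$; and inside such a phase the first $p$ entries of the permutation are frozen, so every relation of priority $<p$ is downward at every edge of the phase -- i.e.\ the whole phase-segment is simultaneously a downward-$T_w$-path for each higher-priority relation $T_w$.

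The combinatorial heart is an induction on the priority level $p$, proving that inside a phase of $T_u$-priority $p$ the number of diminished-$T_u$ edges is at most $f(p)$, where $f(1)=|\GGG[A]|$ and $f(p)=|\GGG[A]|+\sum_{v<p}f(v)$. Split the diminished-$T_u$ edges of the phase into \emph{type (a)}, where $T_u$ is the top diminished relation ($v_D=\min D=p$), and \emph{type (b)}, where some relation of priority $<p$ is also diminished. For type (a): the strategy chose the surviving child so as to \emph{minimise} $\rank{\str{A}}{u}{\cdot}$ among the $\str{A}$-descendants of the original child that have the same generalized type and satisfy the constraint (ii) on the higher-priority sustained relations; using the first monotonicity fact applied to $T_u$ and to those higher-priority relations -- which, by the phase structure, run along full downward paths through the phase-segment, so their ranks decrease monotonically -- a later type-(a) endpoint of the same generalized type would have been an admissible, strictly better candidate at the earlier step, a contradiction; hence the type-(a) endpoints of a phase carry pairwise distinct generalized types, so there are at most $|\GGG[A]|$ of them. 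For type (b): each such edge is diminished for the relation $T_w$ of priority $\min D<p$, the whole phase-segment is a single downward-$T_w$-path lying in a single phase of $T_w$-priority $\min D$, and the induction hypothesis bounds the diminished-$T_w$ edges there by $f(\min D)$; summing over the fewer than $p$ values of $\min D$ yields at most $\sum_{v<p}f(v)$ type-(b) edges. Finally, summing the $f(p_j)\le f(2k)$ over the $\le 2k$ phases of $\pi$ bounds $\rank{\str{A}'}{u}{a}$ by $2k\cdot f(2k)$, which is doubly exponential in $|\sigma|$ since $|\GGG[A]|$ is and $2k\le|\sigma|$.

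The main obstacle is exactly the interplay between distinct transitive relations: decreasing $\rank{\str{A}}{u}{\cdot}$ at a type-(a) step is permitted to disturb the ranks of relations of \emph{lower} priority (and, through the permutation reshuffling at kill events, to scramble their later behaviour), and one must be sure this never invalidates the minimality comparison used for the higher-priority relations nor makes constraint (ii) unsatisfiable farther down $\pi$. What rescues the argument is precisely the combination of the ``$v_D=\min D$'' rule with the cyclic demotion of the highest killed relation in $\fs$: it guarantees that within a phase the higher-priority relations are left undisturbed and run along full downward paths, so the first monotonicity fact applies to them and the induction on priority levels closes. A secondary but genuine point of care is bookkeeping which structure the various $\rank{\str{A}}{\cdot}{\cdot}$-values are evaluated in -- in $\str{A}$, or in the intermediate $\str{A}_i$ at the instant a choice is made -- and verifying that they agree where the argument needs them to, which they do because the subtree hanging below a freshly chosen node has not yet been touched by the process.
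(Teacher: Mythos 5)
Your overall architecture -- kill the dependence on $M$, use generalized types, track the priority of $T_u$ in $\fs$, split $\pi$ into at most $2k$ phases of constant $T_u$-priority, and then do an induction on the priority level -- is exactly the paper's plan (Claim \ref{c:lshortpaths}). But there is a genuine gap in the way you bound the type-(a) edges inside a phase, and it traces back to the ``first monotonicity fact,'' which is overclaimed.

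You assert that for \emph{any} transitive $T$ and any downward-$T$-path in $\str{A}'$, the $T$-ranks (in $\str{A}$) are non-increasing and strictly decreasing across diminished edges, justified by a ``prepend the edge'' argument. That argument does not go through: a downward-$T$-path in $\str{A}'$ is not a downward-$T$-path in $\str{A}$ (the chosen child $c_l$ is an $\str{A}$-descendant of the original child $a$ of $c_{l-1}$ and need not be $T$-connected to $a$ in $\str{A}$), so the rank comparison comes entirely from the strategy's constraints (ii) and (iii), and these only control positions $\le v_D$ at each step. Concretely, at a type-(b) edge (where $v_D < p$) the strategy says nothing about $\rank{\str{A}}{u}{\cdot}$ for $T_u$ at priority $p$, nor about the ranks of relations at priorities strictly between $v_D$ and $p$; all of these can go \emph{up}. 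This breaks both halves of your type-(a) pigeonhole: to conclude that a later same-type endpoint $b'$ was ``admissible'' at the earlier step you need the higher-priority ranks not to have increased between $b$ and $b'$, and to conclude it was ``strictly better'' you need $\rank{\str{A}}{u}{b'}<\rank{\str{A}}{u}{b}$; both can fail if the segment from $b$ to $b'$ contains a type-(b) edge. (The fact is genuinely true only when $T_u$ has priority $1$, which is why the paper proves the $v_0=1$ base case this way but handles $v_0>1$ differently.)

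The paper repairs this by \emph{not} applying pigeonhole across the whole phase. Instead it first counts: the number of type-(b) edges is at most $Sum=\sum_{v<v_0}\rank{\str{A}'}{\fs(\vec a)(v)}{\vec a}$ by definition of ranks; then it splits the phase into blocks each containing $M_\phi$ type-(a) edges, so that if the rank exceeds $M_\phi\cdot Sum + Sum + M_\phi$ there are at least $Sum+1$ such blocks, and by pigeonhole one block contains \emph{no} type-(b) edge at all. Only inside that all-sustained block are the relevant ranks genuinely non-increasing, and there the generalized-type pigeonhole yields the contradiction. This extra block-division step is precisely what produces the multiplicative $M_\phi\cdot Sum$ term in Claim \ref{c:lshortpaths}; your recurrence $f(p)=|\GGG[A]|+\sum_{v<p}f(v)$ is missing that factor, which is the visible symptom of the gap (it happens to still be doubly exponential, so the final statement is safe, but the argument as written is not).
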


\begin{proof} 
Let $M_\phi$ be the number of the generalized types realized in $\str{A}$ increased by $2$.
Clearly, $M_\phi$ is bounded doubly exponentially in $|\phi|$. 
Let us first make an auxiliary estimation.
\begin{claim} \label{c:lshortpaths}
Let $v_0$ and a downward-${T}_u$-path $\vectorize{a}=(a_i)_{i=1}^N$ in $\str{A}'$ be such that for all $i,i'$ and $v\leq v_0$ we have $\fs(a_i)(v)=\fs(a_{i'})(v)$ (in this case, slightly abusing  notation, we write $\fs(\vectorize{a})(v)=\fs(a_i)(v)$) and let $u:=\fs(\vectorize{a})(v_0)$.
Let $Sum=\sum_{v< v_0}
\rank{\str{A}'}{\fs(\vectorize{a})(v)}{\vectorize{a}}$. 
Then $\rank{\str{A}'}{u}{\vectorize{a}} \leq M_\phi \cdot Sum + Sum + M_\phi$. 
\end{claim}
\begin{proof}
Consider first the case $v_0=1$ and take a downward-${T}_u$-path $\vectorize{a}=(a_i)_{i=1}^N$ in $\str{A}'$ such that 
$\fs(\vectorize{a})(1)=u$. $Sum=0$ in this case, so we need to show that $\rank{\str{A}'}{u}{\vectorize{a}} \leq M_\phi$. 
Observe first that the ${T}_u$-rank of elements, computed in $\str{A}$, is non-increasing along $\vectorize{a}$.
More precisely, for $1 \le i < N$, if $\str{A}' \models T_ua_i a_{i+1} \wedge \neg T_u a_{i+1} a_i$  
then  $\rank{\str{A}}{u}{a_i} > \rank{\str{A}}{u}{a_{i+1}}$ and if $\str{A}' \models T_ua_i a_{i+1} \wedge T_u a_{i+1} a_i$  
then  $\rank{\str{A}}{u}{a_i} \ge \rank{\str{A}}{u}{a_{i+1}}$. Both properties follow from our strategy: the former from
condition (iii) (note that in this case $v_D=1$) and the latter from condition (ii).
Assume now to the contrary
that $\rank{\str{A}'}{u}{\vectorize{a}} >  M_\phi$. This means that there are at least $M_\phi+1$ elements $a_{i+1}$ such 
that $\str{A}' \models T_ua_i a_{i+1} \wedge \neg T_u a_{i+1} a_i$. Thus, by the pigeonhole principle, there are at least two such elements,
$a_{x+1}$ and $a_{x'+1}$, say $x < {x'}$, having the same
generalized types in $\str{A}$. By the observation above, $\rank{\str{A}}{u}{a_{x+1}} > \rank{\str{A}}{u}{a_{x'+1}}$. But then, 
condition (iii) of our strategy requires us to use $a_{x'+1}$ instead of $a_{x+1}$ when looking for a child of $a_x$. Contradiction.

We now show that the Claim is true for arbitrary $2 \le  v_0 \le 2k$.   
Take a downward-${T}_u$-path $\vectorize{a}=(a_i)_{i=1}^N$ in $\str{A}'$ such that 
$\fs(a_i)(v)$ is constant on $\vectorize{a}$ for all $v \le v_0$ and $\fs(\vectorize{a})(v_0)=u$.
Note that none of $1, \ldots, v_0$ belongs to any of the sets $K$ computed during the construction of $\vectorize{a}$ 
(since if $v \in K$ then $\fs(a_i)(v)$ changes). Thus, relations
${T}_{\fs(\vectorize{a})(1)}$, $\ldots$, ${T}_{\fs(\vectorize{a})(v_0)}(={T}_u)$ are either diminished or sustained along $\vectorize{a}$.
Assume to the contrary that  $\rank{\str{A}'}{u}{\vectorize{a}}>M_\phi \cdot Sum + Sum + M_\phi$. Consider the edges of $\vectorize{a}$ such that ${T}_u$ is diminished on them. The number of such edges on
which additionally some of ${T}_{\fs{(\vectorize{a})(v)}}$ for $v<v_0$ is diminished is bounded by $Sum$ (by the definition of ranks). Thus at more than $M_\phi \cdot Sum + M_\phi$ 
edges we chose $v_D=v_0$ along the considered path. Let $\cal{Q}$ be the set of such edges. 

We now divide $\vectorize{a}$ into fragments containing $M_\phi$ edges from $\cal{Q}$ (a suffix of $\vectorize{a}$ with less then $M_\phi$
edges may be left). There are at least $Sum +1$ such fragments.
It follows, by the pigeonhole principle, that in at least one of them, call it $\vectorize{a}^*$, all of the ${T}_{\fs{(\vectorize{a})(v)}}$,
for $v<v_0$ are sustained. By arguments similar to those given in the case $v_0=1$ we see that
the ranks $\rank{\str{A}}{\fs(\vectorize{a})(v)}{a}$ are non-increasing along $\vectorize{a}^*$ for $v \le v_0$, and
$\rank{\str{A}}{\fs(\vectorize{a})(v_0)}{a}$ decreases at least $M_\phi$ times. The latter happens, again by the pigeonhole principle,
at least two times for edges leading to elements with the same generalized types in $\str{A}$, so, as in the case of $v_0=1$, we get
a contradiction with our strategy.
\end{proof}

The above claim allows us in particular to compute recursively
a (uniform) doubly exponential bound  on 
$\rank{\str{A}'}{u}{\vectorize{a}}$ for all $v_0$, $u$ and $\vectorize{a}$ as in assumption. Denote this bound by $\overline{M}_\phi$.

Consider now any downward-${T}_u$-path $\vectorize{a}=(a_i)_{i=1}^N$ in $\str{A}'$.
For each node $a_i$ from $\vectorize{a}$ let $v_{u}(a_i)$ be such
that $\fs(a_i)(v_u)=u$. Due to the strategy that we use to define $\fs$ the value of $v_u$ is non-increasing along $\vectorize{a}$. Indeed, when moving from $a_i$ to $a_i+1$ the value of $v_u$ is either unchanged or decreases by $1$; the only chance of increasing it would be 
to change it to $2k$ but this happens only when ${T}_u$ is killed. Let us divide $\vectorize{a}$ into fragments $\vectorize{a}_1, \vectorize{a}_2, \ldots$
on which $v_u$ is constant. The number of such fragments  is obviously bounded by $2k$. On each of such fragments
$\vectorize{a}_i$ for all $v \le  v_u$ we have that $\fs(\vectorize{a}_i)(v)$ is constant. So we can apply Claim \ref{c:lshortpaths} to bound 
$\rank{\str{A}'}{u}{\vectorize{a}_i}$ 
by $\overline{M}_\phi$. This gives the desired doubly exponential bound $\hat{M}_\phi=2k \overline{M}_\phi$ on $\rank{\str{A}'}{u}{\bar{a}}$
and finishes the proof of Lemma \ref{l:lshortpaths}. 
\end{proof}

Our strategy, together with Lemma \ref{l:ltreepruning} gives Lemma \ref{l:lshortpaths2}.

\subsection{Proof of Claim \ref{c:lnodanger}}

\begin{proof}  
	(i) 
	Take any $b' \in B_0$. Note that $\fp \restr B_0$ goes into $A_{\fp({b})} \cap [\fp(b)]_{\mathcal{E}_{tot} \cup \{{T}_u, {T}_{u}^{-1}\}}$, so $\fp(b')$ belongs
	to the subtree of $\fp(b)$ and is connected to it by by both ${T}_u$ and ${T}_u^{-1}$. It follows that extending any downward-${T}_u$-path starting at $\fp(b')$ by
	the path from $\fp(b)$ to $\fp(b')$ does not change its ${T}_u$-rank. Hence $\rank{\str{A}}{u}{\fp(b))\geq\rank{\str{A}}{u}{\fp(b'}}$.
	(ii) By our strategy of choosing witnesses and assigning layers to them we know that $\fp(c')$ is a child of $\fp(b)$ and $\str{A} \models {T}_u \fp(b) \fp(c') \wedge
	\neg {T}_u \fp(c') \fp(b)$. Thus, extending any downward-${T}_u$-path starting at
	$\fp(c')$ by the edge from $\fp(b)$ to $\fp(c')$ increases its rank by $1$. The claim thus follows.
\end{proof}

\subsection{Correctness of the construction in the proof of Lemma \ref{l:lfin}}\label{s:cor}

Recall Fig.~\ref{f:cylinder0}. We naturally divide $\str{A}_0^0$ and $\str{A}_0'$ into $4l$ \emph{levels}. For $g=0,1$ and $1 \le i \le 2l$, level $2lg+i$ is the union of layers $L_i$ of all 
components of color $g$.

\noindent
(b\ref{blone}) 
Assume that $\str{A}_{a_0} \restr [a_0]_{\mathcal{E}_{tot}}$ is not a singleton
(in this case also $\str{A}_0'$ is not a singleton). 
Thus each of the ${T}_u \in \mathcal{E}_{tot}$ is total on it, in particular it is reflexive. By the inductive assumption it is total on subcomponents (for singleton subcomponents it follows from (b\ref{blfour}), using the fact that $\fh$ preserves the $1$-types, and thus, in particular, reflexivity of the ${T}_u)$. When components are formed out of subcomponents we always use $2$-types from $[a_0]_{\mathcal{E}_{tot}}$. It is thus straightforward that, after taking
a connected fragment of the graph of components and applying the transitive closures to get $\str{A}_0'$ all relations from $\mathcal{E}_{tot}$ become
total.

\noindent
(b\ref{bltwo}) As $a_0'$ we take the root of $\str{C}^{\gamma_{a_0},0}_{\bot,\bot}$. Recall that we explicitly map
the root of the pattern component $\str{C}^{\gamma_{a_0}}$  by $\fp$ to $a_0$.

\noindent
(b\ref{blthree})
If we prove that an element and its pattern have the same 1-types, then the existence of witnesses is easy to show. Indeed,
we explicitly take care of this when building components in Step 2 (\emph{Providing witnesses}). In each component, every
element from  layer $L_i^j$ has its witnesses in $L_i^j \cup L_i^{j+1, init} \cup L_{i+1}^{1, init}$. 
Every interface element is identified with the root of some other component so it also has its witnesses.
So, the only potential danger is that some $1$-types are enlarged. While we initially explicitly copy the $1$-types from the original model,
it is probably not completely obvious that they remain the same after taking the transitive closures: the potential danger is that
we may possibly form a ${T}_u$-cycle from an element $a'$, such that  $\str{A} \models \neg {T}_u \fp(a')\fp(a')$. 

To see that this cannot happen, as well as to prepare ourselves for a proof of (b\ref{blfour}) 
we now spend a while on understanding transitive paths in $\str{A}_0^0$ and thus transitive connections in $\str{A}_0'$.
First, 
 observe that each 2-type in $\str{A}_0^0$ is either a copy of a 2-type between an element and its witness (Step 2), was set when putting a subcomponent (Step 1), or is trivial, that is, it makes true only some unary atoms.
We say that a sequence $a_1,\ldots,a_N\in A_0^0$ satisfying $a_i\neq a_{i+1}$ for all $1\leq i\leq N-1$ is a \emph{path} in $\str{A}_0^0$ if for all $1\leq i\leq N-1$ the elements $a_i,a_{i+1}$ either belong to the same subcomponent or one is put as a witness for the other, and a ${T}_u$-\emph{path} 
if for all $1\leq i\leq N-1$ we have $\str{A}_0^0\models{T}_ua_ia_{i+1}$.
Observe that in particular every ${T}_u$-path in $\str{A}_0^0$ is a path and every path in $\str{A}_0^0$ is automatically a ${T}_u$-path for all $u>2l$.

Consider a pair of elements $a',b'$ (possibly $a'=b'$). We are interested in  the $2$-type of $(a',b')$ in $\str{A}_0'$, in particular
 in the ${T}_u$-paths joining $a', b'$ in $\str{A}_0^0$. 
Assuming that this $2$-type contains some binary symbol other than
the symbols from $\mathcal{E}_{tot}$, we argue, that it is identical to
a $2$-type of some pair in some simplifications of $\str{A}_0'$. (The case where $a', b'$ are joined only by relations from $\mathcal{E}_{tot}$ is simple
and will be treated separately.)

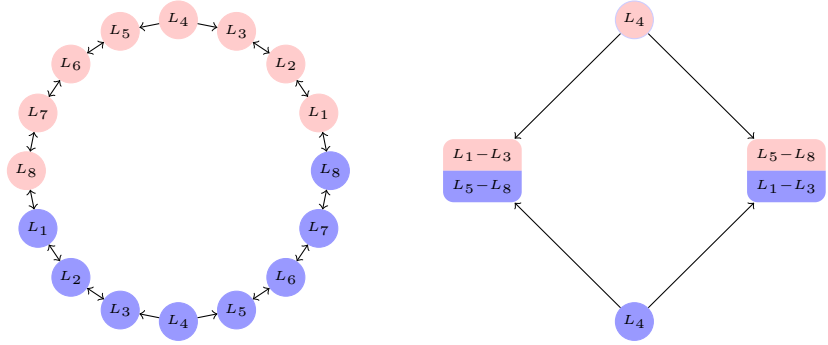
\begin{figure}
\begin{center}
\begin{tikzpicture}[scale=1]

\begin{scope}
\foreach \a in {1,2,...,8}{
\draw ({360/16*(\a)}: 2) node[circle, draw=red, minimum width=0.5cm, opacity=0.2, fill=red, label=center:$\scriptscriptstyle L_{\a}$] (A\a) {};
}

\foreach \a in {1,2,...,8}{
\draw ({360/16*(\a+8)}: 2) node[circle, draw=blue, minimum width=0.5cm, opacity=0.4, fill=blue, label=center:$\scriptscriptstyle L_{\a}$] (B\a) {};
}

\draw (A1) edge[<->] (A2);
\draw (A2) edge[<->] (A3);
\draw (A3) edge[<-] (A4);
\draw (A4) edge[->] (A5);
\draw (A5) edge[<->] (A6);
\draw (A6) edge[<->] (A7);
\draw (A7) edge[<->] (A8);
\draw (A8) edge[<->] (B1);
\draw (B1) edge[<->] (B2);
\draw (B2) edge[<->] (B3);
\draw (B3) edge[<-] (B4);
\draw (B4) edge[->] (B5);
\draw (B5) edge[<->] (B6);
\draw (B6) edge[<->] (B7);
\draw (B7) edge[<->] (B8);
\draw (B8) edge[<->] (A1);
\end{scope}

\begin{scope}[shift={(6,0)}]
\draw ({360/16*(4)}: 2) node[circle, draw=blue, minimum width=0.5cm, opacity=0.2, fill=red, label=center:$\scriptscriptstyle L_{4}$] (A24) {};
\draw ({360/16*(4+8)}: 2) node[circle, draw=blue, minimum width=0.5cm, opacity=0.4, fill=blue, label=center:$\scriptscriptstyle L_{4}$] (B24) {};
\draw ({360/16*(8)}: 2) node[rectangle, rounded corners, minimum width=1cm, minimum height=1.5cm,  rectangle split, rectangle split part fill={red!20,blue!40},
 rectangle split parts=2] (A28) {$\scriptscriptstyle {L_1-L_3}$
\nodepart{second}
  {$\scriptscriptstyle {L_5-L_8}$}
};

\draw ({360/16*(16)}: 2) node[rectangle, rounded corners, minimum width=1cm, minimum height=1.5cm,  rectangle split, rectangle split part fill={red!20,blue!40},
 rectangle split parts=2] (B28) {$\scriptscriptstyle {L_5-L_8}$
\nodepart{second}
  {$\scriptscriptstyle {L_1-L_3}$}
};

\draw (A24) edge[->] (A28);
\draw (A24) edge[->] (B28);
\draw (B24) edge[->] (A28);
\draw (B24) edge[->] (B28);

\end{scope}
\end{tikzpicture}
\end{center}
\caption{${T}_3$ and ${T}_4$-connections (recall: ${T}_4={T}_3^{-1}$) between levels of $\str{A}_0'$.  The absence of an arrow from a level (group of levels) $A$ to level (group of levels) $B$ means that there are no ${T}_3$-connections from elements of  $A$ to elements of $B$. Groups of levels in the right figure contain $2l-1$ levels each.
} \label{f:lpaths}
\end{figure}

\smallskip\noindent
\emph{Reduction 1}. Note that for any ${T}_u$-path, $1 \le u \le 2l$, there is a set of at most $2l$ consecutive levels
of our cylindrical structure in which all elements of this path are contained.
Consider, e.g., the case of a ${T}_3$-path, see Fig.~\ref{f:lpaths}.
This path must be contained either in the union of 
red levels $L_1$--$L_3$, blue levels $L_5$--$L_{2l}$, and at most one (red or blue) level $L_4$, or, symmetrically, 
in the union of blue levels $L_1$--$L_3$, red levels $L_5$--$L_{2l}$, and at most one level $L_4$. Analogously for the other ${T}_u$.
This becomes evident when looking at the graph from the right part of Fig.~\ref{f:lpaths} whose nodes are strongly connected
components (consisting of at most $2l-1$ levels) of the graph from the left part of this figure.

Assume that $a'$ and $b'$ are connected by a
${T}_u$-path, for some $1 \le u \le 2l$ or by a non-transitive connection crossing one of the borders between colors, that is using an edge between a leaf of color $g$ and a root of 
color $1-g$, for some $g=0,1$. 
Then for any $1 \le v \le 2l$, including $v=u$,
any ${T}_v$-path
from $a'$ to $b'$ cannot cross the other border since the minimal set of consecutive levels containing the levels of $a'$, $b'$ and the levels
adjacent to that other border would have cardinality greater than $2l$. 
Thus we can cut all the connections
between leaves of color
$1-g$ and roots of color $g$ (that is, make any atom containing a pair of such elements false).
See Fig.~\ref{f:cylinder}, where this step is illustrated for $g=1$. Let $\str{D}_0^0$ be the structure so-obtained and $\str{D}_0'$ its transitive closure, call it \emph{a 
transection of} $\str{A}_0'$. 
By the discussion above, the inclusion map $\iota:\str{A}_0'\restr\{a,b\}\to\str{D}_0'$  
is a homomorphism 

\smallskip\noindent
\emph{Reduction 2.} Take $g$ from the previous reduction. By our strategy of joining the components,
there exists a generalized type $\gamma$ such that any ${T}_u$-path joining  $a'$ and $b'$ is contained in components of 
the forms $\str{C}^{g,\gamma}_{\cdotp,\cdotp}$ and $\str{C}^{1-g,\cdotp}_{\cdotp,\gamma}$. 
Denote $\str{E}_0^0$ the restriction of $\str{D}_0^0$ to these components  and $\str{E}_0'$ its transitive closure. 
Choose any component $\str{C}^{\fp}$ of the form $\str{C}^{g,\gamma}_{\cdotp,\cdotp}$. Recall that all of them are isomorphic copies of
the pattern component $\str{C}^{\gamma}$. Now, define another auxiliary structure $\str{F}_0^0$ obtained by restricting $\str{E}_0^0$ to the union of $C^{\fp}$ and the domains of the components of the form $\str{C}^{1-g,\cdotp}_{\cdotp,\gamma}$. Let $\str{F}_0'$ be its transitive closure. There is a natural projection $\pi:E_0^0\to\str{F}_0^0$, which maps the elements of the components of the form $\str{C}^{g,\gamma}_{\cdotp,\cdotp}$ into the corresponding elements of component $\str{C}^{\fp}$ (being the identity on the other elements). Observe that $\pi:\str{E}_0^0\to\str{F}_0^0$ is a homomorphism, and as we can apply it to transitive paths, we obtain that also $\pi:\str{E}_0'\to\str{F}_0'$ is a homomorphism. See Fig.~\ref{f:lred2}.

$\str{F}_0^0$ looks like a single component but is twice as high. It can be viewed as a tree $\tau$ defined by taking the subcomponents used to build the components of $\str{F}_0^0$ (Step 1) as the nodes and connecting two of them iff one of them contains a witness for some element of the other (Step 2). Note that this way the
leaves of component $\str{C}^\fp$ are parents of some roots of the remaining components of $\str{F}_0^0$.

Now we are ready to come back to the proof of (b\ref{blthree}). What remains is to show that for every $a'\in A_0'$ we have $\type{\str{A}_0'}{a'}=\type{\str{A}_0}{\fp(a')}$. Recall that 
the only possible reason for not being so is that the 1-type of $a'$ was enlarged by the application of the transitive closure to $\str{A}_0^0$. That is, there exists, for some $1 \le u \le 2l$, a ${T}_u$-path that connects $a'$ with itself in $\str{A}_0^0$. 
Taking $b'=a'$, we use Reduction 1 and 2 (if $a'$ is an element of a component of color $g$, then we choose $\str{C}^{\fp}$ to be the
component of $a'$), and after an application of $\pi$, we have a  ${T}_u$-path connecting $a$ with itself in $\str{F}_0^0$.
Due to the tree shape of $\str{F}_0^0$ either there is such a path in the subcomponent of $a'$ or in the $2$-element substructure joining $a'$ 
with one of its witnesses, denote it $w'$.
It follows that ${T}_uxx\in\type{\str{A}_0}{\fp(a')}$. Indeed, in the former case we get it by the fact that the subcomponents
are closed transitively and by the  1-type assumption in (b\ref{blthree}) for the subcomponent of $a'$. In the latter case we just recall that
the $2$-type for the pair $a', w'$ was copied 
to $\str{A}_0^0$
from the original structure, which was transitively closed.

\begin{figure} 
\begin{center}
\begin{tikzpicture}

\begin{scope}[scale=1.4]

\node at (-4,1.4) {$\str{A}'_0$};

  \draw[red]  (0,-1.25)--(-3.5,-1.25);
	\draw[blue]  (0,1.25)--(-3.5,1.25);
  
	\draw[blue] (-3.5,1.25) arc(90:165:0.5 and 1.25);
	\draw[blue] (0,1.25) arc(90:165:0.5 and 1.25);
	\draw[blue, postaction={decorate, decoration={text along path, raise=3pt, text align={align=center}, text color=blue, text={${\scriptscriptstyle g=1}${} }}}] (0,1.25) arc(90:-15:0.5 and 1.25) node (bck) {}; 
	
	\path (0,1.25) arc(90:165:0.5 and 1.25) node (tmp) {} ;
	\draw[red, postaction={decorate, decoration={text along path, raise=3pt, text align={align=center}, text color=red, text={${\scriptscriptstyle g=0}${} }}}] (tmp.center) arc(165:345:0.5 and 1.25); 

	\path (-3.5,1.25) arc(90:165:0.5 and 1.25) node (tmp) {} ;
  \draw[red] (tmp.center) arc(165:270:0.5 and 1.25);

	\fill [opacity=0.4, color = blue] (0,1.25) -- (-3.5,1.25) arc
  (90:165:0.5 and 1.25) -- ++(3.5,0)  arc(165:90:0.5 and 1.25) ;
	\fill [opacity=0.2, color = red] (0,-1.25) -- (-3.5,-1.25) arc
  (270:165:0.5 and 1.25) -- ++(3.5,0)  arc(165:270:0.5 and 1.25) ;
	
\draw[dotted] (bck.center) -- +(-1,0);	
\node at (0.7,-0.3) (scissors) {\Leftscissors} ;


\foreach \t in {90,105,...,270}
{
\path (0,1.25) arc(90:\t:0.5 and 1.25) node (a) {} ;
\draw[very thin] (a.center) -- +(-3.5,0);
};

\node at (-2.7,-0.5) {$a$};
\node at (-1.45,0.75) {$b$};

\draw (-2.55,-0.5) circle (0.02);
\draw (-2.45,-0.15) circle (0.02);
\draw (-2.2,0.15) circle (0.02);
\draw (-1.25,0.45) circle (0.02);
\draw (-1.3,0.75) circle (0.02);

\draw[->] (-2.55,-0.5) -- (-2.45,-0.15);
\draw[<->] (-2.45,-0.15) -- (-2.2,0.15);
\draw[->] (-2.2,0.15) -- (-1.25,0.45);
\draw[->] (-1.25,0.45) -- (-1.3,0.75);

%

\draw[dashed] (-3.3,0.4) to[bend left] (-2,1.25);
\draw[dashed] (-3.3,0.4) to (-1.1,0.4);
\draw[dashed] (-1,0.4) to[bend right=10] (-1,1.25);

\draw[dotted] (-3.3,0.4) to[bend right=10] (-3.4,0.1);
\draw[dotted] (-3.4,0.1) to (-1.1,0.1);
\draw[dotted] (-1,0.4) to[bend left=10] (-1.1,0.1);


\draw[dashed] (-2.2,0.3) to[bend right] (-3.3,-1.25);
\draw[dashed] (-2.2,0.3) to[bend left=10] (-0.6,-1.25);

 \end{scope}
\begin{scope}[scale=1.4, shift={(1.5,-1.25)}]

\node at (-0.4, 2.7) {$\str{D}'_0$};

\draw[red]  (0,-0.3)--(3.5,-0.3)--(3.5,1.25)  (0,1.25) -- (0,-0.3);
\draw[blue] (0,1.25) -- (0,2.8) -- (3.5, 2.8) -- (3.5,1.25); 

\fill[opacity=0.2, color = red] (0,-0.3)--(3.5,-0.3)--(3.5,1.25) -- (0,1.25) -- (0,-0.3);
\fill[opacity=0.4, color = blue] (0,1.25) -- (0,2.8) -- (3.5, 2.8) -- (3.5,1.25);

\foreach \t in {0.35, 0.65, ..., 2.15}
{
\draw[very thin] (0,\t) -- (3.5,\t);
}

\draw[very thin] (0,0) -- (3.5,0);
\draw[very thin] (0,2.5) -- (3.5,2.5);

\node[scale=0.8] at (3.35,1.1) {$L_{1}$};
\node[scale=0.8] at (3.35,0.8) {$L_{2}$};
\node[scale=0.8] at (3.35,0.5) {$L_{3}$};
\node[scale=0.8] at (3.35,-0.15) {$L_{2l}$};

\node[scale=0.8] at (3.35,1.4) {$L_{2l}$};
\node[scale=0.8] at (3.35,2.65) {$L_{1}$};


\node (tk) at (1.75,2.4) {$\vdots$};
\node (tk) at (1.75,0.25) {$\vdots$};


\node at (1.45,0.5) {$a$};
\node at (2.05,1.7) {$b$};

\draw (1.6,0.5) circle (0.02);
\draw (1.65,0.8) circle (0.02);
\draw (1.75,1.1) circle (0.02);
\draw (2.3,1.4) circle (0.02);
\draw (2.2,1.7) circle (0.02);

\draw[->] (1.6,0.5) -- (1.65,0.8);
\draw[<->] (1.65,0.8) -- (1.75,1.1); 
\draw[->] (1.75,1.1) -- (2.3,1.4);
\draw[->] (2.3,1.4) -- (2.2,1.7);

\draw[dashed] (0.7,1.3) -- (1.75,2.7) -- (2.8,1.3) -- (0.7,1.3);
\draw[dotted] (0.7,1.3) -- (0.5, 1.0) -- (3.0, 1.0) -- (2.8,1.3);
\draw[dashed] (0.7,-0.2) -- (1.75,1.2) -- (2.8,-0.2) -- (0.7,-0.2);

\end{scope}
\end{tikzpicture}
\caption{Reduction 1: From cylindrical structure $\str{A}'_0$ to its transection  $\str{D}'_0$.}
\label{f:cylinder}
\end{center} 
\end{figure}
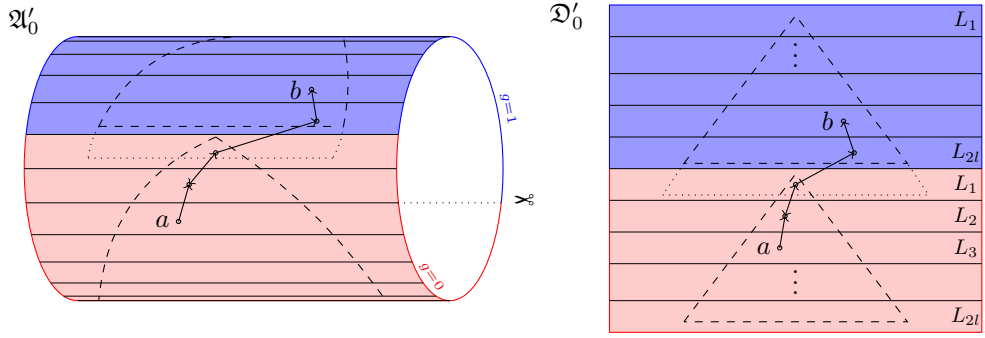

\begin{figure}  

\begin{tikzpicture}[scale=0.9]

\begin{scope}[scale=1.4]

\draw[red]  (-0.1,-0.1)--(4.8,-0.1)--(4.8,1.4)  (-0.1,1.4) -- (-0.1,-0.1);
\draw[blue] (-0.1,1.4) -- (0.75,2.9) -- (4.8,2.9) -- (4.8,1.4);  

\fill[opacity=0.2, color = red] (-0.1,-0.1)--(4.8,-0.1)--(4.8,1.4)--(-0.1,1.4) -- (-0.1,-0.1);
\fill[opacity=0.4, color = blue]  (-0.1,1.4) -- (0.75,2.9) -- (4.8,2.9) -- (4.8,1.4);

\foreach \x/\y in {0/0, 1.6/0, 3.2/0, 0.8/1.5, 2.4/1.5}{
\draw (\x,\y) -- +(1.5,0) -- +(0.75,1.3) -- (\x, \y);

\draw[dotted] (4.8,2.7) -- (4.75,2.8) -- (4.0,1.5) -- (4.8,1.5);
\draw[dotted] (0,1.5) -- (0.7,1.5) -- (0.33,2.15);
}

\foreach \x/\y in {0.65/0.8, 0.75/1.1, 
                   1/1.6, 1.2/1.9, 1.4/1.9, 1.6/1.6, 1.8/1.9, 2.0/1.6,
									 2.35/1.1,
									 3.2/1.6, 3.4/1.9, 3.6/1.6, 3.0/1.9, 2.8/1.9, 2.6/1.6,
									 3.95/1.1, 4.05/0.8, 3.95/0.5}{
\draw (\x,\y) circle (0.02); %
}

\draw[->] (0.65,0.8) -- (0.75,1.1);
\draw[->] (0.75,1.1) -- (1,1.6);
\draw[->] (1,1.6) -- (1.2,1.9);
\draw[->] (1.2,1.9) -- (1.4,1.9);
\draw[->] (1.4,1.9) -- (1.6,1.6);
\draw[<->] (1.6,1.6) -- (2.35,1.1);
\draw[<->] (2.35,1.1) -- (3.2,1.6);
\draw[->] (3.2,1.6) -- (3.4,1.9);
\draw[->] (3.4,1.9) -- (3.6,1.6);
\draw[->] (3.6,1.6) -- (3.95,1.1);
\draw[->] (3.95,1.1) -- (4.05,0.8);
\draw[->] (4.05,0.8) -- (3.95,0.5);

\draw[->, dotted] (1.6,1.6) -- (1.8,1.9);
\draw[->, dotted] (1.8,1.9) -- (2.0,1.6);
\draw[->, dotted] (2.6,1.6) -- (2.8,1.9);
\draw[->, dotted] (2.8,1.9) -- (3.0,1.9);
\draw[->, dotted] (3.0,1.9) -- (3.2,1.6);


\node at (2.35,2) {$\cong$};


\node at (0.1,2.7) {$\str{D}'_0 {\rightarrow} \str{E}'_0$};

\node at (0.8,0.8) {$a$};
\node at (4.1,0.5) {$b$};

\end{scope}

\begin{scope}[scale=1.4, shift={(5.5,0)}]

\draw[red]  (-0.1,-0.1)--(4.8,-0.1)--(4.8,1.4)  (-0.1,1.4) -- (-0.1,-0.1);
\draw[blue] (-0.1,1.4) -- (0.75,2.9) -- (4.8,2.9) -- (4.8,1.4);  

\fill[opacity=0.2, color = red] (-0.1,-0.1)--(4.8,-0.1)--(4.8,1.4)--(-0.1,1.4) -- (-0.1,-0.1);
\fill[opacity=0.4, color = blue]  (-0.1,1.4) -- (0.75,2.9) -- (4.8,2.9) -- (4.8,1.4);

\foreach \x/\y in {0/0, 1.6/0, 3.2/0, 0.8/1.5}{
\draw (\x,\y) -- +(1.5,0) -- +(0.75,1.3) -- (\x, \y);
}

\foreach \x/\y in {2.4/1.5}{
\draw[dotted] (\x,\y) -- +(1.5,0) -- +(0.75,1.3) -- (\x, \y);
}

\foreach \x/\y in {0.65/0.8, 0.75/1.1, 
                   1/1.6, 1.2/1.9, 1.4/1.9, 1.6/1.6, 1.8/1.9, 2.0/1.6,
									 2.35/1.1,
									 3.2/1.6, 3.4/1.9, 3.6/1.6, 3.0/1.9, 2.8/1.9, 2.6/1.6,
									 3.95/1.1, 4.05/0.8, 3.95/0.5}{
\draw (\x,\y) circle (0.02); %
}

\draw[->] (0.65,0.8) -- (0.75,1.1);
\draw[->] (0.75,1.1) -- (1,1.6);
\draw[->] (1,1.6) -- (1.2,1.9);
\draw[->] (1.2,1.9) -- (1.4,1.9);
\draw[->] (1.4,1.9) -- (1.6,1.6);
\draw[<->] (1.6,1.6) -- (2.35,1.1);
\draw[<->, dotted] (2.35,1.1) -- (3.2,1.6);
\draw[->, dotted] (3.2,1.6) -- (3.4,1.9);
\draw[->, dotted] (3.4,1.9) -- (3.6,1.6);
\draw[->, dotted] (3.6,1.6) -- (3.95,1.1);
\draw[->] (3.95,1.1) -- (4.05,0.8);
\draw[->] (4.05,0.8) -- (3.95,0.5);

\draw[->] (1.6,1.6) -- (1.8,1.9);
\draw[->] (1.8,1.9) -- (2.0,1.6);
\draw[->, dotted] (2.6,1.6) -- (2.8,1.9);
\draw[->, dotted] (2.8,1.9) -- (3.0,1.9);
\draw[->, dotted] (3.0,1.9) -- (3.2,1.6);

\draw[->] (2.0,1.6) -- (3.95,1.1);


\node at (2.35,2) {$\cong$};

\draw[bend right=10, ->,  dashed] (2.8,2.4) to (1.9,2.4);
\node at (2.35,2.6) {$\pi$};


\node at (1.55,2.5) {$\str{C}^{\fp}$};
\node at (0.1,2.7) {$\str{E}'_0{\rightarrow}\str{F}'_0$};

\node at (0.8,0.8) {$a$};
\node at (4.1,0.5) {$b$};

\end{scope}
\end{tikzpicture}
\caption{Reduction 2: from transection $\str{D}'_0$ to $\str{E}'_0$ and from $\str{E}'_0$  to a tree of components $\str{F}'_0$. }
\label{f:lred2}
\end{figure}
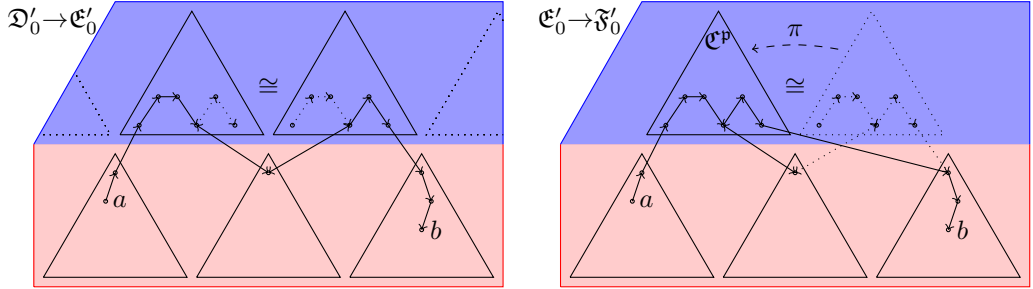


(b4) 
Suppose that $a',b'\in A_0'$ are not connected by any relation except for those from $\mathcal{E}_{tot}$.
Then it suffices to take $\fh:=\fp\restr\{a',b'\}$, use (b3) for the 1-type preservation and recall that $\fp(A_0')\subseteq[a_0]_{\cE_{tot}}$. From this the
other required condition also follows.
From now we assume that $a',b'$ are connected by some binary relation not belonging to $\mathcal{E}_{tot}$.

Assume that $\str{A}_0'\models\mathcal{T}a'b'$, that is for any ${T}_u\in\mathcal{T}$ there exists a ${T}_u$-path 
in $\str{A}_0^0$ that joins $a'$ and $b'$. Observe that such paths may differ for different ${T}_u$ and, firstly, we want to find a common path for all the ${T}_u\in\mathcal{T}$.
As before, apply Reductions 1 and 2 to get that $\pi(a'),\pi(b')$ are connected in $\str{F}_0^0$ by ${T}_u$-paths for all ${T}_u\in\mathcal{T}$.
 If we prove the thesis for $\pi(a'),\pi(b')$ (in particular, looking for an appropriate partial 
homomorphism from $\str{F}_0$ into $\str{A}$), then it holds also for $a',b'$,
as $\pi$ is a homomorphism from $\str{A}_0$ to $\str{F}_0$,  $\fp\circ\pi=\fp$ and $\pi,\iota$ preserve the 1-types.

Now, using  standard tree reasoning and the fact that the subcomponents are transitively closed, one may observe that the shortest path (in $\str{F}_0^0$) connecting $\pi(a')$ with $\pi(b')$ is a ${T}_u$-path for any ${T}_u\in\mathcal{T}$. Denote this path $\pi(a')=a_1,\ldots,a_N=\pi(b')$. 

Take a 1-type $\alpha\not\in\ldec{\str{A}}{\fp(b')}$. We prove by induction on $j=N,N-1,\ldots,1$ that $\alpha\not\in\ldec{\str{A}}{\fp(a_j)}$. In particular we prove the second part of condition (b\ref{blfour}).
The thesis for $j=N$ clearly holds, since we explicitly assumed it.
Now consider $a_j$ and $a_{j+1}$. Then either they appear together in some subcomponent or a copy of a witness structure ($a_j$ is a witness for $a_{j+1}$ or vice versa). In the first case, use the condition (b4) for such subcomponent, which holds by the inductive assumption of Lemma \ref{l:lfin}. In the second, just use the fact that (ld1) holds for $\fp(a_j)$ and $\fp(a_{j+1})$.

Finally, we build the required homomorphism $\fh$.
If $\pi(a')=\pi(b')$, then it suffices to put $\fh(\pi(a'))=\fp(\pi(a'))$ and recall (b3).
If $\pi(a')\neq\pi(b')$ and they are connected by some not-transitive atom, then $N=2$ (that is, the path is $\pi(a')=a_1,a_2=\pi(b')$). Then if $\{\pi(a'),\pi(b')\}$ is a copy of a pair consisting of an element and its witness, then put $\fh:=\fp\restr\{\pi(a'),\pi(b')\}$; and if they appear in the same subcomponent, use a homomorphism guaranteed by condition (b\ref{blfour}) of the inductive assumption.
So now assume that $\pi(a')\neq\pi(b')$ and $\mathcal{T}=\{{T}_u:\str{F}_0'\models{T}_u\pi(a')\pi(b')\}$ is non-empty. First, observe that $\ldec{\str{A}}{\fp(a_{N-1})}\not\ni\alpha$, where $\alpha=\type{\str{\str{F}_0'}}{a_N}$. Indeed, if $a_{N-1},a_N$ are an element and its witness (or vice versa), then it suffices to use the fact that $\fp$ preserves the 1-types (that is, a part of (b3)). If $a_{N-1},a_N$ belong to the same subcomponent, then by the condition (b4) of the inductive assumption of
Lemma \ref{l:lfin} there exists in a homomorphism $\fh$, which gives us that $\ldec{\str{A}}{\fp(a_{N-1})}\not\ni\type{\str{A}}{\fh(a_N)}(=\alpha)$.
Thus, applying the induction just like one paragraph above, we get that $\ldec{\str{A}}{\fp(\pi(a'))}\not\ni\alpha$. By the maximality of such declarations, there exists $b\in\str{A}$ such that $\type{\str{A}}{b}=\alpha$ and $\str{A}\models\mathcal{T}\fp(\pi(a'))b$. Put $\fh(\pi(a')):=\fp(\pi(a')), \fh(\pi(b')):=b$.

\subsection{Size of models constructed in the proof of Lemma \ref{l:lfin}}\label{s:size}

\begin{claim}
The construction in the proof of Lemma \ref{l:lfin} produces models of size at most triply exponential in the size
of the given formula.
\end{claim}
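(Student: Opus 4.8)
The plan is to run a recursion on the induction level $l = |\mathcal{E}_0|/2$ of Lemma~\ref{l:lfin}, tracking a bound $N_l$ on the size of the structures $\str{A}_0'$ it returns at level $l$, and to show that passing from level $l-1$ to level $l$ raises the bound to a doubly-exponential power (with a doubly-exponential multiplicative overhead only). Since the final model is obtained at level $l = k \le |\phi|$, and the recursion depth $k$ is merely linear in $|\phi|$, the resulting bound will still be only triply exponential.

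First I would bound the size of a single extended pattern component $\str{G}^\gamma$ at level $l$ in terms of $N_{l-1}$. As already observed in the correctness proof, up to the (postponed) transitive closures a component is a tree of subcomponents: each subcomponent has at most $N_{l-1}$ elements by the inductive assumption; the branching of this tree is at most $m N_{l-1}$, since every element of a subcomponent spawns at most $m$ witnesses and each witness is the root of exactly one fresh subcomponent in the next sublayer (\emph{Steps~1 and 2}); and its depth is at most $2l(\hat{M}_\phi{+}1)+1$, since a child subcomponent lies either one sublayer or one layer below its parent, there are $\hat{M}_\phi{+}1$ sublayers per layer and $2l{+}1$ layers counting the interface. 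Hence the number of subcomponents is at most $(m N_{l-1})^{O(l\hat{M}_\phi)}$ and therefore $|\str{G}^\gamma| \le N_{l-1}\cdot(m N_{l-1})^{O(l\hat{M}_\phi)} = (m N_{l-1})^{O(l\hat{M}_\phi)}$.

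Next I would count the components glued together in the joining step. For each pair $\gamma,\gamma' \in \GGG[A_0]$, each $g\in\{0,1\}$ and each $1\le i\le max$ we use one copy of (an extended) $\str{G}^\gamma$, plus the one extra copy $\str{G}^{\gamma_{a_0},0}_{\bot,\bot}$; and $max \le \max_\gamma|\str{G}^\gamma|$. The number of generalized types $|\GGG[A_0]|$ is at most doubly exponential in $|\phi|$: a generalized type is a pair consisting of a $1$-type, of which there are $2^{O(|\sigma_\phi|)}$, and a light declaration, i.e.\ a function $\cP(\{T_1,\dots,T_{2k}\})\to\cP(\AAA)$, of which there are at most $(2^{|\AAA|})^{2^{2k}} = 2^{2^{O(|\phi|)}}$ (using $|\AAA| = 2^{O(|\sigma_\phi|)}$ and $2k \le |\phi|$). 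Since $\str{A}_0'$ has the same universe as $\str{A}_0^0$, which is covered by the chosen component copies, we get $N_l \le (2\,max\,|\GGG[A_0]|^2 + 1)\cdot\max_\gamma|\str{G}^\gamma| \le 3\,|\GGG[A_0]|^2\cdot(\max_\gamma|\str{G}^\gamma|)^2$. Combining this with the component bound above, and writing $D$ for a common doubly-exponential bound on $\hat{M}_\phi$ (which is doubly exponential by Lemma~\ref{l:lshortpaths2}) and on $|\GGG[A_0]|$, and recalling $m,k,l \le |\phi| \le D$, we obtain a recursion of the shape
\[
N_0 = 1, \qquad N_l \le (D\, N_{l-1})^{E} \quad\text{with } E = D^{O(1)} \text{ doubly exponential in } |\phi|.
\]

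Finally I would unwind the recursion. An easy induction gives $N_l \le D^{l E^l}$ (if $N_{l-1}\le D^{(l-1)E^{l-1}}$ then $N_l \le (D\cdot D^{(l-1)E^{l-1}})^E = D^{E + (l-1)E^l} \le D^{l E^l}$), so $N_k \le D^{k E^k}$. Since $E$ is doubly exponential and $k\le|\phi|$, $E^k$ is still doubly exponential, hence so is $k E^k$; and $\log D$ is (singly) exponential, so $\log N_k = (\log D)\cdot k E^k$ is doubly exponential, i.e.\ $N_k$ is triply exponential in $|\phi|$. The model of $\phi$ is exactly the structure produced by Lemma~\ref{l:lfin} applied with $\mathcal{E}_0 = \sigma_{\cDist}$, i.e.\ at level $l = k$, so it has triply exponential size, as claimed. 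The only delicate point is the bookkeeping — verifying that the four nested layers of the construction (subcomponents inside sublayers inside layers inside components, all under an induction of depth $k$) compound exactly as above — but this rests on the single observation that the recursion depth $k$ is only linear in $|\phi|$, so that a $k$-fold product of doubly-exponential quantities remains doubly exponential.
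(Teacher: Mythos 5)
Your proof is correct and follows essentially the same route as the paper's: bound a single pattern component as a tree of subcomponents (size $N_{l-1}$, branching $\le mN_{l-1}$, depth $O(l\hat M_\phi)$), multiply by the number of component copies used in the joining step (doubly exponential via $|\GGG[A_0]|$ and $max$), and then unwind a recurrence of depth $O(|\phi|)$ in which each step raises the previous bound to a doubly-exponential power. The only cosmetic difference is that the paper tracks the explicit quantity $S_{2l}$ and uses $S_{2k+2}$ (allowing for the auxiliary identity relations added in the base case) where you use asymptotic notation and stop at $N_k$; this does not affect the asymptotics.
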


\begin{proof}
The following routine estimation shows that $|A_0'|$ is triply exponential in $n=|\phi|$, regardless of the choice of the initial 
tree-like model $\str{A}$.
We 
calculate a bound $ S_{2l}$
on the size of the structure obtained in the proof of Lemma \ref{l:lfin} for $|\mathcal{E}_0|=2l$. We are interested in $ S_{2k+2}$, which is the desired bound on the size of $\str{A}_0'$ (we use $ S_{2k+2}$ here, rather than $S_{2k}$, because we may potentially
introduce the auxiliary identity relations in the base step of induction).
Recall that any pattern component may be viewed as a (rooted)
tree of subcomponents consisting of at most $2l(\hat{M}_{\phi}+1)$ sublayers,
which is automatically a bound on the height of this tree.
The root (the only vertex of depth $1$) corresponds to a single subcomponent, that
contains at most $ S_{2l-2}$ elements.
Each of these elements may then require at most $n$ witnesses, each of which is added to the initial part either of the next sublayer of the first layer or of the first sublayer of the next layer 
(in either case: to a subcomponent of depth $2$). 
Thus we get that we put at most $S_{2l-1}n$ elements, 
each of them then gives rise to a subcomponent (being itself its \origin), so we
have at most $S_{2l-2}^2n$ elements 
corresponding to the vertices of the tree of depth $2$.
Iterating, we have at most $ S_{2l-2}^in^{i-1}$ elements 
corresponding to the vertices of the tree of depth $i$,
which leads to an estimate $( S_{2l-2}n)^{2l(\hat{M}_{\phi}+1)+1}$ both on the number of elements
in a pattern component, and the number of interface elements in its extended version.
In the joining phase we thus use at most $2 \cdot |\GGG[A]|^2 \cdot ( S_{2l-2}n)^{2l(\hat{M}_{\phi}+1)+1}$ components.
Finally,
estimating $l$ in the exponents by $2n$, 
we get a bound 
$ S_{2l}=2|\GGG[A]|^2( S_{2l-2}n)^{8n(\hat{M}_{\phi}+1)+2}$.
Solving this recurrence relation, and recalling that $\hat{M}_{\phi}$ and $|\GGG[A]|$ are doubly exponential in $|\phi|$, we obtain a triply exponential bound on $ S_{2k+2}$. 
\end{proof}

\subsection{Proof of Theorem \ref{t:lalgo}}
\begin{proof}
The lower bound  can be 
obtained for the two-variable \UNFOtTR{} in the presence of one transitive relation. The proof is
a straightforward adaptation of the lower bound proof for \GFt{} with transitive relations in guards \cite{Kie06}.

For the upper bound, we design an \AExpSpace{} algorithm. Given $\phi$ in \UNFOtTR{} this algorithm converts it
to normal form $\phi'$ and then looks for a model of $\phi'$ of form (f3), that is, a light tree-like model of with doubly exponentially bounded transitive paths (as in Lemma \ref{l:lshortpaths2}). As we proved, $\phi'$ has such a model iff $\phi$ has a finite model. We first calculate the bound $\hat{M}$ on the rank
of transitive paths, and let  $M:=(\hat{M}+1)^{2k} \cdot |\GGG|$, 
where $\GGG$ is the set of all the generalized types over the signature of 
$\phi'$. Both values are bounded doubly exponentially in $|\phi|$.

In an alternating fashion we construct a single downward path of a model.
We start from the root, by guessing its generalized type, and setting the values of variables $r_1, \ldots, r_{2k}$ to $0$. The intention is
that $r_u$ is the maximal rank of a downward ${T}_u$-path ending at the current  node.
Then, having constructed an element $a$,  we universally choose a $\forall\exists$-conjunct of $\phi'$, and, if $a$ is not a witness for itself, add a witness
$a'$ for $a$ and this conjunct. 
We guess a $2$-type joining $a$ and $a'$, guess the generalized type of $a'$ and update the values
of the $r_u$, in accordance with the guessed $2$-type (each value may stay unchanged, increase by $1$, or reset to $0$).
We check that $a'$ is indeed a witness for $a$ and the considered conjunct and verify the LCC conditions between the generalized types of $a$ and $a'$, rejecting if any of these fails.
We reject also if any of the $r_u$ exceeds $\hat{M}$. We keep an additional counter, measuring the depth of the current node
in the path constructed so far, and accept if this depth exceeds $M$.

It is clear that the described algorithm can be implemented in \AExpSpace{}: we only need to store information about a pair of nodes, values of the $r_u$ plus a counter. These can be written using exponentially many bits. 
 
It is also not difficult to see that the above-described algorithm accepts its input $\phi$ iff $\phi$ has a finite model. If $\phi$ has a finite
model then its normal form $\phi'$ also has such a model. Let $\str{A}\models \phi'$ be a light tree-like model promised in Lemma \ref{l:lshortpaths2}. The algorithm can then accept by making all the guesses in accordance with $\str{A}$. In the opposite direction,
if the algorithm has an 
accepting run, then from this run we can naturally infer a tree-like structure $\str{A}^*$ consisting of at most $M{+}1$ levels.
Note that on each path of length $M+1$ from the root of to a leaf  in $\str{A}^*$ there is a pair of nodes for which the guessed generalized
types, and the calculated values of all the $r_u$ are identical. 
Cut each branch at the first position on which the above parameters reappear and make a link from this point to
the their first occurrence  on the considered branch. Naturally unravel so-obtained structure into an infinite tree-like structure $\str{A}$.
It should be clear that the guessed generalized types still respect LCCs in $\str{A}$ and that the values of the $r_u$ copied to $\str{A}$ from $\str{A}^*$
remain correct. Thus $\str{A}$ is indeed a tree-like model of $\phi'$ with appropriately bounded transitive paths.
\end{proof}

\section{The general case} \label{s:general}

In this section we consider \UNFOTR{} with arbitrarily many variables. The material here is self-contained, we will, however,
make some references to Section \ref{s:2var}, in which we dealt with the simplified two-variable case. In such references we
will emphasise the differences and similarities between both variants.

\subsection{Tree pruning} \label{s:treelike} 

\subsubsection{Tree-like unravellings}
We first naturally generalize the notion of light tree-like structures used in the two-variable case.
Recall that any set consisting of a node $b$ of a tree and all its children is called a \emph{family},
or the \emph{downward family of} $b$.
We say that $\str{A}$ is a \emph{tree-like structure} 
if its nodes can be arranged into a rooted tree  in such a way that
if $\str{A} \models B\bar{a}$ for some non-transitive relation symbol $B$, 
then 
$\bar{a}$ is contained in some family, and
if $\str{A} \models T_uaa'$ for some transitive $T_u$, then  either $a=a'$ or there is a sequence of distinct nodes 
$a=a_0, a_1, \ldots, a_k = a'$ such that $a_i$ and $a_{i+1}$ belong to the same family and $\str{A} \models T_ua_ia_{i+1}$.
A slight difference, compared to light tree-like structures is that now we admit direct (omitting the parent) transitive connections between some children of a
node.  
For a tree-like structure $\str{A}$
and $a \in A$ we denote by $A_a$ the set of all nodes
in the subtree rooted at $a$ and by $\str{A}_a$ the corresponding substructure.

Let us recall  some furhter definitions, related to tree-like structures, used in the two-variable case.
Let $\str{A}$ be a tree-like structure. A sequence of nodes $a_1, \ldots, a_N \in A$ is a \emph{downward path} 
in $\str{A}$ if for each $i$ 
$a_{i+1}$ is a child of $a_i$.
A \emph{downward}-${T}_u$-\emph{path}
is a downward path such that for each $i$ we have $\str{A}\models{T}_ua_ia_{i+1}$. 
The ${T}_u$-\emph{rank} of a downward-${T}_u$-path $\vectorize{a}$,
$\rank{\str{A}}{u}{\vectorize{a}}$,
is the cardinality of the set $\{i:\str{A}\models\neg{T}_ua_{i+1}a_{i}\}$. The ${T}_u$-\emph{rank} of an element $a\in A$ is defined as $\rank{\str{A}}{u}{a}=\sup\{\rank{\str{A}}{u}{\vectorize{a}}:\vectorize{a}=a,a_2,\ldots,a_N;\vectorize{a} \text{ is a downward-}{T}_u\text{-path}\}$.
For an integer $M$, we say that $\str{A}$ has downward-${T}_u$-paths \emph{bounded by} $M$  when for all $a\in A$ we have $\rank{\str{A}}{u}{{a}}\leq M$, 
and that $\str{A}$ has \emph{transitive paths bounded by} $M$ if it has downward-${T}_u$-paths 
bounded by $M$ for all $u$. Note that a downward-${T}_u$-path bounded by $M$ may have more than $M$ nodes, 
 as the symmetric ${T}_u$-connections do not increase  the rank.

Given an arbitrary model $\str{A}$ of a normal form \UNFOTR{} formula $\phi$ we can simply construct its tree-like model
of degree bounded by $|\phi|$. Essentially, the construction works as in the two-variable case. We define a $\phi$-\emph{tree-like unraveling} $\str{A}'$ of $\str{A}$, together with an associated function $\fh:A' \rightarrow A$  in the following way. 
$\str{A}'$ is divided into levels $L_0, L_1, \ldots$. Choose an
arbitrary element $a \in A$ and add to level $L_0$ of $A'$  an element $a'$ such that $\type{\str{A}'}{a'}=\type{\str{A}}{a}$; set $\fh(a')=a$. 
The element $a'$ will be the only element of $L_0$ and will become the root of $\str{A}'$.
Having defined $L_i$ repeat the following for every $a' \in L_i$. Choose in $\str{A}$ a $\phi$-witness structure for $\fh(a')$. Assume it consists of $\; \fh(a'), a_1, \ldots, a_s$. Add a fresh copy $a_j'$  of every $a_j$ to $L_{i+1}$, make 
$\str{A}' \restr \{a', a_1', \ldots, a_s' \}$  isomorphic to $\str{A} \restr \{\fh(a'), a_1, \ldots, a_s\}$ and set $\fh(a_j')=a_j$. Complete the definition
of $\str{A}'$ transitively closing all relations from $\sigma_\cDist$.

As in the two-variable case we easily get the following fact.

\begin{lemma}[(f1) $\leadsto$ (f2)] \label{l:treelike}
Let $\str{A}$ be a model of a normal form \UNFOTR{} formula $\phi$. Let $\str{A}'$ be a $\phi$-tree-like unraveling of $\str{A}$.
Then $\str{A}' \models \phi$ and $\str{A}'$ is a tree-like structure of degree bounded by $|\phi|$ and transitive paths
bounded by $|A|$.
\end{lemma}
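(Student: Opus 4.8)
The plan is to verify that the unravelling $\str{A}'$ satisfies the hypotheses of Lemma~\ref{l:homomorphisms}, with the witnessing homomorphism taken to be the function $\fh$ produced alongside $\str{A}'$, and then to read off the structural properties (tree-likeness, degree bound, bound on transitive paths) directly from the construction. Everything proceeds exactly as in the two-variable case (Lemma~\ref{l:ltreelike}); the only point requiring a little care is, as there, the bookkeeping around the final transitive-closure step, together with the new possibility of direct transitive edges between siblings in a family.

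First I would fix the tree order on $\str{A}'$ given by the levels $L_0, L_1, \ldots$ and the parent/child relation used in the construction, so that the downward family of a node $a'$ is precisely the isomorphic copy of the chosen $\phi$-witness structure for $\fh(a')$ in $\str{A}$. Tree-likeness of $\str{A}'$ then follows because, before the transitive closure, every non-unary atom of $\str{A}'$ lies inside a single such copied family, and the closure adds only atoms over $\sigma_\cDist$; moreover each new atom $T_u a a'$ is, by definition of the closure, witnessed by a sequence of (without loss of generality distinct) nodes $a = a_0, \ldots, a_k = a'$ with consecutive members in a common family and $\str{A}' \models T_u a_i a_{i+1}$, which is exactly the required condition. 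The degree bound is immediate: the downward family of $a'$ consists of $\fh(a')$ together with at most one witness tuple $\bar{y}_i$ per conjunct $\phi_i$, so $a'$ has at most $\sum_{i=1}^m |\bar{y}_i| \le |\phi|$ children.

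For $\str{A}' \models \phi$ I would check conditions (a1) and (a2) of Lemma~\ref{l:homomorphisms}. Condition (a1) holds because, by construction, the downward family of any $a' \in A'$ is an isomorphic copy of a $\phi$-witness structure for $\fh(a')$; one also observes that the final transitive closure cannot destroy these witness structures, since each $\phi_i$ is a quantifier-free \UNFO{} formula whose only negations stand in front of unary subformulas, so its truth is monotone with respect to adding $T_u$-atoms. For (a2), $\fh$ preserves $1$-types by construction, and for any $t$-tuple it is a homomorphism $\str{A}' \restr \{a_1', \ldots, a_t'\} \to \str{A}$: a non-transitive atom of the restriction lies inside one family, which $\fh$ maps isomorphically into $\str{A}$; a transitive atom $T_u a' b'$ is, by tree-likeness, a composition of $T_u$-edges living inside families, each sent by $\fh$ to a $T_u$-edge of $\str{A}$, so $T_u \fh(a')\fh(b')$ follows from transitivity of $T_u$ in $\str{A}$. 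Lemma~\ref{l:homomorphisms} then gives $\str{A}' \models \phi$.

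Finally, for the bound $|A|$ on transitive paths I would reuse verbatim the pigeonhole argument from the two-variable case, which is unaffected by sibling--sibling transitive edges because a downward-$T_u$-path uses only parent--child edges. Suppose some downward-$T_u$-path $\vectorize{a} = (a_i)_{i=0}^N$ in $\str{A}'$ had $\rank{\str{A}'}{u}{\vectorize{a}} > |A|$. Then there are indices $i_0 < \cdots < i_{|A|}$ with $\str{A}' \models T_u a_{i_j} a_{i_j+1} \wedge \neg T_u a_{i_j+1} a_{i_j}$, and since each such pair is a parent--child pair inside a copied family, $\fh$ transports this to $\str{A} \models T_u \fh(a_{i_j}) \fh(a_{i_j+1}) \wedge \neg T_u \fh(a_{i_j+1}) \fh(a_{i_j})$. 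By the pigeonhole principle $\fh(a_{i_x}) = \fh(a_{i_{x'}})$ for some $x < x'$; composing the $T_u$-edges along the segment of $\vectorize{a}$ from $a_{i_x+1}$ to $a_{i_{x'}}$ (all mapped into $\str{A}$ by the family argument) and using transitivity of $T_u$ in $\str{A}$ together with $\fh(a_{i_{x'}}) = \fh(a_{i_x})$ yields $\str{A} \models T_u \fh(a_{i_x+1}) \fh(a_{i_x})$, contradicting the negative conjunct at index $i_x$. I do not expect a genuine obstacle here; the only steps deserving care are the monotonicity observation underlying (a1) and the fact that $\fh$ still respects $T_u$-atoms created by the transitive closure, both handled above.
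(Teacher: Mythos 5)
Your proposal follows the paper's approach: you delegate $\str{A}' \models \phi$ to Lemma~\ref{l:homomorphisms} with $\fh$ as the partial homomorphism, read tree-likeness and the degree bound directly off the construction, and use the pigeonhole argument (pushing a long one-way downward $T_u$-path through $\fh$ into the finite $\str{A}$ and contradicting transitivity) to bound ranks by $|A|$. The last step is identical to the paper's, and it is indeed robust to sibling--sibling transitive edges for the reason you state.

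There is, however, a genuine gap in your verification of condition (a1), and it also touches your (a2) remark that ``$\fh$ preserves $1$-types by construction.'' You justify that the final transitive-closure step cannot destroy the copied witness structures by claiming that each $\phi_i$, having negations only in front of unary subformulas, ``is monotone with respect to adding $T_u$-atoms.'' This is false as stated: a quantifier-free \UNFO{} formula may contain negated unary subformulas of the form $\neg T_u xx$, and adding $T_u$-atoms can make $T_u a'a'$ true, falsifying such a subformula and changing the $1$-type of $a'$. So monotonicity does not save (a1), and the $1$-type preservation needed for (a2) is not automatic ``by construction'' once the closure has been applied. What is actually true --- and what closes both gaps at once --- is that the transitive closure does not add \emph{any} $T_u$-edge between two elements of the same family. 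Indeed, if $a',b'$ lie in a common family and $T_u a'b'$ arises from the closure, then by tree-likeness it is witnessed by a $T_u$-walk through families in the pre-closure structure; each step of that walk is mapped by $\fh$ to a $T_u$-edge of $\str{A}$, so transitivity of $T_u$ in $\str{A}$ gives $T_u \fh(a')\fh(b')$, and since $\fh$ restricted to the family is an isomorphism onto a (transitively closed) substructure of $\str{A}$, the edge $T_u a'b'$ was already present before closure. Hence the structure induced on every downward family is literally unchanged by the closure, the copied $\phi$-witness structures survive intact, and $1$-types are preserved by $\fh$ post-closure as well. Everything else in your argument is sound and matches the paper's (very terse) proof.
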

\begin{proof}
It is readily verified that $\str{A}'$ meets the properties required by Lemma \ref{l:homomorphisms}. In particular $\fh$ is the
required homomorphism. That $\str{A}'$ is tree-like and has an appropriately bounded degree is also straightforward. 
For the last condition
assume to the contrary that  there exist $u$ and a downward-${T}_u$-path $(a_i)_{i=0}^N$
in $\str{A}'$  with rank bigger than $|A|$. Then there are indices $i_0,\ldots,i_{|A|}$ such that 
$\str{A}\models{T}_ua_{i_j}a_{i_j+1}\wedge\neg{T}_ua_{i_j+1}a_{i_j}$. Since  $\fh$ preserves the connections
between elements and their witnesses
we have $\str{A}_0\models{T}_u\fh(a_{i_j})\fh(a_{i_j+1})\wedge\neg{T}_u\fh(a_{i_j+1})\fh(a_{i_j})$. By the pigeonhole principle there exist $x<x'$ such that $\fh(a_{i_x})=\fh(a_{i_{x'}})$. This gives, by transitivity of ${T}_u$, that $\str{A}_0\models{T}_u\fh(a_{i_x+1})\fh(a_{i_x})$. Contradiction. 
\end{proof}

We often work with tree-like models $\str{A}$ of normal form $\phi$ in which the downward family of every element $a$ forms a
$\phi$-witness structure. In such case we call this downward family \emph{the} $\phi$-witness structure for $a$ even if
some other $\phi$-witness structures for $a$ exist in $\str{A}$.

\subsubsection{Declarations}

Our next task is making the transition (f2) $\leadsto$ (f3).
To this end we introduce an apparatus of \emph{declarations} that allows us to perform some
surgery on tree-like models of  normal form formulas. Its main purpose is dealing with
their universal conjuncts $\forall \bar{x} \neg \phi_0(\bar{x})$. 
For a normal form $\phi$,  a $\phi$-declaration is a description of some patterns of connections
taking into account the literals of $\phi_0$ (and, for technical reasons, some additional
transitive atoms, equalities and inequalities). In particular, it may describe some dangerous
patterns, leading to a violation of $\phi_0$. 

We remark that, while declarations are a counterpart of light-declarations
from the previous section, they are much more complicated than the latter. One difference, worth pointing out,
is that light declarations were defined as independent from the given formula: they were only dependent on the signature.
In the current scenario, which patterns are considered by declarations depends on the literals of $\phi_0$. Making declarations independent from $\phi$ would be possible, but would not allow us to obtain tight complexity bounds.

Let us turn to a formal definition, for which recall that $\phi_0=\phi_0(x_1, \ldots, x_t)$ is in NNF,
and that $2k$ 
is the number of transitive relations, 
 
\begin{definition}
Let $\phi$ be a normal form \UNFOTR{} formula. 
Let $\mathcal{R}$ be the set consisting of all non-transitive literals (atoms or negated atoms) that appear in $\phi_0$ (recall that
only atoms which have at most one variable may be negated), $\mathcal{T}=\{1,\ldots,2k\}\times\{1,\ldots,t\}^2$ 
and $\mathcal{Q}=\{1,\ldots,t\}$. 
A \emph{$\phi$-declaration}  is a set consisting of some triples $(R,T,Q)$ such that $R\subseteq\mathcal{R}$, $T\subseteq\mathcal{T}$ and $Q\subseteq\mathcal{Q}$.

\end{definition} 

A triple $d=(R,T,Q)$ may be alternatively viewed as a formula describing a pattern of connections on a tuple consisting of $t+1$ (not necessarily distinct) elements: $\psi_{d}(x_1,\ldots,x_t,y)=\bigwedge_{r\in R}r(\bar{x})\wedge\bigwedge_{(u,j,j')\in T}T_ux_jx_{j'}\wedge\bigwedge_{i\in Q}x_i=y\wedge\bigwedge_{i\in\mathcal{Q}\setminus Q}x_i\neq y$. In the sequel we often identify the triple and 
the formula that  it represents.

Let $\str{A}$ be a tree-like structure  and $a\in A$. 
We say that $a$ \emph{respects} a $\phi$-declaration $\fd$ if for each $\psi\in\fd$ we have $\str{A}_a\models\neg\exists\bar{x}\psi(\bar{x},a)$. Given an element $a$ in $\str{A}$, we denote  by $\dec{\str{A}}{\phi}{a}$ the (unique) maximal declaration respected by $a$.
Note that if $a_0$ is the root of $\str{A}$ then, knowing $\dec{\str{A}}{\phi}{a_0}$, we can determine if $\str{A} \models \forall \bar{x} \neg \phi_0(\bar{x})$.

Let us now give some intuitions and describe how we are going to use declarations. 
We work with tree-like structures with $\phi$-declarations assigned to all its nodes.
Assigning $\fd$ to a node $a$ of a structure $\str{A}$ may be treated as \emph{a promise that \textbf{none} of the patterns described by $\fd$
appear in $\str{A}_a$}. 
Note that we do not require that $\fd$ equals $\dec{\str{A}}{\phi}{a}$. 
Given a system of declarations assigned to all nodes of $\str{A}$ we formulate some natural local conditions such that their violation at a node $a$ breaks the promise of $a$ (i.e., some forbidden pattern occurs), and, the other way round, they are sufficient to guarantee that for every node $a$ the declaration $\fd$ assigned to $a$ is a subset of $\dec{\str{A}}{\phi}{a}$,
which means that $a$ respects $\fd$, that is, fulfills its promise. This allows us to proceed as follows:
Take a tree-like model $\str{A} \models \phi$, perform on it some surgery, obtaining a new tree-like structure $\str{A}'$. Assign to
the nodes of $\str{A}'$ a system of $\phi$-declarations in such a way that (i) the root of $\str{A}'$ gets the
declaration $\dec{\str{A}}{\phi}{a_0}$ where $a_0$ is the root of $\str{A}$, and (ii) for a node $a'$ its downward family $F'=\{a', a_1', \ldots, a'_s\}$ gets the declarations
$\dec{\str{A}}{\phi}{a}, \dec{\str{A}}{\phi}{a_1}, \ldots, \dec{\str{A}}{\phi}{a_s}$, where $F=\{a, a_1, \ldots, a_s\}$ is the downward family
of some node $a$ from $\str{A}$, and the structures on $F$ and $F'$ are isomorphic. This guarantees that the system of declarations satisfies the local conditions and thus that its promises
are fulfilled. Due to the declaration of the root we have that $\str{A}'$ satisfies the universal conjunct of $\phi$.

We are now ready for the details. 
Let $F=\{a, a_1,\ldots,a_s\}$ be the downward family of a node  $a$. We say that a function $\ff:\{1,\ldots,t\}\to\{a,a_1,A^-_{a_1},\ldots,a_s,A^-_{a_s}\}$ is a \emph{fitting} (to $F$). We think that a fitting describes a distribution of elements of a $t$-element tuple of nodes of $\str{A}_a$ among the downward
family of $a$ and the subtrees rooted
at the  children of $a$.
With a fitting $\ff$ 
we associate a function $\bar{\ff}:\{1,\ldots,t\}\to\{a,a_1,\ldots,a_s\}$ defined as follows: $\bar{\ff}(i)=a$ iff $\ff(i)=a$ and $\bar{\ff}(i)=a_j$ iff $\ff(i)\in\{a_j,A^-_{a_j}\}$. 
Let $b_1,\ldots,b_t=\bar{b}\subseteq A_a$. The fitting $\ff$ \emph{induced by} $\bar{b}$ is defined naturally: $\ff(i)=a$ iff $b_i=a$, $\ff(i)=a_j$ iff $b_i=a_j$ and $\ff(i)=A^-_{a_j}$ iff $b_i\in A_{a_j}\setminus \{a_j\}$. 

Let  $\ff$ be a fitting, $(R,T,Q)$  a tuple belonging to some declaration and $F=\{a, a_1, \ldots, a_s \}$ the downward family of some $a$. If  $r$ is a literal from $R$ (resp.~a tuple $(u,j,j') \in T)$ then we denote by $\var{r}$ the set of the indices of variables of $r$ (resp.~the set $\{j, j'\}$). If $r$ is a literal from $R$ or a literal $T_ux_jx_{j'}$ then we say that $r$ is \emph{fully fitted} (to $F$) 
if $\ff(\var{r})\subseteq F$.

Now we define the \emph{local consistency conditions (LCCs)} for a system of declarations. Consider declarations $\fd,\fd_1,\ldots,\fd_s$ assigned to the elements of some family $F=\{a,a_1,\ldots,a_s\}$. We say that they \emph{satisfy LCCs at $a$} if for each fitting $\ff$ and $\psi_{(R,T,Q)}\in\fd$ at least one of the following conditions holds.

\begin{enumerate}[(l1)]

\item Some $R$-conjunct $r\in R$ is not fully fitted, and
 $a\in\ff(\var{r})$ or $a_j,a_{j'}\in\ff(\var{r})$ for some $j\neq j'$.\label{lone}
\item Some $R$-conjunct $r\in R$ is fully fitted 
but $\str{A}\models\neg r(\ff(1),\ldots,\ff(t))$.

\item Some $T$-conjunct $T_ux_jx_{j'}$ is fully fitted but $\str{A}\models\neg T_u\ff(j)\ff(j')$.\label{lthree}

\item For some $Q$-conjunct $x_i=a$ we have $\ff(i)\neq a$.\label{lfour}

\item For some $(\mathcal{Q}{\setminus}Q)$-conjunct $x_i\neq a$ we have $\ff(i)=a$.\label{lfive}

\item Some $R$-conjunct $r\in R$ is not fully fitted and
is `distributed over several subtrees', that is $|\bar{\ff}(\var r)\cap\{a_1,\ldots,a_s\}|\geq2$. 
\label{lsix}

\item Two elements of two different subtrees cannot be transitively joined due to the structure on $F$, that is for 
 some $T$-conjunct $T_ux_jx_{j'}$  we have $\bar{\ff}(j)\neq\bar{\ff}(j')$ but $\str{A}\models\neg T_u\bar{\ff}(j)\bar{\ff}(j')$.\label{lseven}

\item All elements are fitted to a single subtree, that is for some  $i$ and all $j$ we have $\bar{\ff}(j)=a_i$, and the promise is propagated to this subtree: $(R,T,\ff^{-1}(a_i))\in\fd_i$.\label{leight}

\item There exists $i$ such that $a_i \in \rng\bar{\ff}$ and $\fd_i$ contains $(R',T',Q')$ defined as follows: fix some 
$h \in \{1,\ldots,t\}\setminus\bar{\ff}^{-1}(a_i)$ and (i) $R':=\{r\in R:\var r\subseteq\bar{\ff}^{-1}(a_i)\}$, (ii) $T'$ is the minimal set such that for  $(u,j,j') \in T$  if $\bar{\ff}(j)=\bar{\ff}(j')=a_i$ then $(u,j,j')\in T'$ and if $\bar{\ff}(j)=a_i$ (resp. $\bar{\ff}(j)\neq a_i$) and $\bar{\ff}(j')\neq a_i$ (resp. $\bar{\ff}(j')=a_i$) then 
$(u,j,h)\in T'$ (resp. $(u,h,j')\in T')$, and (iii) $Q'=\ff^{-1}(a_i)\cup \{h \}\cup(\mathcal{Q}\setminus\bar{\ff}^{-1}(a_i))$.\label{lnine}
\end{enumerate}

Note that the above conditions are of two sorts.
Conditions (l\ref{lone})--(l\ref{lseven}) describe situations in which we immediately, just looking at the structure on $F$,
observe that any tuple of elements corresponding to the given fitting does not
break the promise of $a$. Conditions (l\ref{leight})--(l\ref{lnine}), on the other hand,
describe situations in which, intuitively speaking, we need to relegate such observation to one
of the children of $a$. 

Given a structure $\str{A}$ we say that a  system of declarations $(\fd_a)_{a\in A}$ is \emph{locally consistent} if it satisfies LCCs at each $a\in A$ and is \emph{globally consistent} if $\fd_a\subseteq\dec{\str{A}}{\phi}{a}$ for each $a\in A$.
Note that the global consistency means that the promises of all nodes are fulfilled. Conditions (l\ref{lone})--(l\ref{lnine}) are tailored so that local
and global consistency play along in the following lemma, which is a natural counterpart of Lemma \ref{l:llocglobl}.

\begin{lemma}[Local-global]\label{l:locglob}
Let $\str{A}$ be a tree-like structure. Then (i) if a system of declarations $(\fd_a)_{a\in A}$ is locally consistent then it is globally consistent (ii) the canonical system of declarations $(\dec{\str{A}}{\phi}{a})_{a\in A}$ is locally consistent.
\end{lemma}

\begin{proof} 
(i)  Assume to the contrary that there exist $a\in A$, $\psi\in\fd_a$ and $\bar{b}\subseteq A_a$ such that $\str{A}_a\models\psi(\bar{b},a)$. Take the fitting $\ff$ to the downward family $F=\{a,a_1,\ldots,a_s\}$ of $a$ induced by $\bar{b}$. By the choice of $\bar{b}$, none of (l\ref{lone})--(l\ref{lseven}) holds. Thus $\bar{b}\not\subseteq F$ and there exist some $a_i \in F \setminus \{a \}$ and $\psi'\in\fd_{a_i}$ such that $\str{A}_{a_i}\models\psi'(\bar{b}',a_i)$ where $\bar{b}'=b_1',\ldots,b_t'$ is defined as follows: $b_j'=b_j$ if $\bar{\ff}(j)=a_i$, otherwise $b_j'=a_i$. Denote by $\text{depth}(\bar{b})$ the maximal level of $\str{A}$ inhabited by an element of $\bar{b}$. Obviously
$\text{depth}(\bar{b}') \le \text{depth}(\bar{b})$. 
Thus after finitely many steps we get $a^*$, $\psi^{*}\in\fd_{a^{*}}$ and $\bar{b}^{*}$ contained in the downward family $F^*$ of $a^*$ such that $\str{A}_{a^*}\models\psi(\bar{b}^*,a^*)$. But this cannot happen, since neither (l\ref{leight}) nor (l\ref{lnine}) can hold for the fitting to $F^*$ induced by $\bar{b}^*$.

\smallskip\noindent
(ii) Follows from a careful inspection of the definition of LCCs. Basically, if for some $a$, $\ff$ and $\psi\in\dec{\str{A}}{\phi}{a}$ none of (l\ref{lone})--(l\ref{lnine}) holds then we can find $\bar{b}\subseteq A_a$ such that $\str{A}\models\psi(\bar{b},a)$. Indeed, use 
non-satisfaction of (l\ref{leight}) and (l\ref{lnine}) to find fragments of $\bar{b}$ belonging to the respective subtrees ($\bar{b}\cap\str{A}_{a_i}$)
(i.e., to find appropriate $b_j\in A_{a_i}$ for $j\in\bar{\ff}^{-1}(a_i)$); non-satisfaction of  (l\ref{lone})--(l\ref{lseven}) implies that they are connected so that $\psi(\bar{b},a)$ holds. 
\end{proof}

\subsubsection{Shortening transitive paths}

To make transition (f2) $\leadsto$ (f3) we proceed, essentially, as in the two-variable case, that is we perform the same
tree pruning process, with the same pruning strategy, with $\phi$-declarations playing now the role of light declarations. 
Correctness of this approach is justified mainly by Lemma \ref{l:locglob}, being a counterpart of Lemma \ref{l:llocglobl}. The only difference is the way we 
argue that a model obtained in the pruning process  satisfies the universal conjunct of the given formula. In the two-variable
case it was done by constructing an appropriate homomorphism into the original model. In the presence of more than two
variables it is not always possible. Instead, it is sufficient to use the fact that the system of declarations on
the produced model is globally consistent and look at the declaration of its root.

A reader understanding the tree pruning process in the two-variable case may thus safely skip this section, noting only 
the above-mentioned difference in dealing with the $\forall$-conjunct in the correctness proof.

For completeness, we give here a detailed description, presenting it, however, in a slightly more compact way than
in the two-variable case.

\begin{lemma}[(f2) $\leadsto$ (f3)]
\label{l:short}
If a normal form \UNFOTR{} formula $\phi$ has a tree-like model of degree bounded linearly and transitive paths bounded by some natural number then  it has a tree-like model of degree bounded linearly  and transitive paths bounded doubly exponentially in $|\phi|$.

\end{lemma}
\begin{proof} 
Let $\str{A} \models \phi$ be a tree-like model with bounded transitive paths. Let $M_\phi:=|\AAA| \cdot |D_\phi| +2$, where $\AAA$ is the set of atomic $1$-types over the signature of $\phi$ and $D_\phi$ is 
the set of $\phi$-declarations. Clearly, $M_\phi$ is bounded doubly exponentially in $|\phi|$.

Consider a mapping: $A\ni a\mapsto_{\fg}(\type{\str{A}}{a},\dec{\str{A}}{\phi}{a})$. 
Observe that $|\rng\fg| \le M_\phi -2$. 
We construct a tree-like model $\str{A}'$ having levels $L_0', L_1', \ldots$.
During our construction we maintain
a pattern function $\fp:A'\to A$ and
a function $\fs:A'\to\mathrm{Sym}(\{1,\ldots,2k\})$ whose purpose is to define some order of shortening  paths at a given node. Intuitively,  for $\fs(a)=\tau$, if  $v<v'$ then we  prefer to shorten ${T}_{\tau(v)}$ over ${T}_{\tau(v')}$.

Let $L_0'$ consist of $a_0'$---a copy of the root $a_0$ of $\str{A}$ (i.e. $\type{\str{A}'}{a_0'}=\type{\str{A}}{a_0}$). Put $\fp(a_0')=a_0$ and set $\fs(a_0')$ arbitrarily. 
Suppose that we have defined $L_i'$.
For each $a'\in L_i'$ let $\{\fp(a'),a_1,\ldots,a_s\}$ be the downward family of $\fp(a')$ in $\str{A}$ and let $\fs(a')=\tau$. Take fresh copies $a_j'$ of $a_j$ and make $\str{A'}\restr\{a',a_1',\ldots,a_s'\}$ isomorphic to $\str{A}\restr\{\fp(a'),a_1,\ldots,a_s\}$.

Presently we set the $\fp(a_j')$ and $\fs(a_j')$. Let 
$K(a_j')=K:=\{v:\str{A}\models\neg{T}_{\tau(v)}\fp(a')a_j\}$ (the ${T}_{\tau(v)}$ \emph{killed} at $a_j'$), 
$S(a_j')=S:=\{v:\str{A}\models{T}_{\tau(v)}\fp(a')a_j\wedge{T}_{\tau(v)}a_j\fp(a')\}$ (the ${T}_{\tau(v)}$ \emph{sustained} at $a_j'$) and 
$D(a_j')=D:=\{v:\str{A}\models{T}_{\tau(v)}\fp(a')a_j\wedge\neg{T}_{\tau(v)}a_j\fp(a')\}$ (the ${T}_{\tau(v)}$ \emph{diminished} at $a_j'$).
In this paragraph we refer to them without the argument, as it is always the same.
If $D\neq\emptyset$ then let 
$v_D(a_j')=v_D:=\min D$ and take as $\fp(a_j')$ a $b_j\in A_{a_j}$ such that (i) $\fg(b_j)=\fg(a_j)$ (ii) for all $v<v_D$, $v\in S$: $\rank{\str{A}}{\tau(v)}{b_j}\leq\rank{\str{A}}{\tau(v)}{\fp(a')}$ (iii) $\rank{\str{A}}{\tau(v_D)}{b_j}$ is the lowest possible
among the elements satisfying (i) and (ii). 
Note that such an element exists 
($a_j$ satisfies (i) and (ii); for (iii), the ranks in $\str{A}$ are bounded, in particular finite) 
and $\rank{\str{A}}{\tau(v_D)}{b_j}<\rank{\str{A}}{\tau(v_D)}{\fp(a')}$. If $D=\emptyset$ then let $\fp(a_j')=a_j$.
If $K\not=\emptyset$ then let $v_K=\min K$
 and set $\fs(a_j'):=\tau'$ where $\tau'$ is defined as follows: 
for $v<v_K$ let $\tau'(v)=\tau(v)$, for $ v_K \le v <2k$ let $\tau'(v)=\tau(v+1)$ and let $\tau'(2k)=\tau(v_K)$. 
Otherwise (that is $K=\emptyset$) 
put $\fs(a_j')=\tau$.    
To finish the construction, transitively close all the appropriate relations in $\str{A}'$. 

We claim that $\str{A}'$ constructed as above is a model of $\phi$ and has the desired properties.
$\forall\exists$-conjuncts are satisfied since for all $a'\in A'$ the structure on the downward family of $a'$ in $\str{A}'$ is  isomorphic to the structure on the downward family
of  $\fp(a')$ in $\str{A}$ and the latter is the $\phi$-witness structure for $\fp(a')$.

For the universal conjunct of $\phi$ consider the system of declarations $(\dec{\str{A}}{\phi}{\fp(a')})_{a'\in A'}$. 
Note that in this system the declarations on the downward family of any node $a'$ in $A'$
are copies of the declarations on the downward family of $\fp(a')$ in $\str{A}$ in the canonical system of declarations on $\str{A}$. 
This canonical system on $\str{A}$ is locally consistent by part (ii) of Lemma \ref{l:locglob}. This in turn gives that
the system we have defined on $\str{A}'$ is also locally consistent. By part (i) of Lemma \ref{l:locglob} this system is also
globally consistent.
 In particular, since $\phi_0$ is equivalent to $\phi_0^1\vee\ldots\vee\phi_0^s$ where the $\phi_0^j$ are conjunctions of some $\mathcal{R}$ and $\mathcal{T}$ formulas, for the root $a_0'$ of $\str{A}'$,  for each $j$ and $Q\subseteq\mathcal{Q}$ we have $\str{A}'=\str{A}_{a_0'}'\models\neg\exists\bar{x}(\phi_0^j(\bar{x})\wedge\bigwedge_{i\in Q}x_i=a_0\wedge\bigwedge_{i\in \mathcal{Q}\setminus Q}x_i\neq a_0)$, so $\str{A}'\models\forall\bar{x}\neg\phi_0(\bar{x})$.

That the degree of nodes in $\str{A}'$ is bounded linearly in $\phi$ follows from the fact that it was so bounded in $\str{A}$. 

It remains to show that the transitive  paths in $\str{A'}$ are doubly exponentially bounded. 
Let us first make an auxiliary estimation.
\begin{claim} \label{c:shortpaths}
Let $v_0$ and a downward-${T}_u$-path $\vectorize{a}=(a_i)_{i=1}^N$ in $\str{A}'$ be such that for all 
$v\leq v_0$ we have that the map $i\mapsto\fs(a_i)(v)$
is constant
(in this case, slightly abusing  notation, we write $\fs(\vectorize{a})(v)=\fs(a_1)(v)$).
Let $u:=\fs(\vectorize{a})(v_0)$. 
Then $\rank{\str{A}'}{u}{\vectorize{a}}\leq (M_\phi+1)(\sum_{v< v_0}\rank{\str{A}'}{\fs(\vectorize{a})(v)}{\vectorize{a}})+M_\phi$. 
\end{claim}
\begin{proof}
Assume to the contrary that there is a downward-${T}_u$-path $\vectorize{a}$ in $\str{A}'$  meeting the required conditions such that $\rank{\str{A}'}{u}{\vectorize{a}}>(M_\phi+1)(\sum_{v<v_0}\rank{\str{A}'}{\fs(\vectorize{a})(v)}{\vectorize{a}})+M_\phi$. Then there are more than $M_\phi(1+\sum_{v<v_0}\rank{\str{A}'}{\fs(\vectorize{a})(v)}{\vectorize{a}})$ indices $i$ such that $v_0=v_D(a_i)$. So there exist indices $i_1,\ldots,i_{M_\phi}$ such that for all $j$ it holds that $v_D(a_{i_j})=v_0$ and for all $i_1\leq i\leq i_{M_\phi}$ and $v<v_0$ we have $v\not\in D(a_i)$ (and thus $v\in S(a_i)$). 
It follows that for all $v\leq v_0$ the function  $i\mapsto\rank{\str{A}}{\fs(\vectorize{a})(v)}{\fp(a_i)}$ is non-increasing 
(on $\{i:i_1\leq i\leq i_{M_{\phi}}\}$)
and the function $j\mapsto\rank{\str{A}}{\fs(\vectorize{a})(v_0)}{\fp(a_{i_j+1})}$ is strictly decreasing. By the pigeonhole principle there exist $x<x'<M_\phi$ satisfying $\fg(\fp(a_{i_x+1}))=\fg(\fp(a_{i_{x'}+1}))$. This contradicts the choice of $\fp(a_{i_x+1})$. 
\end{proof}

The above claim allows us in particular to compute a (uniform) doubly exponential bound  on $\rank{\str{A}'}{u}{\vectorize{a}}$ for all $v_0$, $u$ and $\vectorize{a}$ as in assumption. Denote this bound by $\overline{M}_\phi$.

Consider now any downward-${T}_u$-path $\vectorize{a}=(a_i)_{i=1}^N$ in $\str{A}'$.
For each node $a_i$ from $\vectorize{a}$ let $v_{u}(a_i)$ be such
that $\fs(a_i)(v_u)=u$. Due to the strategy that we use to define $\fs$ the value of $v_u$ is non-increasing along $\vectorize{a}$. Indeed, when moving from $a_i$ to $a_i+1$ the value of $v_u$ is either unchanged or decreases by $1$; the only chance of increasing it would be 
to change it to $2k$ but this happens only when ${T}_u$ is killed. Let us divide $\vectorize{a}$ into fragments $\vectorize{a}_1, \vectorize{a}_2, \ldots$
on which $v_u$ is constant. The number of such fragments  is obviously bounded by $2k$. On each of such fragments
$\vectorize{a}_i$ for all $v \le  v_u$ we have that $\fs(\vectorize{a}_i)(v)$ is constant. So we can apply Claim \ref{c:shortpaths} to bound 
$\rank{\str{A}'}{u}{\vectorize{a}_i}$ 
by $\overline{M}_\phi$. This gives the desired doubly exponential bound $\hat{M}_\phi=2k \overline{M}_\phi$ on $\rank{\str{A}'}{u}{\vectorize{a}}$. 
This finishes the proof of Lemma \ref{l:short}. 
\end{proof}

\subsubsection{Regular tree-like models}

We conclude this section by showing that for finitely satisfiable formulas we can always construct \emph{regular} tree-like models with bounded transitive paths. We recall that this step was unnecessary in the two-variable case and was ommited there.

Let us introduce a tool, which  allows us to verify the property of having bounded transitive paths looking only at some local conditions.
Let $\str{A}$ be a tree-like structure with root $a_0$. Then a function $\mathcal{S}:A\to\{0,\ldots,M\}$ is 
the 
$({T}_u,M)$-\emph{stopwatch labeling} if: 
$\mathcal{S}(a_0)=0$;
for every $a\in A$ and its child $b$:
(i) if $\str{A}\models{T}_uab\wedge{T}_uba$ then $\mathcal{S}(b)=\mathcal{S}(a)$,
(ii) if $\str{A}\models{T}_uab\wedge\neg{T}_uba$ then $\mathcal{S}(b)=\mathcal{S}(a)+1$ (in particular $\mathcal{S}(a)<M$)
(iii) if $\str{A}\models\neg{T}_uab$ then $\mathcal{S}(b)=0$.

It is easy to see that the value of the ($T_u,M$)-stopwatch labeling in $a$ is equal to the maximal rank of a downward-$T_u$-path ending in $a$, therefore
the
(${T}_u,M$)-stopwatch labeling exists iff the structure has downward-${T}_u$-paths bounded by $M$.
\begin{lemma}
[(f3)$\leadsto$(f4)]
\label{l:reg}
If $\phi$ has a tree like model with linearly bounded degree and doubly exponentially bounded transitive paths, then it has a regular such model, that is a model with doubly exponentially many non-isomorphic subtrees.
\end{lemma}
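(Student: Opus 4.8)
The plan is to take the tree-like model $\str{A}$ with linearly bounded degree and transitive paths bounded by some doubly exponential $\hat M_\phi$ (as supplied by Lemma~\ref{l:short}) and fold it into a regular model by a standard type-selection argument, but with the ``type'' enriched so that the bound on transitive paths is preserved. Concretely, I would attach to every node $a$ of $\str{A}$ the tuple consisting of: its $1$-type $\type{\str{A}}{a}$, its $\phi$-declaration $\dec{\str{A}}{\phi}{a}$, the $2$-type linking $a$ to its parent, and, for each $u\in\{1,\ldots,2k\}$, the value $\mathcal{S}_u(a)\in\{0,\ldots,\hat M_\phi\}$ of the $(T_u,\hat M_\phi)$-stopwatch labeling. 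Call this the \emph{extended type} of $a$. Since there are doubly exponentially many $1$-types, $\phi$-declarations and $2$-types, and only doubly exponentially many choices for the vector $(\mathcal{S}_1(a),\ldots,\mathcal{S}_{2k}(a))$ (each coordinate ranging over a doubly exponential set, and $2k$ is linear in $|\phi|$), there are at most doubly exponentially many distinct extended types in $\str{A}$.

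The folding is then the usual one: process $\str{A}$ top-down; whenever a node $b$ has a proper ancestor $a$ with the same extended type, delete the subtree of $b$ and instead reattach (a copy of) the subtree of $a$ in its place, inheriting the same edge-$2$-type to the parent of $b$ that $b$ originally had — this is legitimate because $a$ and $b$ agree on the parent-edge $2$-type component of the extended type. Iterating this (equivalently, taking the limit of these surgeries, exactly as in the pruning process of Section~\ref{s:ltreelike}) yields a structure $\str{A}'$ with only doubly exponentially many non-isomorphic subtrees, hence regular. One must then check $\str{A}'\models\phi$: the $\forall\exists$-conjuncts survive because every node keeps its downward family (hence its $\phi$-witness structure) unchanged — each reattached subtree is an isomorphic copy of a subtree of $\str{A}$; and the $\forall$-conjunct survives by the declaration machinery, exactly as in the proof of Lemma~\ref{l:short}: the canonical declarations of $\str{A}$ transported to $\str{A}'$ form a locally consistent system (downward families are copied together with their declarations and $a,b$ share a declaration), hence by Lemma~\ref{l:locglob} a globally consistent one, and the declaration of the root of $\str{A}'$ is that of the root of $\str{A}$, which certifies $\forall\bar x\,\neg\phi_0(\bar x)$. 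The degree bound is obviously preserved.

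Finally, transitive paths remain bounded by $\hat M_\phi$ because the stopwatch values were included in the extended type: when we replace the subtree of $b$ by that of $a$, we have $\mathcal{S}_u(a)=\mathcal{S}_u(b)$ for every $u$, so the $(T_u,\hat M_\phi)$-stopwatch labeling of $\str{A}$ restricted to the surviving nodes is still a valid $(T_u,\hat M_\phi)$-stopwatch labeling of $\str{A}'$ for each $u$; by the remark preceding the lemma, the existence of such a labeling is equivalent to having downward-$T_u$-paths bounded by $\hat M_\phi$, so $\str{A}'$ has transitive paths bounded by $\hat M_\phi$, i.e. doubly exponentially.

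The main obstacle — and the reason the stopwatch labeling is the right auxiliary device — is precisely this last point: a naive fold on $1$-type plus declaration plus parent-edge type could re-attach a high-rank subtree below a node already deep in some $T_u$-path and thereby blow the rank up, possibly to infinity after transitive closure. Matching the full stopwatch vector is exactly what rules this out, and verifying that the matched labeling remains a valid stopwatch labeling across the surgery (checking conditions (i)–(iii) at the new edge into $b$, which hold because $b$'s original parent-edge behaviour is retained and $\mathcal{S}_u(b)=\mathcal{S}_u(a)$) is the one place that needs a little care, though no real computation.
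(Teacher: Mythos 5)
Your choice of extended type --- $1$-type, $\phi$-declaration, and the vector of $(T_u,\hat M_\phi)$-stopwatch values --- is exactly the right invariant to match, and it coincides with the function $\fg$ the paper uses. The stopwatch transfer argument at the end, and the way you use Lemma~\ref{l:locglob} for the $\forall$-conjunct, are also sound. (Your additional parent-edge $2$-type component is harmless but redundant; since the reattached node agrees with the deleted one on its $1$-type, the edge into the parent stays coherent anyway.)

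The gap is in the folding mechanism itself: ``whenever a node $b$ has a proper ancestor $a$ with the same extended type, delete the subtree of $b$ and reattach a copy of the subtree of $a$'' does \emph{not} yield a structure with doubly exponentially many non-isomorphic subtrees. The ancestor you fold back to depends on the branch, and two nodes of the same extended type in $\str{A}$ may well root non-isomorphic subtrees. Concretely, if the root has children $c_1,c_2$ of the same extended type $t$ but with different subtrees, and each $c_i$ has a descendant $e_i$ of type $t$ on a branch where $c_i$ is the unique type-$t$ ancestor, then $e_1$ gets a copy of $\str{A}_{c_1}$ while $e_2$ gets a copy of $\str{A}_{c_2}$: two non-isomorphic subtrees rooted at the \emph{same} extended type. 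More generally, the ``fold target'' of a node ranges over the (up to) $|\phi|^{M}$ many nodes in the depth-$M$ prefix of $\str{A}$ (with $M$ doubly exponential), so the bound you get is triply exponential, not doubly. Note also that the analogy with the pruning process of Section~\ref{s:ltreelike} is misleading: pruning replaces a subtree by a \emph{descendant}'s subtree (a strictly smaller object), whereas your fold replaces it by an \emph{ancestor}'s subtree, a self-referential operation that needs a separate well-definedness argument.

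The missing idea is to decouple the regularization from the ancestor relation entirely. Fix, once and for all, a representative $\fc(p)\in\fg^{-1}(p)$ for every value $p\in\rng\fg$, and build $\str{A}'$ afresh top-down with a pattern function $\fp$ that \emph{always} points to a representative: the root maps to $\fc(\fg(a_0))$, and whenever $a'$ is added with $\fp(a')$ having downward family $\{\fp(a'),a_1,\dots,a_s\}$ in $\str{A}$, you copy this family isomorphically to $\str{A}'$ and set $\fp(a_j'):=\fc(\fg(a_j))$. Now any two nodes of $\str{A}'$ with the same $\fg\circ\fp$-value generate isomorphic subtrees by construction, so the number of non-isomorphic subtrees is bounded by $|\rng\fg|$, which is doubly exponential. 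The rest of your argument (witness structures, locally consistent declarations via Lemma~\ref{l:locglob}, stopwatch transfer via $\fp$) goes through unchanged for this construction, because $\fg(\fc(\fg(a_j)))=\fg(a_j)$, so $1$-types, declarations and stopwatch values are preserved by the representative substitution.
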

\begin{proof} 
Let  $\str{A}$ be a tree-like model of $\phi$ with linearly bounded degree and transitive paths bounded doubly exponentially by $\hat{M}_\phi$. For each ${T}_u$ take 
the 
$({T}_u,\hat{M}_\phi)$-stopwatch labeling $\mathcal{S}_u$ of $\str{A}$. 
Consider the 
mapping $A\ni a\mapsto_{\fg} (\type{\str{A}}{a},\dec{\str{A}}{\phi}{a},(\mathcal{S}_u(a))_{u=1}^{2k})$. Note that $|\rng\fg|$ is bounded
doubly exponentially in $|\phi|$. 
We rebuild $\str{A}$ into a regular model $\str{A}'$.  

For each $p\in\rng\fg$ choose a representative $\fc(p)\in\fg^{-1}(p)$. Add to $L_0'$ an element $a_0'$ such that $\type{\str{A}'}{a_0'}=\type{\str{A}}{a_0}$, where $a_0$ is the root of $\str{A}$, and let $\fp(a_0')=\fc(\fg(a_0))$. Having defined $L_i'$, for $i \ge 0$, repeat the following for all $a'\in L_i'$. Denoting  $\fp(a'),a_1,\ldots,a_s$  the downward family of $\fp(a')$ in $\str{A}$, add a fresh copy $a_i'$ of each $a_i$ to $L_{i+1}'$ and make $\str{A}'\restr\{a',a_1',\ldots,a_s'\}$ isomorphic to $\str{A}\restr\{\fp(a'),a_1,\ldots,a_s\}$. Set $\fp(a_i'):=\fc(\fg(a_i))$. 
Finally, transitively close all transitive relations in $\str{A}'$. 

The proof that $\str{A}' \models \phi$ is similar to the corresponding proof in Lemma \ref{l:short}:
we observe that all elements have appropriate $\phi$-witness structures copied from $\str{A}$ and 
then use the apparatus of declarations to argue that the $\forall$-conjunct of $\phi$ is respected.
 By construction $\str{A}'$ is a regular tree-like model with the number of different subtrees bounded by $|\rng\fg|$. 
To see that $\str{A}'$ has transitive paths bounded by $\hat{M}_\phi$ create stopwatch labelings for $\str{A}'$ just by transferring them
 from $\str{A}$ using $\fp$. It is not difficult to see that they meet the conditions from the definition of stopwatch labelings.  
\end{proof}

\subsubsection{A few words for  automata fans}
The content of Section \ref{s:treelike} could be alternatively  presented with help of (B\"uchi) tree automata, however we decided not to present it this way. The reasons will be explained in a moment, after explaining how automata could be used. This comment should not be treated as a formal description (and does not bother with details), but rather as a glossary of terms one can refer to.
Let  $\Sigma$ be the alphabet consisting of the set of possible structures on downward families. There are three automata involved: $\mathcal{A}_1$ that checks whether the $\forall\exists$ conjuncts holds, $\mathcal{A}_2$ for the $\forall$ conjunct and $\mathcal{A}_3(M)$ for checking whether the paths are bounded by $M$. 
The construction of $\mathcal{A}_1$ is straightforward,  
$\mathcal{A}_3$ is also simple and corresponds to our stopwatch labellings. The only problem is the construction of $\mathcal{A}_2$. We solve the corresponding task in our approach using the declarations. Equivalently, we could 
design $\mathcal{A}_3$ with the set of states being the set of all possible declarations (the accepting states) plus the Black Hole (BH) (the rejecting state), and transitions described by the LCCs, plus two additional ones: anyone can go to the BH, and no one escapes the BH. The starting state can be defined using the description in the "For the universal conjunct\ldots " paragraph in the proof of Lemma \ref{l:short}. Now, Lemma \ref{l:locglob} is a counterpart of the statement that $\mathcal{A}_2$ accepts exactly the models of the $\forall$ conjunct. Lemma \ref{l:short} can be stated as: if there exists $M$ such that  $\mathcal{L}(\mathcal{A}_1\cap\mathcal{A}_2\cap\mathcal{A}_3(M))\neq\emptyset$ then there exists
such $M$  doubly exponential in $|\phi|$.  
Finally, Lemma \ref{l:reg} and Theorem \ref{t:algo} can be treated as some standard properties of automata.
So, why we decided not to use automata? Firstly, since it was not the way we originally thought about the problem. Secondly, and mainly, it would not allow us to omit the main obstacles, namely the definitions of the declarations, and the proofs of Lemmas \ref{l:locglob} and \ref{l:short}. Furthermore, we wanted to keep our presentation uniform,
and not to alternate between
logic and automata for this  (not the most complicated) task.
And, finally, we would not need the  full power of B\"uchi automata.

\subsection{Building small finite models}\label{s:main}

In this section we show the following small model property. To this end we make the missing transition (f4) $\leadsto$ (f5). Generally, our constructions here are quite similar to the constructions from the two-variable case. The main differences are: bigger components, using 
the isomorphism types of subtrees of a regular tree-like model
instead of generalized types, and a more complicated way of building witness structures. Also, the proof of correctness of the construction becomes now
more complicated, in particular, it uses non-trivially the regularity of tree-like models from the previous subsection.

\begin{theorem} \label{t:maintr}
Every finitely satisfiable \UNFOTR{} formula $\phi$ has a finite model of size bounded triply exponentially 
in $|\phi|$. 
\end{theorem}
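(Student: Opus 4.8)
The plan is to reduce Theorem~\ref{t:maintr} to the chain of model transformations (f1)$\leadsto$(f5) set up in Section~\ref{s:stages}, with the new work concentrated in the final step (f4)$\leadsto$(f5). By Lemma~\ref{l:nf} it suffices to treat a normal form \UNFOTR{} formula $\phi$. Starting from an arbitrary finite model of $\phi$, Lemma~\ref{l:treelike} gives a tree-like model of linearly bounded degree and transitive paths bounded by $|A|$ (step (f1)$\leadsto$(f2)); note that in this unravelling the downward family of each node is a $\phi$-witness structure, which is the hypothesis we shall keep invoking. Lemma~\ref{l:short} then yields (f2)$\leadsto$(f3): a tree-like model with doubly exponentially bounded transitive paths, still with the witness-structure-in-downward-family property. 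Lemma~\ref{l:reg} gives (f3)$\leadsto$(f4): a regular tree-like model $\str{A}$, with doubly exponentially many non-isomorphic subtrees, bounded transitive paths, and the same local witness property. So all that is really new is the construction of a finite model of size triply exponential in $|\phi|$ out of $\str{A}$.

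For the step (f4)$\leadsto$(f5) I would follow the template of the two-variable case (Lemma~\ref{l:lfin} and its joining construction), replacing generalized types by \emph{isomorphism types of subtrees} of the regular model $\str{A}$ — there are only doubly exponentially many of these by regularity, and this is exactly where regularity of $\str{A}$ is used non-trivially. Concretely, I would prove an inductive lemma analogous to Lemma~\ref{l:lfin}: for a subtree-type $\gamma$ (playing the role of $a_0$) and a set $\mathcal{E}_0\subseteq\sigma_{\cDist}$ closed under inverses, with $\mathcal{E}_{tot}=\sigma_{\cDist}\setminus\mathcal{E}_0$, one builds a finite structure $\str{A}'_0$ with a pattern map $\fp$ and an \origin{} satisfying: every symbol of $\mathcal{E}_{tot}$ is total (or $\str{A}'_0$ is a singleton); $\fp$ of the \origin{} has subtree-type $\gamma$; each element has the $\phi$-witnesses required by its pattern; and, crucially, for every $t$-tuple of $\str{A}'_0$ there is a $1$-type-preserving homomorphism of the induced substructure into $\str{A}$ (together with the declaration-inclusion property that makes Lemma~\ref{l:locglob} applicable). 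The induction runs downward on $|\mathcal{E}_0|/2$; the base case $\mathcal{E}_0=\emptyset$ is the singleton; in the inductive step one peels off one pair $T_{2u-1},T_{2u}$ and builds \emph{pattern components} $\str{C}^\gamma$ stratified into $2l$ layers (layer $L_u$ "kills" $T_u$) each subdivided into $\hat M_\phi+1$ sublayers (each sublayer strictly decreases the $T_u$-rank of patterns, bounded by $\hat M_\phi$, so the construction terminates — this is the multi-variable analogue of Claim~\ref{c:lnodanger}), where subcomponents come from the inductive hypothesis applied to $\mathcal{E}_0\setminus\{T_u,T_u^{-1}\}$. Witnesses are provided per element by reproducing, for each $\forall\exists$-conjunct, the relevant downward family of the pattern; a $t$-element witness structure may spread a witness across several destinations, which is the extra bookkeeping compared to the two-variable case. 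Finally one takes two colors of copies of each extended component, identifies interface elements of one color with roots of components of the other color, and transitively closes — the "cylinder" picture of Fig.~\ref{f:cylinder0} and the key observation that any transitive path crosses at most one color border carry over verbatim, because any $T_u$-path ($1\le u\le 2l$) lives in at most $2l$ consecutive levels.

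Correctness of this construction then has two parts. First, that $\str{A}'_0$ is finite of the right size: the size recurrence from Appendix~\ref{s:size} goes through unchanged, since the only new parameters (number of subtree-types, $\hat M_\phi$, $|\GGG|$ replaced by the number of subtree-types) are all still doubly exponential, giving a triply exponential bound $S_{2k+2}$. Second, that $\str{A}'_0\models\phi$: one applies the top-level lemma with $\mathcal{E}_0=\sigma_{\cDist}$ and $\gamma$ the subtree-type of the root of $\str{A}$, then verifies the hypotheses of Lemma~\ref{l:homomorphisms}. Hypothesis (a1) is immediate from the witness clause of the inductive lemma. For (a2) one needs, for any $t$ elements of $\str{A}'_0$, a $1$-type-preserving homomorphism of the induced substructure into $\str{A}$; this follows from the homomorphism clause of the inductive lemma via the two Reductions (transection to remove the second color border, then projection to a "doubled tree of components") exactly as in Appendix~\ref{s:cor}, except that where the two-variable proof built an explicit homomorphism one now in addition uses the declaration-inclusion property and the local-to-global Lemma~\ref{l:locglob} to conclude that the $\forall$-conjunct $\forall\bar x\,\neg\phi_0(\bar x)$ is respected — this is the one genuinely new ingredient relative to the two-variable argument, and it is the step I expect to cost the most care, because one must show that the canonical declarations transported along $\fp$ stay locally consistent through both Reductions and through the transitive closure, so that no forbidden $\phi_0$-pattern is created. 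Once (a1) and (a2) hold, Lemma~\ref{l:homomorphisms} gives $\str{A}'_0\models\phi$, and the triply exponential size bound completes the proof of Theorem~\ref{t:maintr}.

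I remark that the extension allowing some distinguished relations to be equivalences or partial orders costs essentially nothing: an equivalence relation is simply a transitive relation that is also symmetric and reflexive, so it never "diminishes" along a tree edge and the rank argument is vacuous for it, while a partial order is a transitive relation for which one additionally forbids proper symmetric edges — both constraints are preserved by all the constructions above because every $2$-type used is copied from $\str{A}$, and totalisation inside $[a_0]_{\mathcal{E}_{tot}}$ respects them. Hence the same triply exponential finite model exists in that more general setting as well.
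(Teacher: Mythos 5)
Your outline tracks the paper's chain (f1)--(f5) faithfully through (f4), and the idea of replacing generalized types by isomorphism types of subtrees of the regular model is exactly right. But the inductive lemma you sketch for (f4)$\leadsto$(f5) has two concrete gaps. First, you stratify pattern components into $2l$ layers, carrying over the two-variable count, and justify Reduction~1 by observing that a single $T_u$-path lives in at most $2l$ consecutive levels. That observation is true but insufficient: in the multi-variable case the object whose span must be controlled is $\str{W}_{\bar{a}}$ for a $t$-tuple $\bar{a}$ --- a union of up to $t$ witness structures glued together by transitive paths --- and such a connected union can span on the order of $2lt$ consecutive layers. With only $2l$ layers per component a connected $\str{W}_{\bar{a}}$ can wrap around the entire cylinder, so the claim that it avoids one of the two colour borders fails and the transection $\str{D}'_0$ would sever connections \emph{inside} $\str{W}_{\bar{a}}$. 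The paper uses $2l(2t+1)$ inner layers precisely to guarantee that no $t$-element connected piece of the Gaifman graph $G_l(\str{C}^\gamma_+)$ can touch both the first and the last $2l$ layers of a component.

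Second, your plan to discharge condition (a2) by arguing that ``declarations transported along $\fp$ stay locally consistent through the Reductions and the transitive closure, so Lemma~\ref{l:locglob} applies'' is not viable as stated: $\phi$-declarations and Lemma~\ref{l:locglob} are defined only for \emph{tree-like} structures, and $\str{A}'_0$ is emphatically not tree-like. In the two-variable case light declarations are carried in condition (b4) only because the $2$-type of a distant pair is reconstructible from the transitive connections via the maximality of light declarations; there is no analogue of that for $t$-tuples. The paper's actual device is different: it introduces fresh binary symbols $W^i$ encoding, in a regularity-preserving way, the chosen witnesses of each element; it replaces the declaration clause of (b4) by the requirement that the homomorphism of $\str{W}_{\bar{a}}$ has the \emph{subtree isomorphism property} $\str{A}_{\fp(a)}\cong\str{A}_{\fh(a)}$ for each $a\in\bar{a}$ and restricts to an isomorphism on each $\str{W}_a$; and it proves this by a bottom-up induction over the tree of subcomponents (Claim~\ref{c:joiningtr}), where regularity of $\str{A}$ is exactly what lets the sibling homomorphisms be glued together at the witnesses. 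Both the larger layer count and the $W^i$-based homomorphism invariant are new ingredients of the multi-variable construction, and your sketch is missing both.
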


Let us fix a finitely satisfiable normal form \UNFOTR{} formula $\phi$
over a signature $\sigma_{\cBase} \cup \sigma_{\cDist}$ for
$\sigma_{\cDist}=\{T_1, \ldots, T_{2k} \}$. 
Recall that we consider structures that for each $1\leq u\leq k$ interpret the transitive symbol $T_{2u}$ as the inverse of $T_{2u-1}$ and we sometimes write $T_{2u}^{-1}$ for $T_{2u-1}$ and $T_{2u-1}^{-1}$ for $T_{2u}$.
For a  set  
$\mathcal{E} \subseteq \sigma_{\cDist}$, closed under inverses, and $a \in A$ we denote by $[a]_{\mathcal{E}}$ the set consisting of $a$ and all elements $b \in A$ such that
$\str{A} \models T_u ab$ for all $T_u \in \mathcal{E}$. 
Note that $[a]_\mathcal{E}$ is either a singleton or each of the $T_u \in \mathcal{E}$ is total on $[a]_\mathcal{E}$, that is, for each $b_1, b_2 \in [a]_\mathcal{E}$ we have $\str{A} \models T_u b_1 b_2$ for
all $T_u \in \mathcal{E}$.  
We assume that $[a]_\emptyset=A$.
Fix a regular tree-like model $\str{A} \models \phi$, with linearly bounded degree, doubly exponentially bounded 
transitive paths (in this section we denote this bound by $\hat{M}_\phi$) and doubly exponentially many non-isomorphic subtrees, as in Lemma \ref{l:reg}.

We show how to build a `small' finite model $\str{A}' \models \phi$.
In our construction we inductively produce fragments of $\str{A}'$ in which all relations from some subset of $\sigma_{\cDist}$, that is closed under inverses, are total (or, for some technical reasons, such a fragment may be a singleton and then these relations do not need to be total). 
The induction 
is, essentially, over the number of the non-total $T_u$. Intuitively, if a relation is total then it plays no important role, so 
we may forget about it during the construction.
On the $l$-th  level of induction we produce a substructure for every isomorphism type of a subtree of $\str{A}$ 
and any (closed under taking inverses) combination of $2l$ non-total transitive relations, so that its each element has provided its \emph{partial $\phi$-witness structure}.
This partial $\phi$-witness structure is an isomorphic copy of the restriction of the $\phi$-witness structure (in $\str{A}$) for some $a\in A$ to the set $[a]_{\cE_{tot}}$ where $\cE_{tot}$ is the set of current total transitive relations.
Every substructure in the $l$-th level of induction is constructed by an appropriate arrangement of 
some number of basic building blocks, called \emph{components}. Each of the components is obtained by some number of  applications of 
the inductive assumption to situations in that one of the non-total transitive relations and its inverse 
are added to the set of total ones.

An important property of the substructures created during our inductive process is that they admit some partial homomorphisms to the pattern tree-like model $\str{A}$ which restricted to (partial) witness structures act as isomorphisms into the corresponding parts of the $\phi$-witness structures in $\str{A}$.  We impose that every homomorphism respects the
required condition using directly the structure of $\str{A}$.
To this end we introduce further fresh (non-transitive) binary symbols $W^i$ whose purpose is to relate elements to their witnesses.
We number
the elements of the $\phi$-witness structures in $\str{A}$ arbitrarily (recall that each element is a member of its own $\phi$-witness structure) and interpret $W^i$ in $\str{A}$ so that
for each $a,b\in A$, $\str{A}\models W^iab$ iff $b$ is the  $i$-th element of the $\phi$-witness structure for $a$ (from now, for short, we refer to the element $b$ satisfying $W^iab$ as the $i$-th witness for $a$).
Do this in such a way that if two subtrees of $\str{A}$ were isomorphic before interpreting the $W^i$ then they still are after such expansion of the structure.
Now, if we mark $b$ as the $i$-th witness for $a$ during the construction (that is set $\str{A}'\models W^iab$), then for any 
homomorphism $\fh$ we have $\str{A}\models W^i\fh(a)\fh(b)$.
To quickly sum up this operation, we encode the tree-like structure of $\str{A}$ with the relations $W^i$ in such a way, that the regularity of $\str{A}$ is respected (and thus, preserved).

 Let us formally state our inductive lemma. In this statement we do not explicitly include any bound on the size of
promised finite models, but such a bound will be implicit in the proof and will be presented later.

\begin{lemma}[Main construction]\label{l:finitetr} Let $a_0\in A$ and $\mathcal{E}_0\subseteq\sigma_{\cDist}$
be closed under inverses. 
Define $\mathcal{E}_{tot}=\sigma_{\cDist}\setminus\mathcal{E}_0$. 
Let $\str{A}_0=\str{A}_{a_0}\restr[a_0]_{\cE_{tot}}$. Then there exist a finite structure $\str{A}_0'$, a function $\fp:A_0'\to A_0$ and an element $a_0'\in A_0'$, called the \emph{\origin} of $\str{A}_0'$, such that
	 \begin{enumerate}[(b1)] 
	 	\item $\str{A}_0'$ has one element or every relation in $\cE_{tot}$ is total on it.\label{aone}
	 	
	 	\item $\fp(a_0')=a_0$.\label{atwo}
	 	
	 	\item For each $a'\in A_0'$ and each $i$, if the $i$-th witness for $\fp(a')$ lies in $A_0$ (that is $\str{A}_0\models\exists y\; W^i\fp(a')y$) then there exists a unique element $b'\in A_0'$ such that $\str{A}_0'\models W^i a'b'$. Otherwise there exists no such element. Denote $W_{a'}=\{b':\exists i\;\str{A}_0'\models W^i a'b'\}$ and for a tuple $\bar{a}$ let $W_{\bar{a}}=\bigcup_{a\in\bar{a}}W_{a}$. \label{atwohalf}
	 	
	 	\item For each $\bar{a}\subseteq A_0'$ satisfying $|\bar{a}|\leq t$ there exists a homomorphism $\fh:\str{W}_{\bar{a}}\to\str{A}_0$ such that for each $a\in\bar{a}$ we have $\str{A}_{\fp(a)}\cong\str{A}_{\fh(a)}$ and $\fh\restr W_a$ is an isomorphism (onto its image). Moreover, if $a_0'\in\bar{a}$ then we can choose $\fh$ so that $\fh(a_0')=a_0$.\label{athree}
	 	
	 	\item For each $a\in A_0'$ we have $\str{W}_a\cong\str{W}\restr A_0$ where $\str{W}$ is the $\phi$-witness structure for $\fp(a)$.
	 	 (Note that, by the definition of the $W^i$, each such isomorphism sends $a$ to $\fp(a)$.)
	 	\label{afour}
	 	
	 \end{enumerate}
\end{lemma}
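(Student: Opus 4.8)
The argument follows the blueprint of the two-variable case (Lemma~\ref{l:lfin}), and I will mostly indicate where it has to be strengthened. The plan is to induct on $l=|\mathcal{E}_0|/2$. In the base case $l=0$ we have $\mathcal{E}_{tot}=\sigma_{\cDist}$; adding, if necessary, a pair of auxiliary transitive symbols interpreted as the identity in $\str{A}$ we may assume that all classes $[a]_{\sigma_{\cDist}}$ are singletons, so $\str{A}_0$ is a singleton, and we take $\str{A}_0':=\str{A}_0$, $\fp:=\mathrm{id}$ and $a_0':=a_0$; conditions (b\ref{aone})--(b\ref{afour}) are then immediate, using for (b\ref{afour}) that $\str{W}\restr A_0$ collapses to the single element $a_0$. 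For the inductive step we fix $\mathcal{E}_0=\{T_1,\dots,T_{2l}\}$ and assume the lemma for all inverse-closed sets of size $2(l-1)$.

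As in Section~\ref{s:2var}, for every isomorphism type $\gamma$ of a subtree $\str{A}_a$ with $a\in A_0$ (there are, by the regularity of $\str{A}$, only doubly exponentially many) we build a \emph{pattern component} $\str{C}^\gamma$ split into $2l$ layers $L_1,\dots,L_{2l}$ and its \emph{extended} version $\str{G}^\gamma$ obtained by appending an interface layer $L_{2l+1}$; each $L_u$ is further divided into sublayers $L_u^1,\dots,L_u^{\hat{M}_\phi+1}$. Layer $L_u$ is responsible for \emph{killing} $T_u$ --- cutting $T_u$-connections from $L_u$ to $L_{u+1}$ --- while the passage between consecutive sublayers of $L_u$ decreases the $T_u$-rank of the patterns of elements. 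When a sublayer $L_u^{j,init}$ is expanded, for each of its elements $b$ we invoke the inductive assumption for $\fp(b)$ and $\mathcal{E}_0\setminus\{T_u,T_u^{-1}\}$, obtaining a \emph{subcomponent} $\str{B}_0$ whose origin is identified with $b$ and whose pattern function maps into $A_{\fp(b)}\cap[\fp(b)]_{\mathcal{E}_{tot}\cup\{T_u,T_u^{-1}\}}$; by the inductive (b\ref{atwohalf}) and (b\ref{afour}) this already supplies $b$ with the part of its $\phi$-witness structure lying in $[\fp(b)]_{\mathcal{E}_{tot}\cup\{T_u,T_u^{-1}\}}$. The genuinely new point is witness completion: a $\phi$-witness structure now spans $\fp(b)$ together with several of its children, which may be attached to $\fp(b)$ and to one another by different combinations of transitive relations. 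For each child $c$ of $\fp(b)$ that lies in $[\fp(b)]_{\mathcal{E}_{tot}}$ but not in $[\fp(b)]_{\mathcal{E}_{tot}\cup\{T_u,T_u^{-1}\}}$ we add a fresh copy $c'$, placed in $L_u^{j+1,init}$ if $T_u$ is merely diminished at $(\fp(b),c)$ and in $L_{u+1}^{1,init}$ if $T_u$ is killed there, set $\fp(c')=c$, mark $c'$ by the appropriate $W^i$, and copy into $\str{G}^\gamma$ all $2$-types of $\str{A}$ between $\fp(b)$ and these children and among the children themselves. One then has to re-prove the analogue of Claim~\ref{c:lnodanger} --- the $T_u$-rank of a pattern does not grow inside a subcomponent and strictly drops when passing to the next sublayer --- so that no witness is ever pushed past $L_u^{\hat{M}_\phi+1}$.

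The components are assembled into $\str{A}_0'$ exactly as in Section~\ref{s:2var}: one takes two-coloured copies $\str{G}^{\gamma,g}_{i,\gamma'}$, identifies the $i$-th interface element of every $\str{G}^{\gamma,g}_{\cdot,\cdot}$ with the root of $\str{G}^{\gamma',1-g}_{i,\gamma}$ for the matching subtree type $\gamma'$, restricts to the components reachable from a distinguished copy $\str{C}^{\gamma_{a_0},0}_{\bot,\bot}$, and transitively closes all relations of $\sigma_{\cDist}$; the origin of $\str{A}_0'$ is the root of that copy. As before $\str{A}_0'$ sits on a cylinder of $4l$ levels, and the decisive invariant is that a transitive path of the pre-closure structure crosses at most one of the two colour borders. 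Conditions (b\ref{aone}) (totality of $\mathcal{E}_{tot}$, or singularity), (b\ref{atwo}), (b\ref{atwohalf}) (each element receives exactly the $W^i$-witnesses whose targets lie in $A_0$, interface elements inheriting theirs from the component glued to them) and (b\ref{afour}) (the witness structures are isomorphic to the corresponding restrictions of $\phi$-witness structures of $\str{A}$) are then established by essentially the same bookkeeping as in the two-variable proof, once one checks --- via the cylinder invariant and the two Reductions of Appendix~\ref{s:cor} --- that the transitive closure creates no spurious $T_u$-loop and hence enlarges no $1$- or $2$-type beyond what the construction already contains.

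The core of the proof, and the place where the new machinery genuinely bites, is condition (b\ref{athree}): for every $\bar a\subseteq A_0'$ with $|\bar a|\le t$ we must exhibit a homomorphism $\fh\colon\str{W}_{\bar a}\to\str{A}_0$ with $\str{A}_{\fp(a)}\cong\str{A}_{\fh(a)}$ and $\fh\restr W_a$ an isomorphism for every $a\in\bar a$, taking $a_0'$ to $a_0$ whenever $a_0'\in\bar a$. The plan mirrors the proof of (b\ref{blfour}) in Section~\ref{s:2var}: given the transitive connections required between components, one uses the cylinder invariant to discard everything beyond the unused colour border (Reduction~1), then projects all copies of the ``$\gamma$-side'' components onto a single one (Reduction~2), landing in a structure $\str{F}_0'$ that is one tree of subcomponents of roughly twice a component's height; on $\str{F}_0'$ a shortest path between two elements is simultaneously a $T_u$-path for every $T_u$ relating them, so one can glue the inductively supplied homomorphisms of the individual subcomponents along that tree, using their condition (b\ref{athree}) internally and the copied $2$-types at the witness edges. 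Since the auxiliary relations $W^i$ encode the whole tree-like shape of $\str{A}$ and were interpreted so as not to break the regularity of $\str{A}$, any homomorphism obtained this way automatically preserves subtree-isomorphism types and restricts to isomorphisms on witness structures; this is precisely where the regularity provided by Lemma~\ref{l:reg} is indispensable, both to keep the number of components (and the whole construction) finite and to let $\fp$ transport subtree types faithfully. I expect this homomorphism argument --- threading the interaction of the transitive relations through the Reductions and verifying that the pieces glue consistently --- to be the main obstacle, with the witness-completion step of the component construction a close second. Finally, applying the lemma with $\mathcal{E}_0=\sigma_{\cDist}$ (so $\mathcal{E}_{tot}=\emptyset$ and $[a_0]_{\mathcal{E}_{tot}}=A$) and $a_0$ the root of $\str{A}$ yields, by Lemma~\ref{l:homomorphisms} together with (b\ref{atwohalf})--(b\ref{afour}), a finite model of $\phi$, whose size a routine recurrence on the number of layers --- the exact analogue of the one following Lemma~\ref{l:lfin} --- bounds triply exponentially in $|\phi|$, giving Theorem~\ref{t:maintr}.
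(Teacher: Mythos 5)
Your proposal follows the paper's general blueprint -- induction on $|\mathcal{E}_0|/2$, pattern components built from subcomponents with witness completion, coloured joining, cylinder reductions, and a homomorphism glued along a tree of subcomponents -- but it carries over a quantitative parameter from the two-variable case that no longer works, and this is a genuine gap. You give a pattern component $2l$ inner layers $L_1,\ldots,L_{2l}$ with interface at $L_{2l+1}$, hence a cylinder of $4l$ levels, and you state the decisive invariant as ``a transitive path of the pre-closure structure crosses at most one of the two colour borders.'' In the two-variable proof that is exactly what is needed, because condition (b\ref{blfour}) there only concerns a \emph{pair} of elements connected by $T_u$-paths, and a single pre-closure $T_u$-path is confined to at most $2l+1$ consecutive levels. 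In the general case, condition (b\ref{athree}) asks for a homomorphism defined on all of $\str{W}_{\bar a}$ for $|\bar a|\le t$, and the relevant object for Reduction~1 is a \emph{connected component of $\str{W}_{\bar a}$ in the Gaifman graph of the non-$\mathcal{E}_{tot}$ relations}. Such a piece interleaves up to $t$ witness structures (each sitting in two consecutive layers) with transitive paths (each spanning roughly $2l$ levels), so it can stretch over on the order of $2lt$ levels. On your $4l$-level cylinder it could wrap around and touch both colour borders, so neither border could be safely cut and Reduction~1 would fail. The paper's construction uses $2l(2t+1)$ inner layers per component, so the cylinder has $4l(2t+1)$ levels, and formulates the invariant as: no connected induced subgraph of $G_l(\str{C}^\gamma_+)$ of size $t$ meets both the first $2l$ and the last $2l$ layers of a component. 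Your plan needs this enlarged layer count and this stronger invariant; without them the correctness of (b\ref{athree}) cannot be established by the Reductions.

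Two smaller points. The observation ``a shortest path in $\str{F}_0'$ between two elements is simultaneously a $T_u$-path for every $T_u$ relating them'' is the two-variable mechanism for (b\ref{blfour}); in the general case the paper proves a sharper statement (Claim~\ref{c:joiningtr}) by bottom-up induction over the tree of subcomponents, assembling a homomorphism defined on the whole of $\str{V}_{\bar a}$, not on a single pair, and the ``shortest path'' remark only surfaces implicitly inside that argument. Also, the place where regularity bites is worth making explicit: when gluing the child homomorphism $\fh_i$ (with $\fh_i(b_i)=\fp(b_i)$) onto the extension $\fh_0^*$ of $\fh_0$ across a witness edge, one needs $\str{A}_{\fh_0^*(b_i)}\cong\str{A}_{\fp(b_i)}$ in order to re-root $\fh_i$ at the new image; this follows from the subtree-isomorphism part of the inductive hypothesis together with the fact that the auxiliary $W^i$ were interpreted so as to respect the regularity of $\str{A}$. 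Your appeal to regularity is pointed in the right direction but should be pinned to this re-rooting step.
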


Before we prove Lemma \ref{l:finitetr} let us observe that it indeed
allows us to build a particular finite model of $\phi$. 
Apply Lemma \ref{l:finitetr} to $\mathcal{E}_0=\mathcal{E}$ (which means that $\mathcal{E}_{tot}=\emptyset$)
and  $a_0$ being the root of $\str{A}$ (which means that $\str{A}_0=\str{A}$).
We use Lemma \ref{l:homomorphisms} to see that the obtained structure $\str{A}_0'$ is a model of $\phi$. Indeed, Condition (a1) of Lemma \ref{l:homomorphisms} follows directly from Condition (b\ref{afour}), as the structures $\str{W}_a$ from (b\ref{afour}) are full $\phi$-witness structures in this case, Condition (a2) is implied by Condition (b\ref{athree}) (since  $\bar{a}\subseteq\str{W}_{\bar{a}}$ and $\fh\restr W_a$ is an isomorphism, $\fh\restr\bar{a}$ preserves 1-types).

The proof of Lemma \ref{l:finitetr} goes by induction on  $l=|\mathcal{E}_0|/2$. 
In the base of induction, $l=0$, we have $\mathcal{E}_{tot}=\sigma_{\cDist}$. Without loss of generality we may assume that for each $a\in A$ the set $[a]_{\sigma_{\cDist}}$ has cardinality 1. If this is not the case, we simply add artificial transitive relations $T_{2k+1}$ and $T_{2k+2}$ and
interpret them as the identity in $\str{A}$ (this way they also satisfy the requirement of being each other's inverses).
We simply take $\str{A}'_0:=\str{A}_0= \str{A}\restr\{a_0\}$ and set $\fp(a_0)=a_0$. It is
readily verified that the conditions (b\ref{aone})--(b\ref{afour}) are then satisfied.

For the inductive step assume that Lemma \ref{l:finitetr}  holds for arbitrary closed under inverses set  
of size $2(l-1)$.
We show that then it holds for (closed under inverses) $\mathcal{E}_0$ of size $2l$. Without loss of generality 
we assume that $\mathcal{E}_0=\{T_1,\ldots,T_{2l}\}$. 
In the next two subsections we present a construction of $\str{A}_0'$ and then, in the following subsection, we argue that it is correct. Finally
we estimate the size of the produced models and establish the complexity of the finite satisfiability problem.

\subsubsection{Pattern components}

We plan to construct $\str{A}_0'$ out of basic building blocks called \emph{components}. Each component will be 
an isomorphic copy of some \emph{pattern component}. 
Let $\GGG[A_0]$ be the set of isomorphism types of subtrees of $\str{A}$ rooted at $A_0$.
Note that this way we overload the notation $\GGG[A_0]$ which in the two-variable case meant the set of 
generalized types realized in $\str{A}_0$.
For every $\gamma \in \GGG[A_0]$ we construct
a pattern component $\str{C}^\gamma$ as well as the \emph{extended pattern component} $\str{G}^{\gamma}$.
An important difference, compared to the two-variable case, is that components in this section will have more layers.

The extended pattern component, $\str{G}^\gamma$, is a finite structure whose universe
is divided into $2l(2t+1)$ \emph{inner layers}, 
$L_1,\ldots,L_{2l(2t+1)}$, and a single \emph{interface layer}, denoted $L_{2l(2t+1)+1}$.
The structure $\str{C}^{\gamma}$ is obtained by the restriction of $\str{G}^{\gamma}$ to its inner layers.
Each inner layer $L_i$ is further divided into  \emph{sublayers} $L_i^1,L_i^2,\ldots, L_i^{\hat{M}_\phi+1}$.
Additionally, in each sublayer $L_i^j$ its initial part $L_i^{j,init}$ is distinguished.
In particular, $L_1^{1,init}$ consists of a single element called the \emph{root}.
The interface layer $L_{2l(2t+1)+1}$ has no internal division but, for convenience, is sometimes
referred to as $L_{2l(2t+1)+1}^{1,init}$.
 The elements of $L_{2l(2t+1)}$ are called \emph{leaves} and 
the elements of $L_{2l(2t+1)+1}$ are called \emph{interface elements}. The structure of components is similar to 
the structure of components for the two-variable case, as depicted in Fig.~\ref{f:llayers}, just recall that the
number of layers is now bigger.

$\str{G}^\gamma$ (and $\str{C}^{\gamma}$) will have a shape resembling a tree, with  structures obtained by the inductive assumption as nodes, though
it will not be tree-like in the sense of Section \ref{s:treelike} (in particular, the internal structure of nodes may
be very complicated). All elements of $\str{G}^\gamma$, except for the interface elements, will have appropriate partial $\phi$-witness structures provided.

We remark that during the process of building a pattern component we do not yet apply the
transitive closure to the distinguished 
relations. Postponing this step is not important from the point of view
of the correctness of the construction, but will allow us for a simpler presentation of the proof of its correctness.
Given a pattern component $\str{C}$ we will sometimes denote by $\str{C}_+$ the structure obtained from $\str{C}$ by
applying all the appropriate transitive closures. 

The crucial property we want to enforce is that the root of $\str{C}^\gamma$ will be \emph{far} from its leaves in the following sense. 
Denote by $G_l(\str{S})$,  for a $\sigma$-structure $\str{S}$, the  Gaifman graph of the structure obtained by removing from $\str{S}$ 
the relations belonging to $\cE_{tot}$.
Then
there is no connected induced subgraph of $G_l(\str{C}^\gamma_+)$ of size $t$ containing an element of one of the first $2l$ layers
and, simultaneously, an element of one of the last $2l$  
layers of $\str{C}^\gamma$.

The role of every inner layer $L_i$ is, speaking informally, to \emph{kill} one of the ${T}_u$, that is to cause that
there will be no ${T}_u$-connections from $L_i$  to $L_{i+1}$. 
See the right part of Fig.~\ref{f:llayers}.
The role of sublayers, on the other hand, is to decrease the ${T}_u$-rank of elements. 
The purpose of the interface layer, $L_{2l(2t+1)+1}$, is to join the connect the component with other components.

Now we proceed to the construction of $\str{G}^{\gamma}$.
If $\gamma=\gamma_{a_0}$ is the type of
$\str{A}_{a_0}$
then take $a=a_0$; otherwise take any element $a \in A_0$ that is the root of a subtree of type $\gamma$.
Define $L_1^{1,init}=\{a'\}$ for a fresh $a'$, setting $\type{\str{G}^\gamma}{a'}=\type{\str{A}}{a}$ and $\fp(a')=a$. 

\smallskip\noindent
\emph{Construction of an inner layer}: Let $1\leq i\leq 2l(2t+1)$. Assume we have defined layers $L_1,\ldots,L_{i-1}$, the initial part of sublayer $L_i^1$, $L_i^{1,init}$, and both the structure of $\str{G}^\gamma$ and the values of $\fp$ on $L_1 \cup \ldots \cup L_{i-1} \cup L_i^{1,init}$. 
Let $v=1+(i-1\mod 2l)$.
We are going to kill ${T}_v$. 
We now expand $L_i^{1, init}$ to full layer $L_i$.

\smallskip\noindent
\emph{Step 1: Subcomponents.} Assume that we have defined 
sublayers $L_i^{1}, \ldots,L_i^{j,init}$, and both the structure of $\str{G}^\gamma$ and the values of $\fp$ on
$L_1 \cup \ldots \cup L_{i-1} \cup L_i^{1} \cup \ldots \cup  L_i^{j,init}$. 
For each  $b\in L_i^{j,init}$ perform independently the following procedure. 
Apply the inductive assumption to $\fp(b)$ and the set $\mathcal{E}_0 \setminus\{T_v,T_v^{-1}\}$ obtaining a structure $\str{B}_0$, its \origin{}
$b_0$ and a function $\fp_b:B_0\to A_{\fp(b)}\cap[\fp(b)]_{\cE_{tot}\cup\{T_v,T_v^{-1}\}}\subseteq A_0$ with $\fp_b(b_0)=\fp(b)$. 
Identify $b_0$ with $b$ and add the remaining elements of $\str{B}_0$ to $L_i^j$, retaining the structure. 
Substructures $\str{B}_0$ of this kind will be called \emph{subcomponents} (note that all appropriate relations
are transitively closed in subcomponents).
Extend $\fp$ so that $\fp\restr B_0=\fp_b$. This finishes the
definition of $L_i^j$.

\smallskip\noindent
\emph{Step 2: Providing witnesses.} For each $b\in L_i^j$ independently perform the following procedure. Let $\str{B}_0$ be the subcomponent created inductively in \emph{Step 1}, such that $b \in B_0$. Let $\str{W}$ be the $\phi$-witness structure for $\fp(b)$ in $\str{A}$. Let $\str{E}=\str{W}\restr[\fp(b)]_{\cE_{tot}}$ and $\str{F}=\str{W}\restr[\fp(b)]_{\cE_{tot}\cup\{T_v,T_v^{-1}\}}$. 
Note that $\str{F}$ is a substructure of $\str{E}$. By 
(b\ref{afour}) $b$ has the partial $\phi$-witness structure $\str{F}'$, isomorphic to $\str{F}$, provided in $\str{B}_0$. 
Extend $\str{F}'$ in $\str{G}^\gamma$ to an isomorphic copy $\str{E}'$ of $\str{E}$. 
The structure $\str{E}'$ will be the structure $\str{W}_b$ in $\str{G}^{\gamma}$ and then in $\str{A}_0'$.
The elements of $E' \setminus F'$ are fresh, and are assigned their
sublayers as follows.
For $c\in E'\setminus F'$ if $\str{E}'\models{T}_vbc$ (observe that in 
this case $\str{E}' \models \neg {T}_vcb$) then add $c$ to $L_i^{j+1,init}$, otherwise add $c$ to $L_{i+1}^{1,init}$. 
See Fig.~\ref{f:witnessestr}.
Take as the values of $\fp\restr (E'\setminus F')$ the corresponding elements of $E\setminus F$.

\begin{figure}
\begin{center}
\begin{tikzpicture}[scale=0.45]
\draw (6,8) -- (10,8) -- (11,4) -- (10,2) -- (6,2) -- (5,4) -- (6,8);
\draw[fill=gray!20] (5,4) -- (6,8) -- (7,4) -- (5,4);
\draw[fill=gray!10] (5,4) -- (0,4) -- (6,8) -- (5,4);

\fill[black] (4,4) circle (0.12);
\fill[black] (3,4) circle (0.12);
\fill[black] (2,4) circle (0.12);
\fill[black] (1,4) circle (0.12);

\fill[black] (6,8) circle (0.12);
\fill[black] (5.5,4) circle (0.12);
\fill[black] (6.5,4) circle (0.12);

\draw[->, thick] (6,8) -- (3.1,4.1);
\draw[->, thick] (6,8) -- (4.1,4.1);

\draw[fill=gray!20] (19,4) -- (20,8) -- (21,4) -- (19,4);
\draw[fill=gray!10] (19,4) -- (14,4) -- (20,8) -- (19,4);
\draw (20,8) -- (26,4);
\draw (26,4) -- (19,4);
\draw[dotted] (26,4)--(28,0);
\draw[dotted] (14,4)--(12,0);

\fill[black] (18,4) circle (0.12);
\fill[black] (17,4) circle (0.12);
\fill[black] (16,4) circle (0.12);
\fill[black] (15,4) circle (0.12);

\fill[black] (20,8) circle (0.12);
\fill[black] (19.5,4) circle (0.12);
\fill[black] (20.5,4) circle (0.12);

\draw[->, thick] (20,8) -- (17.1,4.1);
\draw[->, thick] (20,8) -- (18.1,4.1);

\draw[bend left=20, ->, very thin] (6.1,8.1) to (19.9,8.1);

\draw [
    decoration={
        brace,
        mirror,
        raise=0.5cm
    },
    decorate
] (0.6,4.7) -- (2.4,4.7);

\coordinate [label=center:$L_{i+1}^{1,init}$] (A) at ($(1.7,1)$);
\draw[->] (1.5,3.3) -- (1.5,2);

\draw [
    decoration={
        brace,
        mirror,
        raise=0.5cm
    },
    decorate
] (2.6,4.7) -- (4.4,4.7);
\draw[->] (3.5,3.3) -- (3.5,-0.4);
\coordinate [label=center:$L_i^{j+1,init}$] (A) at ($(4,-1.3)$);

\draw [
    decoration={
        brace,
        mirror,
        raise=0.5cm
    },
    decorate
] (5,2.5) -- (11,2.5);

\coordinate [label=center:in $L_i^j$] (A) at ($(8,-0.3)$);


\coordinate [label=center:${E}{\setminus}F$] (A) at ($(16.3,4.6)$); 

\coordinate [label=center:${{E}'{\setminus}{F}'}$] (A) at ($(2.3,4.6)$); 

\coordinate [label=center:$F$] (A) at ($(20,4.6)$); 

\coordinate [label=center:$F'$] (A) at ($(6,4.6)$); 

\coordinate [label=center:$\str{B}_0$] (A) at ($(9,4)$); 

\coordinate [label=center:$\str{W}$] (A) at ($(24,6.5)$);

\coordinate [label=center:$b$] (A) at ($(6,8.7)$); 
\coordinate [label=center:$\fp(b)$] (A) at ($(20,8.7)$); 

\coordinate [label=center:$\str{A}_{\fp(b)}$] (A) at ($(25,2.5)$);

\end{tikzpicture}
\caption{Providing witnesses. Thick arrows denote ${T}_u$-connections.}%
\label{f:witnessestr}%
\end{center}
\end{figure}
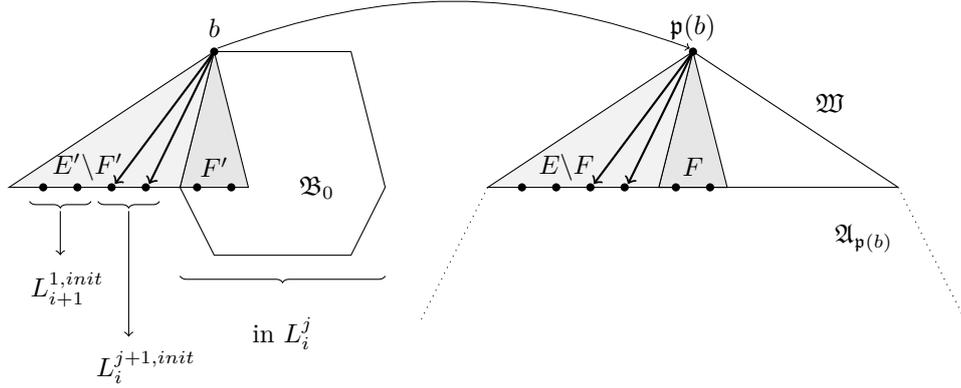

An attentive reader may be afraid that when adding witnesses for elements of the last sublayer $L_{i}^{\hat{M}_\phi+1}$
of $L_i$ we may want to add
one of them to the non-existing layer $L_{i}^{\hat{M}_\phi+2}$. 
There is however no such danger, which follows from the following claim. 

\begin{claim} \label{c:nodanger}
	(i) Let $b\in L_i^{j,init}$ and let $\str{B}_0$ be the subcomponent created for $b$ in \emph{Step 1}. Then for all $b'\in\str{B}_0$ we have $\rank{\str{A}}{v}{\fp(b)}\geq\rank{\str{A}}{v}{\fp(b')}$.
	(ii) Let $b\in L_i^j$ and let $\str{E}',\str{F}'$ be the partial $\phi$-witness structures for $b$ considered in \emph{Step 2}. Then for any $c\in E'\setminus F'$ such that $\str{E}'\models{T}_vbc$ (so $c\in L_i^{j+1}$) the inequality $\rank{\str{A}}{v}{\fp(b)}>\rank{\str{A}}{v}{\fp(c)}$ holds. 
\end{claim}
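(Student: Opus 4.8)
The plan is to follow the same two-step pattern as in the proof of Claim~\ref{c:lnodanger}, now feeding in two features of the inductive construction: for~(i), that the function $\fp_b$ produced in \emph{Step~1} maps $B_0$ into $A_{\fp(b)}\cap[\fp(b)]_{\cE_{tot}\cup\{{T}_v,{T}_v^{-1}\}}$; for~(ii), that in \emph{Step~2} a witness $c$ lands in layer $L_i^{j+1}$ only when $c\in E'\setminus F'$ and $\str{E}'\models{T}_vbc$, where $\fp\restr E'$ is an isomorphism of $\str{E}'$ onto the induced substructure $\str{E}=\str{W}\restr[\fp(b)]_{\cE_{tot}}$ of $\str{A}$ sending $b$ to $\fp(b)$, and $\str{F}'$ corresponds under it to $\str{F}=\str{W}\restr[\fp(b)]_{\cE_{tot}\cup\{{T}_v,{T}_v^{-1}\}}$.

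For part~(i) I would fix $b'\in B_0$, put $c:=\fp(b')$, and assume $c\neq\fp(b)$ (otherwise the inequality is trivial). By the first fact above, $\str{A}\models{T}_v\fp(b)c$ and $\str{A}\models{T}_v^{-1}\fp(b)c$, equivalently $\str{A}\models{T}_vc\,\fp(b)$. A short tree-reasoning argument then shows that the tree path $\fp(b)=d_0,\ldots,d_m=c$ is a downward-${T}_v$-path on which $\str{A}\models{T}_vd_\ell d_{\ell+1}\wedge{T}_vd_{\ell+1}d_\ell$ holds at every edge, so it has ${T}_v$-rank $0$: any family-path witnessing a connection of an ancestor to a descendant passes, in order, through every node of the tree path between them, whence by transitivity of ${T}_v$ (resp.\ ${T}_v^{-1}$) both connections descend to all tree edges. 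Prepending $d_0,\ldots,d_{m-1}$ to an arbitrary downward-${T}_v$-path starting at $c$ then gives a downward-${T}_v$-path starting at $\fp(b)$ of the same ${T}_v$-rank, and taking the supremum over all such paths yields $\rank{\str{A}}{v}{\fp(b)}\geq\rank{\str{A}}{v}{\fp(b')}$.

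For part~(ii) I would take $c\in E'\setminus F'$ with $\str{E}'\models{T}_vbc$. Pushing forward along $\fp\restr E'$ gives $\str{A}\models{T}_v\fp(b)\fp(c)$, and since $\fp(c)$ is an element of $E\subseteq W$ distinct from $\fp(b)$, it is a child of $\fp(b)$ in $\str{A}$. As $\fp(c)\in[\fp(b)]_{\cE_{tot}}\setminus[\fp(b)]_{\cE_{tot}\cup\{{T}_v,{T}_v^{-1}\}}$ and $\str{A}\models{T}_v\fp(b)\fp(c)$ already holds, the failure of the stronger membership must come from $\str{A}\models\neg{T}_v^{-1}\fp(b)\fp(c)$, i.e.\ $\str{A}\models\neg{T}_v\fp(c)\fp(b)$ (this is exactly the observation recorded in \emph{Step~2} that $\str{E}'\models\neg{T}_vcb$). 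Hence prepending the tree edge $\fp(b)\to\fp(c)$ to any downward-${T}_v$-path starting at $\fp(c)$ produces a downward-${T}_v$-path starting at $\fp(b)$ whose ${T}_v$-rank is larger by exactly one; since the ranks in $\str{A}$ are finite (its transitive paths are bounded), this gives $\rank{\str{A}}{v}{\fp(b)}\geq\rank{\str{A}}{v}{\fp(c)}+1>\rank{\str{A}}{v}{\fp(c)}$.

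The one point needing care is the tree-reasoning step used in~(i): in a tree-like structure---which, in contrast to a light tree-like structure, permits direct connections between siblings---one must check that a ${T}_v$-connection from an ancestor to a descendant forces ${T}_v$ on every edge of the tree path joining them (and likewise for ${T}_v^{-1}$). Given this, the rank bookkeeping---zero contribution from the prepended path in~(i), exactly one unit from the prepended edge in~(ii)---is immediate, and the remainder is a direct unwinding of the definitions in \emph{Step~1} and \emph{Step~2}.
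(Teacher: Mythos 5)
Your proof is correct and takes essentially the same route as the paper's: read off the $T_v$-facts guaranteed by Steps~1 and~2 of the construction (the containment $\fp_b(B_0)\subseteq A_{\fp(b)}\cap[\fp(b)]_{\cE_{tot}\cup\{T_v,T_v^{-1}\}}$ for~(i), and $\str{A}\models T_v\fp(b)\fp(c)\wedge\neg T_v\fp(c)\fp(b)$ for~(ii)) and propagate them through the rank definitions, exactly as in Claim~\ref{c:lnodanger}. The only substantive difference is that you spell out the tree-reasoning step that the paper's two-line argument leaves implicit, namely that the family-$T_v$-path witnessing $T_v\fp(b)\fp(b')$ must pass, in order, through every node of the tree path between them (since entering each nested subtree $A_{t_{i+1}}$ forces the path through $t_{i+1}$), so the prepended segment is a genuine downward-$T_v$-path; that verification is correct.
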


\begin{proof} 
	(i) By the inductive assumption applied to $\str{B}_0$, $\fp(b')=\fp_b(b')\in\str{A}_{\fp(b)}\restr[\fp(b)]_{\cE_{tot}\cup\{T_v\cup T_v^{-1}\}}$.
	This means that any downward-$T_v$-path (in $\str{A}$) starting in $\fp(b')$ can be extended to one starting in $\fp(b)$.
	Therefore $\rank{\str{A}}{v}{\fp(b)}\geq\rank{\str{A}}{v}{\fp(b')}$.
	
	\noindent
	(ii) By the choice of $\str{E}'$ and $\str{F}'$ we have $\str{E}'\models{T}_vbc\wedge\neg{T}_vcb$, thus by the choice of $\fp$ we have $\str{A}\models{T}_v\fp(b)\fp(c)\wedge\neg{T}_v\fp(c)\fp(b)$ and finally $\fr_v(\fp(b))>\fr_v(\fp(c))$. 
	\end{proof}
	
	Hence, when moving from $L_i^{j}$ to $L_i^{j+1}$ the ${T}_v$-ranks of pattern elements for the elements of these sublayers strictly decrease. Since these ranks are bounded by $\hat{M}_\phi$, then, even if the ${T}_v$-ranks of the patterns of some elements of $L_i^{1}$ are equal to $\hat{M}_\phi$, then, if $L_i^{\hat{M}_\phi+1}$ is non-empty,
	the ${T}_v$-ranks of the patterns of its elements must be $0$, which means that they cannot have witnesses connected to them one-directionally by ${T}_v$.

The construction of  $\str{G}^\gamma$ is finished when the interface layer, $L_{2l(2t+1)+1}$ is defined (recall that it has only
its `initial part'). 

\subsubsection{Joining the components}
 
In this section we take some number of copies of extended pattern components and arrange them into the desired structure $\str{A}'_0$,
identifying interface elements of some components with the roots of some other.
Some care is needed in this process in order
to avoid any modifications of the internal structure
of closures $\str{C}_+$ of components $\str{C}$, which could potentially result from transitivity of relations. 
In particular we need to ensure that if for some $u$ a pair of elements of a component $\str{C}$ is not connected by 
${T}_u$ inside $\str{C}$, then it will not become connected by a chain of ${T}_u$-edges external to $\str{C}$.

We create a pattern component $\str{C}^\gamma$ together with its extension $\str{G}^{\gamma}$ for every $\gamma \in \GGG[A_0]$. 
Let $max$ be the maximal number of interface elements across all the $\str{G}^{\gamma}$. For each $\str{G}^{\gamma}$ we 
number its interface elements arbitrarily using the numbers from $1$ up to, potentially, $max$.

For each $\gamma\in\GGG[A_0]$ we take copies $\str{G}^{\gamma,g}_{i,\gamma'}$ of $\str{G}^{\gamma}$ for $g\in\{0,1\}$ ($g$ is often called a \emph{color}; it is advised to think that if an extended component is of color $g$, then the elements of the inner layer of the component are of color $g$, while the elements of the interface layer are of color $1-g$ since the latter will become identified with some roots of components of color $1-g$), $1\leq i\leq max$ and $\gamma'\in\GGG[A_0]$, together with the previously chosen numbering of the interface elements. We also take an additional copy $\str{G}^{\gamma_{a_0},0}_{\bot,\bot}$ of $\str{G}^{\gamma_{a_0}}$. Its root will become the \origin{} of the whole $\str{A}_0'$.

 For each $\gamma$, $g$ consider components of the form $\str{G}^{\gamma,g}_{\cdotp,\cdotp}$. Perform the following
procedure for each $i$---the number of an interface element. 
Let $b$ be the $i$-th interface element of any such component, let  $\gamma'$ be the type of $\str{A}_{\fp(b)}$.
Identify the $i$-th interface elements of all $\str{G}^{\gamma,g}_{\cdotp,\cdotp}$
with the root $c_0$ of $\str{G}^{\gamma',1-g}_{i,\gamma}$. See Fig.~\ref{f:reductions}

Note that the values of $\fp(c_0)$ and $\fp(b)$ (the latter equals to the value of $\fp$ on the $i$-th interface element in all the $\str{G}^{\gamma,g}_{\cdotp,\cdotp}$) may differ. However, by construction,   $\str{A}_{\fp(b)}\cong\str{A}_{\fp(c_0)}$
(in particular, the 1-types of $b$ and $c_0$ match). For the element $c^*$ obtained in this identification step we 
define $\fp(c^*)=\fp(c_0)$.

For the extended component $\str{G}^{\gamma,g}_{i,\gamma'}$ denote $\str{C}^{\gamma,g}_{i,\gamma'}$ its restriction to its interface elements (being naturally a copy of the pattern component $\str{C}^{\gamma}$).
Define the graph of components used in the above construction, $G^{comp}$, by joining two components by an edge iff we identified an interface element of extension of one of them with the root of the other. Take $\str{A}_0^0$ as the structure restricted to the components accessible from $\str{C}^{\gamma_{a_0},0}_{\bot,\bot}$ in $G^{comp}$. 
Note that in $\str{A}_0^0$ we still do not take the transitive closures of relations. 
We define $\str{A}_0'$ by transitively closing all appropriate relations in $\str{A}_0^0$. 
Later we will keep using the convention of marking some auxiliary structures in which the transitive closures are not yet applied
with the superscript $0$.
Finally, we choose as the \origin{} $a_0'$ of $\str{A}_0'$ the root of the pattern component $\str{C}^{\gamma_0,0}_{\bot,\bot}$.  

\subsubsection{Correctness of the construction}

(b\ref{aone}) By the construction, after taking the transitive closures, on each (extended) pattern component either all $\cE_{tot}$ relations are total or it consists of one element.
Next observe that if two extended components get joined, then at least one of them has cardinality greater than $1$ and all $\cE_{tot}$ relations, after taking the transitive closure, are total on their sum.
So, by the definition of the graph of components $G^{comp}$, all $\cE_{tot}$ relations are total on $\str{A}_0'$ of it consists of a single element.

\noindent
(b\ref{atwo}) As $a_0'$ we take the root of $\str{C}^{\gamma_{a_0},0}_{\bot,\bot}$. Recall that we explicitly map
the root of the pattern component $\str{C}^{\gamma_{a_0}}$  by $\fp$ to $a_0$.

\noindent
(b\ref{atwohalf}) The interpretations of the $W^i$ are defined in the step of providing witnesses where, implicitly, we take
care of this condition for every element $a'$ of the inner layers by extending the fragment of the partial $\phi$-witness structure
 for $a'$
 created on the previous level of induction
by a copy of a further fragment of the same pattern  $\phi$-witness structure.  Note also that during the step of joining the components all the interface elements become identified with some roots, which are elements of inner layers, and that the identifications 
do not spoil the required property.

\noindent
(b\ref{athree})
 For simplicity, let us ignore the `moreover' part of this condition for some time. We will explain how to take care of it near the end of this proof.
  Now we find a homomorphism $\fh$ such that 
$\str{A}_{\fp(a)}\cong\str{A}_{\fh(a)}$ for all $a\in\bar{a}$ (we say that such a homomorphism has \emph{the subtree isomorphism property}). 
Later we will show that its restrictions to the substructures $\str{W}_a$  are indeed isomorphisms. 
The proof starts with several homomorphic reductions which show that instead of $\str{A}_0'$ we can consider a structure looking like a pattern component but twice as high. 

\smallskip
\noindent
\emph{Reduction 0.}
First observe that if $\str{A}_0$ consists of just one element, then the structure $\str{A}_0'$ consists of one element and the only map $\fh:A_0'\to A_0$ is the required homomorphism. For the rest of the proof of (b\ref{athree}) we assume the $A_0$ has cardinality at least $2$, in particular all $\cE_{tot}$ relations are total on it.
Consider a tuple $\bar{a}\subseteq {A}_0'$ such that $|\bar{a}|\leq t$. 
 Observe that for each $a\in\bar{a}$ the structure $\str{W}_a$ is connected in $G_l(\str{A}_0'\restr W_{\bar{a}})$ (recall the definition of Gaifman graph $G_l(\str{S})$ and the interpretation of the symbols $W^i$).
Let $\str{W}_{\bar{a}_1},\ldots,\str{W}_{\bar{a}_K}$ be the connected components of $\str{W}_{\bar{a}}$ in $G_l(\str{A}_0'\restr W_{\bar{a}})$.
If we have homomorphisms $\fh_i:\str{W}_{\bar{a}_i}\to\str{A}_0$ satisfying the subtree isomorphism property then we can take $\fh=\bigcup\fh_i:\str{W}_{\bar{a}}\to\str{A}_0$ which is a homomorphism, 
since all $\cE_{tot}$ relations are total on $\str{A}_0$,
that still has the subtree isomorphism property. Owing to this reduction \emph{we can restrict attention to tuples} $\bar{a}$ \emph{with} $\str{W}_{\bar{a}}$ \emph{connected  (in the above sense)}.

\smallskip\noindent
\emph{Reduction 1.}  
By the construction, for all $1 \le i \le 2l(2t+1)$ and $v=1+(i-1\mod 2l)$, there is no ${T}_v$-path in any component from an element of $L_i$ to an element of $L_{i+1}$. 
Thus, if we divide (inner)
layers of components into groups of size $2l$, a transitive path may join at most elements of two neighboring groups.
Obviously,
non-transitive relations join only tuples consisting of elements of at most two consecutive layers, and, in particular,  each of the $\str{W}_a$ lies in at most two consecutive layers. It follows,
that given a connected $\str{W}_{\bar{a}}$, $|\bar{a}| \le t$, by our choice of the number of layers in a component, 
 there exists  $g\in\{0,1\}$ such that 
removing all the connections between leaves of color $1-g$ and roots of color $g$ (in other words: any connections between
elements of $L_{2l(2t+1)}$ and elements of $L_{2l(2t+1)+1}$ in components of color $1-g$) does not remove any connections
among the elements of $\str{W}_{\bar{a}}$.

More formally, let $\str{D}_0^0$ be the structure obtained from $\str{A}_0^0$ by removing all the connections as described above, 
and let $\str{D}_0'$ be the transitive closure of $\str{D}_0^0$.
Then the inclusion map $\iota:\str{W}_{\bar{a}}\to\str{D}_0'$ is a homomorphism.   Clearly $\str{A}_{\fp(\iota(a))}=\str{A}_{\fp(a)}$ since $a=\iota(a)$. 
Thus \emph{we can restrict attention to a tuple $\bar{a}$ for which  $\str{W}_{\bar{a}}$ is connected and search for a homomorphism $\str{W}_{\bar{a}} \rightarrow \str{A}_0$ treating $\str{W}_{\bar{a}}$ as a substructure of} $\str{D}_0'$.

\smallskip\noindent
\emph{Reduction 2.} Observe that by our scheme of arranging the copies of pattern components there is at most one type $\gamma$ of components of color $g$ (where $g$ is the color from the previous reduction) that contains some element of a connected $\str{W}_{\bar{a}}$ (consider the shape of a connected fragment of the graph of components $G^{comp}$ with connections between leaves of color $g$ and roots of color $1-g$ removed). Furthermore, all elements of $\bar{a}$ of color $1-g$ are contained in components of the form $\str{C}^{\cdotp,1-g}_{\cdotp,\gamma}$. Choose one component of type $\gamma$ of color $g$ and call it $\str{C}^{\gamma}$. 
Consider the structure $\str{E}_0^0$ (resp.~$\str{F}_0^0$) obtained as the restriction of the structure $\str{D}_0^0$ from the previous reduction to the union of the domains of the components of the form $\str{C}^{\gamma,g}_{\cdotp,\cdotp}$ (resp.~the domain of $\str{C}^{\gamma}$) and the domains of all the components of the form $\str{C}^{\cdotp,1-g}_{\cdotp,\gamma}$. Let $\str{E}_0'$ (resp. $\str{F}_0'$) be their transitive closures.  Consider a projection $\pi$ that projects all elements of $\str{E}_0^0$ of color $g$ onto $\str{C}^{\gamma}$ and is the identity on the others. We claim that $\pi\restr W_{\bar{a}}:\str{W}_{\bar{a}}\to\str{F}_0'$ is a homomorphism. To see this, observe that the paths connecting elements of $\str{W}_{\bar{a}}$ in $\str{D}_0^0$ are contained in $E_0^0$ and $\pi:\str{E}_0^0\to\str{F}_0^0$ is a homomorphism. 
See Fig.~\ref{f:reductions}.
 Clearly for each $a\in \bar{a}$ we have $\str{A}_{\fp(a)}=\str{A}_{\fp(\pi(a))}$ since $\fp(a)=\fp(\pi(a))$.
Thus, finally, \emph{we can restrict attention to a tuple $\bar{a}$ for which $\str{W}_{\bar{a}}$ is connected and search for a homomorphism $\str{W}_{\bar{a}} \rightarrow \str{A}_0$ treating $\str{W}_{\bar{a}}$ as a substructure of $\str{F}_0'$}.

\begin{figure}  
\begin{center}
	\begin{tikzpicture}[scale=0.75]

\foreach \x in {1,7,13}{
\draw[color=gray, fill=red!20] (\x-0.2,-0.5) -- (\x+4+0.2,-0.5) -- (\x+2,3) -- (\x-0.2,-0.5);
\fill[black] (\x+2,2.6) circle (0.07);
\draw[color=gray, fill=blue!40] (\x-0.2,-0.5) -- (\x+4+0.2, -0.5) -- (\x+4+0.2-0.32,0) -- (\x-0.2+0.32, 0) -- (\x-0.2,-0.5);
}

\foreach \x in {4,10}{
\draw[color=gray, fill=blue!40] (\x-0.2,4.5) -- (\x+4+0.2,4.5) -- (\x+2,8) -- (\x-0.2,4.5);
\draw[color=gray, fill=gray!20] (\x+0.1,5) -- (\x+4-0.1,5);
\draw[color=gray, fill=red!20] (\x-0.2,4.5) -- (\x+4+0.2, 4.5) -- (\x+4+0.2-0.32,5) -- (\x-0.2+0.32, 5) -- (\x-0.2,4.5);
}

\coordinate [label=center:$\cong$] (A) at ($(9,6.3)$); 
\coordinate [label=center:$\str{F}_0'$] (A) at ($(2.2,4.8)$); 
\coordinate [label=center:$g$] (A) at ($(3.5,6.2)$); 
\coordinate [label=center:$1{-}g$] (A) at ($(0.4,1.2)$); 
\coordinate [label=center:$\str{C}^{\gamma}$] (A) at ($(6,6.9)$); 
\coordinate [label=center:$\pi$] (A) at ($(9,7.7)$);

\draw[->] (2.6, 4.5) -- (3,4);

\foreach \x in {4.5,6,7.5 , 10.5,12, 13.5}{
\fill[black] (\x,4.75) circle (0.07);
}

\coordinate [label=center:$b_1$] (A) at ($(4.8,5.25)$); 
\coordinate [label=center:$b_2$] (A) at ($(6,5.25)$); 
\coordinate [label=center:$b_3$] (A) at ($(7.3,5.25)$); 

\coordinate [label=center:$c_0$] (A) at ($(3,2.2)$);

\draw[dashed] (4.5,4.75) -- (3,2.6) -- (10.5,4.75);
\draw[dashed] (6,4.75) -- (9,2.6) -- (12,4.75);
\draw[dashed] (7.5,4.75) -- (15,2.6) -- (13.5,4.75);

\draw[decorate, decoration={snake, segment length=10, amplitude=1}] (4.9,6.2) -- (7.1,6.2);
\draw[decorate, decoration={snake, segment length=10, amplitude=1}] (10.9,6.2) -- (13.1,6.2);
\draw[decorate, decoration={snake, segment length=10, amplitude=1}] (1.9,1.2) -- (4.1,1.2);
\draw[decorate, decoration={snake, segment length=10, amplitude=1}] (7.9,1.2) -- (10.1,1.2);
\draw[decorate, decoration={snake, segment length=10, amplitude=1}] (13.9,1.2) -- (16.1,1.2);

\draw [decorate,decoration={brace,amplitude=10pt},rotate=0] (14,6.2) -- (17,1.2) node [black,midway, xshift=17, yshift=5] {$\;\;\;\supseteq \str{W}_{\bar{a}}$};

\draw[dotted] (18,-1) -- (0,-1) -- (6,9) -- (9,4) -- (15,4) -- (18,-1);
\draw[loosely dotted] (18.3,-1.2) -- (-0.3, -1.2) -- (6,9.3) -- (12,9.3) -- (18.3,-1.2); 
\coordinate [label=center:$\str{E}_0'$] (A) at ($(1.2,3.8)$);  
\draw[->] (1.6, 3.5) -- (2.1,2.9);


\draw[bend right=10, ->,  dashed] (10.5,7.2) to (7.5,7.2);

\end{tikzpicture}
	\caption{Joining the components and Reductions 1 and 2. Elements connected by dashed lines are identified.} \label{f:reductions}
	\end{center}
\end{figure}
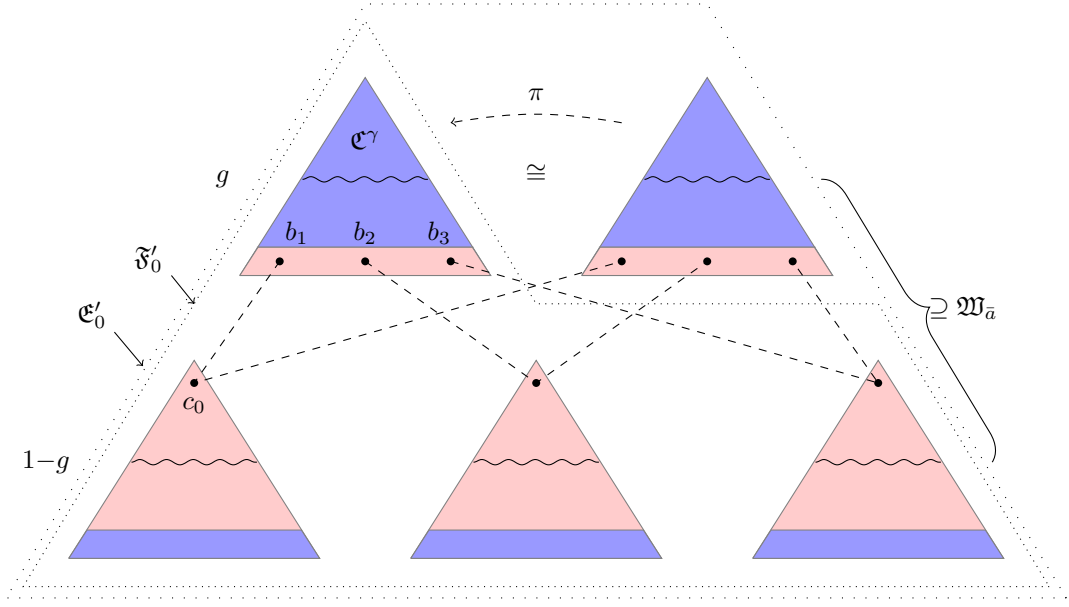

\smallskip\noindent
\emph{Essential homomorphism construction.} 
Note that $\str{F}_0^0$ looks like a single component but is twice as high. Consider the \emph{tree of subcomponents} of $\str{F}_0^0$, $\tau$,
defined as follows: make a subcomponent $\str{B}$ the parent of $\str{B}'$ if $\str{B}'$ contains a witness for an element of 
$\str{B}$. Observe that so obtained $\tau$ is indeed a tree.
For a subcomponent $\str{B}\in\tau$ denote by $B^{\wedge}$ 
the union of 
domains of 
subcomponents belonging to the subtree of $\tau$ rooted at $\str{B}$.

Since we might have cut some connections between an element and some of its witnesses during Reduction 1, we define for each $a\in F_0'$ the surviving part $\str{V}_a$ of $\str{W}_a$ by $\str{V}_a=\str{F}_0'\restr V_a$ where $V_a=\{b:\exists i\;\str{F}_0'\models W^iab\}$. For a tuple $\bar{b}$ denote $V_{\bar{b}}=\bigcup_{b\in\bar{b}} V_b$ and $\str{V}_{\bar{b}}=\str{F}_0'\restr V_{\bar{b}}$. Note that $V_a\subseteq W_a$, and generally, this inclusion may be strict,
but for all $a\in\bar{a}$ we have $\str{V}_a = \str{W}_a$, and thus, in particular, the claim below finishes the proof of the currently considered part of (b\ref{athree}), that is the
proof of the existence of a homomorphism satisfying the subtree isomorphism property.

Returning to the shape of $\str{F}_0^0$, it consists of some subcomponents arranged into tree $\tau$ glued together by the structure on the surviving parts of witness structures. Note that all such building blocks (that is both the subcomponents and the surviving parts of the partial witness structures) are transitively closed. Moreover, by the tree structure of $\tau$, if some elements of such a  building block are connected by some atom in $\str{F}_0'$, then they already have been connected by the same atom in $\str{F}_0^0$, therefore  the identity map from $\str{F}_0^0$ to $\str{F}_0'$ acts as an isomorphism when restricted to such a building block. 

\begin{claim}\label{c:joiningtr}
	 For every subcomponent $\str{B}_0\in\tau$ with \origin{} $b_0$, and  $\bar{a}\subseteq B_0^{\wedge}$, $|\bar{a}|\leq t$, there exists a homomorphism $\fh:\str{V}_{\bar{a}}\to\str{A}_{\fp(b_0)}\restr[\fp(b_0)]_{\cE_{tot}}$ 
	 	 such that for all $a\in\bar{a}$ we have $\str{A}_{\fh(a)}\cong\str{A}_{\fp(a)}$, 
	and if $b_0\in\bar{a}$ then $\fh(b_0)=\fp(b_0)$.
\end{claim}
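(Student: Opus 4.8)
\textbf{Proof proposal for Claim \ref{c:joiningtr}.}

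The plan is to prove the claim by induction on the height of the subtree of $\tau$ rooted at $\str{B}_0$. Throughout, I will use the fact that every building block of $\str{F}_0^0$ (both a subcomponent and a surviving part $\str{V}_a$ of a witness structure) is transitively closed, so that the identity map $\str{F}_0^0 \to \str{F}_0'$ is an isomorphism when restricted to any single such block; hence it suffices to build homomorphisms into $\str{A}$ out of the ``un-closed'' structure $\str{F}_0^0$, provided the target stays transitively closed (and $\str{A}_{\fp(b_0)}\restr[\fp(b_0)]_{\cE_{tot}}$ is, since $\str{A}$ is).

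\emph{Base case.} If $\str{B}_0$ is a leaf of $\tau$, then $B_0^\wedge = B_0$ and any $\bar a \subseteq B_0$ lies inside a single subcomponent, which by construction was obtained by applying the inductive assumption of Lemma \ref{l:finitetr} to $\fp(b_0)$ and $\mathcal{E}_0 \setminus \{T_v, T_v^{-1}\}$. Thus Condition (b\ref{athree}) of Lemma \ref{l:finitetr} for $\str{B}_0$ supplies directly a homomorphism $\fh : \str{W}_{\bar a} \to \str{A}_{\fp(b_0)}\restr[\fp(b_0)]_{\cE_{tot}\cup\{T_v,T_v^{-1}\}} \subseteq \str{A}_{\fp(b_0)}\restr[\fp(b_0)]_{\cE_{tot}}$ with the subtree isomorphism property, sending $b_0$ to $\fp(b_0)$ if $b_0 \in \bar a$; and since $V_a = W_a$ inside the leaf block (Reduction 1 only cuts edges between leaves and interface elements, and an interface element of a component of color $g$ is identified with the root of another component of color $1-g$, which lies in a \emph{different} subcomponent), this $\fh$ is what we want.

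\emph{Inductive step.} Let $\str{B}_0$ have children $\str{B}_1, \ldots, \str{B}_r$ in $\tau$, each $\str{B}_s$ with \origin{} $b_s$; by the construction in \emph{Step 2} of building a layer, $b_s$ is an element of the partial $\phi$-witness structure $\str{E}'_{b}$ of some element $b$ of $\str{B}_0$ (namely $b = \fp$-parent at the tree-of-subcomponents level), and $\fp(b_s)$ is the corresponding element of the $\phi$-witness structure $\str{W}$ for $\fp(b)$ in $\str{A}$; in particular $\str{A}_{\fp(b_s)} \cong \str{A}_{\fp(b)}{\restr}$-subtree and $\fp(b_s) \in [\,\fp(b)\,]_{\cE_{tot}}$. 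Given $\bar a \subseteq B_0^\wedge$ with $|\bar a| \le t$, split $\bar a$ according to which $B_s^\wedge$ its elements fall into: $\bar a = \bar a_\circ \cup \bar a_1 \cup \ldots \cup \bar a_r$ where $\bar a_\circ \subseteq B_0$ and $\bar a_s \subseteq B_s^\wedge \setminus B_0$. For the part $\bar a_\circ$ together with all the \origin s $b_s$ that are relevant (i.e.\ those $b_s$ with $\bar a_s \neq \emptyset$, plus $b_0$ itself) — this is a set of at most $t$ elements of the single subcomponent $\str{B}_0$ — apply (b\ref{athree}) of Lemma \ref{l:finitetr} for $\str{B}_0$ to obtain $\fh_\circ : \str{V}_{\bar a_\circ \cup \{b_s\}_s} \to \str{A}_{\fp(b_0)}\restr[\fp(b_0)]_{\cE_{tot}\cup\{T_v,T_v^{-1}\}}$, arranged so $\fh_\circ(b_0) = \fp(b_0)$ when $b_0 \in \bar a$. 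For each relevant child, apply the induction hypothesis to $\str{B}_s$ and $\bar a_s \cup \{b_s\}$ to get $\fh_s : \str{V}_{\bar a_s \cup \{b_s\}} \to \str{A}_{\fp(b_s)}\restr[\fp(b_s)]_{\cE_{tot}}$ with $\fh_s(b_s) = \fp(b_s)$. Now I would compose: the target of $\fh_s$ sits inside $\str{A}_{\fh_\circ(b_s)}$ up to the isomorphism $\str{A}_{\fp(b_s)} \cong \str{A}_{\fh_\circ(b_s)}$ guaranteed by the subtree isomorphism property of $\fh_\circ$ (here I use that $\fh_\circ(b_s) \in [\fp(b_0)]_{\cE_{tot}}$, so that subtree of $\str{A}$ again restricts to its $[\cdot]_{\cE_{tot}}$-part correctly, and that the $W^i$ were interpreted in $\str{A}$ respecting regularity, so isomorphic subtrees carry isomorphic witness structures); transport $\fh_s$ through this isomorphism to land in $\str{A}_{\fh_\circ(b_s)}\restr[\fh_\circ(b_s)]_{\cE_{tot}} \subseteq \str{A}_{\fp(b_0)}\restr[\fp(b_0)]_{\cE_{tot}}$. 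The resulting maps agree on the overlaps (each overlap is a single \origin{} $b_s$, mapped to $\fp(b_s)$ resp.\ $\fh_\circ(b_s)$ — identified under the transport), so their union $\fh := \fh_\circ \cup \bigcup_s \fh_s$ is a well-defined function on $\str{V}_{\bar a}$.

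\emph{Verifying $\fh$ is a homomorphism.} What remains is to check that every atom of $\str{V}_{\bar a}$ in $\str{F}_0'$ is preserved. Non-transitive atoms, and any atom lying inside a single building block (a subcomponent or a single surviving witness block), are handled by $\fh_\circ$ or a single $\fh_s$, which are homomorphisms on their blocks. The genuine work is with transitive atoms $T_u a b$ in $\str{F}_0'$ that arise only after transitive closure, spanning several blocks; here the key is \emph{Reduction 1}'s conclusion that for a connected $\str{W}_{\bar a}$ we may treat it inside $\str{F}_0'$, combined with the tree shape of $\tau$: a $T_u$-path in $\str{F}_0^0$ joining two elements of $\str{V}_{\bar a}$ decomposes along the tree, and (since $\str{A}_{\fp(b_0)}\restr[\fp(b_0)]_{\cE_{tot}}$ is transitively closed) it is enough that each segment of the path within a block, and each crossing through an \origin{} $b_s$, is sent to a $T_u$-connected pair in $\str{A}$ — the former by the block homomorphisms, the latter because $\fh(b_s) = \fh_\circ(b_s)$ is reached consistently from both sides and $\fp(b_s) \in [\fp(b)]_{\cE_{tot}}$ means the two sides of the gluing at $b_s$ are $\cE_{tot}$-total in $\str{A}$, while for $u$ with $T_u \in \mathcal{E}_0 \setminus \{T_v, T_v^{-1}\}$ the relevant segment is already inside one subcomponent by the ``killing'' property of inner layers. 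I would also invoke Claim \ref{c:nodanger} / the layer-killing property to argue that a $T_u$-path cannot leave the scope of a single subcomponent for $u$ not yet killed in the way that would break the argument.

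\emph{Main obstacle.} The delicate point — and the one I expect to absorb most of the proof — is the transport of the child homomorphisms $\fh_s$ through the subtree isomorphism $\str{A}_{\fp(b_s)} \cong \str{A}_{\fh_\circ(b_s)}$ and the verification that, after this transport, the glued map still preserves transitive atoms that cross block boundaries. This is exactly where the \emph{regularity} of the pattern tree-like model $\str{A}$ (Lemma \ref{l:reg}) is used non-trivially: without it, $\fh_\circ(b_s)$ need not be the root of a subtree isomorphic to the one rooted at $\fp(b_s)$, and the witness structures would not match, so there would be nothing to transport $\fh_s$ into. Getting the bookkeeping of the $\cE_{tot}$-classes right — so that every structure appearing is the $[\cdot]_{\cE_{tot}}$-restriction it is supposed to be, and totality of the $\cE_{tot}$-relations lets the pieces fuse — is the other source of friction.
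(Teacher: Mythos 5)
Your overall architecture matches the paper's: bottom-up induction over the tree $\tau$ of subcomponents, combining Condition (b\ref{athree}) of Lemma~\ref{l:finitetr} at the level of $\str{B}_0$ with the inductive hypothesis for the children, and transporting the child homomorphisms through subtree isomorphisms of $\str{A}$ so that they land where the subcomponent-level map sends the gluing points.

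There is, however, a concrete gap at the gluing step. You place the origins $b_s$ of the child subcomponents inside $\str{B}_0$ (``this is a set of at most $t$ elements of the single subcomponent $\str{B}_0$'') and feed $\bar{a}_\circ\cup\{b_s\}_s$ directly into (b\ref{athree}) for $\str{B}_0$. But $b_s$ is \emph{not} an element of $\str{B}_0$: by construction it is a fresh element added in \emph{Step 2 (Providing witnesses)}, lying in $E'\setminus F'$ for some $b\in B_0$ and placed into the initial part of the next sublayer or layer, where it only then becomes the origin of the subcomponent $\str{B}_s$ built in the subsequent Step~1. The tuple you hand to (b\ref{athree}) therefore does not live inside the subcomponent $\str{B}_0$, so the map $\fh_\circ$ you invoke is not defined on it, and in particular the value $\fh_\circ(b_s)$ through which you plan to transport $\fh_s$ never gets defined.

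The paper repairs exactly this point. It applies (b\ref{athree}) inside $\str{B}_0$ to $(\bar{a}\cap B_0)\cup\{c_1,\ldots,c_K\}$, where $c_i\in B_0$ is the element of $\str{B}_0$ whose witness structure contains $b_i$ (what you call $b$), giving $\fh_0$ on $\str{V}_{(\bar{a}\cap B_0)c_1\ldots c_K}\restr B_0$. This still does not cover the $b_i$, and the step missing from your proposal is the \emph{extension} of $\fh_0$ to $\fh_0^*$ on the whole $\str{V}_{(\bar{a}\cap B_0)c_1\ldots c_K}$: for each $a$ in the domain and each $c\in V_a\setminus B_0$ with $\str{V}_a\models W^jac$, one sets $\fh_0^*(c)$ to be the unique $j$-th $W$-witness of $\fh_0(a)$ in $\str{A}$. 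Such an element exists and satisfies $\str{A}_{\fh_0^*(b_i)}\cong\str{A}_{\fp(b_i)}$ precisely because of the subtree isomorphism property $\str{A}_{\fh_0(c_i)}\cong\str{A}_{\fp(c_i)}$ together with the regularity-respecting interpretation of the $W^i$. Only after this extension are the overlaps $b_i\in\dom\fh_0^*\cap\dom\fh_i^*$ well-defined meeting points, and only then does your transport of $\fh_s$ through $\str{A}_{\fp(b_s)}\cong\str{A}_{\fh_0^*(b_s)}$ make sense. So the regularity of $\str{A}$ is already load-bearing at this extension, not merely at the transport and the later path-decomposition check of the homomorphism property; without it the construction of the glued map cannot even be started.
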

\begin{proof}
Bottom-up induction over tree.

\smallskip\noindent
\textit{Induction base:} If $\str{B}_0$ is a leaf of $\tau$ then $\str{V}_{\bar{a}}\subseteq\str{B}_0$ and the claim follows by the inductive assumption of Lemma \ref{l:finitetr} (note that here we implicitly use the fact that the identity map is an isomorphism between $\str{F}_0^0\restr B_0$ and $\str{F}_0'\restr B_0$).

\smallskip\noindent
\textit{Induction step:}  Let $\str{B}_1,\ldots,\str{B}_K$ be the list of all the children of $\str{B}_0$ in $\tau$ such that $B_i^{\wedge}$ contains some element of $\bar{a}$.  
If $K=1$ and $\bar{a}\subseteq B_1^{\wedge}$ the thesis follows from the inductive assumption of this claim. 

Otherwise, for $1 \le i \le K$ (note that it is possible that $K=0$), denote by $b_i$ the \origin{} of $\str{B}_i$ and let $c_i\in\str{B}_0$ be such that $b_i$ is a witness chosen by $c_i$ in the step of providing witnesses or during the step of joining the components. By the inductive assumption of this claim there exist homomorphisms $\fh_i:\str{V}_{(\bar{a}\cap B_i^{\wedge})b_i}\to\str{A}_{\fp(b_i)}\restr[\fp(b_i)]_{\cE_{tot}}$
such that $\fh_i(b_i)=\fp(b_i)$.
From the inductive assumption of Lemma \ref{l:finitetr} we have a homomorphism $\fh_0:\str{V}_{(\bar{a}\cap B_0)c_1\ldots c_K}\restr B_0\to\str{A}_{\fp(b_0)}\restr[\fp(b_0)]_{\cE_{tot}}$. 
We extend it in the only possible way to $\fh_0^*$ defined on the whole $\str{V}_{(\bar{a}\cap B_0)c_1\ldots c_K}$: for each $a\in\bar{a}$ and $c\in V_a\setminus B_0$ (by construction $\str{V}_a\models W^iac$ for some $i$) we set $\fh(c)$ to be the only element satisfying $\str{A}_0\models W^i\fh(a)\fh(c)$ (such an element exists since $\str{A}_{\fh(a)}\cong\str{A}_{\fp(a)}$---in particular the $\phi$-witness structures of $\fh(a)$ and $\fp(a)$ are isomorphic). 
Note that the sizes of the tuples used to build the homomorphisms $\fh_i$ are bounded by $t$, as required.

We construct from the above maps a homomorphism $\fh:\str{V}_{\bar{a}b_1\ldots b_Kc_1\ldots c_K}\to \str{A}_{\fp(b_0)}\restr[\fp(b_0)]_{\cE_{tot}}$. 
See Fig.~\ref{f:joininghom}. Here, the crucial property is that $\str{A}$ has  a regular shape.
Indeed, for each $i$, $1 \le i \le K$, 
for the witness $b_i'$ for $\fh_0(c_i)$ (that is an element $b_i'$ satisfying $\str{A}_0\models W^j\fh_0(c_i)b_i'$ for the appropriate  $j$), corresponding to the witness $b_i$ for $c_i$ (such that  $\str{F}_0'\models W^jc_ib_i$), we have $\str{A}_{b_i'}\cong\str{A}_{\fp(b_i)}$. This is the case since, 
by the inductive assumption of Lemma \ref{l:finitetr} we have $\str{A}_{\fp(c_i)}\cong\str{A}_{\fh_0(c_i)}$, by construction we have $\str{A}_{b_i^{\fp}}\cong\str{A}_{\fp(b_i)}$ (here $b_i^{\fp}$ is the $j$-th witness of $\fp(c_i)$; note that during the step of providing witnesses we set $\fp(b_i)=b_i^{\fp}$; we need to consider $\cong$ since the value
of $\fp(b_i)$ may change due to a possible identification applied in the step of joining the components) and the 
numbering of witnesses is preserved by subtree isomorphisms.
Thus there is a homomorphism  $\fh_i^*:\str{V}_{(\bar{a}\cap B_i^{\wedge})b_i} \to \str{A}_{b_i'}$ with $\fh_i^*(b_i)=b_i'$.

We naturally join $\fh_0^*, \fh_1^*, \ldots, \fh_K^*$ into $\fh$:  
$\fh=\bigcup\fh_i^*$.
Note that such $\fh$ is well defined, even though the value of $\fh$ on each of the $b_i$ is defined twice, since $b_i$ belongs
to both $\dom \fh^*_0$ and $\dom \fh_i^*$ ($\fh$ has been defined on the other elements exactly once). 
 For each $a\in\dom\fh_i$ ($=\dom \fh_i^*$, when $i>0$) we have 
$\str{A}_{\fh(a)}=\str{A}_{\fh_i^*(a)}\cong\str{A}_{\fh_i(a)}$($\cong\str{A}_{\fp(a)}$,
by the inductive assumptions of this claim and Lemma \ref{l:finitetr}). Since $\bar{a}\subseteq \dom\fh_0\cup\bigcup_{i>0}\dom\fh_i^*$, we can conclude that for each $a\in\bar{a}$ we have $\str{A}_{\fp(a)}\cong\str{A}_{\fh(a)}$.

The fact that $\fh$ is a homomorphism follows from the tree structure of $\tau$. In particular, there cannot be any connections (before taking the transitive closures) between (non-\origin{}) elements of two different
$B_i^{\wedge}$ (for $1 \le i \le K$).
The full proof that $\fh$ is a homomorphism is tedious, therefore we show two representative cases that use all the major ideas required.
First, consider $a,a'\in V_{\bar{a}b_1\ldots b_Kc_1\ldots c_K}$
such that $a\in B_i^{\wedge}$, $a'\in B_j^{\wedge}$ for some $i,j$ such that $c_i\neq c_j$. Assume that $\str{F}_0'\models T_uaa'$ for some $u$.  We will prove that $\str{A}_0\models\ T_u\fh(a)\fh(a')$. By a standard argument, owing to the tree structure of $\tau$ (some more care is needed since there may be some connections in the structures $\str{V}_b$), there exist $d_i\in V_{c_i}\cap B_0$ and $d_j\in V_{c_j}\cap B_0$ such that $\str{F}_0'\models T_uab_i$, $\str{F}_0^0\restr V_{c_i}\models T_ub_id_i$, $\str{B}_0\models T_ud_id_j$, $\str{F}_0^0\restr V_{c_j}\models T_ud_jb_j$ and $\str{F}_0'\models T_ub_ja'$ (we assumed that $a,b_i,d_i,d_j,b_j,a'$ are pairwise different; otherwise some parts of such path become trivial). Since $\fh_i^*$, $\fh_0$ and $\fh_j^*$ are homomorphisms, $\str{A}_0\models T_u\fh(a)\fh(b_i)\wedge T_u\fh(d_i)\fh(d_j)\wedge\ T_u\fh(b_j)\fh(a')$. Now we show that $\str{A}_0\models T_u\fh(b_i)\fh(d_i)$. By construction of $\str{V}_a$, there exist indices $i_b$ and $i_d$ such that $\str{F}_0'\models W^{i_b}c_ib_i\wedge W^{i_d}c_id_i$ and therefore by the choice of $\fh_0$, $\str{A}_0\models W^{i_d}\fh(c_i)\fh(d_i)$ and by the choice of the extension of $\fh_0$ to $\fh_0^*$, $\str{A}_0\models W^{i_b}\fh(c_i)\fh(b_i)$. Let $b_i^{\fp}$ be the $i_b$-th witness of $\fp(c_i)$ and $d_i^{\fp}$ be the $i_d$-th witness of $\fp(c_i)$. By construction, $\str{A}_0\models T_ub_i^{\fp}d_i^{\fp}$. But $\str{A}_{\fp(c_i)}\cong\str{A}_{\fh(c_i)}$ and by the uniqueness of the numbers of the witnesses, any isomorphism between these subtrees sends $b_i^{\fp}$ to $\fh(b_i)$ and $d_i^{\fp}$ to $\fh(d_i)$, therefore $\str{A}_0\models T_u\fh(b_i)\fh(d_i)$. Similarly $\str{A}_0\models T_u\fh(d_j)\fh(b_j)$. Joining the pieces together, by transitivity of $T_u$, $\str{A}_0\models T_u\fh(a)\fh(a')$.
Secondly, we consider the case when $\str{F}_0'\models R(\bar{a}')$ for some non-transitive symbol $R$ and $\bar{a}'\subseteq V_{\bar{a}b_1\ldots b_Kc_1\ldots c_K}$. 
By construction, $R(\bar{a}')$ was set either during in the process of building some subcomponent or during the step of providing witnesses. Thus $\bar{a}'$ is either contained in $B_0$ or one of the $B_i^{\wedge}$ or one of the $V_a$
for some $a\in\bar{a}b_1\ldots b_Kc_1\ldots c_K$. 
Now we can prove, using arguments similar to ones used for appropriate parts of the path in the previous case, that $\str{A}_0\models R(\fh(\bar{a}'))$.

Since by construction $\fh_0\subseteq\fh$, if $b_0\in\bar{a}$ then $\fh(b_0)=\fh_0(b_0)$($=\fp(b_0)$ by the inductive assumption of Lemma \ref{l:finitetr}). To finish the inductive step, we restrict $\fh$ to $V_{\bar{a}}$. 
\end{proof}

	\begin{figure}  
	\begin{center}
		
		\begin{tikzpicture}[scale=0.8]
		
		\draw (10,10) ellipse (50pt and 25pt);
		\fill[black] (10,10.3) circle (0.07);
		\coordinate [label=center:$a_1$] (A) at ($(10,10.6)$); 
		\fill[black] (9,9.5) circle (0.07);
		\coordinate [label=center:$c_1$] (A) at ($(9,9.8)$); 
		\fill[black] (11,9.5) circle (0.07);
		\coordinate [label=center:$c_2$] (A) at ($(11,9.8)$); 
		\coordinate [label=center:$\str{B}_0$] (A) at ($(10,9.5)$); 
		
		\draw (9,9.5) -- (8,8.2);
		\draw (11,9.5) -- (12,8.2);
		\draw (1.5,4) -- (2,5);
		\draw (4.5,4) -- (4,5);
		\fill[black] (2,5) circle (0.07);
		\fill[black] (4,5) circle (0.07);
		\coordinate [label=center:$\fh_0(c_1)$] (A) at ($(2.1,5.5)$); 
		\coordinate [label=center:$\fh_0(c_2)$] (A) at ($(3.9,5.5)$);

		\draw (8,8) ellipse (50pt and 25pt);
		\fill[black] (8,8.2) circle (0.07);
		\coordinate [label=center:$b_1$] (A) at ($(7.8,8.6)$); 
		\fill[black] (7,7.5) circle (0.07);
		\coordinate [label=center:$a_2$] (A) at ($(7,7.8)$); 
		\fill[black] (9,7.5) circle (0.07);
		\coordinate [label=center:$a_3$] (A) at ($(9,7.8)$); 
		\coordinate [label=center:${B}_1^{\wedge}$] (A) at ($(8,7.6)$); 
		
		\draw (12,8) ellipse (50pt and 25pt);
		\fill[black] (12,8.2) circle (0.07);
		\coordinate [label=center:$b_2$] (A) at ($(12.2,8.6)$); 
		\fill[black] (13,7.5) circle (0.07);
		\coordinate [label=center:$a_4$] (A) at ($(13,7.8)$); 
		\coordinate [label=center:${B}_2^{\wedge}$] (A) at ($(12,7.6)$); 
		
		\draw (7,3) -- (8,5.5) -- (9,3);
		\fill[black] (8,5) circle (0.07);
		\coordinate [label=center:$\fh_1(b_1)$] (A) at ($(9.1,5)$); 
		\fill[black] (7.5,3.5) circle (0.07);
		\fill[black] (8.5,3.5) circle (0.07);
		
		\draw (11,2.5) -- (12,5.0) -- (13,2.5) ;
		\fill[black] (12,4.5) circle (0.07);
		\coordinate [label=center:$\fh_2(b_2)$] (A) at ($(13.1,4.5)$); 
		\fill[black] (12.5,3) circle (0.07);

		\draw (-0.5,2) -- (3,9) -- (6.5,2);
		\fill[black] (3,7.5) circle (0.07);
		\coordinate [label=center:$\fh_0(a_1)$] (A) at ($(3,7)$);

		\draw (0.5,2) -- (1.5,4.5) -- (2.5,2) ;
		\fill[black] (1.5,4) circle (0.07);
		\fill[black] (1,2.5) circle (0.07);
		\fill[black] (2,2.5) circle (0.07);
		\coordinate [label=center:${\sss \cong \str{A}_{\fh_1(b_1)}}$] (A) at ($(1.5,1.8)$); 
		\coordinate [label=center:$b_1'$] (A) at ($(0.9,4)$);

		\draw (3.5,2) -- (4.5,4.5) -- (5.5,2) ;
		\fill[black] (4.5,4) circle (0.07);
		\fill[black] (5,2.5) circle (0.07);
		\coordinate [label=center:${\sss \cong \str{A}_{\fh_2(b_2)}}$] (A) at ($(4.5,1.8)$); 
		\coordinate [label=center:$b_2'$] (A) at ($(3.9,4)$);

		
		
		
		\coordinate [label=center:$\fh_0$] (A) at ($(5.5,10)$); 
		\draw[bend right=15, ->, dashed] (7.5,10) to (4,8.5);	
		
		\draw[bend right=5, ->, dashed] (8,6.9) to (8,5.7);	
		\coordinate [label=center:$\fh_1$] (A) at ($(8.5,6.4)$); 
		
		\draw[bend right=5, ->, dashed] (12,6.9) to (12,5.2);	
		\coordinate [label=center:$\fh_2$] (A) at ($(12.5,6.05)$);

		\draw[bend right=5, ->, dashed] (7.5,5.2) to (2,4.2);	
		
		\draw[bend right=3, ->, dashed] (11.5,4.5) to (5,3.7);	
		
		\end{tikzpicture}
		\caption{Joining homomorphisms} \label{f:joininghom}
		\end{center}
	\end{figure}

Now we prove the additional property required for $\fh$ by (b\ref{athree}), that is, that for each $a\in\bar{a}$, $\fh\restr W_a$ is an isomorphism.
By the 
numbering of witnesses, as explained before the statement of this lemma, $\fh$ moves $W_a$ into the part of the witness structure of $\fh(a)$ contained in $A_0$ and is one-to-one by the uniqueness of the numbers of witnesses in a witness structure.
The other way around, we can use a similar argument as in the first case presented in the proof that the map built in Claim \ref{c:joiningtr} is a homomorphism. That is, if for some $\bar{a}'\subseteq W_a$ and some (arbitrary) relation $R$, $\str{A}_0\models R(\fh(\bar{a}'))$, then, since $\str{A}_{\fh(a)}\cong\str{A}_{\fp(a)}$ and any isomorphism preserves the 
numbering of witnesses and the structure on $W_a$ was copied from a part of the witness structure for $\fp(a)$ (together with such 
numbering), $\str{A}_0'\models R(\bar{a}')$ and therefore the inverse of $\fh\restr\str{W}_a$ is also a homomorphism, so $\fh\restr\str{W}_a$ is an isomorphism.

Now we return to the `moreover' part of (b\ref{athree}). Let us assume that $a_0'\in\bar{a}$. We will slightly modify the above proof. Reductions 0 and 1 do not move $a_0'$ and we keep them unchanged. Notice that in Reduction 1 we have that $g=0$. Now, in Reduction 2 we have that $\gamma=\gamma_{a_0}$ and we choose $\str{C}^{\gamma}=\str{C}^{\gamma,0}_{\bot,\bot}$. This way application of $\pi$ does not move $a_0'$. To finish the proof, it is sufficient to see that by Claim \ref{c:joiningtr} $\fh(a_0')=\fp(a_0')=a_0$. 

\smallskip\noindent
(b\ref{afour}) Apply (b\ref{athree}) to a tuple consisting of just $a$ to obtain an isomorphism $\fh:\str{W}_a\to\str{A}_0\restr\fh(W_a)$ and then apply an isomorphism between $\str{A}_{\fh(a)}$ and $\str{A}_{\fp(a)}$.

\subsubsection{Size of models and complexity}

To complete the proof of Thm.~\ref{t:maintr} we need to show an appropriate upper bound on 
the size of finite models produced by our construction.
The following routine estimation shows that $|A_0'|$ is triply exponential in $n=|\phi|$, regardless of the choice of the initial 
tree-like model $\str{A}$.
We calculate a bound $S_{2l}$
on the size of the structure obtained in the proof of Lemma \ref{l:finitetr} for $|\mathcal{E}_0|=2l$. We are interested in $ S_{2k+2}$, which is the desired bound on the size of $\str{A}_0'$ (we use $ S_{2k+2}$ here, rather than $ S_{2k}$, because we may potentially
introduce the auxiliary identity relation in the base step of induction).
By the construction any pattern component is a tree of subcomponents consisting of at most $2l(2t+1)(\hat{M}_{\phi}+1)$ sublayers (so, also this is a bound on the depth of the tree) . 
In the 
sublayer of depth $1$
we have at most $ S_{2l-2}$ elements, in the 
sublayers in the 
second one---at most $ S_{2l-2}n$ subcomponents; this jointly gives $ S_{2l-2}^2n$ elements. Iterating, we have at most $ S_{2l-2}^in^{i-1}$ elements in the sublayers of depth $i$, which jointly gives an estimate $( S_{2l-2}n)^{2l(2t+1)(\hat{M}_{\phi}+1)+1}$ on both the number of inner elements and the number of interface elements
in a pattern component. Multiplying it by the number of components used in the joining phase, and then
estimating $t$ and $l$ in the exponent by $n$ and $n+1$ respectively, 
we get a bound 
$ S_{2l}=2|\GGG[A]|^2( S_{2l-2}n)^{4(n+1)(2n+1)(\hat{M}_{\phi}+1)+2}$.
Solving this recurrence relation, and recalling that $\hat{M}_{\phi}$ and $|\GGG[A]|$ are doubly exponential in $|\phi|$ we obtain a triply exponential bound on $ S_{2k+2}$.

This finishes the proof of Thm.~\ref{t:maintr}.
We do not know if our construction is optimal with respect to the size of models. The best we can do for the lower bound is to enforce 
models of doubly exponential size (actually, it can be done in \UNFO{} even without transitive relations).

Thm.~\ref{t:maintr} immediately gives the decidability of the finite satisfiability problem for \UNFOTR{} and
 suggests a simple \ThreeNExpTime{}-procedure: convert a given formula $\phi$ into normal form $\phi'$, guess a finite structure
of size bounded triply exponentially and verify that it is a model of $\phi'$. We can however do better and show 
a doubly exponential upper bound matching 
the known complexity of the general satisfiability problem. The following theorem has already been stated in the main
body of this paper.

\begin{theorem}[restating of Thm.~\ref{t:algo}]\label{t:algorepeat}
The finite satisfiability problem for \UNFOTR{} is \TwoExpTime-complete.
\end{theorem}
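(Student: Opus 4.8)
The lower bound is immediate: \UNFOtTR{} is a syntactic fragment of \UNFOTR{}, and finite satisfiability of \UNFOtTR{} is already \TwoExpTime-hard by Theorem~\ref{t:lalgo}. So the work lies in the upper bound, for which I would design an \AExpSpace{}-algorithm (and \AExpSpace{} coincides with \TwoExpTime{}), in close analogy with the two-variable case treated in the proof of Theorem~\ref{t:lalgo}, now using the apparatus of $\phi$-declarations and Lemma~\ref{l:locglob} in place of light declarations and Lemma~\ref{l:llocglobl}. Given $\phi$, first convert it to normal form $\phi'$ (Lemma~\ref{l:nf}). The chain (f1)$\leadsto$(f2)$\leadsto$(f3) (Lemmas~\ref{l:treelike}, \ref{l:short}), the chain (f3)$\leadsto$(f4)$\leadsto$(f5) (Lemma~\ref{l:reg} together with the construction behind Theorem~\ref{t:maintr}, i.e.\ Lemma~\ref{l:finitetr}), and the trivial (f5)$\leadsto$(f1), jointly show that $\phi'$ has a finite model iff it has a tree-like model of form (f3): degree bounded linearly in $|\phi|$ and transitive paths bounded by a doubly exponential number $\hat{M}$. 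The algorithm searches for such a model.

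Let $M$ be a doubly exponential bound on the number of \emph{local states} a node may carry, where a local state records a $1$-type over $\sigma_{\phi'}$, a $\phi'$-declaration, and the current vector of $\hat{M}$-bounded ${T}_u$-ranks $(r_1,\ldots,r_{2k})$. In an alternating fashion the algorithm guesses a single root-to-leaf path of a candidate model. At the root it guesses the root's $1$-type and $\phi'$-declaration $\fd_{a_0}$, checks (as in the proof of Lemma~\ref{l:short}) that $\fd_{a_0}$ forces the universal conjunct $\forall\bar{x}\,\neg\phi_0(\bar x)$, sets all $r_u:=0$ and a depth counter to $0$. Having reached a node $a$ carrying its $1$-type, declaration $\fd_a$ and rank vector, it guesses the structure on the downward family $\{a,a_1,\ldots,a_s\}$ (with $s$ bounded linearly in $|\phi|$) so that this family is a $\phi'$-witness structure for $a$, guesses $\phi'$-declarations for $a_1,\ldots,a_s$, verifies the local consistency conditions at $a$, then universally picks one child $a_j$ to continue. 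It updates each $r_u$ according to the guessed structure (unchanged if ${T}_u$ is sustained on the edge $(a,a_j)$, incremented if diminished, reset to $0$ if killed), rejects if some $r_u$ would exceed $\hat{M}$, increments the depth counter, and accepts once the depth exceeds $M$. At every moment only a pair of adjacent families, two declarations, the rank vector and the depth counter are stored, each using exponentially many bits, so the algorithm runs in \AExpSpace{}.

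Correctness is argued exactly as in the two-variable case. If $\phi$ has a finite model then so does $\phi'$, hence $\phi'$ has a regular tree-like model $\str{A}$ of form (f4) (Lemmas~\ref{l:treelike}--\ref{l:reg}); the algorithm accepts by following $\str{A}$ and using its canonical declarations $\dec{\str{A}}{\phi'}{\cdot}$, which are locally consistent by Lemma~\ref{l:locglob}(ii) and whose root declaration guarantees $\str{A}\models\forall\bar x\,\neg\phi_0$; along any path longer than $M$ some local state must repeat, so the run is accepting. Conversely, from an accepting run we read off a finite tree-like structure $\str{A}^{*}$ of depth at most $M{+}1$ in which every node has its $\phi'$-witness structure; on each branch we cut at the first point where the local state recurs and redirect that point to the earlier occurrence, then unravel into an infinite tree-like structure $\str{A}$. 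By construction the guessed declarations still satisfy the local consistency conditions in $\str{A}$, so by Lemma~\ref{l:locglob}(i) they are globally consistent, the root's declaration witnesses $\str{A}\models\forall\bar x\,\neg\phi_0(\bar x)$, the $\forall\exists$-conjuncts hold because downward families are witness structures, and the rank counters transfer to $\str{A}$ as stopwatch labelings, so $\str{A}$ has transitive paths bounded by $\hat{M}$. Thus $\str{A}$ is a model of form (f3), and by Theorem~\ref{t:maintr} (via Lemmas~\ref{l:reg} and~\ref{l:finitetr}) $\phi'$, hence $\phi$, has a finite model.

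The main obstacle I expect is the correctness of this cut-and-loop extraction once more than two variables are present: one must verify that redirecting the back-edges does not spoil the local consistency conditions for declarations (so that Lemma~\ref{l:locglob} still applies and the universal conjunct is preserved) nor the stopwatch labelings that bound transitive paths, and that the depth bound $M$ is chosen large enough to genuinely force a repeating local state — which is precisely where the doubly exponential counting of $1$-types, $\phi'$-declarations and rank vectors enters. Everything else — the lower bound, the space bookkeeping, and the reduction to form (f3) — is routine given the lemmas already established.
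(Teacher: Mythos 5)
Your proposal is correct and follows essentially the same route as the paper: lower bound inherited from the two-variable fragment (Theorem~\ref{t:lalgo}), and an \AExpSpace{} algorithm that alternately traces a root-to-leaf path in a candidate tree-like model of form (f3), tracking at each node its $1$-type, $\phi$-declaration and $T_u$-rank/stopwatch vector, verifying witness structures, LCCs and rank bounds locally, and accepting once the depth exceeds a doubly exponential bound on the number of such local states; the correctness argument via cutting at a repeated local state, redirecting the back-edge, and unravelling into an infinite tree-like model is exactly the paper's. The "obstacle" you flag at the end is handled just as you suggest: the LCCs and stopwatch conditions are purely local (between adjacent downward families), so redirecting and unravelling preserves them, and Lemma~\ref{l:locglob} then upgrades local to global consistency.
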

\begin{proof}
The lower bound is inherited from pure \UNFO{} \cite{StC13} or from \UNFOtTR{} (Thm.~\ref{t:lalgo}).

For the upper bound, we describe an algorithm in \AExpSpace. Fix $\phi$ in normal form. 
We have proved that $\phi$ has a finite model iff it has a tree-like model with doubly exponentially bounded transitive paths (as in Lemma \ref{l:short}).
We will look for the latter. We advise the reader to recall the proof of Lemma \ref{l:reg}, as we presently use a similar apparatus. 
In our procedure we produce, in an alternating fashion,  a finite tree $\str{A}^*$, corresponding to some number of the upper levels 
of a model. Simultaneously, we define a function $\fg^*$ returning for an element of $A^*$ its 1-type together with some $\phi$-declaration and one stopwatch for each of the ${T}_u$ (cf.~the proof of Lemma \ref{l:reg}).

More precisely, let $\hat{M}_\phi$ be the bound on transitive paths obtained in Lemma \ref{l:short} and $M$
be a bound on $|\rng\fg^*|$ (we use $({T}_u,\hat{M}_{\phi})$-stopwatches in $\fg^*$).
The alternating algorithm works as follows.
Calculate $\hat{M}_\phi$ and $M$. Note that both are doubly exponential in $|\phi|$. Construct the root of $\str{A}$ and guess 
its $1$-type $\alpha$, a $\phi$-declaration $\fd$ containing all the formulas of the form $\phi_0^j(\bar{x})\wedge\bigwedge_{i\in Q}x_i=y\wedge\bigwedge_{i\in\mathcal{Q}\setminus Q}x_i\neq y$ for any $Q\subseteq\mathcal{Q}$ and $1\leq j\leq z$ (recall that $\phi_0$ is equivalent to $\phi_0^1\vee\ldots\vee\phi_0^z$ with the $\phi_0^j$ being conjunctions of some $\mathcal{R}$ and $\mathcal{T}$ formulas).
Set $\fg^*(a)=(\alpha,\fd,(0)_{u=1}^{2k})$.
Now construct the downward family of $a$, $F=\{a,a_1,\ldots,a_s\}$, for some $s<|\phi|$, guess its (transitively closed) 
structure, and guess the values $\fg^*(a_1),\ldots,\fg^*(a_s)$. Check whether $F$ is a $\phi$-witness structure for $a$, the 1-types assigned by $\fg^*$ agree with the structure, the declarations assigned by $\fg^*$ satisfy the LCCs and the stopwatches assigned by $\fg^*$ satisfy the local condition described in the definition of $({T}_u,\hat{M}_\phi)$-stopwatch labeling. If not, reject. Next universally choose one of the $a_i$. Then proceed as for $a$---guess the downward family of $a_i$ and  values of $\fg^*$, and check their consistency as above, universally choose one of the children of $a_i$ and so on. We additionally keep a counter containing the number of the current level in $\str{A}^*$. If it reaches $M+1$, we accept. 

It is clear that the described algorithm can be implemented in \AExpSpace{}: we only need to store the structure and the
values of $\fg^*$ on a single family, plus a counter. All of these can be written using exponentially many bits.

\smallskip\noindent
\emph{Correctness proof}. To see that if $\phi$ has a model $\str{A}$ with bounded transitive paths then the algorithm accepts, it is sufficient to
make the guesses in accordance with $\str{A}^*$---the structure induced on the first $M+1$ levels of $\str{A}$ with $\fg^*$ defined as follows $A^*\ni a\mapsto(\type{\str{A}}{a},\dec{\str{A}}{\phi}{a},(\mathcal{S}_u)_{u=1}^{2k})$ where $\mathcal{S}_u$ is the $({T}_u,\hat{M}_\phi)$-stopwatch labeling of $\str{A}$. The fact that such a strategy leads to an accepting run of the algorithm is almost straightforward.
In particular, the local consistency of declarations follows from Lemma \ref{l:locglob}(ii). 
The opposite implication uses ideas  similar to the ones from the proof of Lemma \ref{l:reg}.
Assume that the algorithm has
an accepting run. From this run we can naturally infer a tree-like structure $\str{A}^*$ consisting of $M{+}1$ levels, and 
a function $\fg^*$. Note that on each path from the root to a leaf in $\str{A}^*$ some value of $\fg^*$ appears at least twice. 
Cut each branch at the first position on which the value of $\fg^*$ reappears and make a link from this point to
the first occurrence  of this value on the considered branch. Naturally unravel so obtained structure into an infinite tree-like structure $\str{A}$.
Define on $\str{A}$ function $\fg$ just copying the values of $\fg^*$. We show that $\str{A}\models\phi$ and has transitive paths bounded by $\hat{M}_\phi$. Note that the downward families in $\str{A}$ and the values of $\fg$ on them are copies of some downward families in $\str{A}^*$ and their values of $\fg^*$, so each $a\in A$ has a $\phi$-witness structure ($\str{A}$ satisfies all the $\forall\exists$-conjuncts of $\phi$) and also $\fg$ gives a locally consistent set of declarations and $({T}_u,\hat{M}_\phi)$-stopwatch labelings. The latter guarantee that $\str{A}$ has bounded transitive paths; the former, together with the choice of the declaration $\fd$ for the root of $\str{A}^*$, allows us to conclude that $\str{A}$ satisfies the $\forall$-conjunct of $\phi$. 
\end{proof}

As remarked in the Introduction, we can state our results in a slightly stronger way, for a setting in which we may not only require some 
binary symbols to be interpreted as arbitrary transitive relations, but we can, more specifically, require some of them to be
equivalences and some other---partial order. Indeed, assuming that $T_u$ is transitive we can enforce it in \UNFO{} to be a (strict) partial order,
writing $\neg \exists xy (T_uxy \wedge T_uyx)$. Non-strict partial orders can be then simulated by disjunctions $T_uxy \vee x = y$.
An equivalence relation can be simulated by some $T_u$ by replacing every usage of $T_uxy$ by $T_uxy \wedge T_u^{-1}xy \vee x=y$ (and then ignoring
the non-symmetric interpretations of $T_u$; 
we remark that it is not possible to enforce in \UNFOTR{} $T_u$ to be interpreted as an equivalence
\cite{DK18}).
\begin{corollary}
The finite satisfiability problem for \UNFO{} with transitive relations, equivalences and partial orders is \TwoExpTime-complete.
\end{corollary}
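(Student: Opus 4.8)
The plan is to reduce finite satisfiability in this richer setting to finite satisfiability of plain \UNFOTR{} and then invoke Theorem~\ref{t:algo}, which does all the real work. No new lower bound is needed: it is inherited from \UNFOTR{} (Theorem~\ref{t:algo}), or already from pure \UNFO{} \cite{StC13}, so the whole argument concerns the \TwoExpTime{} upper bound.

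I would fix an input sentence $\phi$ together with the declared role of each distinguished pair $T_{2u-1},T_{2u}$: a pair of mutually inverse arbitrary transitive relations, an equivalence (which is then automatically self-inverse), or a strict or non-strict partial order together with its converse. Each pair is treated independently by a syntactic rewriting of $\phi$, producing a plain \UNFOTR{} sentence $\phi'$ of size polynomial in $|\phi|$. For a pair required to be a \emph{strict} partial order I keep $T_u$ as an ordinary transitive symbol and add the sentence-level conjunct $\neg\exists xy(T_uxy\wedge T_uyx)$; together with transitivity this forces irreflexivity (set $y:=x$) and asymmetry, i.e.\ a strict partial order, and conversely every strict partial order is a transitive relation satisfying it. A \emph{non-strict} partial order is obtained from this by additionally replacing every occurrence of the atom $T_uxy$ in $\phi$ by $T_uxy\vee x=y$. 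For a pair required to be an \emph{equivalence} I interpret $T_u$ as an arbitrary transitive symbol and replace every occurrence of $T_uxy$ (and of $T_u^{-1}xy$) by $(T_uxy\wedge T_u^{-1}xy)\vee x=y$. All these replacements insert only equalities and positive atoms, and the extra conjuncts are negations of sentences, so $\phi'$ is a legal \UNFOTR{} sentence; the rewriting is clearly polynomial, hence the \TwoExpTime{} bound of Theorem~\ref{t:algo} transfers.

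It remains to verify, over finite structures, that $\phi$ (in the rich semantics) is satisfiable iff $\phi'$ is satisfiable in the \UNFOTR{} sense. In one direction, a finite model of $\phi$ respecting all declared roles is already a \UNFOTR{} model of $\phi'$: when $T_u$ is an equivalence, $(T_uxy\wedge T_u^{-1}xy)\vee x=y$ is equivalent to $T_uxy$ by symmetry and reflexivity, and similarly in the partial-order cases. In the other direction, given a finite \UNFOTR{} model $\str{B}\models\phi'$ I reinterpret each equivalence-declared $T_u$ (and $T_u^{-1}$) as the reflexive closure of $T_u^{\str{B}}\cap(T_u^{\str{B}})^{-1}$ --- an equivalence, since the intersection of a transitive relation with its converse is transitive and symmetric, and reflexive closure preserves this --- and each non-strict-partial-order-declared $T_u$ as the reflexive closure of $T_u^{\str{B}}$, which is a partial order thanks to the added conjunct. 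In the reinterpreted structure the truth value of each atom $T_uxy$ coincides with that of the formula that replaced it in $\str{B}$ (note that the standing convention keeps $T_u$ and $T_u^{-1}$ mutually inverse throughout), so we obtain a finite model of $\phi$ with all roles respected.

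Since the heavy lifting is done by Theorem~\ref{t:algo}, I do not expect a genuine obstacle; the only points requiring a little care are the elementary facts that $R\cap R^{-1}$ inherits transitivity from $R$, that transitivity together with $\neg\exists xy(T_uxy\wedge T_uyx)$ is exactly a strict partial order, and the routine check that the rewriting commutes with the reinterpretations above. As noted in the discussion preceding the statement, one cannot instead \emph{axiomatise} ``being an equivalence'' inside \UNFOTR{}, since that is provably impossible \cite{DK18}; the reduction sidesteps this by changing the class of admissible models rather than asserting a property of a relation.
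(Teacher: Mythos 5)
Your proposal is correct and takes essentially the same approach as the paper: reduce to \UNFOTR{} by adding the conjunct $\neg\exists xy(T_uxy\wedge T_uyx)$ for strict partial orders, replacing $T_uxy$ with $T_uxy\vee x{=}y$ for non-strict ones, and replacing $T_uxy$ with $(T_uxy\wedge T_u^{-1}xy)\vee x{=}y$ for equivalences, then invoke Theorem~\ref{t:algo}. You spell out the back-and-forth model reinterpretations (reflexive closures, intersection with the converse) in slightly more detail than the paper, but the substance is identical.
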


We note that our approach does not allow us to deal with \emph{linear} orders.
Actually, the presence of a strict linear order $<$ makes the satisfiability problem for \UNFO{} undecidable, as
it allows for a reduction from  \UNFO{} with inequalities, which is known to be undecidable \cite{StC13}:  $x \not=y$ can be then expressed as $x < y \vee y<x$. 
See also \cite{ABBB16}. To the best of our knowledge, the decidability of the (finite) satisfiability problem for \UNFO{} with \emph{non-strict} linear orders is open.

\section{Capturing expressive description logics} \label{s:edl}

\subsection{Constants} \label{s:constbin}

To show a small model property, and establish the decidability of \UNFOTR{} with constants, \UNFOSO{}, we are not going to 
design any new transformations of models. We will just use Thm.~\ref{t:maintr} and Thm.~\ref{t:algo}.
Our plan is to simulate constants with freshly introduced unary predicates. Such predicates will be called \emph{pseudoconstants}.
Of course, our transformations of models from Section \ref{s:general} do not respect the uniqueness of interpretations of pseudoconstants. We thus introduce a simple quotient construction which given
a structure, for every pseudoconstant, shrinks all its interpretations into a single element. There is a potential danger here: the shrinking operation may lead to some new patterns of connections. E.g., if a $a$ sends a ${T}_u$-edge  to an interpretation of a pseudoconstant, $c$, and
$b$ receives a ${T}_u$ from another incarnation $c'$ of the same pseudoconstant, then in the resulting model, $a$ and $b$ 
become ${T}_u$-connected, even thought they need not be ${T}_u$-connected in the original model. To make this operation safe 
we will perform some manipulations on the input formula.

\medskip\noindent
\emph{Pseudoconstants.}
Let $\phi$ be a \UNFOTR{} 
formula 
with constants $c_1,\ldots,c_K$. 
Take a set of fresh unary symbols $\sigma_{\const}:=\{C_1,\ldots,C_K\}$ and let $\phi_{\const}$ be a formula obtained from $\phi$ by 
simulating constants with symbols from $\sigma_{\const}$. Namely, for every relational symbol $R$, and every atom
$R(x_1, \ldots, x_s, c_1, \ldots, c_t)$, where the $x_i$ are its variables and the $c_i$ are its constants,
we replace this atom with $\exists y_1 \ldots y_t R(x_1, \ldots, x_s, y_1, \ldots, y_t)$, where the $y_i$ are fresh variables.
Note that $\phi_{\const}$ remains a \UNFOTR{} formula.

Let $\exi:=\bigwedge_i\exists xC_ix$ (there exists an interpretation  of every pseudoconstant) 
and $\uni:=\bigwedge_i\forall xy (C_ix\wedge C_iy\to x=y)$ (every pseudoconstant is uniquely interpreted). Notice that
$\exi$ is in \UNFO{}, but $\uni$ is not. Clearly, $\phi$ is (finitely) satisfiable iff $\phi_{\const}\wedge\exi\wedge\uni$ is (finitely) satisfiable. Furthermore, assuming that $\bar{\phi}_{\const}$ is a normal form of $\phi_{\const}$, due to Lemma \ref{l:nf}
we can check (finite) satisfiability of $\bar{\phi}_{\const}\wedge\exi\wedge\uni$ instead of $\phi$.

So, w.l.o.g., we will consider the finite satisfiability problem for formulas of the form 
$\phi\wedge\exi\wedge\uni$, where $\phi$ is a \UNFOTR{} formula in normal form, over some signature $\sigma=\sigma_{\cBase}\cup\sigma_{\cDist}\cup\sigma_{\const}$, with $\sigma_{\const}$ consisting of auxiliary unary relation symbols.

\medskip\noindent
\emph{Shrinking.} Let us now introduce our shrinking  operation.
Let $\widetilde{\str{A}}$ be a structure.
Define a relation $\sim$ on $\widetilde{A}$ by setting $a\sim b$ iff $\widetilde{\str{A}}\models\varepsilon(a,b)$, where $\varepsilon(x,y)=(x=y)\vee((\bigvee_iC_ix)\wedge\bigwedge_i C_ix\leftrightarrow C_iy)$ ($x$ and $y$ correspond to the same constant or they are the same non-constant). Observe that $\sim$ is an equivalence relation 
and that $\varepsilon$ is in \UNFO. 
Let $A^0:=\widetilde{A}/\sim$ and $\fq:\widetilde{A} \rightarrow A^0$ be the corresponding quotient map.
If $\widetilde{\str{A}}\models R\bar{a}$ for some relation symbol $R$ and a tuple $\bar{a}$, then put $\str{A}^0\models R\fq(\bar{a})$. 
Let $\str{A}$ be the result of applying the transitive closure to all the ${T}_u$ in $\str{A}^0$.
Observe that $\fq:\widetilde{\str{A}}\to\str{A}$ is a homomorphism, $\str{A} \models \uni$ and if $\widetilde{\str{A}} \models \uni$ then $\fq:\widetilde{\str{A}}\to\str{A}$ is an isomorphism.
A triple of the form $(\widetilde{\str{A}},\str{A},\fq)$ will be called a \emph{shrinking triple}.
We will sometimes refer to the intermediate model $\str{A}^0$ (as usual, in some arguments involving transitive connections in $\str{A}$ corresponding to paths in $\str{A}^0$).

\medskip\noindent
\emph{Modifications of $\phi$.} 
Now we 
introduce a modification of the
given formula $\phi$. 
Let $\forall\neg \bar{x} \phi_0(\bar{x})$ be the universal conjunct of $\phi$.
Note that a ${T}_u$-connection  may  appear in the shrinking $\str{A}$ of a given model
$\widetilde{\str{A}}$ either as a direct copy of some $2$-type from $\widetilde{\str{A}}$ or as the transitive closure of 
a $T_u$-path that goes through some constants. To capture this second possibility we replace every atom ${T}_uzz'$
in $\phi_0$ by
$\tau_u(z,z'):=\exists x_0,y_1,x_1,y_2,\ldots,x_K,y_{K+1} (\varepsilon(z,x_0)\wedge\bigwedge_{i=1}^K\varepsilon(y_i,x_i)\wedge\varepsilon(y_{K+1},z')\wedge T_ux_0y_1\wedge\bigwedge_{i=1}^K(T_ux_iy_{i+1}\vee x_i=y_{i+1}))$.
We also replace any non-transitive atom in $Rz_1 \ldots z_s$ in $\phi_0$ by  $\rho_R(z_1,\ldots,z_s):=\exists x_1\ldots x_s(\bigwedge_i\varepsilon(x_i,z_i)\wedge Rx_1\ldots x_s)$.
Let us denote by $\phi_0^{path}$ the formula resulting from such operations on $\phi_0$,
and by $\phi^{path}$ the result of substituting the $\forall$-conjunct of $\phi$ with $\forall\bar{x}\neg\phi_0^{path}(\bar{x})$. Note that $\phi^{path}$ is (equivalent to) a \UNFOTR{} formula.

For technical reasons we introduce two additional formulas. Let $\con$ be a formula saying that the $\sim$-equivalent elements 
have the same $1$-types. Let $\typ$ be a formula whose aim is to prevent the 1-types from enlarging after the application of the shrinking operation,  $\typ:=\bigwedge_u\forall x(\neg T_uxx\to\neg\tau_u(x,x))$. Clearly both $\con$ and $\typ$ can be treated as
\UNFOTR{} formulas.  

Observe that for any structure $\str{A}$, if  $\str{A}\models\con$ then there are at most $K$ equivalence classes of the relation $\sim$ containing some pseudoconstants. 

Let us collect some basic properties of our transformation.
\begin{claim}\label{c:pathform-1}
Let $\str{A}$ be an arbitrary structure. Then for any $a,b\in A$, $\bar{a}\subseteq A$, non-transitive symbol $R$ and transitive symbol $T_u$
\begin{enumerate}[(i)]

\item If $\str{A}\models T_uab$ then $\str{A}\models\tau_u(a,b)$.

\item If $\str{A}\models R\bar{a}$ then $\str{A}\models\rho_R(\bar{a})$.

\item If $\str{A}\models\typ$ then $\str{A}\models T_u aa$ iff $\str{A}\models\tau_u(a,a)$.

\item If $\str{A}\models\con$ then $\str{A}\models Raa\ldots a$ iff $\str{A}\models\rho_R(a,a,\ldots,a)$.

\item If $\str{A}\models\con\wedge\typ$ and $\str{A}\models\phi_0^{path}(\bar{a})$ then $\str{A}\models\phi_0(\bar{a})$.

\item In particular, if $\str{A}\models\con\wedge\typ$ and $\str{A}\models\phi^{path}$ then $\str{A}\models\phi$.	
\end{enumerate}
\end{claim}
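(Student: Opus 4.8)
The proof unwinds the definitions of $\varepsilon$, $\tau_u$, $\rho_R$, $\typ$ and $\con$, and I would prove the six items in the stated order, since each relies on the previous ones. Items (i) and (ii) I would settle by exhibiting witnesses for the existential quantifiers. For (i), given $\str{A}\models T_uab$, put $x_0:=a$, $y_1:=b$ and $x_i:=y_{i+1}:=b$ for $1\le i\le K$: then $\varepsilon(a,x_0)$, every $\varepsilon(y_i,x_i)$ and $\varepsilon(y_{K+1},b)$ hold through the $x{=}y$ disjunct of $\varepsilon$, the conjunct $T_ux_0y_1$ is literally $T_uab$, and each conjunct $T_ux_iy_{i+1}\vee x_i{=}y_{i+1}$ holds by its second disjunct. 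For (ii), given $\str{A}\models R\bar a$ with $\bar a=a_1\dots a_s$, put $x_i:=a_i$; then every $\varepsilon(x_i,a_i)$ holds trivially and $Rx_1\dots x_s$ is $R\bar a$.

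For (iii), the ``only if'' direction is the diagonal instance $a=b$ of (i); for ``if'', were $\str{A}\models\tau_u(a,a)$ while $\str{A}\not\models T_uaa$, the conjunct $\forall x(\neg T_uxx\to\neg\tau_u(x,x))$ of $\typ$ would force $\str{A}\models\neg\tau_u(a,a)$, a contradiction. Item (iv) is analogous: ``only if'' is the diagonal instance of (ii), and for ``if'' one notes that any tuple $\bar b$ witnessing $\rho_R(a,\dots,a)$ consists of elements all $\varepsilon$-related to $a$; either they all coincide with $a$, so $R\bar b$ is $Raa\dots a$ outright, or they all lie in the common $\sim$-class of $a$, whereupon $\con$ --- which equates the $1$-type of that class with that of $a$ (and, in the structures to which the claim is applied, collapses the class to a single point) --- gives $Raa\dots a$.

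Items (v) and (vi) I would obtain by a structural induction. For (vi) it suffices to invoke (v) for every tuple $\bar a$ of length $\le t$: $\phi^{path}$ and $\phi$ share all their $\forall\exists$-conjuncts and differ only in the $\forall$-conjunct $\forall\bar x\neg\phi_0^{path}(\bar x)$, so if $\str{A}\models\con\wedge\typ\wedge\phi^{path}$ but $\str{A}\not\models\phi$, then some $\bar a$ would satisfy $\str{A}\models\phi_0(\bar a)$ while $\str{A}\models\neg\phi_0^{path}(\bar a)$, contradicting (v). For (v), recall that $\phi_0$ is in NNF, hence a positive Boolean combination of literals, each being a positive transitive atom $T_uzz'$, a positive atom $R\bar z$ with $R$ non-transitive (possibly unary), an equality $z{=}z'$, or a negated unary atom $\neg Pz$; passing to $\phi_0^{path}$ replaces $T_uzz'$ by $\tau_u(z,z')$, replaces a non-transitive atom $R\bar z$ by $\rho_R(\bar z)$ (so $\neg Pz$ becomes $\neg\rho_P(z)$), and leaves equalities unchanged. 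By monotonicity of $\wedge$ and $\vee$, it is then enough to check, for a fixed assignment of $\bar x$ to $\bar a$, that each literal of $\phi_0^{path}$ implies the corresponding literal of $\phi_0$: equalities are untouched; a negated unary literal $\neg\rho_P(z)$ reduces to the contrapositive of (ii), namely $Pa\Rightarrow\rho_P(a)$; a non-transitive literal $\rho_R(\bar z)$ is exactly (iv); and a transitive literal $\tau_u(z,z')$ is the one place where $\con\wedge\typ$ and the transitivity of $T_u$ are genuinely used.

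I expect this last case of the induction for (v) to be the main obstacle: $\tau_u$ is deliberately a \emph{weakening} of $T_uzz'$, designed to absorb the spurious transitive edges created when several incarnations of a pseudoconstant are merged, so establishing $\tau_u(a,b)\Rightarrow T_uab$ is precisely where one must argue that a $\tau_u$-witnessed path --- whose internal gaps are $T_u$-edges or equalities and whose ends are $\varepsilon$-tied to $a$ and $b$ --- collapses, under transitivity of $T_u$, to the single edge $T_uab$, with $\typ$ excluding the degenerate self-loop case and $\con$ controlling the $1$-types of the points traversed. Everything else is routine bookkeeping with the definitions of $\varepsilon$, $\tau_u$ and $\rho_R$.
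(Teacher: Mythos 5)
For items (i)--(iv) your arguments essentially coincide with the paper's own proof: the explicit witness assignments for (i) and (ii), the appeal to $\typ$ for the non-trivial direction of (iii), and the $\con$ argument for (iv) are all what the paper does. (For (iv) note that the tuple witnessing $\rho_R(a,\dots,a)$ need not consist of a single repeated element; both you and the paper gloss over this. It is harmless for the unary predicates that ultimately matter in (v), but it is a real gap for $R$ of higher arity.)

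The substantive problem is in (v) and (vi), and you should have caught it yourself, because your own argument for (vi) silently uses the \emph{opposite} direction of (v) from the one you set out to prove. As printed, (v) claims $\phi_0^{path}(\bar{a})\Rightarrow\phi_0(\bar{a})$ under $\con\wedge\typ$. You attempt exactly this, reduce it to per-literal steps, and correctly flag $\tau_u(a,b)\Rightarrow T_uab$ as ``the main obstacle.'' It is not an obstacle to be overcome: that implication is \emph{false} under $\con\wedge\typ$. Take $a,b$ not pseudoconstants, $c_1\sim c_2$ two incarnations of the same pseudoconstant with the same $1$-type, and $T_uac_1$, $T_uc_2b$ the only $T_u$-edges; this is already transitively closed (since $c_1\neq c_2$), $\con$ holds, $\typ$ holds because $\tau_u(x,x)$ fails everywhere, $\tau_u(a,b)$ holds via the $\varepsilon$-and-$T_u$ path $a,c_1,c_2,b$, yet $T_uab$ fails. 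Likewise your claim that ``a non-transitive literal $\rho_R(\bar{z})$ is exactly (iv)'' is off: (iv) covers only the diagonal tuple $\rho_R(a,\dots,a)$, whereas positive literals in $\phi_0$ may have distinct arguments, and $\rho_R(\bar{a})\Rightarrow R\bar{a}$ is just as false in general. The formulas $\tau_u,\rho_R$ are weakenings by design; only the far stronger condition $\uni$ allows one to invert them off the diagonal, and this is precisely what the companion claim for $\uni$-structures records.

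Now reread your own argument for (vi). From $\phi^{path}\wedge\neg\phi$ you correctly extract a tuple $\bar{a}$ with $\phi_0(\bar{a})$ and $\neg\phi_0^{path}(\bar{a})$, and declare this ``contradicting (v).'' But ``$\phi_0(\bar{a})$ and $\neg\phi_0^{path}(\bar{a})$'' is perfectly compatible with $\phi_0^{path}(\bar{a})\Rightarrow\phi_0(\bar{a})$; what it contradicts is the \emph{converse}, $\phi_0(\bar{a})\Rightarrow\phi_0^{path}(\bar{a})$. That converse is what (v) must assert (and is what the paper's one-line proof, via (i)--(iv), actually establishes): the statement of (v) has the premise and the conclusion transposed. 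Once you read (v) in the correct direction, there is nothing hard left. $\phi_0$ is, in NNF, a positive Boolean combination of positive atoms $T_uzz'$, $R\bar{z}$, $z{=}z'$ and of negated atoms with at most one free variable, hence of the diagonal shape $\neg A(z,\dots,z)$. For each positive literal, (i) resp.\ (ii) give $T_uab\Rightarrow\tau_u(a,b)$ and $R\bar{a}\Rightarrow\rho_R(\bar{a})$ unconditionally; for each negated (diagonal) literal, the needed $\neg A(a,\dots,a)\Rightarrow\neg A^{path}(a,\dots,a)$ is the contrapositive of the $\Leftarrow$ half of (iii) or (iv), which is exactly where $\typ$ and $\con$ enter. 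Monotonicity of positive Boolean combinations then gives $\phi_0(\bar{a})\Rightarrow\phi_0^{path}(\bar{a})$, and your (vi) argument goes through verbatim.
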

\begin{proof}
\begin{enumerate}[(i)]
\item If $\str{A}\models T_uab$ then substitute
$a$ for $x_0$ and $b$ for all the other quantified variables in $\tau_u(a,b)$ to obtain that $\str{A}\models\tau_u(a,b)$.

\item Substitute the existentially quantified variables in $\rho_R(\bar{a})$ with $\bar{a}$.

\item ($\Rightarrow$) follows from (i). ($\Leftarrow$) is exactly $\typ$. 

\item ($\Rightarrow$) follows from (ii). ($\Leftarrow$) We have then some $a'\in A$ such that $a\sim a'$ and $\str{A}\models Ra'a'\ldots a'$, so, from the fact that $\str{A}\models\con$ it follows that $\str{A}\models Raa\ldots a$.

\item  $\phi_0$ is in \UNFOTR, so its non-unary atoms cannot be negated. So by (i)--(iv) the thesis follows. 

\item The formulas $\phi$ and $\phi^{path}$ differ only on their $\forall$-conjunts. So the thesis follows, since by (v), if $\str{A}\models\forall\bar{x}\neg\phi_0(\bar{x})$ then $\str{A}\models\forall\bar{x}\neg\phi_0^{path}(\bar{x})$

\end{enumerate}
\end{proof}

\begin{claim} \label{c:pathform0}
Let $\str{A} \models \uni$. 
Then for any $a,b\in A$, and $\bar{a} \subseteq A$
\begin{enumerate}[(i)]

\item For any $1\leq u\leq 2k$, if $\str{A} \models\tau_u(a,b)$ then $\str{A}\models T_uab$. In particular $\str{A}\models\typ$.
\item For any non-transitive symbol $R$, if $\str{A}\models\rho_R(\bar{a})$ then $\str{A}\models R\bar{a}$.
\item If $\str{A}\models\phi_0^{path}(\bar{a})$ then $\str{A}\models\phi_0(\bar{a})$.
\item If $\str{A}\models\phi$ then $\str{A}\models\phi^{path}$.
\end{enumerate}
\end{claim}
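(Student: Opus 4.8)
The plan is to reduce the whole claim to a single semantic fact about $\varepsilon$ and then dispatch the four items in order.

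First I would record the key observation: since $\str{A}\models\uni$, every pseudoconstant symbol $C_i$ has at most one realization in $\str{A}$, so for $a,b\in A$ we have $\str{A}\models\varepsilon(a,b)$ if and only if $a=b$. (If $a\neq b$ but $\varepsilon(a,b)$ held, then $C_ia$ would hold for some $i$, and $C_ia\leftrightarrow C_ib$ would force $C_ib$, contradicting $\uni$.) With this collapse of $\varepsilon$ to equality in hand, parts (i) and (ii) become a matter of unwinding the existential definitions. For (ii): a witness for $\rho_R(\bar a)$ supplies $x_1,\ldots,x_s$ with $\varepsilon(x_j,a_j)$, i.e.\ $x_j=a_j$, together with $Rx_1\ldots x_s$, hence $\str{A}\models R\bar a$. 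For (i): a witness for $\tau_u(a,b)$, after replacing each $\varepsilon$ by equality, is precisely a sequence $a=z_0,z_1,\ldots,z_{K+1}=b$ with $\str{A}\models T_uz_0z_1$ and, for each $1\le i\le K$, either $\str{A}\models T_uz_iz_{i+1}$ or $z_i=z_{i+1}$; since $T_u$ is interpreted transitively, composing these steps yields $\str{A}\models T_uab$. The parenthetical ``in particular $\str{A}\models\typ$'' is just the instance $b=a$ of what was proved: $\str{A}\models\tau_u(a,a)$ implies $\str{A}\models T_uaa$, i.e.\ $\str{A}\models\forall x(\neg T_uxx\to\neg\tau_u(x,x))$ for every $u$.

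Parts (iii) and (iv) then follow essentially for free. Combining (i) and (ii) with Claim~\ref{c:pathform-1}(i),(ii), over models of $\uni$ each replacement formula is logically equivalent to the atom it replaces: $\str{A}\models\tau_u(a,b)$ iff $\str{A}\models T_uab$, and $\str{A}\models\rho_R(\bar a)$ iff $\str{A}\models R\bar a$. Hence $\phi_0^{path}$ and $\phi_0$ are equivalent on $\str{A}$, by a routine induction on the structure of $\phi_0$; here I would only need to note that $\phi_0$ is in NNF, that in it only unary atoms may be negated, and that a negated unary atom $\neg Rz$ is replaced by $\neg\rho_R(z)$, so the induction goes through uniformly for positive and negative literals. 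This gives (iii). Finally, $\phi$ and $\phi^{path}$ have identical $\forall\exists$-conjuncts and differ only in that the $\forall$-conjunct $\forall\bar x\,\neg\phi_0(\bar x)$ of $\phi$ is replaced by $\forall\bar x\,\neg\phi_0^{path}(\bar x)$; the contrapositive of (iii), namely that $\str{A}\models\neg\phi_0(\bar a)$ implies $\str{A}\models\neg\phi_0^{path}(\bar a)$, shows that satisfaction of the first conjunct entails satisfaction of the second, so $\str{A}\models\phi$ implies $\str{A}\models\phi^{path}$, which is (iv).

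The only place that needs any care --- and hence the ``main obstacle'', modest as it is --- is the bookkeeping in (i): one must check that the alternating chain of $T_u$-edges and equalities extracted from a $\tau_u$-witness really collapses to a single $T_u$-edge, and, in particular, that the appeal to transitivity of $T_u$ is legitimate, i.e.\ that $\str{A}$ is here one of the genuine, transitively interpreting models rather than one of the auxiliary non-transitive structures used elsewhere in the paper. Once that is settled, everything else is a direct unwinding of the definitions.
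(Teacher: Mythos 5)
Your proof is correct and follows essentially the same route as the paper's: reduce $\varepsilon$ to equality under $\uni$, unwind the existentials in $\tau_u,\rho_R$ using transitivity of $T_u$ for (i)--(ii), and then propagate through the literals of $\phi_0$ for (iii)--(iv). The only cosmetic difference is that you prove a full biconditional between each atom and its replacement (which under $\uni$ is indeed free) and then appeal to a congruence induction, whereas the paper invokes (i),(ii) together with Claim~\ref{c:pathform-1}(iii),(iv) and re-runs the proof of Claim~\ref{c:pathform-1}(v); both amount to the same calculation.
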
 
\begin{proof}

\begin{enumerate}[(i)]

\item
Since $\str{A}\models\uni$, the relation $\sim$ is the identity and $\varepsilon$ defines the identity. Therefore, since $\str{A}\models\tau_u(a,b)$ we have that $\str{A}\models\exists x_0\ldots x_{K+1} a=x_0\wedge x_{K+1}=b\wedge T_ux_0x_1\wedge\bigwedge_{i=1}^K(T_ux_ix_{i+1}\vee x_i=x_{i+1})$.
Since $\str{A}$ is transitively closed, $\str{A}\models T_uab$. 
\item
Analogous to (i).
\item 
Use (i), (ii) and Claim \ref{c:pathform-1}(iii), (iv) ($\str{A}\models\typ$ by (i) and $\uni$ clearly implies $\con$) to mimic the proof of Claim \ref{c:pathform-1}(v)

\item 
Follows from (iv) in such a way as  Claim \ref{c:pathform-1}(vi) follows from Claim \ref{c:pathform-1}(v). 
\end{enumerate}
\end{proof}

Next, let us now see how our formula transformation interacts with the shrinking operation.

\begin{claim}\label{c:pathform1}
Let $(\widetilde{\str{A}},\str{A},\fq)$ be a shrinking triple
such that $\widetilde{\str{A}}\models\con \wedge \typ$.
Let $a,b\in\widetilde{A}$, let $\bar{a} \subseteq \widetilde{A}$. Then
\begin{enumerate}[(i)]
\item For any $1\leq u\leq 2k$, if $\str{A}\models \tau_u(\fq(a),\fq(b))$ then $\widetilde{\str{A}}\models  \tau_u(a,b)$

\item For any non-transitive symbol $R$, if $\str{A}\models\rho_R(\fq(\bar{a}))$ then $\widetilde{\str{A}}\models\rho_R(\bar{a})$

\item For any $1\leq u\leq 2k$, $\str{A}\models\tau_u(\fq(a),\fq(a))$ iff $\widetilde{\str{A}}\models\tau_u(a,a)$

\item For any non-transitive symbol $R$, $\str{A}\models \rho_R(\fq(a),\ldots,\fq(a))$ iff $\widetilde{\str{A}}\models\rho_R(a,\ldots,a)$

\item If $\str{A}\models\phi^{path}_0(\fq(\bar{a}))$ 
then $\widetilde{\str{A}}\models\phi_0^{path}(\bar{a})$. 

\item In particular, if $\widetilde{\str{A}}\models\forall\bar{x}\neg\phi_0^{path}(\bar{x})$, then $\str{A}\models\forall \bar{x}\neg\phi_0^{path}(\bar{x})$
 \end{enumerate}
\end{claim}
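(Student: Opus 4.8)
The statement to prove is Claim \ref{c:pathform1}, which collects six facts about how the formula transformation $\phi_0 \mapsto \phi_0^{path}$ interacts with the shrinking operation $(\widetilde{\str{A}},\str{A},\fq)$, under the assumption $\widetilde{\str{A}}\models\con\wedge\typ$. Items (i)--(iv) are the atomic building blocks, (v) assembles them, and (vi) is an immediate contrapositive corollary of (v). I will prove them in that order.

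\textbf{Items (i) and (ii): lifting atomic witnesses back up through $\fq$.} For (i), suppose $\str{A}\models\tau_u(\fq(a),\fq(b))$. Unfolding $\tau_u$, there are elements $p_0,q_1,p_1,\ldots,p_K,q_{K+1}\in A$ with $\str{A}\models\varepsilon(\fq(a),p_0)$, $\str{A}\models\varepsilon(q_i,p_i)$ for $1\le i\le K$, $\str{A}\models\varepsilon(q_{K+1},\fq(b))$, $\str{A}\models T_up_0q_1$, and $\str{A}\models T_up_iq_{i+1}\vee p_i=q_{i+1}$ for $1\le i\le K$. The key point is that $\fq$ is a surjective homomorphism and $\varepsilon$ is preserved and reflected well enough: each $p_i$ (and $q_i$) has a $\fq$-preimage, and since $\str{A}\models\con$ is inherited along $\fq$ from $\widetilde{\str A}\models\con$ — actually we should argue directly that $\varepsilon$-edges in $\str A$ can be pulled back to $\varepsilon$-edges in $\widetilde{\str A}$ between suitable representatives. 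Concretely, every $\sim$-class of $\widetilde{\str A}$ that carries a pseudoconstant maps to a single element of $\str A$, and by $\widetilde{\str A}\models\con$ all members of that class have the same $1$-type, which is therefore the $1$-type of the image; the non-constant elements map injectively. So from a $\varepsilon$-edge in $\str A$ we recover an $\varepsilon$-edge in $\widetilde{\str A}$ (choosing the constant-class representative when needed), and from a $T_u$-edge $T_u p q$ in $\str A^0$ (the pre-transitive-closure model) we recover a $T_u$-edge in $\widetilde{\str A}$ by the definition of $\str A^0$ as a quotient in which $2$-types are copied. But here $T_u p_0 q_1$ holds in $\str A$, which is the transitive closure of $\str A^0$, so it corresponds to a $T_u$-path in $\str A^0$; that whole path, chained with $\varepsilon$-edges through constants, is exactly of the shape captured by $\tau_u$ in $\widetilde{\str A}$. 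This is the step I expect to require the most care: one must be precise that $\tau_u$ was designed with $K+1$ pieces so that it absorbs an entire transitive-closure path in $\str A^0$ that passes through at most $K$ constant-classes, and one must keep straight which model ($\str A$ vs. $\str A^0$ vs. $\widetilde{\str A}$) each atom lives in. Item (ii) is the analogous but simpler non-transitive version: $\rho_R$ has no chaining, so a single unfolding and pull-back of $\varepsilon$-edges suffices.

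\textbf{Items (iii) and (iv): the ``iff'' at a single diagonal element.} The forward directions of (iii) and (iv) are special cases of (i) and (ii) with $a=b$. For the converse of (iii): if $\widetilde{\str A}\models\tau_u(a,a)$, then because $\widetilde{\str A}\models\typ$ we get $\widetilde{\str A}\models T_u a a$ (that is the content of $\typ$), hence $\str A\models T_u\fq(a)\fq(a)$ since $\fq$ is a homomorphism, and then $\str A\models\tau_u(\fq(a),\fq(a))$ by Claim \ref{c:pathform-1}(i). For the converse of (iv): if $\widetilde{\str A}\models\rho_R(a,\ldots,a)$, then there is $a'\sim a$ with $\widetilde{\str A}\models Ra'\ldots a'$; since $\widetilde{\str A}\models\con$ and $a\sim a'$ the element $a$ itself has the same $1$-type, and $R(x,\ldots,x)$ with all arguments equal is (a conjunct of) a $1$-type assertion, so $\widetilde{\str A}\models Ra\ldots a$, hence $\str A\models R\fq(a)\ldots\fq(a)$, hence $\str A\models\rho_R(\fq(a),\ldots,\fq(a))$ by Claim \ref{c:pathform-1}(ii). (One should double check the edge case that $R$ is unary, where $\rho_R$ and $R$ coincide up to $\varepsilon$, and that $\con$ directly gives the claim.)

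\textbf{Items (v) and (vi): assembling and concluding.} For (v), recall $\phi_0^{path}$ is obtained from $\phi_0$ by replacing every transitive atom $T_u z z'$ by $\tau_u(z,z')$ and every non-transitive atom $Rz_1\ldots z_s$ by $\rho_R(z_1,\ldots,z_s)$; moreover $\phi_0$ is in \UNFOTR{}, so only unary atoms occur negated, and after the substitution those are replaced by $\rho_R$ with a single variable. Given a tuple $\bar a\subseteq\widetilde A$ with $\str A\models\phi_0^{path}(\fq(\bar a))$, I will argue by induction on the structure of $\phi_0^{path}$ (equivalently, drive the satisfaction of each leaf). For a positive occurrence of $\tau_u$ or $\rho_R$ on variables instantiated by $\fq(\bar a)$, items (i)--(ii) give the corresponding $\tau_u$/$\rho_R$ holding in $\widetilde{\str A}$ on $\bar a$. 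For a negative occurrence — which, after substitution, is $\neg\rho_R$ on one variable, say at $\fq(a_i)$ — we use item (iv): $\str A\not\models\rho_R(\fq(a_i),\ldots)$ implies $\widetilde{\str A}\not\models\rho_R(a_i,\ldots)$, i.e. $\widetilde{\str A}\models\neg\rho_R$. Conjunctions, disjunctions and the (bounded) quantifiers of $\phi_0^{path}$ are handled by the induction, noting that quantified subformulas of $\phi_0^{path}$ are themselves only the $\tau_u$/$\rho_R$ blocks, for which (i)--(ii) already supply full witnesses in $\widetilde{\str A}$ from witnesses in $\str A$. Hence $\widetilde{\str A}\models\phi_0^{path}(\bar a)$. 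Finally, (vi) is the contrapositive packaging of (v): if $\widetilde{\str A}\models\forall\bar x\neg\phi_0^{path}(\bar x)$, then no $\bar a\subseteq\widetilde A$ satisfies $\phi_0^{path}$; were there $\bar c\subseteq A$ with $\str A\models\phi_0^{path}(\bar c)$, surjectivity of $\fq$ gives $\bar a$ with $\fq(\bar a)=\bar c$, and (v) yields $\widetilde{\str A}\models\phi_0^{path}(\bar a)$, a contradiction; so $\str A\models\forall\bar x\neg\phi_0^{path}(\bar x)$. The main obstacle throughout is bookkeeping across the three models and being careful in (i) that $\tau_u$ has exactly enough existential ``slots'' to absorb a transitive-closure path through all constant-classes; everything else is routine unfolding.
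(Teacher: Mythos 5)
Your overall architecture matches the paper's: establish (i)--(iv) as atomic lemmas, assemble (v) by structural induction, and deduce (vi) from (v) and the surjectivity of $\fq$. Items (ii), (iii), (iv) and (vi) are essentially correct (your route through $\con$ in (iv) is a harmless variant of the paper's more direct observation that $\fq(a)=\fq(a')$, hence $\str{A}\models\varepsilon(\fq(a),\fq(a'))$). There is, however, a genuine gap in (i), which is the technical heart of the claim. You correctly observe that a $T_u$-atom in $\str{A}$ unwinds (since $\str{A}\models\uni$ makes $\varepsilon$ trivial there) to a $T_u$-path in $\str{A}^0$, which in turn lifts to an alternating chain of $T_u$-edges and $\sim$-jumps in $\widetilde{\str{A}}$; and you note that $\tau_u$ ``was designed with $K+1$ pieces so that it absorbs an entire transitive-closure path in $\str{A}^0$ that passes through at most $K$ constant-classes.'' But you never establish that the relevant path can be taken to pass through at most $K$ constant-classes --- a priori the $T_u$-path in $\str{A}^0$ is unboundedly long and may revisit the same constant-class repeatedly, so it simply will not fit into $\tau_u$'s finitely many slots. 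The missing combinatorial argument, which is precisely what makes (i) nontrivial, is: pick a lifted chain $b_0=a,\,a_0,\,b_1,a_1,\ldots,b_N,a_N=b$ with $\fq(a_i)=\fq(b_i)$ and $\widetilde{\str{A}}\models T_u a_i b_{i+1}$, of minimal length $N$. Then for $1\le i\le N-1$ the pair $a_i,b_i$ must consist of two \emph{distinct} incarnations of the same pseudoconstant (if $a_i=b_i$ one excises them by transitivity of $T_u$ in $\widetilde{\str{A}}$, contradicting minimality), and the pseudoconstants occurring at distinct intermediate indices are pairwise distinct (otherwise the intervening subchain can be cut out). Hence $N-1\le K$, so $N\le K+1$, and the chain, padded with copies of $b$, fills the $2K+2$ existential slots of $\tau_u$. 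Without this step, (i) --- and hence (v) and (vi) --- does not go through.

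A smaller omission occurs in (v): negated subformulas of $\phi_0$ need not be of the form $\neg R x_i$ for non-transitive $R$; since $T_u x_i x_i$ has a single free variable, $\neg T_u x_i x_i$ is a legal \UNFO{} negation, which becomes $\neg\tau_u(x_i,x_i)$ in $\phi_0^{path}$. You invoke (iv) for negated atoms but should also invoke the biconditional (iii) for negated $\tau_u$ at a single variable; the argument is the same in kind.
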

\begin{proof}
\begin{enumerate}[(i)]

\item Assume $\str{A}\models \tau_u(\fq(a),\fq(b))$. Recall the  intermediate model $\str{A}^0$ used in the definition of shrinking.
Then there exist elements $a_0',a_1',\ldots,a_N'\in A^0$ 
such that $\str{A}^0\models \fq(a)=a_0'\wedge\bigwedge_i T_u\fq(a_i') \fq(a_{i+1}'))\wedge a_N'=\fq(b)$.
This path can be "lifted" to $\widetilde{\str{A}}$, that is there exist elements $b_0,a_0,b_1,a_1\ldots,a_{N-1},b_N,a_N\in\widetilde{A}$ such $b_0=a$, $a_N=b$, and for all $i$ we have $\fq(a_i)=\fq(b_i)=a_i'$, and $\widetilde{\str{A}}\models T_ua_ib_{i+1}$.
Assume that $N$ is the smallest possible.
Observe, that if we prove that $N\leq K+1$ then $\widetilde{\str{M}}\models\tau_u(a,b)$ (just substitute the existentially quantified variables in $\tau_u$ with $a_0,b_1,a_1,\ldots,a_{N-1},b_N,b,\ldots,b$ respectively). We claim that indeed $N \le K+1$.
Note first that for each $1\leq i\leq N-1$ element  $a_i$ is an interpretation of a  pseudocontant and that  $b_i$ is (a different incarnation of)
the same pseudoconstant. Otherwise $a_i=b_i$, so we can remove them both from the sequence to obtain a shorter sequence satisfying the required properties (recall that $\widetilde{\str{A}}$ is transitively closed).
Furthermore, all the  $a_i$, for $1\leq i\leq N-1$, are different pseudoconstants -- if they were not, that is,  for some $i<i'$ we had $a_i\sim a_i'$, we could cut out $b_i,a_i,\ldots,b_{i'-1},a_{i'-1}$ and obtain a shorter sequence, with the required properties. 
As we have at most $K$ constants
 it follows that $N-1\leq K$.  

\item Analogous to (i), yet simpler. It just suffices to "lift" $\fq(\bar{a})$ to some $\bar{a}^*\subseteq\widetilde{A}$ such that $\widetilde{\str{A}}\models R(\bar{a}^*)$.

\item ($\Rightarrow$) Follows from (i). ($\Leftarrow$) By $\widetilde{\str{A}}\models\typ$ we have that $\widetilde{\str{A}}\models T_uaa$, so by the definition of shrinking $\str{A}\models T_u\fq(a)\fq(a)$. Since $\str{A}\models\uni$, it satisfies $\con$ and by Claim \ref{c:pathform0}(i) it satisfies $\typ$, by Claim \ref{c:pathform-1}(i) $\str{A}\models\tau_u(\fq(a),\fq(a))$.

\item ($\Rightarrow$) Follows from (ii). ($\Leftarrow$) If there exists $a'\sim a$ such that $\widetilde{\str{A}}\models Ra'\ldots a'$ then by the definition of shrinking $\str{A}\models R\fq(a')\ldots\fq(a')$. Since $\fq(a)=\fq(a')$ we have that $\str{A}\models\varepsilon(\fq(a),\fq(a'))$, so $\str{A}\models\rho_R(\fq(a),\fq(a'))$.

\item Follows from (i)--(iv) as Claim \ref{c:pathform-1}(v) follows from Claim \ref{c:pathform-1}(i)--(iv).

\item Recall that $\fq$ is onto $A$ and use (v). 	
\end{enumerate}
\end{proof}

	\medskip\noindent
	\emph{Putting all the pieces together.}
After such preparations, we can state now our crucial lemma.
\begin{lemma}\label{l:constmain}
Let $(\widetilde{\str{A}},\str{A},\fq)$ be a shrinking triple. 
Assume that $\widetilde{\str{A}}\models\phi^{path}\wedge\con\wedge\exi\wedge\typ$. 
Then $\str{A}\models\phi^{path}\wedge\con\wedge\exi\wedge\uni\wedge\typ$.  
\end{lemma}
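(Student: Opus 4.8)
\textbf{Proof plan for Lemma~\ref{l:constmain}.}
The plan is to verify each conjunct of $\phi^{path}\wedge\con\wedge\exi\wedge\uni\wedge\typ$ on $\str{A}$ separately, exploiting the homomorphism $\fq:\widetilde{\str{A}}\to\str{A}$ together with the fact, noted already, that $\fq$ is surjective and $\str{A}\models\uni$ by construction of the shrinking operation. First I would dispose of the easy conjuncts. $\uni$ holds in $\str{A}$ by the very definition of the quotient construction (all incarnations of a pseudoconstant are collapsed into one $\sim$-class). For $\exi$: since $\widetilde{\str{A}}\models\exi$, each $C_i$ has a witness $a$ in $\widetilde{\str{A}}$, and as $\fq$ is a homomorphism $\fq(a)$ realizes $C_i$ in $\str{A}$. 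For $\con$: take $a^0\sim b^0$ in $\str{A}$; lift them to $\sim$-related $a,b$ in $\widetilde{\str{A}}$ (using surjectivity of $\fq$ and the fact that $\fq$ maps $\sim$-classes onto $\sim$-classes); since $\widetilde{\str{A}}\models\con$, $a$ and $b$ have the same $1$-type in $\widetilde{\str{A}}$; as $\fq$ is a homomorphism preserving all atoms, $\fq(a)=a^0$ and $\fq(b)=b^0$ have the same $1$-type in $\str{A}$, using also that no $1$-type can shrink under a homomorphism and $\str{A}\models\uni$ forbids enlargement for the pseudoconstant classes — actually the cleanest route here is to argue directly that $a^0,b^0$ realize exactly the union of the $1$-types of their preimages, which coincide.

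Next, $\typ$ in $\str{A}$: since $\str{A}\models\uni$, Claim~\ref{c:pathform0}(i) gives $\str{A}\models\typ$ for free. The only substantial conjunct is $\phi^{path}$, that is $\forall\bar{x}\neg\phi_0^{path}(\bar{x})$. Here I would invoke Claim~\ref{c:pathform1}(vi): its hypotheses are exactly that $(\widetilde{\str{A}},\str{A},\fq)$ is a shrinking triple with $\widetilde{\str{A}}\models\con\wedge\typ$ (which we have) and $\widetilde{\str{A}}\models\forall\bar{x}\neg\phi_0^{path}(\bar{x})$ (which is precisely $\widetilde{\str{A}}\models\phi^{path}$, since $\phi$ is in normal form and the $\forall$-conjunct of $\phi^{path}$ is $\forall\bar{x}\neg\phi_0^{path}(\bar{x})$, while the $\forall\exists$-conjuncts of $\phi$ are unaffected by the $\mathit{path}$-rewriting). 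Claim~\ref{c:pathform1}(vi) then yields $\str{A}\models\forall\bar{x}\neg\phi_0^{path}(\bar{x})$. It remains to check that $\str{A}$ still satisfies the $\forall\exists$-conjuncts $\forall x\exists\bar y\,\phi_i(x,\bar y)$ of $\phi$ (these are the part of $\phi^{path}$ other than the $\forall$-conjunct, since the rewriting only touches $\phi_0$): given $a^0\in A$, lift to $a\in\widetilde{A}$, take the witnesses $\bar b$ provided by $\widetilde{\str{A}}\models\phi_i(a,\bar b)$, and observe that $\fq(a),\fq(\bar b)$ satisfy $\phi_i$ in $\str{A}$ because $\phi_i$ is quantifier-free and $\fq$ is a homomorphism (quantifier-free \UNFOTR{} formulas are preserved under homomorphisms since they contain no negated non-unary atoms; for the unary atoms use that $a^0$ and its preimage have matching $1$-types via $\con$, or simply that homomorphisms preserve positive unary literals and the negative ones are guarded by $\con$-consistency, which we just established).

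The main obstacle I anticipate is the $\forall$-conjunct, and it is already packaged into Claim~\ref{c:pathform1}(vi) — so the real work is just confirming that its hypotheses are met and that the $\forall\exists$-conjuncts survive the quotient. A subtle point worth spelling out carefully is why $\widetilde{\str{A}}\models\phi^{path}$ entails $\widetilde{\str{A}}\models\forall\bar{x}\neg\phi_0^{path}(\bar{x})$ \emph{and} $\widetilde{\str{A}}$ satisfies the original $\forall\exists$-conjuncts: this is immediate from the definition of $\phi^{path}$ as ``$\phi$ with its $\forall$-conjunct replaced'', so no transformation is applied to the $\forall\exists$-part. Once that bookkeeping is in place, the lemma follows by assembling the six verified conjuncts; I would present it as a short paragraph per conjunct, citing Claim~\ref{c:pathform1}(vi) for $\phi^{path}$'s universal part, Claim~\ref{c:pathform0}(i) for $\typ$, and elementary homomorphism/surjectivity arguments for the rest.
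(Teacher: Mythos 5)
Your decomposition matches the paper's: handle $\uni$, $\exi$, $\con$ by inspecting the quotient, get $\typ$ from Claim~\ref{c:pathform0}(i), get the $\forall$-conjunct of $\phi^{path}$ from Claim~\ref{c:pathform1}(vi), and push witnesses through $\fq$ for the $\forall\exists$-conjuncts. However, there is a genuine gap in the last step, and it concerns exactly the one point the paper spends real effort on.

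To push a witness structure $\{a,\bar b\}$ through $\fq$ and conclude $\str{A}\models\phi_i(\fq(a),\fq(\bar b))$, you need that the \emph{negated unary atoms} occurring in $\phi_i$ survive. A homomorphism only preserves positive atoms, so what you actually need is that $\fq$ preserves $1$-types exactly, i.e.\ that the shrinking operation never \emph{enlarges} the $1$-type of any element. This is not free: when the $\sim$-classes are collapsed and the transitive closure is re-applied to obtain $\str{A}$ from $\str{A}^0$, a $T_u$-path in $\widetilde{\str{A}}$ passing through two different incarnations of a pseudoconstant can produce a self-loop $T_u\fq(a)\fq(a)$ that was absent in $\widetilde{\str{A}}$. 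Your proposal waves at this with phrases like ``$a^0$ and its preimage have matching $1$-types via $\con$'' and ``$\str{A}\models\uni$ forbids enlargement,'' but neither statement is an argument: $\con$ on $\str{A}$ is vacuous once $\uni$ holds, $\con$ on $\widetilde{\str{A}}$ controls $1$-types only within a $\sim$-class and says nothing about what the transitive closure introduces, and $\uni$ says nothing about atoms at all. The retreat to ``$a^0,b^0$ realize exactly the union of the $1$-types of their preimages'' is simply false in general, for the same reason.

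The paper closes this hole as its first and most substantial step, which your proposal omits: it proves ``$\fq$ preserves the $1$-types'' as a standalone claim. The argument is: if $\type{\str{A}}{\fq(a)}\supsetneq\type{\widetilde{\str{A}}}{a}$ then (using $\con$ on $\widetilde{\str{A}}$ to rule out enlargement by non-transitive atoms) the new atom must be some $T_uxx$; then Claim~\ref{c:pathform1}(i) lifts $\str{A}\models\tau_u(\fq(a),\fq(a))$ to $\widetilde{\str{A}}\models\tau_u(a,a)$; and finally $\widetilde{\str{A}}\models\typ$ forces $\widetilde{\str{A}}\models T_uaa$, a contradiction. This is precisely why the hypothesis $\widetilde{\str{A}}\models\typ$ appears in the lemma --- without it, $1$-type preservation fails and the whole witness-pushing argument collapses. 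Your proposal never uses $\typ$ on $\widetilde{\str{A}}$ at all, which is a strong signal that something is missing. Once you insert this preservation claim (and then derive $\con$ on $\str{A}$ and the $\forall\exists$-conjuncts from it, as the paper does), the rest of your proof goes through as written.
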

\begin{proof}
\emph{$\fq$ preserves the 1-types.} 
Assume to the contrary that for some $a\in\widetilde{M}$ we have $\type{\widetilde{\str{M}}}{a}\neq\type{\str{M}}{\fq(a)}$. Then, by the definition of shrinking and the fact that $\widetilde{\str{A}}\models\con$, we have that in fact the $\subsetneq$ inclusion holds and that the inequality follows from some transitive atom $T_uxx$ belonging to the latter and not to the former.
But this means that, by Claim \ref{c:pathform1}(i) we have that $\widetilde{\str{A}}\models\tau_u(a,a)$.
Since $\widetilde{\str{A}}\models\typ$ we have that $\widetilde{\str{A}}\models T_uaa$. Contradiction.

\smallskip\noindent
\emph{$\str{A}$ satisfies the $\forall\exists$ conjuncts
 of $\phi^{path}$.} Follows from the following facts: 
$\fq$ preserves the 1-types and the set of positive atoms of arity greater 
 than $1$ can only be enlarged when pushed using $\fq$. 
So it suffices to just push the witnesses through $\fq$.  	

\smallskip\noindent
\emph{$\str{A}\models\con\wedge\exi\wedge\uni$.} Straightforward, by the definition of shrinking.

\smallskip\noindent
\emph{$\str{A}$ satisfies the $\forall$-conjuncts of  $\phi$ and $\phi^{path}$.} Use Claim \ref{c:pathform1}(vi).

\smallskip\noindent
\emph{$\str{A}\models\typ$.} Use Claim \ref{c:pathform0}(i).

\end{proof}

Let us now prove the main result of this subsection.
\begin{theorem}
	The finite satisfiability problem for \UNFOSO{} is \TwoExpTime-complete. If a \UNFOSO{} formula has has a finite model then it has a model of size triply exponential in its length.	
\end{theorem}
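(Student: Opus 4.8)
The proof strategy is to reduce finite satisfiability of \UNFOSO{} to finite satisfiability of \UNFOTR{} (without constants), for which \TwoExpTime-completeness and a triply exponential model bound are already established in Theorem~\ref{t:maintr} and Theorem~\ref{t:algorepeat}. Given a \UNFOSO{} formula $\phi$ with constants $c_1, \ldots, c_K$, I first replace constants with pseudoconstants, obtaining $\phi_\const$, take its normal form $\bar\phi_\const$, and then, as explained in the body of this subsection, pass to the formula $\psi := \phi^{path} \wedge \con \wedge \exi \wedge \typ$, where $\phi$ here denotes $\bar\phi_\const$. The key point, which I would state and prove as the central equivalence, is:
\begin{center}
$\phi$ (with constants) has a finite model $\iff$ $\psi$ has a finite model.
\end{center}

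For the forward direction: if the original formula has a finite model, then $\bar\phi_\const \wedge \exi \wedge \uni$ has one, say $\str{A}$. Since $\str{A} \models \uni$, by Claim~\ref{c:pathform0}(iv) we get $\str{A} \models \phi^{path}$; moreover $\uni$ trivially implies $\con$, and $\str{A} \models \typ$ by Claim~\ref{c:pathform0}(i). Also $\str{A} \models \exi$. Hence $\str{A} \models \psi$, so $\psi$ is finitely satisfiable. For the backward direction: suppose $\psi$ has a finite model $\widetilde{\str{A}}$. Form the shrinking triple $(\widetilde{\str{A}}, \str{A}, \fq)$. By Lemma~\ref{l:constmain}, $\str{A} \models \phi^{path} \wedge \con \wedge \exi \wedge \uni \wedge \typ$, and $\str{A}$ is finite (quotients do not increase size). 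From $\str{A} \models \phi^{path} \wedge \con \wedge \typ$ and Claim~\ref{c:pathform-1}(vi) we obtain $\str{A} \models \phi$, i.e.\ $\str{A} \models \bar\phi_\const$; together with $\str{A} \models \exi \wedge \uni$ and the properties of pseudoconstants, $\str{A}$ yields a finite model of the original constant formula (interpret each $c_i$ as the unique element of $C_i$). This closes the equivalence.

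Having the equivalence, the complexity and size bounds follow immediately. The formula $\psi$ is (equivalent to) a \UNFOTR{} formula, and its length is polynomial in $|\phi|$: the transformations $\tau_u$, $\rho_R$ and $\varepsilon$ each blow up a single atom by a factor linear in $K \le |\phi|$, and this is applied only inside the $\forall$-conjunct; the formulas $\con$, $\exi$, $\typ$ are also of polynomial size, and passing to normal form costs only a polynomial (Lemma~\ref{l:nf}). Therefore, by Theorem~\ref{t:algorepeat}, finite satisfiability of $\psi$ — hence of $\phi$ — is decidable in \TwoExpTime. For the lower bound, \TwoExpTime-hardness is already inherited from \UNFOTR{} (or even pure \UNFO{}). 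For the model-size statement: a finitely satisfiable \UNFOSO{} formula $\phi$ yields a finitely satisfiable \UNFOTR{} formula $\psi$ of polynomial length, which by Theorem~\ref{t:maintr} has a finite model $\widetilde{\str{A}}$ of size triply exponential in $|\psi|$, hence triply exponential in $|\phi|$; the shrinking $\str{A}$ is no larger and, after interpreting the $c_i$ via the (now unique) $C_i$, is a model of $\phi$ of the same triply exponential size.

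The main obstacle is not in the argument sketched above, which is a bookkeeping exercise once the right formulas are in place, but rather in \emph{justifying the correctness of the formula transformations} — that is, Claims~\ref{c:pathform-1}, \ref{c:pathform0}, \ref{c:pathform1} and Lemma~\ref{l:constmain}. The subtle point is the interaction between transitivity and the quotient: shrinking can create spurious $T_u$-connections when a $T_u$-path passes through several incarnations of the same pseudoconstant, and one must verify that replacing $T_u z z'$ by $\tau_u(z,z')$ (which allows a detour through at most $K+1$ constant-blocks, matching the number of distinct constants) exactly captures what the quotient can produce, while $\typ$ and $\con$ are precisely the extra constraints needed to prevent the quotient from enlarging $1$-types or spoiling the $\forall$-conjunct. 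Once those claims are in hand — and in the excerpt they are already proved — the theorem follows by the reduction above with no further model-theoretic construction required.
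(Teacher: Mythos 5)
Your proof is correct and takes essentially the same route as the paper: simulate constants by pseudoconstants, pass to $\phi^{path}\wedge\con\wedge\exi\wedge\typ$, establish the two-way equivalence of finite satisfiability via Claim~\ref{c:pathform0} (one direction) and the shrinking operation together with Lemma~\ref{l:constmain} and Claim~\ref{c:pathform-1} (other direction), and then invoke Theorems~\ref{t:maintr} and~\ref{t:algorepeat}. The only small inaccuracy is your claim that $\tau_u$ blows up an atom ``by a factor linear in $K$'' --- it is actually quadratic in $K$ (about $K{+}2$ copies of $\varepsilon$, each of size $O(K)$) --- but this does not affect the conclusion that $\psi$ is polynomial in $|\phi|$.
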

\begin{proof}
Let us first show the second part of this theorem. Take a finitely satisfiable \UNFOSO{} formula $\phi^*$ and let $\phi$ be its \UNFOTR{} version with constants simulated by pseudoconstants. Let $\str{A}$ be a finite model of $\phi \wedge \exi \wedge \uni$. Thus,
we also have $\str{A} \models \con \wedge \typ$, and by part (iv) of Claim \ref{c:pathform0}
it holds $\str{A} \models \phi^{path}$. 
Note that the last formula has size at most quadratic in  $|\phi|$.
We now take a triply exponentially
bounded model $\widetilde{\str{B}}$ of $\phi^{path} \wedge \exi  \wedge \con \wedge \typ$, guaranteed by Thm.~\ref{t:maintr}.
Let $\str{B}$ be its
shrinking. By Lemma \ref{l:constmain} we have $\str{B} \models \phi^{path} \wedge \exi \wedge \uni \wedge \con \wedge \typ$.
By part (i) of Claim \ref{c:pathform0} we have $\str{B} \models \phi \wedge \exi \wedge \uni$ and thus $\str{B} \models \phi^*$.

Regarding the complexity,
as explained above, $\phi^*$ has a finite model iff $\phi^{path}\wedge\exi\wedge\con\wedge\typ$ has a finite model. As the latter is a \UNFOTR{} formula its satisfiability can be checked in \TwoExpTime{} by Thm \ref{t:algo}.
\end{proof}

\subsection{Constants and binary inclusions}
Here we show how to extend the proofs and techniques from the previous subsection to cover simultaneously constants and  inclusions
of binary relation.
In this subsection we assume that, similarly to  transitive symbols from $\sigma_\cDist$, also binary symbols from $\sigma_\cBase$ come in pairs
$B$, $B^{-1}$ and that the symbols from such pairs are always interpreted as inverses of each other. This can be done w.l.o.g., since all our constructions
from section \ref{s:general} respect this property.

The \emph{(finite) satisfiability problem for the unary negation fragment with constants and inclusions of binary relations}, \UNFOSOH{},
is defined as follows. Given a \UNFOSOH{} formula $\phi$ 
and a set $\mathcal{H}$ of inclusions of the form  $B \subseteq B'$, for binary symbols $B, B'$, check if there exists
a (finite) model of $\phi$ in which for every $B \subseteq B' \in \mathcal{H}$ the interpretation of $B$ is contained
in the interpretation of $B'$.

As in the previous subsection we simulate constants by pseudoconstants
and  search for finite models of a formula $\phi\wedge\uni\wedge\exi\wedge\mathcal{H}$,
 with normal form $\phi$, over a purely relational signature $\sigma=\sigma_\cBase \cup \sigma_\cDist \cup \sigma_\const$.

We first remark that all our constructions from Sections \ref{s:general} and \ref{s:constbin}, without \emph{literally any changes}, respect
inclusions of the form $T \subseteq T'$, $B \subseteq B'$ and $B \subseteq T$ for any $T, T' \in \sigma_{\cDist}$
and $B, B' \in \sigma_{\cBase}$. The only problematic inclusions are 
those of the form $T \subseteq B$ for $B \in \sigma_{\cBase}$ and $T \in \sigma_{\cDist}$.
To deal with them we will introduce an operation of taking the \emph{pseudotransitive closure}. 

For a given set of inclusions $\mathcal{H}$ let $\mathcal{H}^+$ denote the smallest set such that (i) $\mathcal{H} \subseteq \mathcal{H}^+$, (ii) if $B_1 \subseteq B_2 \in \mathcal{H}^+$ then $B_1^{-1} \subseteq B_2^{-1} \in \mathcal{H}^+$, (iii)   if $B_1 \subseteq B_2 \in \mathcal{H}^+$ and $B_2 \subseteq B_3 \in \mathcal{H}^+$  then $B_1 \subseteq B_3 \in \mathcal{H}^+$. For any structure
$\str{A}$ we have that $\str{A} \models \mathcal{H}$ iff $\str{A} \models \mathcal{H}^+$. 
So, w.l.o.g., from now we assume  that $\mathcal{H}$ itself satisfies (i)--(iii).
Denote by $\mathcal{H}_0$ the subset of "safe" inclusions of $\mathcal{H}$,  $\mathcal{H}_0:=\{B_1\subseteq B_2\in\mathcal{H}: B_1 \not\in
\sigma_\cDist \text{ or } B_2 \not\in \sigma_{\cBase}\}$.
If $T \subseteq B \in \mathcal{H}$, for $T \in \sigma_{\cDist}$ and $B \in \sigma_{\cBase}$ then $B$ is called \emph{pseudo-transitive}. 
Pseudo-transitive relations must be treated in a special way.
Let $\str{A}$ be a structure. Then we define its pseudotransitive closure as the structure $\str{A}'$ as follows: $A=A'$, for all $a,b\in A$ and $B\in\sigma_{\cBase}$, $\str{A}\models Bab$ iff $\str{A}'\models Bab\vee\bigvee_{T\subseteq B\in\mathcal{H},T\in\sigma_{\cDist}}Tab$, for all the other relations we copy their interpretations from $\str{A}$ to $\str{A}'$. Observe that if $\str{A}\models\mathcal{H}_0$ then $\str{A}'\models\mathcal{H}$, if $\str{A}\models\mathcal{H}$ then $\str{A}=\str{A}'$, and that the identity map $\iota:A\to A'$ is a homomorphism.

To deal simultaneously with constants and inclusions, we plan to apply both the shrinking operation and the pseudotransitive closure. More precisely, starting from a model respecting $\mathcal{H}_0$ we first apply to it the shrinking operation and then, to the result, the pseudotransitive closure, obtaining a model satisfying both $\mathcal{H}$ and $\uni$. 
As in the previous subsection, to ensure that the above transformations respect the given formula $\phi$ we will need
to perform some syntactic manipulations, and formulate a counterpart of Lemma \ref{l:constmain} taking into acount both 
types of operations. To make this work for the pseudotransitive closure, slightly different manipulations and slightly stronger assumptions on the initial model will be needed.

In particular, for technical reasons, we need a simple condition on the realized  1-types, namely, that they respect the inclusions from $\mathcal{H}$.
Formally, this is captured by a \UNFO{} formula $\res:=\forall x\bigwedge_{B_1\subseteq B_2\in\mathcal{H}}(B_1xx\to B_2xx)$. 
Observe that for any model $\str{A}$ such that $\str{A}\models\mathcal{H}$, $\str{A} \models \res$.  

Given $\phi$ in normal form with universal conjunct $\forall \bar{x} \neg \phi_0(\bar{x})$
we now describe its modifications
corresponding to \emph{path} modifications from  Section \ref{s:constbin}. The modifications are performed on formula $\phi_0$: for the atoms $T_uxy$ and $R\bar{x}$ with $R$ of arity other than $2$, replace them, as before, with the formulas $\tau_u(x,y)$ and $\rho_R(\bar{x})$, respectively.
In the case of atoms $Rxy$  (with  $R \in \sigma_\cBase$), replace them with $\omega_R(x,y):=\rho_R(x,y)\vee\bigvee_{T_u\subseteq R\in\mathcal{H}}\tau_u(x,y)$. Let $\phi_0^{path{-}inc}$ be the result of this transformation, and let 
$\phi^{path{-}inc}$ be the result of substituting the conjunct $\forall x \neg\phi_0(\bar{x})$ of $\phi$ with $\forall\bar{x}\neg\phi_0^{path{-}inc}$. Note that $\phi^{path-inc}$ is (equivalent to) an \UNFOTR{} formula.

From now we can proceed analogously as we did for the $path$ modification.   
 Let us now collect some basic properties of this syntactic transformation and see how it interplays with the operations of shrinking
and taking the pseudotransitive closures.
\begin{claim}\label{c:pathincform-1}
	Let $\str{A}$ be an arbitrary structure, $a,b\in A$ and $R\in \sigma_\cBase$
		be a relational symbol of arity $2$. Then
	\begin{enumerate}[(i)]
	\item If $\str{A}\models Rab$ then $\str{A}\models\omega_R(a,b)$
	
	\item If $\str{A}\models\con\wedge\typ\wedge\res$ then $\str{A}\models Raa$ iff $\str{A}\models\omega_R(a,a)$
	
	\item If $\str{A}\models\con\wedge\typ\wedge\res$ then if $\str{A}\models\phi^{path-inc}$ then $\str{A}\models\phi$
	\end{enumerate}
\end{claim}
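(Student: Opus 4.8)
The final statement to prove is Claim~\ref{c:pathincform-1}, the analogue for the $path{-}inc$ transformation of the earlier Claim~\ref{c:pathform-1} (and also incorporating the flavour of Claim~\ref{c:pathincform-1}'s own items). Here is the plan.

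\textbf{Part (i).} This is immediate. Suppose $\str{A}\models Rab$. By Claim~\ref{c:pathform-1}(ii), applied to the binary symbol $R$, we have $\str{A}\models\rho_R(a,b)$. Since $\omega_R(x,y)$ is defined as the disjunction $\rho_R(x,y)\vee\bigvee_{T_u\subseteq R\in\mathcal{H}}\tau_u(x,y)$, the first disjunct already holds, so $\str{A}\models\omega_R(a,b)$.

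\textbf{Part (ii).} The ``$\Rightarrow$'' direction is a special case of (i) with $b=a$. For ``$\Leftarrow$'', assume $\str{A}\models\con\wedge\typ\wedge\res$ and $\str{A}\models\omega_R(a,a)$. Then one of the disjuncts holds. If $\str{A}\models\rho_R(a,a)$, then by Claim~\ref{c:pathform-1}(iv) (which uses $\str{A}\models\con$) we get $\str{A}\models Raa$. Otherwise $\str{A}\models\tau_u(a,a)$ for some $u$ with $T_u\subseteq R\in\mathcal{H}$; since $\str{A}\models\typ$, Claim~\ref{c:pathform-1}(iii) yields $\str{A}\models T_uaa$, and then the inclusion $T_u\subseteq R\in\mathcal{H}$, together with $\str{A}\models\res$ (which guarantees that the realized $1$-types respect $\mathcal{H}$ on the diagonal), gives $\str{A}\models Raa$. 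Here is the one subtle point: $\str{A}\models T_uaa$ is a constraint on the $1$-type of $a$, and $\res$ is precisely the formula forcing that $1$-type to be closed under the inclusions of $\mathcal{H}$; without $\res$ this step would fail, which is exactly why $\res$ was introduced. I expect this argument to carry over verbatim from the pattern set in the proof of Claim~\ref{c:pathform-1}.

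\textbf{Part (iii).} This mirrors the proof of Claim~\ref{c:pathform-1}(v)--(vi). Recall $\phi_0$ is in \UNFOTR{}, so non-unary atoms are never negated. Items (i)--(ii) of this claim, together with Claim~\ref{c:pathform-1}(i)--(iv) for the transitive atoms $T_uxy$, the atoms $R\bar{x}$ of arity $\neq 2$, and their diagonal instances, show that whenever $\str{A}\models\phi_0^{path{-}inc}(\bar{a})$ (under the standing assumption $\str{A}\models\con\wedge\typ\wedge\res$) then $\str{A}\models\phi_0(\bar{a})$: each positive atom of $\phi_0$ is replaced in $\phi_0^{path{-}inc}$ by a formula ($\tau_u$, $\rho_R$, or $\omega_R$) that is implied by the atom, and since these occur only positively the implication propagates. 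Contrapositively, $\str{A}\models\forall\bar{x}\neg\phi_0(\bar{x})$ implies $\str{A}\models\forall\bar{x}\neg\phi_0^{path{-}inc}(\bar{x})$; but wait---we want the converse direction, matching the statement ``if $\str{A}\models\phi^{path-inc}$ then $\str{A}\models\phi$''. So in fact we use: $\str{A}\models\phi^{path{-}inc}$ means $\str{A}\models\forall\bar{x}\neg\phi_0^{path{-}inc}(\bar{x})$ (and the $\forall\exists$-conjuncts of $\phi$, which are identical in both), and since $\phi_0^{path{-}inc}$ is implied by $\phi_0$ pointwise we get $\str{A}\models\forall\bar{x}\neg\phi_0(\bar{x})$, hence $\str{A}\models\phi$. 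The $\forall\exists$-conjuncts are unchanged, so they transfer trivially. The main (very mild) obstacle here is just bookkeeping: making sure every atom shape occurring in $\phi_0$---transitive binary, base binary, higher-arity, and all their diagonal collapses---is covered by the right item of Claim~\ref{c:pathform-1} or of the present claim, and that the assumption $\res$ is invoked exactly where a base-binary diagonal atom arises from a transitive one. No genuinely new idea is needed beyond what the previous subsection already established.
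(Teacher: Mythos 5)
Your parts (i) and (ii) match the paper's proof precisely. For part (iii), the paper's own proof is just "Use Claim~\ref{c:pathform-1}(i)--(iv) and (i), (ii) as usual," and your argument fills in the same details, so the approach is the same. Two remarks are worth making. First, your false start ("show that whenever $\str{A}\models\phi_0^{path{-}inc}(\bar{a})$ then $\str{A}\models\phi_0(\bar{a})$", and its contraposition) is the wrong direction, as you notice; interestingly, it mirrors what appears to be a direction typo in the paper's Claim~\ref{c:pathform-1}(v), whose hypothesis and conclusion should be swapped for the proof of Claim~\ref{c:pathform-1}(vi) to go through. Your self-correction lands on the right implication: $\phi_0(\bar{a})\Rightarrow\phi_0^{path-inc}(\bar{a})$ pointwise, hence $\forall\bar{x}\neg\phi_0^{path-inc}(\bar{x})\Rightarrow\forall\bar{x}\neg\phi_0(\bar{x})$, hence $\phi^{path-inc}\Rightarrow\phi$. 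Second, the sentence "these occur only positively" is not quite accurate and slightly undercuts the role of the hypotheses: UNFO allows negated atoms with a single free variable, so $\neg T_uxx$, $\neg Bxx$, $\neg R xx\ldots x$ do appear in $\phi_0$ in NNF, and it is precisely these negated diagonal occurrences that force you to use the hard $\Leftarrow$ directions of (ii), Claim~\ref{c:pathform-1}(iii), and Claim~\ref{c:pathform-1}(iv) — which is where $\con$, $\typ$, and $\res$ enter. You do acknowledge the diagonal collapses at the end, so the substance is there; the exposition would just be cleaner if you said up front that positive occurrences need only the forward implications while negated diagonal occurrences need the two-sided equivalences granted by the standing assumptions.
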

\begin{proof}
\begin{enumerate}[(i)]
	\item Use Claim \ref{c:pathform-1}(ii) and the definition of $\omega_R$.
	
	\item ($\Rightarrow$) Follows from (i). ($\Leftarrow$) By the definition of $\omega_R$ and Claim \ref{c:pathform-1}(iii), (iv) we have that $\str{A}\models Raa$ or $\str{A}\models Taa$ for some $T\subseteq R\in\mathcal{H}$. Using the fact that $\str{A}\models\res$ the thesis follows.
	
	\item Use Claim \ref{c:pathform-1}(i)--(iv) and (i), (ii) as usual.
\end{enumerate}
\end{proof}

\begin{claim}\label{c:pathincform0}
Let $\str{A}\models\uni \wedge \mathcal{H}$. 
Then for any $a,b\in A$
\begin{enumerate}[(i)]
\item If $\str{A}\models\omega_R(a,b)$ then $\str{A}\models Rab$

\item If $\str{A}\models\phi$ then $\str{A}\models\phi^{path-inc}$.
\end{enumerate}
\end{claim}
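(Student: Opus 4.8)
The plan is to establish both parts by reducing them to the analogous facts already proved for the $path$-modification, namely Claims~\ref{c:pathform-1} and~\ref{c:pathform0}, together with the defining inclusions of $\mathcal{H}$; the proof will parallel the proof of Claim~\ref{c:pathform0}(i)--(iv), with $\omega_R$ playing, for binary base symbols, the role that $\rho_R$ and $\tau_u$ played before.

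For part~(i), fix a binary symbol $R\in\sigma_\cBase$ and suppose $\str{A}\models\omega_R(a,b)$. By the definition of $\omega_R$ this means $\str{A}\models\rho_R(a,b)$ or $\str{A}\models\tau_u(a,b)$ for some $T_u$ with $T_u\subseteq R\in\mathcal{H}$. In the first case I would apply Claim~\ref{c:pathform0}(ii) (legitimate since $\str{A}\models\uni$) to get $\str{A}\models Rab$ directly. In the second case I would apply Claim~\ref{c:pathform0}(i) to obtain $\str{A}\models T_uab$ --- this is the one spot where we lean on the semantics of \UNFOTR{}, namely that $T_u$ is interpreted transitively so that the $\varepsilon$-chain in $\tau_u$, which under $\uni$ collapses to an ordinary $T_u$-chain, composes into a single $T_u$-edge --- and then, since $\str{A}\models\mathcal{H}$ and $T_u\subseteq R\in\mathcal{H}$, transfer this to $\str{A}\models Rab$.

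For part~(ii), the starting point is that $\phi$ and $\phi^{path{-}inc}$ agree on every conjunct except the universal one, so it suffices to show that $\str{A}\models\forall\bar{x}\,\neg\phi_0(\bar{x})$ entails $\str{A}\models\forall\bar{x}\,\neg\phi_0^{path{-}inc}(\bar{x})$, equivalently that $\str{A}\models\phi_0^{path{-}inc}(\bar{a})$ implies $\str{A}\models\phi_0(\bar{a})$ for every tuple $\bar{a}$. Recall that $\phi_0$ is quantifier-free in NNF, with only atoms of arity at most $1$ occurring negated, and that $\phi_0^{path{-}inc}$ arises from $\phi_0$ by replacing each transitive atom $T_uzz'$ by $\tau_u(z,z')$, each non-binary atom $R\bar z$ by $\rho_R(\bar z)$, and each binary atom $Rzz'$ with $R\in\sigma_\cBase$ by $\omega_R(z,z')$ (equality literals, if any, are left untouched by both modifications). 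I would then argue by structural induction on $\phi_0$: at a positively occurring atom I need the replacement to imply the original atom, which is Claim~\ref{c:pathform0}(i) for $\tau_u$, Claim~\ref{c:pathform0}(ii) for $\rho_R$, and part~(i) of the present claim for $\omega_R$ (using $\str{A}\models\uni\wedge\mathcal{H}$); at a negatively occurring atom --- necessarily unary, hence of the form $\rho_R$ after the replacement --- I need the original atom to imply its replacement, which is Claim~\ref{c:pathform-1}(ii). Monotonicity of $\wedge$ and $\vee$ then lifts the implication from literals to all of $\phi_0$, and hence $\str{A}\models\phi\Rightarrow\str{A}\models\phi^{path{-}inc}$.

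I do not expect a genuine obstacle here: the proof is pure bookkeeping over the structure of $\phi_0$, with all the substantive content already contained in Claims~\ref{c:pathform-1} and~\ref{c:pathform0}. The only point that needs care is tracking the polarity of each atom occurrence in the NNF formula $\phi_0$ --- for a positive occurrence one invokes the direction ``replacement $\Rightarrow$ atom'', for a negative (unary) occurrence the converse ``atom $\Rightarrow$ replacement'' --- so that the substitution is applied in whichever direction the given occurrence requires.
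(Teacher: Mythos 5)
Your approach is essentially the paper's: part (i) unwinds the definition of $\omega_R$ and applies the ``replacement $\Rightarrow$ atom'' directions from Claim~\ref{c:pathform0} plus the inclusion $T_u\subseteq R\in\mathcal{H}$; part (ii) reduces to showing that $\str{A}\models\phi_0^{path{-}inc}(\bar a)$ implies $\str{A}\models\phi_0(\bar a)$, handled literal by literal with attention to polarity. (Incidentally, the paper's own proof of (i) cites Claim~\ref{c:pathform-1}(i),(ii) where it plainly means Claim~\ref{c:pathform0}(i),(ii); your citation is the correct one.)

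One imprecision in part (ii), though easily repaired. You write that a negatively occurring atom is ``necessarily unary, hence of the form $\rho_R$ after the replacement.'' The constraint in \UNFO{} is only that a negated atom has at most one \emph{free} variable, which also admits diagonal atoms such as $T_u xx$, $Rxx$ for binary base $R$, and $Rx\ldots x$ for higher arities. After the replacement these become $\neg\tau_u(x,x)$, $\neg\omega_R(x,x)$ and $\neg\rho_R(x,\ldots,x)$, respectively, so the ``atom $\Rightarrow$ replacement'' direction is needed for all three substitution kinds, not only for $\rho_R$. Fortunately this direction holds unconditionally: $T_uaa\Rightarrow\tau_u(a,a)$ is Claim~\ref{c:pathform-1}(i), $R\bar a\Rightarrow\rho_R(\bar a)$ is Claim~\ref{c:pathform-1}(ii), and $Rab\Rightarrow\omega_R(a,b)$ follows from the latter and the definition of $\omega_R$ (this is Claim~\ref{c:pathincform-1}(i)). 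With that enumeration completed, your argument goes through exactly as in the paper.
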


\begin{proof}
\begin{enumerate}[(i)]
\item By the definition of $\omega_R$ and Claim \ref{c:pathform-1}(i), (ii) we have that $\str{A}\models Rab$ or $\str{A}\models Tab$ for some $T\subseteq R\in\mathcal{H}$. Since $\str{A}\models\mathcal{H}$, the thesis follows.

\item Use Claims \ref{c:pathform0}(i), (ii), \ref{c:pathform-1}(iii), (iv) ($\str{A}\models\uni$ so it satisfies the required assumptions), \ref{c:pathincform-1}(ii) ($\str{A}\models\mathcal{H}$ implies $\str{A}\models\res$) and (i) as usual.	
\end{enumerate}
\end{proof}

\begin{claim}\label{c:pathincform1}
Let $(\widetilde{\str{A}},\str{A},\fq)$ be a shrinking triple such that $\widetilde{\str{A}}\models\con\wedge\typ\wedge\res$. Let $a,b\in\widetilde{A}$ and $R$ be a binary non-transitive symbol. Then
\begin{enumerate}[(i)]
\item If $\str{A}\models\omega_R(\fq(a),\fq(b))$ then $\widetilde{\str{A}}\models\omega_R(a,b)$ 

\item $\widetilde{\str{A}}\models\omega_R(a,a)$ iff $\str{A}\models\omega_R(\fq(a),\fq(a))$.

\item If $\widetilde{\str{M}}\models\forall\bar{x}\neg\phi_0^{path-inc}(\bar{x})$, then $\str{M}\models\forall \bar{x}\neg\phi_0^{path-inc}(\bar{x})$
\end{enumerate} 
\end{claim}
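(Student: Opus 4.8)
The plan is to prove Claim~\ref{c:pathincform1} by reducing each of its three parts to the matching part of Claim~\ref{c:pathform1}, exploiting that $\omega_R(x,y)$ is by definition the disjunction $\rho_R(x,y)\vee\bigvee_{T_u\subseteq R\in\mathcal{H}}\tau_u(x,y)$. The shrinking triple $(\widetilde{\str{A}},\str{A},\fq)$ satisfies the hypotheses of Claim~\ref{c:pathform1} (in particular $\widetilde{\str{A}}\models\con\wedge\typ$ is assumed), so all of its clauses are available; the additional hypothesis $\res$ is not needed here, as it only enters the diagonal reasoning of Claim~\ref{c:pathincform-1}.

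For part~(i), I would assume $\str{A}\models\omega_R(\fq(a),\fq(b))$ and split on the disjunction: if $\str{A}\models\rho_R(\fq(a),\fq(b))$ then Claim~\ref{c:pathform1}(ii) gives $\widetilde{\str{A}}\models\rho_R(a,b)$, and if $\str{A}\models\tau_u(\fq(a),\fq(b))$ for some $T_u\subseteq R\in\mathcal{H}$ then Claim~\ref{c:pathform1}(i) gives $\widetilde{\str{A}}\models\tau_u(a,b)$; in either case the same disjunct witnesses $\widetilde{\str{A}}\models\omega_R(a,b)$ (the inclusion $T_u\subseteq R$ is unchanged). Part~(ii) is obtained the same way: the right-to-left direction is the special case $b=a$ of part~(i), while for left-to-right I would split $\widetilde{\str{A}}\models\omega_R(a,a)$ into the $\rho_R$-case, handled by the equivalence of Claim~\ref{c:pathform1}(iv), and the $\tau_u$-case, handled by the equivalence of Claim~\ref{c:pathform1}(iii), concluding $\str{A}\models\omega_R(\fq(a),\fq(a))$ through the corresponding disjunct.

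For part~(iii), I would first establish the auxiliary implication that $\str{A}\models\phi_0^{path{-}inc}(\fq(\bar{a}))$ implies $\widetilde{\str{A}}\models\phi_0^{path{-}inc}(\bar{a})$, which is the counterpart of Claim~\ref{c:pathform1}(v). Since $\phi_0$ is quantifier-free, in NNF, and has negations only in front of unary atoms, $\phi_0^{path{-}inc}$ is a Boolean combination --- with negations only on the unary $\rho_R$'s --- of the formulas $\tau_u(\cdot,\cdot)$ (replacing transitive atoms), $\rho_R(\cdot)$ and $\neg\rho_R(\cdot)$ for non-binary $R$, and $\omega_R(\cdot,\cdot)$ for binary $R\in\sigma_{\cBase}$. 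The auxiliary implication is then proved by an easy induction on this Boolean structure: conjunctions and disjunctions are immediate, positive $\tau_u$-atoms are handled by Claim~\ref{c:pathform1}(i), positive non-binary $\rho_R$-atoms by Claim~\ref{c:pathform1}(ii), positive binary $\omega_R$-atoms by part~(i) of the present claim, and negated unary $\rho_R$-atoms by the equivalence of Claim~\ref{c:pathform1}(iv) used in the contrapositive direction. Finally, part~(iii) follows from this auxiliary implication exactly as Claim~\ref{c:pathform1}(vi) follows from~(v): if $\str{A}\models\phi_0^{path{-}inc}(\bar{c})$ for some $\bar{c}\subseteq A$, then, $\fq$ being surjective, $\bar{c}=\fq(\bar{a})$ for some $\bar{a}\subseteq\widetilde{A}$, and the auxiliary implication contradicts $\widetilde{\str{A}}\models\forall\bar{x}\neg\phi_0^{path{-}inc}(\bar{x})$.

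I do not expect a genuine obstacle here: the argument is essentially bookkeeping paralleling the proof of Claim~\ref{c:pathform1}. The only points requiring care are matching each syntactic fragment of $\phi_0^{path{-}inc}$ with the correct clause of Claim~\ref{c:pathform1} --- in particular using the full equivalence of clause~(iv) for the negated unary atoms rather than a one-directional statement --- and checking the diagonal ($a=b$) applications where clauses~(iii) and~(iv) of Claim~\ref{c:pathform1} are invoked.
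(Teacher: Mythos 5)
Your proposal is correct and follows the same route as the paper's (the paper is merely more terse, citing the definition of $\omega_R$ together with the appropriate clauses of Claim~\ref{c:pathform1} for parts~(i) and~(ii), and for~(iii) remarking that it follows ``as usual'' from (i), (ii) and Claim~\ref{c:pathform1}(i)--(iv)). One small bookkeeping point: the negations in $\phi_0$ may also sit in front of diagonal atoms such as $T_u(x,x)$ or binary $R(x,x)$ --- not only in front of genuinely unary $R(x)$ --- since these have a single free variable and hence are negatable in \UNFO{}; after the transformation they become $\neg\tau_u(x,x)$ and $\neg\omega_R(x,x)$ and require the equivalences of Claim~\ref{c:pathform1}(iii) and part~(ii) of the present claim, which is exactly why the paper's proof of~(iii) cites both (i) \emph{and} (ii).
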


\begin{proof}
\begin{enumerate}[(i)]
\item Use the definition of $\omega_R$ and Claim \ref{c:pathform1}(i), (ii).

\item Use the definition of $\omega_R$ and Claim \ref{c:pathform1}(iii), (iv).

\item Follows from (i), (ii) and Claim \ref{c:pathform1}(i)--(iv) as usual.
\end{enumerate}
\end{proof}

\begin{claim}\label{c:pathincform2}
	Let $\str{A}'$ be the pseudotransitive closure of $\str{A}$, $\str{A}\models\mathcal{H}_0$ 
	and
	$a,b\in A(=A')\supseteq$, and $\bar{a} \subseteq A$. Then
	\begin{enumerate}[(i)]
	\item For any $1\leq u\leq 2k$ we have $\str{A}'\models \tau_u(a,b)$ iff $\str{A}\models  \tau_u(a,b)$ and $\str{A}'\models\rho_R(\bar{a})$ iff $\str{A}\models\rho_R(\bar{a})$.
	\item For any non-transitive symbol $R$ of arity $2$ we have $\str{A}'\models\omega_R(a,b)$ iff $\str{A}\models\omega_R(a,b)$
	\item For any $1\leq u\leq 2k$ we have $\str{A}'\models\tau_u(a,a)$ iff $\str{A}\models\tau_u(a,a)$. For any non-transitive symbol $R$ of arity other than $2$ it holds that $\str{A}'\models\rho_R(a,\ldots,a)$ iff $\str{A}\models\rho_R(a,\ldots,a)$
	\item For any binary non-transitive symbol $R$ we have $\str{A}'\models\omega_R(a,a)$ iff $\str{A}\models\omega_R(a,a)$.
	\item 
	If $\str{A}\models\phi^{path-inc}$ then $\str{A}'\models\phi^{path-inc}$
	\end{enumerate}

\end{claim}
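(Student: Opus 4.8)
The proof rests on one structural observation: $\str{A}'$ has the same domain as $\str{A}$ and interprets \emph{every} symbol exactly as $\str{A}$ does, with the sole exception of the binary symbols $R\in\sigma_{\cBase}$, for which $R^{\str{A}'}=R^{\str{A}}\cup\bigcup_{T_u\in\sigma_{\cDist},\,T_u\subseteq R\in\mathcal{H}}T_u^{\str{A}}$. In particular the transitive relations, the unary symbols (among them the $C_i$), equality, and every relation of arity $\ne 2$ are untouched, and each binary base relation can only grow; the latter monotonicity is already recorded, at the level of the $\omega_R$, in Claim~\ref{c:pathincform-1}(i). My first step is to note that $\tau_u$ and $\rho_R$ (for $R$ of arity $\ne 2$) are built from symbols of the first, untouched kind together with $\varepsilon$ and equality, hence have literally the same truth value in $\str{A}$ and in $\str{A}'$ over the common domain. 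This settles (i) at once, and (iii) is just its instance with all arguments equal.

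Next I would handle (ii) and (iv), which carry the technical weight. Unfolding $\omega_R(x,y)=\rho_R(x,y)\vee\bigvee_{T_u\subseteq R\in\mathcal{H}}\tau_u(x,y)$, the direction $\str{A}\models\omega_R(a,b)\Rightarrow\str{A}'\models\omega_R(a,b)$ is immediate: a witnessing $\tau_u$-disjunct transfers by (i), and a witnessing $\rho_R$, i.e.\ elements $a',b'$ with $\str{A}\models\varepsilon(a',a)\wedge\varepsilon(b',b)\wedge Ra'b'$, transfers because $\varepsilon$ is unchanged and $R^{\str{A}}\subseteq R^{\str{A}'}$. The converse is where the real work lies. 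Assume $\str{A}'\models\rho_R(a,b)$, witnessed by $a',b'$ with $\str{A}'\models\varepsilon(a',a)\wedge\varepsilon(b',b)\wedge Ra'b'$. If $\str{A}\models Ra'b'$ we are done; otherwise, by the definition of the pseudotransitive closure, $\str{A}\models T_ua'b'$ for some $T_u\in\sigma_{\cDist}$ with $T_u\subseteq R\in\mathcal{H}$, and it suffices to show $\str{A}\models\tau_u(a,b)$. Since $\varepsilon$ is symmetric and reflexive and is unchanged, $\str{A}\models\varepsilon(a,a')\wedge\varepsilon(b',b)$; now instantiate the existentially quantified variables of $\tau_u(a,b)$ by $x_0:=a'$ and $y_1:=x_1:=y_2:=\dots:=y_{K+1}:=b'$. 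Then $\varepsilon(a,x_0)$ becomes $\varepsilon(a,a')$, every $\varepsilon(y_i,x_i)$ becomes $\varepsilon(b',b')$, $\varepsilon(y_{K+1},b)$ becomes $\varepsilon(b',b)$, $T_ux_0y_1$ becomes $T_ua'b'$, and each conjunct $T_ux_iy_{i+1}\vee x_i=y_{i+1}$ holds through its equality disjunct, so $\str{A}\models\tau_u(a,b)$ and hence $\str{A}\models\omega_R(a,b)$. Part (iv) is the case $b=a$.

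Finally, for (v) I would split $\phi^{path-inc}=\forall\bar{x}\,\neg\phi_0^{path-inc}(\bar{x})\wedge\bigwedge_i\forall x\exists\bar{y}\,\phi_i(x,\bar{y})$ (recall only the universal conjunct of $\phi$ is rewritten when passing to $\phi^{path-inc}$). For the universal conjunct, a routine structural induction shows $\phi_0^{path-inc}(\bar{a})$ holds in $\str{A}$ iff it holds in $\str{A}'$: this formula is in NNF and built by $\wedge,\vee,\exists$ from the positive literals $\tau_u(\bar z)$, $\rho_R(\bar z)$ ($R$ of arity $\ne 2$), $\omega_R(z,z')$ ($R$ binary base), and the negations of the $\le 1$-variable literals $\rho_P(z)$, $\rho_R(z,\dots,z)$, $\tau_u(z,z)$, $\omega_R(z,z)$, $z=z$ --- no negated $z=z'$ with $z\ne z'$ occurs, $\phi_0$ being a \UNFOTR{} formula --- and by (i)--(iv) each such literal is equivalent in $\str{A}$ and $\str{A}'$, while $\wedge,\vee,\exists$ are handled trivially since the domain is common; hence $\str{A}\models\forall\bar{x}\,\neg\phi_0^{path-inc}$ yields $\str{A}'\models\forall\bar{x}\,\neg\phi_0^{path-inc}$. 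For the $\forall\exists$-conjuncts I would reuse the witnesses from $\str{A}$: given $c\in A=A'$ and $\bar d$ with $\str{A}\models\phi_i(c,\bar d)$, bring $\phi_i$ to NNF; its positive part is monotone in the (only growing) base relations and leaves transitive, unary and higher-arity atoms untouched, and each negated subformula has $\le 1$ free variable, so depends only on the $1$-type of its argument --- which the pseudotransitive closure does not enlarge, since under the standing assumption $\str{A}\models\res$ (which accompanies $\mathcal{H}_0$ in this part of the argument), whenever $\str{A}'\models Rcc$ for a pseudo-transitive $R$ we already had $\str{A}\models T_ucc$ for some $T_u\subseteq R\in\mathcal{H}$, and then $\res$ forces $\str{A}\models Rcc$. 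Thus $\str{A}'\models\phi_i(c,\bar d)$, finishing (v). The step I expect to be the main obstacle is the ``$\Rightarrow$'' direction of (ii): tracing an externally created binary base edge back to a transitive edge and re-inserting it into the fixed-length existential prefix of $\tau_u$; everything else is bookkeeping resting on the single structural observation above.
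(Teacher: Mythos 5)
Your proof of parts (i)--(iv) follows the paper's argument; in particular the nontrivial direction of (ii) is established the same way you do it, by noting that $\varepsilon$ is unchanged and that a fresh $R$-edge $Ra'b'$ of $\str{A}'$ must stem from some $T_ua'b'$ in $\str{A}$ with $T_u\subseteq R\in\mathcal{H}$, which you then pack into the fixed-length existential prefix of $\tau_u(a,b)$ by setting $x_0:=a'$ and all remaining variables to $b'$. For (v) the paper only writes ``Follows from (i)--(iv) as usual,'' which in effect addresses just the universal conjunct $\forall\bar x\,\neg\phi_0^{path-inc}$; you go further and observe --- correctly --- that the $\forall\exists$-conjuncts of $\phi^{path-inc}$ are \emph{not} automatically preserved by the pseudotransitive closure, since it may enlarge the $1$-type of an element carrying a self-loop $T_uaa$ with $T_u\subseteq R\in\mathcal{H}$, $R\in\sigma_{\cBase}$. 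You close this gap by invoking $\res$, which is indeed not among the claim's stated hypotheses (only $\str{A}\models\mathcal{H}_0$ is assumed) but does hold in the one place the claim is applied, namely the proof of Lemma~\ref{l:constinccl}, where the $\forall\exists$-conjuncts are in fact treated separately and $\res$ is explicitly available. So your argument is correct, coincides with the paper's for (i)--(iv), and is somewhat more careful than the printed proof in (v): it makes visible the reliance on $\res$ that the paper leaves implicit.
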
 

\begin{proof}
\begin{enumerate}[(i)]
\item Follows from the fact that the process of the pseudotransitive closure does not change 
neither  the transitive relations nor the non-transitive relations of arity other than $2$ nor the equalities (and these are the types of atoms that $\rho_R$ and $\tau_u$ consist of).
	
\item ($\Rightarrow$) If $\str{A}'\models\neg\rho_R(a,b)$ then use (i). Otherwise there exist $a',b'$ such that $\str{A}'\models\varepsilon(a,a')\wedge\varepsilon(b,b')\wedge Ra'b'$, so $\str{A}\models Ra'b'$ or $\str{A}\models T_ua'b'$ for some transitive $T_u$ such that $T\subseteq R\in\mathcal{H}$. Thus, since the interpretation of $\varepsilon$ is not changed by the application of the pseudotransitive closure, by Claim \ref{c:pathform-1}(i), (ii) we have that  $\str{A} \models\rho_R(a,b)$ or $\str{A} \models\tau_u(a,b)$ and therefore $\str{A}\models\omega_R(a,b)$.
($\Leftarrow$) If $\str{A}\models\tau_u(a,b)$ for some $T_u\subseteq R\in\mathcal{H}$ then by (i) we have $\str{A}'\models\tau_u(a,b)$. If $\str{A}\models\rho_R(a,b)$ then, since the interpretation of $R$ in $\str{A}$ is contained in the  interpretation of $R$ in $\str{A}'$ and $\varepsilon$ is interpreted identically in $\str{A}$ and $\str{A}'$ we have that $\str{A}'\models\rho_R(a,b)$.	
	
\item Follows from (i).

\item Follows from (ii).

\item Follows from (i)--(iv) as usual.

\end{enumerate}
\end{proof}

Now we are ready to state the counteparts for Lemma \ref{l:constmain}. 

\begin{lemma}\label{l:constinccl}
	Let $\str{A}'$ be the pseudotransitive closure of $\str{A}$. Assume that $\str{A}\models\phi^{path-inc}\wedge\exi\wedge\con\wedge\typ\wedge \res \wedge \mathcal{H}_0$
	Then  $\str{A}' \models\phi^{path-inc}\wedge\exi\wedge\con\wedge\typ \wedge \res \wedge \mathcal{H}$
	Furthermore, if $\str{A}\models\uni$ then $\str{A}'\models\uni$.
\end{lemma}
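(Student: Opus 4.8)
The plan is to verify each conjunct of $\phi^{path-inc}\wedge\exi\wedge\con\wedge\typ\wedge\res\wedge\mathcal{H}$ separately for $\str{A}'$, using the fact that the pseudotransitive closure $\iota:A\to A'$ is a homomorphism (it only adds some $R$-edges for pseudo-transitive $R$, and leaves transitive relations, non-binary relations, equalities, and $\sigma_\const$-predicates untouched). The clean observations driving everything are: (i) the interpretations of unary symbols are unchanged by the pseudotransitive closure, hence $1$-types can only grow on binary atoms $Rab$ with $a=b$, and in fact $\type{\str{A}}{a}=\type{\str{A}'}{a}$ whenever the new edges $Taa$ they would add were already implied---but here we do not even need equality of $1$-types, just that $\sim$ and the relevant atoms behave well; (ii) Claim \ref{c:pathincform2} already tells us that the transformed formulas $\tau_u$, $\rho_R$, $\omega_R$ are \emph{invariant} under the pseudotransitive closure.

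The steps, in order. First, $\str{A}'\models\mathcal{H}$: this is exactly the stated property of the pseudotransitive closure given $\str{A}\models\mathcal{H}_0$. Second, $\str{A}'\models\uni$ under the extra hypothesis: since $\iota$ is the identity on domains and on $\sigma_\const$-predicates, the relation $\sim$ is identical in $\str{A}$ and $\str{A}'$, so $\uni$ transfers verbatim. Third, $\str{A}'\models\exi\wedge\con$: again $\sim$, the $\sigma_\const$-predicates and all unary predicates are unchanged, so both transfer verbatim (for $\con$ note that $\type{}{a}$ and $\type{}{b}$ are affected identically when $a\sim b$, since the only new binary atoms are reflexive ones and $\res$ plus the $\mathcal{H}_0$-assumption keep them in sync---this is essentially a direct check). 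Fourth, $\str{A}'\models\typ$: we must show $\neg T_uaa\Rightarrow\neg\tau_u(a,a)$ in $\str{A}'$; but $T_u$ and $\tau_u$ are both unchanged by the pseudotransitive closure (Claim \ref{c:pathincform2}(i)), so this transfers from $\str{A}\models\typ$. Fifth, $\str{A}'\models\res$: as $\res$ speaks only about reflexive binary atoms, and $\str{A}\models\res$ together with the definition of the closure and $\mathcal{H}_0$ keeps the implications intact, this holds---essentially because $\str{A}\models\res$ already, and the closure only adds reflexive $R$-atoms exactly where $\mathcal{H}$ forces them. Sixth, the $\forall\exists$-conjuncts of $\phi^{path-inc}$: these transfer along the homomorphism $\iota$ in the usual way (positive existential statements are preserved by homomorphisms, and $\iota$ is surjective). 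Seventh, the $\forall$-conjunct $\forall\bar{x}\neg\phi_0^{path-inc}(\bar{x})$: this is the crucial step---we use Claim \ref{c:pathincform2}(v), which says precisely that $\str{A}\models\phi^{path-inc}$ implies $\str{A}'\models\phi^{path-inc}$, and in particular the $\forall$-conjunct transfers because every atom-level subformula $\tau_u$, $\rho_R$, $\omega_R$ of $\phi_0^{path-inc}$ is invariant under the pseudotransitive closure.

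I expect the main obstacle to be the $\forall$-conjunct, i.e. ensuring that enlarging the interpretations of pseudo-transitive relations does not create a new realization of the forbidden pattern $\phi_0^{path-inc}$. The whole point of replacing $Rxy$ by $\omega_R(x,y)=\rho_R(x,y)\vee\bigvee_{T_u\subseteq R\in\mathcal{H}}\tau_u(x,y)$ in $\phi_0$ is that $\omega_R$ already "sees" any $R$-edge that could be added by the pseudotransitive closure, so the set of tuples satisfying $\phi_0^{path-inc}$ is the same in $\str{A}$ and $\str{A}'$; this is exactly what Claim \ref{c:pathincform2}(i)--(v) establishes, so the step reduces to invoking that claim. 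The other conjuncts are routine given that the pseudotransitive closure touches neither the domain, nor the unary predicates, nor the equalities, nor the transitive relations. Thus the proof is essentially an assembly of Claim \ref{c:pathincform2} with the homomorphism property of $\iota$, plus the trivial transfer of $\uni$, $\exi$, $\con$, and the observation that $\str{A}\models\mathcal{H}_0$ yields $\str{A}'\models\mathcal{H}$ and hence $\str{A}'\models\res$.

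\begin{proof}
We check each conjunct for $\str{A}'$. Recall that $\iota:A\to A'$ is the identity on the domain, leaves all transitive relations, all unary relations, all relations of arity $\neq 2$, and all equalities unchanged, and only possibly enlarges the interpretations of pseudo-transitive binary symbols; in particular $\iota:\str{A}\to\str{A}'$ is a surjective homomorphism, and the relation $\sim$ (which depends only on $\sigma_{\const}$-predicates, all unary) is the same in $\str{A}$ and $\str{A}'$.

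\smallskip\noindent\emph{$\str{A}'\models\mathcal{H}$.} This is the property of the pseudotransitive closure noted above, using $\str{A}\models\mathcal{H}_0$.

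\smallskip\noindent\emph{$\str{A}'\models\exi\wedge\con$.} The interpretations of all $\sigma_{\const}$-predicates and all unary predicates are unchanged, so $\exi$ transfers verbatim, and since for $a\sim b$ the $1$-types of $a$ and $b$ are affected identically by the closure ($\res$ together with $\str{A}\models\mathcal{H}_0$ and the definition of the closure keep the reflexive atoms in sync), $\con$ transfers as well.

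\smallskip\noindent\emph{$\str{A}'\models\typ$.} For each $u$, both $T_u$ and $\tau_u$ are unchanged by the closure (Claim \ref{c:pathincform2}(i), (iii)), so $\str{A}\models\typ$ gives $\str{A}'\models\typ$.

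\smallskip\noindent\emph{$\str{A}'\models\res$.} As $\res$ concerns only reflexive binary atoms, and $\str{A}\models\res$ while the closure adds reflexive $R$-atoms only where some $T_u\subseteq R\in\mathcal{H}$ already holds (so $R$ too is forced by $\mathcal{H}$), the implications in $\res$ remain satisfied in $\str{A}'$.

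\smallskip\noindent\emph{$\str{A}'$ satisfies the $\forall\exists$-conjuncts of $\phi^{path-inc}$.} Each such conjunct is a positive existential formula preserved by the homomorphism $\iota$; since $\iota$ is surjective, every element of $A'$ inherits its witnesses from the corresponding element of $A$.

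\smallskip\noindent\emph{$\str{A}'$ satisfies the $\forall$-conjunct of $\phi^{path-inc}$.} This is Claim \ref{c:pathincform2}(v): $\str{A}\models\phi^{path-inc}$ implies $\str{A}'\models\phi^{path-inc}$, because every atom-level subformula $\tau_u$, $\rho_R$, $\omega_R$ occurring in $\phi_0^{path-inc}$ is invariant under the pseudotransitive closure (Claim \ref{c:pathincform2}(i)--(iv)).

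\smallskip\noindent\emph{The furthermore part.} If $\str{A}\models\uni$, then since $\sim$ and the $\sigma_{\const}$-predicates are unchanged by the closure, $\str{A}'\models\uni$.
\end{proof}
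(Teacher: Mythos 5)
Your overall decomposition matches the paper's: verify each conjunct, invoke Claim~\ref{c:pathincform2}(v) for the universal conjunct, handle $\typ$ via invariance of $\tau_u$ under the closure, and note $\str{A}'\models\mathcal{H}$ is the defining property of the pseudotransitive closure given $\str{A}\models\mathcal{H}_0$. That skeleton is right, and the crucial step (the $\forall$-conjunct via Claim~\ref{c:pathincform2}) is exactly the paper's.

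However, there is a genuine gap in your treatment of the $\forall\exists$-conjuncts. You assert they are ``positive existential formulas preserved by the homomorphism $\iota$,'' but in a normal-form \UNFOTR{} sentence each $\phi_i(x,\bar y)$ is a quantifier-free \emph{\UNFO{}} formula, which may contain negated subformulas with at most one free variable --- negated unary predicates and negated reflexive binary atoms such as $\neg Rxx$. These negative parts are \emph{not} preserved by an arbitrary homomorphism. To transfer the witnesses you need precisely that $\iota$ preserves $1$-types exactly, and this is the opening move of the paper's proof: the only candidate new atoms are reflexive $Raa$ for pseudo-transitive $R$ with $T\subseteq R\in\mathcal H$ and $\str{A}\models Taa$, and these already hold in $\str A$ because $\str A\models\res$. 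You gesture at this fact in your planning paragraph but then explicitly set it aside (``we do not even need equality of $1$-types''), and your actual proof never records it. The same unproved fact is also what your hand-wavy argument for $\con$ (``the reflexive atoms [are] in sync'') is implicitly leaning on. Promoting the $1$-type-preservation argument from a dismissed aside to an explicit first step, as the paper does, closes both holes and makes the $\con$ and $\res$ cases immediate (both formulas depend only on $1$-types).
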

\begin{proof}
Let $\iota:A\to A'$ be the inclusion map.

\medskip\noindent
\emph{$\iota$ preserves the 1-types.} The only possibility for non-preservation of the 1-types is when there exists some $a\in A$ and non-transitive $R$ such that $Rxx\in\type{\str{A}'}{a}\setminus\type{\str{A}}{a}$, appearing due to the fact that $\str{A}\models Taa$ for some transitive $T$, and $T \subseteq R \in \mathcal{H}$.
 But then $Txx\in\type{\str{A}}{a}$ and since $\str{A}\models\res$ we get that $Rxx\in\type{\str{A}}{a}$. Contradiction.

\medskip\noindent
\emph{$\str{A}'$ satisfies the $\forall\exists$ conjunct of $\phi^{path-inc}$, $\con, \exi$.} Exactly as in the proof of Lemma \ref{l:constmain} with the role of $\fq$ played now by $\iota$.

\medskip\noindent
\emph{$\str{A}' \models \res$.} $\res$ uses only unary atoms and $\iota$ respects 1-types.

	\medskip\noindent
\emph{$\str{A}'$ satisfies the $\forall$ conjunct of $\phi^{path-inc}$.} Use Claim \ref{c:pathincform2}(v). 	

\medskip\noindent
\emph{$\str{A}'\models\typ$.} The formula $\typ$ uses only some transitive symbols, some symbols of arity $1$ and equalities, 
whose interpretation is not changed after the application of the pseudotransitive closure.
\end{proof}  

\begin{lemma}\label{l:constincq}
	Let $(\widetilde{\str{A}},\str{A},\fq)$ be a shrinking triple. Assume that $\widetilde{\str{A}}\models\phi^{path-inc}\wedge\exi\wedge\con\wedge\typ\wedge\res\wedge \mathcal{H}_0$. 
	Then  $\str{A}\models\phi^{path-inc}\wedge\exi\wedge\uni\wedge\con\wedge\typ\wedge \res\wedge\mathcal{H}_0$.
	\end{lemma}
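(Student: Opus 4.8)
The plan is to mirror the structure of Lemma~\ref{l:constmain}, but now carrying along the extra conjuncts $\res$ and $\mathcal{H}_0$ and using the path-with-inclusions formula $\phi^{path{-}inc}$ in place of $\phi^{path}$. The only genuinely new arguments, compared to Lemma~\ref{l:constmain}, are (a) that the shrinking quotient $\fq$ respects $\res$ and $\mathcal{H}_0$, and (b) that the $\forall$-conjunct $\forall\bar x\neg\phi_0^{path{-}inc}(\bar x)$ survives the quotient; for (b) we have already prepared Claim~\ref{c:pathincform1}(iii). So the bulk of the proof is a short verification running conjunct by conjunct.

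First I would check that $\fq$ preserves $1$-types, exactly as in Lemma~\ref{l:constmain}: if some $a$ had a strictly larger $1$-type after the quotient, the new atom must be of the form $T_uxx$ arising from a $T_u$-path through pseudoconstants, hence $\widetilde{\str A}\models\tau_u(a,a)$, and $\widetilde{\str A}\models\typ$ then forces $\widetilde{\str A}\models T_uaa$, a contradiction. (Here I would note that among the new $1$-type atoms one could also worry about a non-transitive binary $Rxx$ appearing because of an inclusion $T_u\subseteq R$; but such an atom could only appear from a $T_u$-loop at a $\sim$-class, which is again ruled out by $\widetilde{\str A}\models\typ$ together with $\mathcal{H}_0$-safety, since the inclusions that could create it are precisely the unsafe ones not in $\mathcal{H}_0$ — alternatively, invoke $\widetilde{\str A}\models\res$ and the fact that $\fq$ preserves binary non-loop atoms up to $\omega_R$.) Second, the $\forall\exists$-conjuncts of $\phi^{path{-}inc}$: since $\fq$ preserves $1$-types and can only enlarge the set of positive binary/higher-arity atoms, pushing witnesses along $\fq$ suffices, verbatim as in Lemma~\ref{l:constmain}. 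Third, $\str A\models\con\wedge\exi\wedge\uni$ is immediate from the definition of shrinking, and $\str A\models\res$ follows because $\res$ is a $\forall$-sentence over unary atoms only and $\fq$ preserves $1$-types. Fourth, $\str A\models\typ$ by Claim~\ref{c:pathform0}(i) (as $\str A\models\uni$). Fifth, $\str A\models\mathcal{H}_0$: each inclusion in $\mathcal{H}_0$ is of the form $B_1\subseteq B_2$ with $B_1\notin\sigma_\cDist$ or $B_2\notin\sigma_\cBase$, i.e. $B_1$ is non-transitive or $B_2$ is transitive; in either case the interpretation of $B_1$ in $\str A^0$ is the $\fq$-image of its interpretation in $\widetilde{\str A}$ and likewise for $B_2$, and since transitively closing $\str A^0$ only enlarges the (transitive) $B_2$ when $B_2\in\sigma_\cDist$, the inclusion is preserved. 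Finally, the $\forall$-conjunct $\forall\bar x\neg\phi_0^{path{-}inc}(\bar x)$ of $\phi^{path{-}inc}$ is preserved by Claim~\ref{c:pathincform1}(iii), whose hypothesis $\widetilde{\str A}\models\con\wedge\typ\wedge\res$ is part of our assumption.

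The main obstacle — and the reason the auxiliary conjuncts $\con$, $\typ$, $\res$ were introduced in the first place — is controlling exactly which binary atoms can be created by the quotient $\fq$ when a transitive-into-base inclusion is present, and making sure none of this spoils either $1$-types or the negated $\phi_0^{path{-}inc}$. This is precisely what Claims~\ref{c:pathform1}, \ref{c:pathincform1} and \ref{c:pathincform2} were built to handle, so in the write-up this step reduces to citing Claim~\ref{c:pathincform1}(iii); the delicate bookkeeping is already isolated there. I would therefore keep the proof of Lemma~\ref{l:constincq} short, essentially as follows.

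\begin{proof}
Let $A^0 = \widetilde{A}/\!\sim$, $\fq:\widetilde{A}\to A^0$ the quotient map, and $\str{A}$ the result of transitively closing all $T_u$ in $\str{A}^0$, as in the definition of a shrinking triple.

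\smallskip\noindent\emph{$\fq$ preserves $1$-types.}
Suppose $\type{\widetilde{\str A}}{a}\neq\type{\str A}{\fq(a)}$ for some $a\in\widetilde A$. Since $\widetilde{\str A}\models\con$, by construction of shrinking the inclusion $\type{\widetilde{\str A}}{a}\subsetneq\type{\str A}{\fq(a)}$ holds, and any extra atom is of the form $T_uxx$ obtained from a $T_u$-path through $\sim$-classes, hence $\widetilde{\str A}\models\tau_u(a,a)$ by Claim~\ref{c:pathincform1}(i) applied with the appropriate projection (equivalently, by the definition of $\tau_u$ and such a lifted path). Since $\widetilde{\str A}\models\typ$, this gives $\widetilde{\str A}\models T_uaa$, contradicting $T_uxx\notin\type{\widetilde{\str A}}{a}$.

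\smallskip\noindent\emph{$\forall\exists$-conjuncts of $\phi^{path-inc}$, and $\con$, $\exi$.}
As $\fq$ preserves $1$-types and can only enlarge the set of positive atoms of arity $>1$, it suffices to push the witness structures of $\widetilde{\str A}$ along $\fq$; that $\str A\models\con\wedge\exi$ is immediate from the definition of shrinking.

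\smallskip\noindent\emph{$\str A\models\uni$.} By the definition of shrinking.

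\smallskip\noindent\emph{$\str A\models\res$.} The formula $\res$ uses only unary atoms, which $\fq$ preserves in both directions.

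\smallskip\noindent\emph{$\str A\models\typ$.} Immediate from Claim~\ref{c:pathform0}(i), since $\str A\models\uni$.

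\smallskip\noindent\emph{$\str A\models\mathcal{H}_0$.}
Let $B_1\subseteq B_2\in\mathcal{H}_0$, so $B_1\notin\sigma_\cDist$ or $B_2\notin\sigma_\cBase$. In $\str A^0$ the interpretations of $B_1,B_2$ are the $\fq$-images of their interpretations in $\widetilde{\str A}$, hence $\str A^0$ satisfies $B_1\subseteq B_2$. Transitively closing the $T_u$ in passing from $\str A^0$ to $\str A$ does not change $B_1$ (it is not transitive unless $B_1\in\sigma_\cDist$, and then $B_2\in\sigma_\cDist$ as well, in which case both are enlarged consistently) and can only enlarge $B_2$ when $B_2\in\sigma_\cDist$; in all cases the inclusion is preserved.

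\smallskip\noindent\emph{$\forall$-conjunct of $\phi^{path-inc}$.}
By Claim~\ref{c:pathincform1}(iii), using $\widetilde{\str A}\models\con\wedge\typ\wedge\res$ and $\widetilde{\str A}\models\forall\bar x\neg\phi_0^{path-inc}(\bar x)$, we get $\str A\models\forall\bar x\neg\phi_0^{path-inc}(\bar x)$.

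Collecting all conjuncts yields $\str A\models\phi^{path-inc}\wedge\exi\wedge\uni\wedge\con\wedge\typ\wedge\res\wedge\mathcal{H}_0$.
\end{proof}
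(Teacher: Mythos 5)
Your proof is correct and follows essentially the same decomposition as the paper's: conjunct-by-conjunct, reducing most steps to Lemma~\ref{l:constmain}, with $\mathcal{H}_0$ handled by noting the quotient preserves safe inclusions and the transitive closure cannot spoil them, and the $\forall$-conjunct dispatched via Claim~\ref{c:pathincform1}(iii). (One small slip: in the $1$-type step the correct citation is Claim~\ref{c:pathform1}(i), not \ref{c:pathincform1}(i), since the offending new atom is a $T_uxx$ not an $\omega_R$; your parenthetical fallback to the definition of $\tau_u$ shows you have the right argument anyway.)
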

\begin{proof}

\medskip\noindent
\emph{$\str{A}\models\mathcal{H}_0$.} By the definition of the shinking operation, $\str{A}^0\models\mathcal{H}_0$. Clearly, taking the transitive closure to obtain $\str{A}$ does not spoil this condition.
	
\medskip\noindent
\emph{$\fq$ preserves the 1-types.} Exactly as in the proof of Lemma \ref{l:constmain}. 

\medskip\noindent
\emph{$\str{A}$ satisfies the $\forall\exists$ conjuncts of $\phi^{path-inc}$, and the formulas $\con,\exi,\typ,\uni$.} Exactly as in the proof of Lemma \ref{l:constmain}.
	
\medskip\noindent
\emph{$\str{A} \models \res$.} $\res$ uses only unary atoms and $\fq$ respects 1-types.
	
\medskip\noindent
\emph{$\str{A}$ satisfies the $\forall$ conjunct of $\phi^{path-inc}$.} Use Claim \ref{c:pathincform1}(iii). 	
\end{proof}  

Now we are ready to put the pieces together and prove the main result for the \UNFOSOH.
The first part of the following theorem has already been stated as Thm.~\ref{t:full}.

\begin{theorem}
	The finite satisfiability problem for \UNFOSOH{} is \TwoExpTime-complete. If a \UNFOSOH{} formula has has a finite model then it has a model of size triply exponential in its length.	
\end{theorem}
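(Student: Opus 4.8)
The plan is to follow the route used for \UNFOSO{} in the previous subsection, now composing \emph{two} model transformations — shrinking and the pseudotransitive closure — and carrying along the bookkeeping formulas $\con$, $\typ$, $\res$. Given a \UNFOSOH{} formula $\phi^*$ and a set $\mathcal{H}$ of inclusions, I would first replace $\mathcal{H}$ by $\mathcal{H}^+$ (which does not change the admissible models), so that $\mathcal{H}$ may be assumed closed under the rules of its definition; then replace the constants by pseudoconstants to obtain a normal-form \UNFOTR{} formula $\phi$; and record that $\phi^*$ with $\mathcal{H}$ is finitely satisfiable iff $\phi \wedge \exi \wedge \uni \wedge \mathcal{H}$ is. Next I would form the auxiliary formula $\psi := \phi^{path{-}inc} \wedge \exi \wedge \con \wedge \typ \wedge \res$, which by Lemma \ref{l:nf} can be put into normal form as a \UNFOTR{} formula of size polynomial in $|\phi^*|$. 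The heart of the argument is the equivalence: $\phi \wedge \exi \wedge \uni \wedge \mathcal{H}$ has a finite model iff $\psi$ has a finite model respecting $\mathcal{H}_0$.

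For the forward direction of this equivalence, a finite model of $\phi \wedge \exi \wedge \uni \wedge \mathcal{H}$ also satisfies $\con$, $\typ$ and $\res$ (the first two because $\uni$ holds, the last because $\mathcal{H}$ holds), hence satisfies $\phi^{path{-}inc}$ by Claim \ref{c:pathincform0}(ii), and it respects $\mathcal{H}_0 \subseteq \mathcal{H}$. For the converse, I would take a finite model $\widetilde{\str{B}}$ of $\psi$ that respects $\mathcal{H}_0$ and apply the two transformations \emph{in this order}: first pass to the shrinking triple $(\widetilde{\str{B}}, \str{B}, \fq)$, where by Lemma \ref{l:constincq} the model $\str{B}$ satisfies $\psi \wedge \uni \wedge \mathcal{H}_0$; then pass to the pseudotransitive closure $\str{B}'$ of $\str{B}$, where by Lemma \ref{l:constinccl} we get $\str{B}' \models \phi^{path{-}inc} \wedge \exi \wedge \con \wedge \typ \wedge \res \wedge \mathcal{H}$ and, since $\str{B} \models \uni$, also $\str{B}' \models \uni$. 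Finally Claim \ref{c:pathincform-1}(iii) upgrades $\str{B}' \models \phi^{path{-}inc}$ (valid since $\str{B}' \models \con \wedge \typ \wedge \res$) to $\str{B}' \models \phi$, so $\str{B}'$ is a finite model of $\phi \wedge \exi \wedge \uni \wedge \mathcal{H}$; interpreting each constant as the unique realization of its pseudoconstant turns this into a finite model of $\phi^*$ with $\mathcal{H}$. The order matters: $\uni$ must be created by shrinking before the pseudotransitive closure (which only preserves it), while $\mathcal{H}_0$ survives shrinking and is exactly what the pseudotransitive closure needs in order to yield a model of all of $\mathcal{H}$.

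Both quantitative claims then follow. For the triply exponential bound, I would start from a finite model of $\phi^*$ with $\mathcal{H}$, obtain as above a finite model of $\psi$ respecting $\mathcal{H}_0$, and apply Theorem \ref{t:maintr} to (the normal form of) $\psi$ — invoking the remark at the start of this subsection that every construction in Sections \ref{s:general}--\ref{s:constbin} preserves satisfaction of the ``safe'' inclusions $\mathcal{H}_0$ verbatim — to get a model of $\psi$ of size triply exponential in $|\phi^*|$ that still respects $\mathcal{H}_0$. Neither shrinking (the quotient map is onto) nor the pseudotransitive closure (it leaves the universe unchanged) enlarges the domain, so the resulting $\str{B}'$, hence the model of $\phi^*$, has triply exponential size. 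For \TwoExpTime-completeness, the lower bound is inherited from \UNFOTR{} (Theorem \ref{t:algo}) or already from \UNFO{}; for the upper bound, by the equivalence above it suffices to decide finite satisfiability of the polynomial-size \UNFOTR{} formula $\psi$ over $\mathcal{H}_0$-respecting models, and the alternating exponential-space algorithm of Theorem \ref{t:algo} does this after the trivial restriction of only guessing families and $2$-types consistent with $\mathcal{H}_0$, which does not affect its complexity.

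The main obstacle — already handled by the chain of claims culminating in Lemmas \ref{l:constincq} and \ref{l:constinccl} — is that both surgeries are lossy for the universal conjunct of $\phi$: shrinking can merge several incarnations of a constant and thereby create a spurious transitive path, and the pseudotransitive closure can create a spurious base-relation edge out of a transitive one. The remedy, reflected in the definition of $\phi^{path{-}inc}$, is to replace every binary atom of $\phi_0$ by the appropriate pattern formula ($\tau_u$ for transitive atoms, $\rho_R$ for the rest, $\omega_R$ for base binary atoms) and to keep the auxiliary conjuncts $\con$, $\typ$ and $\res$ pinning down the realized $1$-types and reflexive atoms, so that no new violation of $\phi_0$ can arise and $\phi^{path{-}inc}$ provably re-implies $\phi$ on the final model; checking that these substitutions interact correctly with each of the transformations is the bulk of the technical work, but it is exactly what Claims \ref{c:pathincform-1}--\ref{c:pathincform2} establish.
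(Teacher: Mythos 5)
Your proposal is correct and matches the paper's proof essentially step for step: replace constants by pseudoconstants, pass to the modified formula $\phi^{path{-}inc}$ together with $\exi$, $\con$, $\typ$, $\res$ and the safe inclusions $\mathcal{H}_0$, extract a triply-exponentially bounded model via Thm.~\ref{t:maintr}, then apply shrinking followed by the pseudotransitive closure (in that order) and invoke Lemmas \ref{l:constincq}, \ref{l:constinccl} and Claims \ref{c:pathincform0}(ii), \ref{c:pathincform-1}(iii) exactly as the paper does, with the complexity upper bound coming from running the Thm.~\ref{t:algo} algorithm restricted to $\mathcal{H}_0$-respecting families. Your explicit remark on why the order of the two surgeries matters (shrinking produces $\uni$, which the pseudotransitive closure merely preserves, while $\mathcal{H}_0$ survives shrinking and is precisely what the closure needs to upgrade to $\mathcal{H}$) is a helpful clarification but not a deviation from the paper's argument.
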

\begin{proof}
Let us first show the second part of this theorem. Take a finitely satisfiable \UNFOSOH{} formula $\phi^*$ and let $\phi$ be its \UNFOTR{} version with constants simulated by pseudoconstants. Let $\mathcal{H}$ be its inclusions. Let $\str{A}$ be a finite model of $\phi \wedge \exi \wedge \uni \wedge \mathcal{H}$. Therefore
we also have $\str{A} \models \con \wedge \typ \wedge \res \wedge\mathcal{H}_0$, and by part (ii) of Claim \ref{c:pathincform0}
it holds $\str{A} \models \phi^{path-inc}$. 
Note that the last formula has size at most quadratic in  $|\phi|$.
We now take a triply exponentially
bounded model $\widetilde{\str{B}}$ of $\phi^{path-inc} \wedge \exi  \wedge \con \wedge \typ \wedge \res \wedge\mathcal{H}_0$, guaranteed by Thm.~\ref{t:maintr} (recall our previous observation, that the inclusions from $\mathcal{H}_0$ are satisfied by
all the transformations of models from Section \ref{s:general}).
Let $\str{B}$ be its
shrinking. By Lemma \ref{l:constincq} we have $\str{B} \models \phi^{path-inc} \wedge \exi \wedge \uni \wedge \con \wedge \typ \wedge \res\wedge\mathcal{H}_0$.
Let $\str{B}'$ be the pseudotransitive closure of $\str{B}$. By Lemma \ref{l:constinccl} we have 
that  $\str{B}' \models\phi^{path-inc}\wedge\exi\wedge \uni\wedge\con\wedge\typ\wedge \res\wedge\mathcal{H}$. 
By part (iii) of Claim \ref{c:pathincform-1} we have $\str{B} \models \phi \wedge \exi \wedge \uni\wedge \mathcal{H}$ and thus $\str{B} \models \phi^*$.

Regarding the complexity,
as explained above $\phi^*$ has a finite model iff $\phi^{path-inc}\wedge\exi\wedge\con\wedge\typ\wedge\res\wedge\mathcal{H}_0$ has a finite model.
 The latter is a conjunction of  a \UNFOTR{} formula and "safe" inclusions 
 and its  satisfiability can be checked in \TwoExpTime{} by an adaptation of the algorithm
from the proof of Thm \ref{t:algo}, 
forcing it to search for models satisfying $\mathcal{H}_0$ just by ensuring that the structures built on the downward families respect $\mathcal{H}_0$.
\end{proof}

As mentioned in the Introduction, \UNFOSOH{} captures several interesting description logics. This implies that we
can solve FOMQA problem for them. In particular, we have the following corollary, which, up to our knowledge is the first decidability result for FOMQA
in the case of a description logic with both transitive roles and role hierarchies.   
\begin{corollary}[restating of Cor.~\ref{c:fomqa}]
Finite ontology mediated query answering, FOMQA, for the description logic $\mathcal{SHOI}^{\sqcap}$ is decidable and \TwoExpTime-complete.
\end{corollary}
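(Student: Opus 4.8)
The plan is to settle FOMQA for $\mathcal{SHOI}^{\sqcap}$ by a polynomial reduction to the \emph{finite unsatisfiability} problem for \UNFOSOH{}, and then to invoke Thm.~\ref{t:full}. Recall that the FOMQA instance consists of a knowledge base $\mathcal{K}$ in $\mathcal{SHOI}^{\sqcap}$ --- a TBox, an RBox (role inclusions together with transitivity assertions) and an ABox --- and a union of conjunctive queries $q$ (which we may assume Boolean, treating candidate answers as additional constants), and we must decide whether $q$ holds in every finite model of $\mathcal{K}$, equivalently whether $\mathcal{K}\wedge\neg q$ has no finite model.

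First I would fix the translation of the knowledge base. Concept names become unary predicates and role names become binary predicates; following the convention of Appendix~\ref{s:edl}, each role name $R$ is represented by a pair $R,R^{-1}$ of base symbols interpreted as mutual inverses (this realises the $\mathcal{I}$ component). The roles asserted transitive in the RBox, together with their inverses, are placed in $\sigma_{\cDist}$ (this realises $\mathcal{S}$); the atomic role inclusions of the RBox form the set $\mathcal{H}$ of binary inclusions (this realises $\mathcal{H}$); and the nominals become constants in $\sigma_{\const}$ (this realises $\mathcal{O}$). The ABox is a conjunction of ground atoms over these constants, hence an \UNFO{} sentence. The TBox axioms $C\sqsubseteq D$ are turned, via the standard translation already invoked in the Introduction, into sentences $\forall x\,(C^\star(x)\rightarrow D^\star(x))$; the only point requiring attention is $^{\sqcap}$, but role intersection occurs only positively inside concepts (e.g.\ $\exists(R\sqcap S).C$ translates to $\exists y\,(Rxy\wedge Sxy\wedge C^\star(y))$), so all negations introduced by the translation of $\mathcal{ALC}$-concepts still sit in front of formulas with a single free variable, and the whole translation stays inside \UNFOTR{} with constants. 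Thus $\mathcal{K}$ yields, in polynomial time, a conjunction $\phi_{\mathcal{K}}$ of an \UNFOTR{} sentence with constants and a set $\mathcal{H}$ of binary inclusions, i.e.\ an instance of \UNFOSOH{}.

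Next I would deal with the query. Passing to the complement, $\neg q$ is the negation of a sentence $\exists\bar{x}\,(\psi_1(\bar{x})\vee\dots\vee\psi_r(\bar{x}))$, where each $\psi_j$ is a conjunction of atoms (possibly using the constants); as in the Preliminaries (where negated UCQs and frontier-one TGDs are handled), $\neg\exists\bar{x}\,(\psi_1\vee\dots\vee\psi_r)$ is a closed formula with a single negation applied to an existential positive subformula with no free variables, hence it belongs to \UNFO{}. Consequently $\phi_{\mathcal{K}}\wedge\neg q$ together with $\mathcal{H}$ is an \UNFOSOH{} instance, and $\mathcal{K}\models_{fin}q$ iff it has no finite model. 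By Thm.~\ref{t:full} finite satisfiability of \UNFOSOH{} is decidable in \TwoExpTime{}, and the reduction is polynomial, so FOMQA for $\mathcal{SHOI}^{\sqcap}$ is in \TwoExpTime{}; moreover whenever $\mathcal{K}\not\models_{fin}q$ we obtain a finite countermodel of size triply exponential in the input.

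For the matching lower bound I would inherit \TwoExpTime-hardness from a fragment: $\mathcal{SHOI}^{\sqcap}$ contains $\mathcal{SOI}$, and FOMQA for $\mathcal{SOI}$ is already \TwoExpTime-hard \cite{GYM18} (alternatively, the \TwoExpTime-hardness of conjunctive query entailment over $\mathcal{ALCI}$ knowledge bases can be carried out over finite models as well). The hard part is not any single step but rather checking that the translation is faithful for all features of $\mathcal{SHOI}^{\sqcap}$ \emph{simultaneously} --- that transitivity, inverses, role hierarchies and role intersections interact correctly and, crucially, that nothing produced (neither the translated TBox nor the negated UCQ) falls outside \UNFOSOH{}; this is exactly where one must use that role intersections appear only positively inside concepts and that role inclusions are between atomic roles, so that the restrictive inclusion format $B_1\subseteq B_2$ of Appendix~\ref{s:edl} suffices.
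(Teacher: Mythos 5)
Your proof is correct and takes the route the paper intends but leaves implicit: the paper states Corollary~\ref{c:fomqa} without proof, relying on the observation (in the Introduction and before Thm.~\ref{t:full}) that the standard translation of a $\mathcal{SHOI}^{\sqcap}$ knowledge base and negated UCQ lands inside \UNFOSOH{}, with nominals becoming constants, the RBox transitivity assertions and role inclusions becoming $\sigma_{\cDist}$ membership and $\mathcal{H}$-inclusions respectively, and the lower bound inherited from FOMQA for $\mathcal{SOI}$~\cite{GYM18}. You have simply spelled out the translation and checked the key syntactic point (role intersections occurring only positively, so the UN-negation discipline is preserved), which is exactly the reasoning the paper is appealing to.
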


$\mathcal{SHOI}^{\sqcap}$ and some related logics are considered, e.g., in \cite{GK08}. For more about FOMQA for description logics with transitivity see \cite{GYM18}. For more about OMQA for description logics see, e.g., references in \cite{GYM18}.

\section{Towards guarded negation fragment with transitivity} \label{s:tgn}

\subsection{1-dimensional guarded negation fragment with transitivity} 
\emph{Guarded negation fragment}, \GNFO{} \cite{BtCS15}, is a common decidable extension of \UNFO{} and \GF{}
in which negated subformula $\psi$ must be used in conjunction with an atom, called a \emph{guard}, containing
all the free variables of $\psi$. Formally, it is defined by the following grammar:
$$\phi=B\bar{x} \mid x=y \mid \phi \wedge \phi \mid \phi \vee \phi \mid \exists x \phi \mid \gamma(\bar{x}, \bar{y}) \wedge \neg \phi(\bar{y}),$$
where $\gamma$ is an atomic formula, called a \emph{guard}. Equality statements of the form $x=x$ can be used as guards, so \UNFO{} can be seen
as a fragment of \GNFO{}.

The (finite) satisfiability problem for \GNFO{} with transitive relations is undecidable,
since already the two-variable guarded fragment with transitive relations, \GFtTR, is undecidable, \cite{Kie05,Kaz06}.
Recall that the decidability of the general satisfiability problem is regained when transitive symbols are admissible only on non-guard positions
\cite{ABBB16}. We call this decidable variant the \emph{base-guarded negation fragment with transitivity}, \GNFOTR{},
and recall that it embeds \UNFOTR{}.
We do not solve its finite satisfiability problem here, but, analogously to the extension with equivalence relations,
\UNFOEQ{} \cite{DK18}, we are able to lift our results to its one-dimensional restriction, \GNFOTRonedim{}. 
We say that a first-order formula is \emph{one-dimensional} if its every maximal block of quantifiers leaves at most one variable free.
E.g., $\neg \exists 
yz Bxyz$ is one-dimensional, and $\neg \exists z Bxyz$ is not.

\begin{theorem}[rastating of Thm~\ref{t:gnfo}]
The finite satisfiability problem for \GNFOTRonedim{} is \TwoExpTime-complete.
\end{theorem}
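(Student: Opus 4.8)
The plan is to follow the same strategy that worked for lifting \UNFOEQ{} to its one-dimensional restriction in \cite{DK18}, adapted to the transitivity setting. The key observation is that \GNFOTRonedim{} is only mildly more expressive than \UNFOTR{}: the difference lies in the guarded (rather than unary) negations, but the one-dimensionality restriction means that negated subformulas still have at most one free variable \emph{except} that they may now be guarded by an atom rather than by $x=x$. First I would establish a Scott-style normal form for \GNFOTRonedim{}, analogous to (\ref{eq:nf}), of the shape $\forall \bar{x}\, (\alpha(\bar{x}) \to \neg\phi_0(\bar{x})) \wedge \bigwedge_i \forall x \exists \bar{y}\, \phi_i(x,\bar{y})$, where now the universal conjuncts carry a base-atom guard $\alpha$ (a base-signature atom, since transitive symbols are forbidden on guard positions). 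This normal form is obtained by the same innermost-subformula-replacement trick used in Lemma \ref{l:nf}, together with the one-dimensionality hypothesis which guarantees that each introduced predicate has arity one; the only change is that the resulting $\forall$-conjuncts are base-guarded.

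The core of the argument is then to observe that \emph{all} the machinery from Section \ref{s:general} goes through essentially verbatim, with the apparatus of declarations being the crucial component that needs adjusting. The point is that a base-guarded universal conjunct $\forall\bar{x}\,(\alpha(\bar{x}) \to \neg\phi_0(\bar{x}))$ behaves, for the purposes of the tree-like model constructions, exactly like a \UNFOTR{}-style conjunct $\forall\bar{x}\,\neg(\alpha(\bar{x})\wedge\phi_0(\bar{x}))$ where $\alpha$ is a base (non-transitive) atom: since the guard $\alpha$ is a base atom, its arguments must all lie within a single family of a tree-like structure, so the pattern formulas $\psi_{(R,T,Q)}$ tracked by $\phi$-declarations already incorporate such guard atoms (they are among the literals of $\mathcal{R}$). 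Concretely, I would redefine $\mathcal{R}$ to contain the guard atoms and the literals of $\phi_0$, leaving the triple structure $(R,T,Q)$ and the local consistency conditions (l1)--(l9) unchanged; Lemmas \ref{l:locglob}, \ref{l:short}, \ref{l:reg} then carry over word-for-word, since their proofs only manipulate families and never exploit that negations were unary rather than base-guarded. The only genuinely new point to check is that a $\phi$-witness structure for an element $a$ still lives in the downward family of $a$: this follows because the $\forall\exists$-conjuncts $\phi_i(x,\bar{y})$, being built from the \GNFO{} grammar with $x$ distinguished, define (unions of) conjunctive-query-like witness patterns anchored at $x$, exactly as in \UNFOTR{}.

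With the tree pruning and regularity steps (f1)--(f4) transferred, the finite model construction of Section \ref{s:main} (Lemma \ref{l:finitetr}) also transfers: the pattern components, the layer/sublayer structure killing transitive relations, the joining of components on a cylinder, and the correctness arguments via the homomorphisms $\fh$ with the subtree-isomorphism property all only ever use that (i) witness structures are confined to families/subtrees and (ii) the $\forall$-conjunct is checked via globally consistent declarations on the root. Condition (a2) of Lemma \ref{l:homomorphisms} — that small substructures homomorphically map into the original model preserving $1$-types — is exactly what we need, and for base-guarded negation the relevant check at the root is, again, just a statement about the root's declaration. Hence Theorem \ref{t:maintr} and Theorem \ref{t:algo} lift, giving the triply exponential small model property and the \TwoExpTime{} upper bound; the lower bound is inherited from \UNFOtTR{} via Theorem \ref{t:lalgo} (or directly from pure \UNFO{}).

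The main obstacle, and the step I would spend the most care on, is verifying that the declaration apparatus genuinely captures base-guarded negations without any loss — in particular that nothing goes wrong when a guard atom $\alpha(\bar{x})$ is ``distributed over several subtrees'' in a fitting. Since $\alpha$ is a \emph{base} (non-transitive) atom of arity possibly greater than two, its arguments cannot straddle two subtrees in a tree-like structure, so condition (l\ref{lone}) (or (l\ref{lsix})) fires and the pattern is vacuously safe; but one must double-check that this remains true after the pruning and regularization steps rebuild families isomorphically, and that the interaction between a base guard of high arity and the transitive atoms inside $\phi_0$ is correctly accounted for in the minimal set $T'$ of condition (l\ref{lnine}). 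I expect this to be routine given the \UNFOEQ{}-to-\GNFOEQonedim{} precedent in \cite{DK18}, but it is the one place where the extra generality of guarded negation could hide a subtlety, so a careful re-examination of Lemma \ref{l:locglob}'s proof with base-guarded $\psi_{(R,T,Q)}$ is warranted. The final remark, that \GNFOTRonedim{} becomes undecidable with binary inclusions, would be proved separately by a reduction encoding a grid, using an inclusion $T \subseteq B$ to relate a transitive relation to a base relation usable as a guard, thereby escaping the base-guardedness restriction; this is orthogonal to the positive result and I would defer its details.
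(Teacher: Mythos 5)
Your overall strategy (normal form, modify declarations, then reuse the machinery from Section~\ref{s:general}) is the same as the paper's, but there is a genuine gap in how you handle the normal form and the declaration apparatus, stemming from a misreading of what one-dimensionality restricts.

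You write that ``the one-dimensionality restriction means that negated subformulas still have at most one free variable except that they may now be guarded by an atom rather than by $x=x$.'' This is not correct. One-dimensionality constrains only \emph{quantified} subformulas: a maximal quantifier block must leave at most one variable free. A quantifier-free base-guarded negation $\gamma(\bar{y}) \wedge \neg\phi'(\bar{y})$ with several free variables is perfectly legal in \GNFOTRonedim{}. After the Scott-style replacement of innermost quantified subformulas (which, thanks to one-dimensionality, all become unary predicates), the resulting quantifier-free $\phi_0$ is still an arbitrary quantifier-free \GNFO{} formula and may contain \emph{nested} base-guarded negations of arity $>1$. Your proposed normal form $\forall\bar{x}\,(\alpha(\bar{x}) \to \neg\phi_0(\bar{x}))$ with a single top-level guard, together with the modification ``redefine $\mathcal{R}$ to contain the guard atoms and the literals of $\phi_0$,'' only addresses the outermost guard and treats $\phi_0$ as if it were a conjunction of literals. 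The paper instead keeps the conjunct in the form $\forall\bar{x}\,\neg\phi_0(\bar{x})$ with $\phi_0$ a quantifier-free \GNFO{} formula, and the key new idea---the one you are missing---is to extend the declaration apparatus so that entire subformulas $\gamma(\bar{x},\bar{y}) \wedge \neg\phi'(\bar{y})$ are treated \emph{as if they were non-transitive atomic formulas}. This is what lets Lemma~\ref{l:locglob} and the tree-pruning lemmas go through.

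A second, related omission: you claim that condition (a2) of Lemma~\ref{l:homomorphisms} is ``exactly what we need.'' It is not: a homomorphism preserving $1$-types does not preserve negated base-guarded subformulas of arity $\ge 2$. The paper explicitly strengthens the required property so that the partial homomorphisms are \emph{isomorphisms on guarded substructures}. This is achievable because, in the small-model construction, elements of a base-guarded tuple are always confined to some $\phi$-witness structure, and those were copied \emph{isomorphically} (not merely homomorphically) from the tree-like model. Your sketch gestures at the witness-structure confinement, but does not identify that this forces the homomorphisms in (a2) to be upgraded to isomorphisms on guarded substructures, which is the load-bearing observation in the correctness argument for the final finite model. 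Without it, your correctness argument for the $\forall$-conjunct does not close.

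As a minor remark, the paper proves the undecidability of \GNFOTRonedim{} with inclusions by a direct reduction from the undecidability of \GFtTR{} (adding $T\subseteq B_T$ and then using $B_T$ as the guard), not a fresh grid encoding; both are viable, but the former is shorter since \GFtTR{} is already known undecidable.
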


The required modifications in our constructions are analogous to those described in \cite{DK18}, in the case of equivalences.
We put them here for the reader's convenience.

	Using a natural adaptation of the standard Scott translation \cite{Sco62} we can transform any sentence belonging to \GNFOTRonedim{} into a normal form sentence $\phi$ of the shape
	as in (\ref{eq:nf}), where the $\phi_i$ are quantifier-free \GNFO{} formulas. Assume that some finite structure $\str{A}$ is a model of $\phi$.
	First, we need a slightly stronger version of condition (a1) in Lemma \ref{l:homomorphisms}---each of the considered homomorphisms should additionally be an isomorphism when restricted to a guarded substructure.
	We need to extend the notion of a declaration so that it treats subformulas of the form $\gamma(\bar{x},\bar{y})\wedge\neg\phi'(\bar{y})$ like non-transitive atomic  formulas. 
	This allows us to perform surgery making the transitive paths bounded and then to  construct a regular tree-like model 
	$\str{A}' \models \phi$ as it is done in the proofs of Lemma \ref{l:short} and Lemma \ref{l:reg}, respectively. 
	The key facts are that Lemma \ref{l:locglob} holds (with the new declarations) and that $\phi_0$  is equivalent to a disjunction of some formulas generated by declarations. 
	Finally we apply, without any changes, the construction from the proof of Lemma \ref{l:finitetr} to $\str{A}'$ and $\phi$ obtaining eventually a finite structure $\str{A}''$. Note that during the step of providing witnesses we build isomorphic copies of partial witness structures, which means that we preserve not only positive atoms but also their negations. Thus the elements of $A''$ have all witness structures
	required by $\phi$. Consider now the conjunct $\forall x_1, \ldots, x_t \neg \phi_0(\bar{x})$, and take arbitrary elements $a_1, \ldots, a_t \in A''$. 
	From Lemma \ref{l:finitetr} we know that there is a homomorphism $\fh:\str{A}'' \restr \{a_1, \ldots, a_t\} \rightarrow \str{A}'$ preserving $1$-types. 
	If  $\gamma(\bar{z}, \bar{y}) \wedge \neg \phi'(\bar{y})$ is a subformula of $\phi_0$ with $\gamma$ a $\sigma_{\cBase}$-guard and $\str{A}'' \models \gamma (\bar{b}, \bar{c}) \wedge \neg \phi'(\bar{c})$ for some
	$\bar{b}, \bar{c} \subseteq \bar{a}$ then, by our construction, all elements of $\bar{b} \cup \bar{c}$ are members 
	of the $\phi$-witness structure for some element.
	As mentioned above such witness structures are isomorphic copies of substructures from $\str{A}$ and $\fh$ 
	works
	on them as an isomorphism, and thus $\fh$  preserves on $\bar{c}$ not only 
	$1$-types and positive atoms but also negations of atoms in witnesses structures. Since  $\str{A}' \models \neg \phi_0(\fh(a_1), \ldots, \fh(a_t))$ this means that $\str{A}'' \models \neg \phi_0(a_1, \ldots, a_t)$.

 The algorithm for checking finite satisfiability presented in the proof of 
Thm.~\ref{t:algo} also works without any changes and, moreover, its  correctness proof does not need any modifications. 
It is the case since a key role is played here by Lemma \ref{l:locglob} that still holds with the new version of declarations.

\subsection{Undecidability with inclusions of binary relations}
Note that  \GNFOTRonedim{} can express inclusions $B \subseteq T$ and $B \subseteq B'$, and our constructions respect the inclusions of the form $T\subseteq T'$ for $B, B' \in \sigma_{\cBase}$ and
$T,T' \in \sigma_{\cDist}$. However, it turns out that if we extend it with inclusions of 
the form $T \subseteq B$ then the (finite) satisfiability problem becomes undecidable. This can be easily shown by a reduction
from the already-mentioned  (finite) satisfiability problem for \GFtTR{}. Indeed, all the negations in a \GFtTR{} formula $\phi$ can be guarded by the guards of quantifiers. If $\phi$ uses a transitive guard $T$ then we can add an inclusion $T \subseteq B_T$, for a fresh $B_T \in \sigma_{\cBase}$ and then use $B_T$ to guard negations. More precisely, subformulas of $\phi$ of the form $\exists y (Txy \wedge \psi(x,y))$
are replaced by $\exists y (Txy \wedge \psi^*(x,y))$, where $\psi^*$ is obtained by replacing negated subformulas $\neg \varsigma(x,y)$
(not in the scope of a deeper quantifier) 
of $\psi$ by, properly base-guarded,
$B_Txy \wedge \neg \varsigma(x,y)$. 
(In subformulas of the original formula of the form $\exists y (Bxy \wedge \psi(x,y))$, for non-transitive $B$, we just add the guard $Bxy$ to all binary
negations in $\psi$, not in the scope of a deeper quantifier.) 
Note that since \GFtTR{} uses only two variables all its formulas are one-dimensional, which is not
changed by the described reduction. We conclude:

\begin{theorem}
The (finite) satisfiability problem for \GNFOTRonedim{} with inclusions of binary relations is undecidable.
\end{theorem}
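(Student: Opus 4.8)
The plan is to give a polynomial-time reduction from the (finite) satisfiability problem for the two-variable guarded fragment with transitive relations used as guards, \GFtTR{}, which is undecidable both for satisfiability and for finite satisfiability by \cite{Kie05,Kaz06}. Given a \GFtTR{} sentence $\phi$, the plan is to build a \GNFOTRonedim{} sentence $\phi^{\sharp}$ together with a set $\mathcal{H}$ of inclusions of binary relations such that $\phi$ has a (finite) model if and only if the instance $(\phi^{\sharp},\mathcal{H})$ has a (finite) model.

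\emph{Construction.} For every transitive symbol $T$ occurring in $\phi$ introduce a fresh non-transitive binary symbol $B_T\in\sigma_{\cBase}$ and put the inclusion $T\subseteq B_T$ into $\mathcal{H}$. Now rewrite $\phi$ by recursion on its structure, processing guarded subformulas from the outermost inwards. Since $\phi$ uses two variables and all its quantifiers are guarded, every binary-guarded existential subformula has one of the forms $\exists y\,(Txy\wedge\psi(x,y))$ or $\exists y\,(Bxy\wedge\psi(x,y))$ (and the symmetric forms binding $x$), with $T$ transitive and $B$ non-transitive, and every negated subformula with two free variables occurs inside such a subformula and is, semantically, guarded by that atom. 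In the transitive case replace the subformula by $\exists y\,(Txy\wedge\psi^{*}(x,y))$, where $\psi^{*}$ is obtained from $\psi$ by prefixing $B_Txy\wedge{}$ to every negation $\neg\varsigma$ occurring in $\psi$ that has both $x,y$ free and is not in the scope of a further quantifier; in the non-transitive case keep $Bxy$ as a base guard of those negations; negations with at most one free variable are left unchanged, being \UNFO{}-style negations admissible in \GNFO{}. The resulting sentence $\phi^{\sharp}$ is one-dimensional (the rewriting introduces no new quantifiers), all of its negations are base-guarded, and its transitive symbols occur only on non-guard positions, so $\phi^{\sharp}\in\GNFOTRonedim{}$.

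\emph{Correctness.} Let $\str A$ be any structure with $\str A\models\mathcal{H}$, i.e.\ $T^{\str A}\subseteq B_T^{\str A}$ for all $T$. I claim that $\str A\models\phi^{\sharp}$ iff $\str A\models\phi$, which follows by a straightforward induction on the structure of $\phi$: the only interesting case is a subformula $\exists y\,(Txy\wedge\psi(x,y))$, which, when evaluated under $x\mapsto a$, picks a witness $b$ satisfying $Tab$ (otherwise both versions are false) and then, since $Tab$ implies $B_Tab$ by $\mathcal{H}$, the conjuncts $B_Txy$ prefixed inside $\psi^{*}$ are true under $x\mapsto a,\,y\mapsto b$, so $\psi^{*}(a,b)$ and $\psi(a,b)$ have the same truth value. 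Consequently every model of $(\phi^{\sharp},\mathcal{H})$ is a model of $\phi$. Conversely, given a (finite) model $\str B\models\phi$ — in which $T$ is transitive — expand it by $B_T^{\str B}:=T^{\str B}$; this trivially satisfies $\mathcal{H}$, and by the same equivalence $\str B\models\phi^{\sharp}$. Hence the reduction works uniformly for satisfiability and for finite satisfiability, and the theorem follows.

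\emph{Main obstacle.} The argument is essentially routine; the only delicate part is the syntactic bookkeeping in the rewriting — for each binary negation locating the innermost enclosing guarded quantifier and, if its guard is transitive, using precisely the matching fresh symbol $B_T$, while leaving positive occurrences of $T$ and already base-guarded negations untouched. A clean way to sidestep this is to first put $\phi$ into a guarded Scott-style normal form, in which each quantifier's body is a Boolean combination of atoms over its two active variables, and then apply the substitution uniformly to the finitely many bodies; the verification above then reduces to a one-line check per conjunct.
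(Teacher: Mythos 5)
Your proposal is correct and takes essentially the same approach as the paper: a reduction from (finite) satisfiability of \GFtTR{} that introduces a fresh base symbol $B_T$ with the inclusion $T\subseteq B_T$ for each transitive guard $T$, and then uses $B_Txy$ to base-guard the binary negations under transitive quantifier guards, observing that two-variable formulas are automatically one-dimensional. Your write-up just spells out the semantic equivalence $\str A\models\phi\Leftrightarrow\str A\models\phi^{\sharp}$ (under $\mathcal{H}$) in slightly more detail than the paper's sketch.
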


\end{appendix}

\end{document}